%% file: main.tex
\documentclass[submission,copyright,creativecommons]{eptcs}
\providecommand{\event}{23rd International Conference on Quantum Physics and Logic (QPL 2026)}
\usepackage{amsfonts}
\usepackage{amsthm}
\usepackage{amsmath}
\allowdisplaybreaks
\usepackage{amssymb}
\usepackage{bm}
\usepackage{blkarray}
\usepackage[dvipsnames]{xcolor}
\usepackage{graphicx} 
\usepackage{subcaption}
\usepackage{mathtools}
\usepackage{underscore} 
\usepackage{tikz} 
\usepackage{tikzit}
\usepackage{nicematrix}
\NiceMatrixOptions{
code-for-first-row = \color{blue}\scriptstyle ,
code-for-last-row = \color{blue}\scriptstyle ,
code-for-first-col = \color{blue}\scriptstyle ,
code-for-last-col = \color{blue}\scriptstyle
}

\NiceMatrixOptions{
custom-line = {
    letter = I ,
    command = hddashedline ,
    ccommand = cddashedline ,
    tikz = dashed
}
}

\usepackage{newtxtext,newtxmath}

\usepackage[shortlabels, inline]{enumitem}

\hypersetup{
    colorlinks=true,
    linkcolor=blue,
    filecolor=blue,
    urlcolor=blue,
    citecolor=blue
    }
\usepackage{crossreftools}
\usepackage{footnotebackref} 
\usetikzlibrary{quantikz2} 
\usepackage{algorithm}
\usepackage{algpseudocodex}
\usepackage{stmaryrd}
\usepackage{anyfontsize}
\usepackage{dsfont} 
\usepackage{multicol}
\usepackage{xspace}
\usepackage{lipsum}
\usepackage{booktabs}
\usepackage{pifont}
\usepackage{placeins}

\usepackage[T1]{fontenc} 

\usepackage{orcidlink} 

\usepackage[colorinlistoftodos,prependcaption,textsize=tiny]{todonotes}
\usepackage{xspace} 

\usepackage[final,inline]{showlabels}

\input{thesisstyles.tikzstyles}
\input{zx.tikzstyles}

\graphicspath{ {./img/} }

\makeatletter
\newcommand{\optionaldesc}[2]{%
  \phantomsection
  #1\protected@edef\@currentlabel{#1}\label{#2}%
}
\makeatother

\theoremstyle{definition}
\newtheorem{theorem}{Theorem}[section]
\newtheorem{corollary}[theorem]{Corollary}
\newtheorem{lemma}[theorem]{Lemma}
\newtheorem*{lemma*}{Lemma}
\newtheorem*{proposition*}{Proposition}
\newtheorem*{remark*}{Remark}
\newtheorem{proposition}[theorem]{Proposition}

\newtheorem{definition}[theorem]{Definition}

\newtheorem{remark}[theorem]{Remark}
\newtheorem{observation}[theorem]{Observation}

\newtheorem*{rep@theorem}{\rep@title}
\newcommand{\newreptheorem}[2]{%
	\newenvironment{rep#1}[1]{%
    \def\rep@title{#2 \ref{##1} (restated)}%
		\begin{rep@theorem}}%
		{\end{rep@theorem}}}
\newreptheorem{thm}{Theorem}

  {\list{}{\leftmargin=#1\rightmargin=#1}\item[]}%
  {\endlist}

\newcommand{\problem}[1]{\mathbf{#1}}
\newcommand{\laproblem}{\problem{Inverse+DAG}}
\newcommand{\bigO}{\mathcal{O}} 
\newcommand{\rowbasechange}[2]{{#1}_{#2}}
\newcommand{\colbasechange}[2]{{#1}^{#2}}
\newcommand{\starpart}{\mathcal{P}}
\newcommand{\solved}{S}
\newcommand{\tosolve}{L}
\newcommand{\LS}{K_{LS}}
\newcommand{\ILS}{K_{ILS}}

\DeclareMathOperator{\rank}{rank}

\newcommand\scalemath[2]{\scalebox{#1}{\mbox{\ensuremath{\displaystyle #2}}}} 
\newcommand{\scalediag}[2][1.25]{\scalebox{#1}{\input{#2.tikz}}} 
\DeclareMathOperator{\support}{supp} 
\DeclareMathOperator{\vspan}{span} 
\DeclareMathOperator{\poly}{poly} 
\newcommand{\supp}[1]{\support\left(#1\right)} 
\newcommand{\trl}{\triangleleft} 
\newcommand{\comp}[1]{\bar{#1}} 
\newcommand{\ld}{\lambda}
\newcommand{\sse}{\subseteq}
\newcommand{\abs}[1]{\left| #1 \right|} 
\newcommand{\FF}{\mathbb{F}}
\newcommand{\ZZ}{\mathbb{Z}}
\newcommand{\CC}{\mathbb{C}}
\newcommand{\cM}{\mathcal{M}}

\newcommand{\somevector}{\vec{\mathscr{a}}}

\newcommand{\Xlike}{\mathcal{X}}
\newcommand{\Zlike}{\mathcal{Z}}

\newcommand{\Fd}{\FF_d}
\newcommand{\flow}{$\Fd$-flow\xspace}
\newcommand{\flowintitles}{\texorpdfstring{\flow}{F\_d-flow\xspace}}
\newcommand{\LOG}{labelled open graph\xspace}
\newcommand{\LOFG}{labelled open $\Fd$-graph\xspace}
\newcommand{\OFG}{open $\Fd$-graph\xspace}

\DeclareMathOperator{\Id}{Id}
\newcommand{\id}[2]{\Id_{#1}^{#2}} 
\newcommand{\ldX}{\ld_X}
\newcommand{\ldZ}{\ld_Z}
\DeclareMathOperator{\Diag}{Diag}
\newcommand{\diag}[1]{\Diag\left(#1\right)}

\newcommand{\conlab}[1]{({\crtcrossreflabel{#1}[#1]})} 
\newcommand{\conref}[1]{\eqref{#1}} 

\newcommand{\problemstatement}[3]{
\begin{quote}
$#1$\\
\textbf{Input:} #2\\
\textbf{Output:} #3
\end{quote}
}

\newcommand{\multiplier}[1]{\mathcal{R}^{(#1)}}
\newcommand{\phasegate}{R}
\newcommand{\ketbra}[2]{\ket{#1}\!\bra{#2}}
\newcommand{\intf}[1]{\left\llbracket #1 \right\rrbracket} 

\renewcommand{\t}[1]{\ensuremath{^{\otimes #1}}}

\renewcommand{\conlab}[1]{{\crtcrossreflabel{(#1)}[#1]}} 
\renewcommand{\conref}[1]{\ref{#1}} 

\makeatletter
\newcommand\etc{etc\@ifnextchar.{}{.\@}\xspace}
\newcommand\ie{i.e.\@\xspace}
\newcommand\eg{e.g.\@\xspace}
\newcommand\cf{cf.\@\xspace}
\makeatother

\usepackage{cleveref}
\makeatletter

\renewcommand\theHALG@line{\thealgorithm.\arabic{ALG@line}}
\makeatother

\makeatletter
\newenvironment{breakablealgorithm}
  {
   \begin{center}
     \refstepcounter{algorithm}
     \hrule height.8pt depth0pt \kern2pt
     \renewcommand{\caption}[2][\relax]{
       {\raggedright\textbf{\fname@algorithm~\thealgorithm} ##2\par}%
       \ifx\relax##1\relax 
         \addcontentsline{loa}{algorithm}{\protect\numberline{\thealgorithm}##2}%
       \else 
         \addcontentsline{loa}{algorithm}{\protect\numberline{\thealgorithm}##1}%
       \fi
       \kern2pt\hrule\kern2pt
     }
  }{
     \kern2pt\hrule\relax
   \end{center}
  }
\makeatother

\usepackage{calc}

\newcommand{\fixedaligneqq}[1]{%
  \mathrel{%
    \mathop{=}%
    \limits^{%
      \makebox[\widthof{$\scriptstyle\ref{lem:H-box-push}$}][c]{$\scriptstyle #1$}%
    }%
  }%
}

\title{Working with measurement-based computations on qudits}
\author{Piotr Mitosek \orcidlink{0009-0000-9214-5714}
\institute{Institute of Theoretical Physics\\ Leibniz University Hannover\\ Germany}
\email{piotr.mitosek(at)itp.uni-hannover.de}
\and
Miriam Backens \orcidlink{0000-0002-5418-1084}
\institute{Universit\'e de Lorraine, CNRS, Inria, LORIA \\ F-54000 Nancy\\ France}
}
\def\titlerunning{Working with measurement-based computations on qudits}
\def\authorrunning{P. Mitosek \& M. Backens}

\begin{document}

\maketitle

\begin{abstract}
    Measurement-based quantum computing is a universal model of quantum computation in which successive product measurements of an entangled resource state drive the computation. The non-deterministic nature of measurements necessitates adaptivity to ensure an overall deterministic computation. Flow structures characterise cases in which such an adaptive correction procedure is possible. Recently, flow has been defined in a setting where the resource states are prime-dimensional qudit graph states rather than the usual qubit graph states. Yet, this qudit flow definition is more burdensome to work with than analogous definitions for qubits.

    Here, we give a simpler definition of qudit flow and consider various useful properties of this flow, drawing on results for the qubit case. In particular, we show how to focus qudit flow and argue that focused flow is canonical. We improve the previous algebraic formulation to capture focused flow and use it to obtain an $\bigO(n^3)$ flow-finding algorithm (where $n$ is the number of qudits), matching the best known complexity for qubit flows and improving on the previous $\bigO(n^4)$ result for qudits. Furthermore, we explore multiple flow-preserving transformations, thus opening a pathway to using flow for optimisation. These transformations include pivoting, removal and insertion of certain types of vertices, and reversibility of flow. Lastly, we propose an algorithmic approach to generating large qudit computations with flow, for testing or machine learning.
\end{abstract}

\section{Introduction}

In computer science, it is standard to encode information in bits, i.e.\ using binary values 0 or~1.
This has carried across to quantum computing, where most research focuses on qubits, aka `quantum bits', that have two classical states (and superpositions thereof).
True qubit systems exist in nature,
nevertheless, many of the most promising candidates for physical qubits naturally have more than two basic states,
and there is increasing interest in quantum computation using $d$-dimensional systems, called qudits.

Much of the theoretical research about qudit quantum computation focuses on the quantum circuit model \cite{gheorghiuStandardFormQudit2014,heyfronQuantum2019,yangQuantum2025}.
This means a computation is driven entirely by unitary gates; state preparation and measurements can be restricted to the computational basis and the latter are needed only for reading out information at the end.
The qudit Clifford fragment for prime dimensions $d$ is core to qudit quantum computing analogous to its role with respect to qubits \cite{hostensStabilizer2005,gheorghiuStandardFormQudit2014}.
There are definitions of qudit stabiliser states and also qudit graph states \cite{bahramgiriGraph2006}, which in dimension $d$ correspond to $\FF_d$-weighted graphs: where in the binary case, edges are either present or not, in an $\FF_d$-weighted graph, each pair of vertices is associated with a weight $w\in\FF_d$ such that $w=0$ corresponds to there being no edge between the two vertices.

Qudit graph states then permit the generalisation of the one-way model of measurement-based quantum computation (MBQC) \cite{raussendorfOneWayQuantumComputer2001} to prime-dimensional systems \cite{paesaniScheme2021,boothMeasurementbasedQuantumComputation2022,boothOutcomeDeterminismMeasurementbased2023,romanovaMeasurement-based2026}.
In the one-way model, an entangled resource graph state is prepared at the beginning; this resource can be independent of the computation (as long as it is large enough).
The computation proceeds by successive adaptive single qudit measurements in different \emph{measurement spaces} (called measurement planes in the qubit case).
The individual measurements change the underlying state and can thereby drive a computation, yet they are not generally deterministic.
This is why the `successive adaptive' part matters: If the combination of underlying graph state and measurement spaces for each qudit (formalised as a `\LOFG') is chosen well, then it is possible to modify later measurements depending on the outcomes of earlier measurements so as to make the overall computation deterministic.
Whether overall determinism is possible is characterised by a property called \flow\footnote{This property was previously called $\mathbb{Z}_d$-flow, yet it actually relies on properties of a field, not just a ring. We change the notation and terminology to emphasise this.} \cite{boothOutcomeDeterminismMeasurementbased2023}, the qudit equivalent of `extended gflow' on qubits \cite{browneGeneralizedFlowDeterminism2007}.
Yet the current definition of \flow is somewhat unwieldy in practice as one validity condition requires considering any totalisation of the partial order that is part of the \flow specification.

In this work, we develop a different perspective on \flow on qudits, which adapts the `algebraic Pauli flow' formulation \cite{mitosekAlgebraicInterpretationPauli2026} to qudits and is more straightforward to work with.
This then allows several results from qubit gflow to be reproduced for qudits.
In particular, we adapt the qubit gflow concept of `focusing' \cite{backensThereBackAgain2021} to \flow and show that the existence of \flow is equivalent to the existence of focused \flow.
Next, we directly generalise previous algebraic formulations of qubit flow \cite{mhallaWhichGraphStates2014a,mitosekPauliFlowOpen2024,mitosekAlgebraicInterpretationPauli2026}: given a labelled open graph, we show simple linear algebra conditions allowing verification of the existence of \flow.
We also show how to lift the definition of and core results about qubit `focused sets' \cite{simmonsRelatingMeasurementPatterns2021,mitosekAlgebraicInterpretationPauli2026} to qudit \emph{focused vectors} -- these structures classify `do nothing' stabilizers, \ie stabilizers that do not impact the correction procedure.
We use the new algebraic interpretation to devise new fast algorithms for finding \flow, improving previous $\bigO(n^4)$ routine \cite{boothOutcomeDeterminismMeasurementbased2023} to $\bigO(n^3)$, \ie the complexity matching our qubit result \cite{mitosekAlgebraicInterpretationPauli2026}.
While \flow only works for prime $d$, our algorithms solve the underlying linear algebra problem over any finite field $\Fd$, not necessarily a prime field.
Lastly, we adapt some flow-preserving rules previously known for qubit flows to also work for qudit flow.
In addition to the local complementation and local scaling rules that were already known for the qudit case, this includes pivoting \cite{duncanGraphtheoreticSimplificationQuantum2020,backensThereBackAgain2021,simmonsRelatingMeasurementPatterns2021}, reversibility \cite{mitosekAlgebraicInterpretationPauli2026}, and the removal or insertion of a family of measurements \cite{duncanGraphtheoreticSimplificationQuantum2020,backensThereBackAgain2021, mcelvanneyCompleteFlowPreservingRewrite2023,perezBackens2025}.
We also show how these flow-preserving rules allow the generation of arbitrary \LOFG with \flow, generalising a recent result of one of the authors in the qubit case \cite{backensGenerating2026}.

The generalisation from qubits to qudits comes with several intricacies as the field $\mathbb{F}_2$ (having characteristic 2) behaves quite differently than $\Fd$ for any other prime $d$.
In general, working over qudits means there are many more possible measurement spaces, correction operators, and graph symmetries to be considered and new choices had to be made to define \eg the concept of focusing or a useful pivot operation in the qudit context.

In Section~\ref{Sec:background}, we provide the necessary background.
Next, in Section~\ref{Sec:Focusing flow}, we adapt the notion of `focusing' to the qudit case and show that focused \flow exists whenever any \flow exists.
In Section~\ref{Sec:Algebraic formulation of focused flow}, we give an algebraic formulation of focused \flow.
We use this formulation in Section~\ref{Sec:Flow-finding} to derive a faster flow-finding algorithm.
In Section~\ref{Sec:Flow-preserving rewriting}, we consider \flow-preserving rewriting.
Lastly, in Section~\ref{Sec:Conclusions and future directions}, we give conclusions and discuss potential further work.
Various technical parts of the paper are included in the appendices.

\section{Background}\label{Sec:background}

We work with finite fields $\Fd$ of order $d = p^k$ for prime $p$ and a positive integer $k$.
Unless specified otherwise, we assume throughout that $k = 1$ and thus $d=p$ is prime.
We use $\Fd^*$ for $\Fd \setminus \{ 0 \}$.
We also write $*$ for an arbitrary element of $\Fd^*$, \ie an arbitrary non-zero element of $\Fd$.
All matrices are always over the field $\Fd$.
Let $K$ be a matrix with row labels $\mathcal{R}$ and column labels $\mathcal{C}$.
We refer to it as $\mathcal{R} \times \mathcal{C}$ matrix.
We write $K_{v,*}$ where $v \in \mathcal{R}$ for the $v$-labelled row of $K$ and, similarly, $K_{*,u}$ with $u \in \mathcal{C}$ for the $u$-labelled column of $K$.
Given $\mathcal{R}' \subseteq \mathcal{R}$ and $\mathcal{C}' \subseteq \mathcal{C}$, we write $K_{\mathcal{R}'}^{\mathcal{C}'}$ for the $\mathcal{R}' \times \mathcal{C}'$ submatrix of $K$.

A comparison of these concepts and other notions throughout this paper to what is known for qubits (\ie the case $d=2$) can be found in Appendix~\ref{Sec:QubitComp}.

\begin{definition}\label{Def:FG}
    An \emph{$\Fd$-graph} $G=(V,E)$ is a simple loop-free $\Fd^*$-weighted graph.
    We write $n := |V|$ for the number of vertices of $G$.
    We associate the graph with its \emph{adjacency matrix}: the $V \times V$ matrix $G$, where $G_{u,v}$ stands for the weight of edge $(u,v)$, or $0$ if there is no such edge.
\end{definition}

\begin{definition}\label{Def:OFG}
    An open $\Fd$-graph $(G,I,O)$ is a triple where $G$ is an $\Fd$-graph, and $I, O \subseteq V$ are sets of input and output vertices.
    We write $\comp{I} := V \setminus I$ and $\comp{O} := V \setminus O$ for the sets of non-input and non-output vertices respectively.
    Finally, we write $B := V \setminus (I \cup O)$ for the set of internal vertices (B for in-Between).
\end{definition}

\begin{definition}\label{Def:LOFG}
    A \LOFG $(G,I,O,\ld)$ is a quadruple where $(G,I,O)$ is an open $\Fd$-graph and $\ld \colon \comp{O} \to \FF_d^2 \setminus \{(0, 0)\}$ is a \emph{measurement labelling} such that for all $i \in I \setminus O$: $\ld(i) = (0,*)$.
\end{definition}

\subsection{Graph states and the one-way model of measurement-based quantum computation}

Like graphs in the qubit case, $\FF_d$-graphs can be used to define a class of quantum states on qudits called \emph{graph states} \cite[Section~C]{bahramgiriGraph2006}.
These states are stabiliser states, meaning they are eigenstates with eigenvalue~1 of certain tensor products of the qudit Pauli matrices.

We first introduce some of the formalism of qudit stabiliser quantum mechanics.
Let $\{\ket{k}\}_{k\in\FF_d}$ be the computational basis of a qudit, then the Pauli-$Z$ and $X$ matrices are defined as $Z\ket{k} = \omega^k\ket{k}$ and $X\ket{k} = \ket{k+1}$, where $\omega = e^{\frac{2i\pi}{d}}$ and the addition inside the ket is modulo $d$.
The qudit Hadamard operator performs the discrete Fourier transform of dimension $d$, it is defined as $H\ket{k} = \sum_{j\in\FF_d} \omega^{j k} \ket{j}$.
Unlike for qubits, the qudit Hadamard is not self inverse; instead it satisfies $H^4 = I$.
It is straightforward to show that $Z^d = I = X^d$ and the gates satisfy $Z X = \omega X Z$, $H Z H^{-1} = X^{-1}$, as well as $H X H^{-1} = Z$.
The \emph{Pauli group} on one qudit is generated by $Z$ and $X$, its elements are of the form $\omega^a X^b Z^c$ for some $a,b,c\in\FF_d$.
A tensor product of single-qudit Pauli group elements is called a \emph{Pauli product}.
In the following, for any single-qudit unitary $U$ whose eigenvalues are $\{\omega^k\}_{k\in\FF_d}$, we will use $\ket{k:U}$ to denote the eigenstate of $U$ with eigenvalue $\omega^k$.
In particular, $\ket{k:Z} = \ket{k}$ and $\ket{k:X} = H^{-1} \ket{k}$ for all $k\in\FF_d$.
We will also need an entangling two-qudit gate, the controlled-$Z$ gate defined as $CZ \ket{j k} = \omega^{j k} \ket{j k}$.
As for qubits, $CZ$ gates applied to any (even overlapping) pairs of qudits commute.

In the following, we use subscripts to denote the qudit (or qudits) to which a gate is applied, e.g.\ $CZ_{u,v}$ acts on qudits $u$ and $v$.

\begin{definition}\label{def:graph-state}
	Suppose $G = (V,E)$ is an $\FF_d$-graph, then the associated \emph{graph state} is given by $\ket{G} := \prod_{u,w\in V} CZ_{u,w}^{G_{u,w}} \bigotimes_{v\in V} \ket{0:X}$.
\end{definition}

\begin{lemma}[\cite{bahramgiriGraph2006}]\label{lem:graph-state-stabilisers}
	Suppose $G = (V,E)$ is an $\FF_d$-graph, then the stabiliser group for the graph state $\ket{G}$ is generated by the Pauli products of the form $X_v \prod_{w\in V} Z_w^{G_{v,w}}$ for each $v\in V$.
\end{lemma}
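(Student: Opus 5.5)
The plan is a direct conjugation argument followed by a counting argument. Write $C := \prod_{u,w} CZ_{u,w}^{G_{u,w}}$, so that $\ket{G} = C \ket{+}$ with $\ket{+} := \bigotimes_v \ket{0:X}$. I read the product as running over unordered pairs $\{u,w\}$, each with exponent $G_{u,w}$, which is the convention under which the stated generators come out correctly.

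The state $\ket{+}$ is stabilised by every $X_v$, since $\ket{0:X}$ is the $+1$-eigenstate of $X$. Hence $C X_v C^{-1}$ stabilises $\ket{G}$ for each $v$, and it remains to compute this conjugate.

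\textbf{Step 1: single-edge conjugation.} For one edge I will show
\[
CZ_{v,w}\, X_v \,CZ_{v,w}^{-1} = X_v Z_w .
\]
This is checked on basis states. We have $CZ_{v,w}X_v CZ_{v,w}^{-1}\ket{j k} = \omega^{-jk}\omega^{(j+1)k}\ket{j+1,k}$, which equals $\omega^{k}\ket{j+1,k} = X_v Z_w\ket{jk}$.

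\textbf{Step 2: the remaining factors.} Each $CZ$ is diagonal, so it commutes with every $Z$. A $CZ_{u,w}$ with $v \notin \{u,w\}$ also commutes with $X_v$. Raising Step 1 to the power $G_{v,w}$ gives $X_v Z_w^{G_{v,w}}$. Multiplying over all neighbours $w$ of $v$ then yields
\[
C X_v C^{-1} = X_v \prod_w Z_w^{G_{v,w}} .
\]
Because $G$ is loop-free there is no $Z_v$ term, so no phase issues arise from the ordering. This shows every stated generator stabilises $\ket{G}$.

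\textbf{Step 3: these elements generate the full stabiliser group.} The $n$ operators $S_v := X_v\prod_w Z_w^{G_{v,w}}$ pairwise commute, since they are conjugates of the commuting $X_v$. They are independent, because their $X$-parts $e_v$ are linearly independent over $\Fd$. They therefore generate an abelian group of order $d^n$ that does not contain $\omega^a I$ for $a \neq 0$.

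To conclude maximality, note the joint $+1$ eigenspace of $\{S_v\}$ is $C$ times the joint $+1$ eigenspace of $\{X_v\}$. The latter is one-dimensional, namely spanned by $\ket{+}$. So the group $\langle S_v\rangle$ stabilises exactly the line spanned by $\ket{G}$.

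Any Pauli product $P$ stabilising $\ket{G}$ gives $C^{-1}PC$, a Pauli product stabilising $\ket{+}$, because $C$ is Clifford. A Pauli product $\omega^a X^b Z^c$ fixing $\ket{+}$ must have $c = 0$, since $Z^c$ shifts $X$-eigenvalues, and then $a = 0$. So it lies in $\langle X_v\rangle$, and hence $P \in \langle S_v\rangle$.

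\textbf{Main obstacle.} I expect the only delicate point to be Step 3, specifically the claim that $C$ maps Pauli products to Pauli products. This follows from Step 1 together with the easy fact that $CZ$ commutes with $Z$. The rest is routine bookkeeping of phases and of the pair convention in the product.
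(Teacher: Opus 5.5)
Your argument is correct. The paper does not prove this lemma; it only cites Bahramgiri and Beigi. So there is no in-paper route to compare against, and your conjugation-plus-maximality argument stands as a self-contained verification. It checks that $C X_v C^{-1}=S_v$, that the $S_v$ generate an abelian group of order $d^n$ with no nontrivial scalars, and that any Pauli stabiliser of $\ket{G}$ pulls back under $C$ to an element of $\langle X_v\rangle$.

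One point of wording in Step 2. You are not raising the identity of Step 1 to a power, since that would give $CZ\,X_v^{g}\,CZ^{-1}=(X_vZ_w)^{g}$, which is conjugation of $X_v^{g}$ by $CZ$. What you are doing is iterating the conjugation $g=G_{v,w}$ times. This yields $X_vZ_w^{g}$ precisely because $CZ_{v,w}$ commutes with $Z_w$, a fact you state in the same paragraph.

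Your reading of the product defining $\ket{G}$ as ranging over unordered pairs is indeed the intended one. Over ordered pairs each edge would contribute $CZ^{2G_{u,w}}$, which gives generators $X_v\prod_w Z_w^{2G_{v,w}}$, and for $d=2$ the entangling circuit would be trivial.
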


In the one-way model of MBQC, a computation is performed via successive adaptive single-qudit measurements on a resource ($\FF_d$)-graph state.
Each measurement has a distinguished \emph{desired outcome}, which is the one driving the computation in the desired direction.
While individual measurements are non-deterministic, by restricting measurements to specific well-behaved measurement spaces, it is possible to compensate for a specific measurement having resulted in an undesired outcome through changing later measurements.
In this way one can use measurements to perform a computation that is overall deterministic.
The following presentation of the one-way model on qudits follows \cite{boothOutcomeDeterminismMeasurementbased2023}.

A projective measurement on a qudit consisting of measurement projectors $\{\pi_k\}_{k\in\FF_d}$ can be identified with a unitary matrix $U = \sum_{k\in\FF_d} \mu_k \pi_k$ for some set of eigenvalues $\mu_k$.
We will assume that the desired outcome is associated with eigenvalue 1, \ie we consider only unitary representations of measurements that satisfy $U\ket{\psi} = \ket{\psi}$ for some $\ket{\psi}\in\CC^d$ called the \emph{fixpoint}.

Each single-qudit Pauli $X^a Z^b$ for some $a,b\in\FF_d$ such that $a,b$ are not both 0 determines a \emph{measurement space}, which is denoted $\cM(a,b)$ and which groups together measurements with shared characteristics.
The measurement labelling $\ld$ of a \LOFG (Definition~\ref{Def:LOFG}) refers exactly to these measurement spaces.
Then a measurement unitary $U$ is in $\cM(a,b)$ if and only if it satisfies $X^a Z^b U = \omega U X^a Z^b$, implying the following useful properties.

\begin{proposition}[{\cite[Proposition~2]{boothOutcomeDeterminismMeasurementbased2023}}]
	If $U\in\cM(a, b)$, then the spectrum of $U$ is $\{\omega^m \mid m\in\FF_d\}$, each eigenvalue has multiplicity 1, and $U$ is special unitary.
	Denoting $\ket{0:U}$ the fixpoint of $U$, then $\ket{m:U} = (X^a Z^b)^{-m} \ket{0:U}$ is an eigenvector of $U$ associated with eigenvalue $\omega^m$.
\end{proposition}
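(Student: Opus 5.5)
Write $P := X^a Z^b$, so the defining relation of $\cM(a,b)$ reads $P U = \omega U P$, i.e. $U P = \omega^{-1} P U$. The plan is first to build the eigenbasis from the fixpoint, then read off the spectrum, multiplicities and determinant. Multiplying the relation by $P^{-1}$ on both sides gives $U P^{-1} = \omega P^{-1} U$, and by induction $U P^{-m} = \omega^m P^{-m} U$ for every $m\in\FF_d$. This is consistent modulo $d$ because $P^d$ is a scalar: from $ZX=\omega XZ$, $(X^aZ^b)^d = \omega^{c} I$ for some $c$, and in fact $c=0$ for odd $d$.

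Now let $\ket{0:U}$ be the fixpoint, i.e. $U\ket{0:U}=\ket{0:U}$; such a vector exists by the standing assumption on measurement unitaries. Setting $\ket{m:U} := P^{-m}\ket{0:U}$, the commutation relation gives
\[
U\ket{m:U} = U P^{-m}\ket{0:U} = \omega^m P^{-m} U\ket{0:U} = \omega^m \ket{m:U}.
\]
Since $P$ is invertible, $\ket{m:U}\neq 0$. Hence each $\omega^m$, $m\in\FF_d$, is an eigenvalue of $U$ with eigenvector $\ket{m:U}$, which proves the second claim.

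The $d$ values $\omega^m$ are pairwise distinct because $\omega$ is a primitive $d$-th root of unity. So $U$ has $d$ distinct eigenvalues on the $d$-dimensional space $\CC^d$. Its spectrum is therefore exactly $\{\omega^m\mid m\in\FF_d\}$, each eigenvalue has multiplicity one, and the $\ket{m:U}$ form an eigenbasis. Finally $\det U = \prod_{m=0}^{d-1}\omega^m = \omega^{d(d-1)/2}$. For odd prime $d$, $(d-1)/2$ is an integer, so this equals $(\omega^d)^{(d-1)/2}=1$ and $U$ is special unitary; unitarity itself is part of the hypothesis.

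The main obstacle is the $d=2$ case, where $\omega^{1}=-1$ makes $\det U=-1$. I would handle it by noting that the statement, as in \cite{boothOutcomeDeterminismMeasurementbased2023}, concerns odd prime $d$. If $d=2$ must be covered, special unitarity cannot hold literally and would have to be read up to a global phase.
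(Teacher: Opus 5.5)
Your proof is correct. The paper gives no proof of its own: it imports the result from Booth et al., who work with odd prime $d$. Your argument is the standard one: transport the fixpoint via $UP^{-m}=\omega^m P^{-m}U$, note that $d$ distinct eigenvalues in $\CC^d$ force a simple spectrum, and compute $\det U=\omega^{d(d-1)/2}$.

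Your $d=2$ caveat is justified. For qubits, $U=X\in\cM(0,1)$ has a fixpoint but $\det X=-1$, so the special-unitarity clause genuinely needs $d$ odd. You are also right to rely on the standing fixpoint assumption: without it, $e^{i\theta}U$ satisfies the same commutation relation, and the spectral claim would fail.
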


If the measurement $U$ yields the desired outcome, then the projection applied to the resource state is $\proj{0:U}$.
If instead the measurement yields an undesired outcome, the projection is $\proj{m:U}$ for some $m\in\FF_d^*$.
By the above proposition, the undesired projectors are linked to the desired projector via powers of $X^a Z^b$: $\proj{0:U} = (X^a Z^b)^m \proj{m:U} (X^a Z^b)^{-m}$.

This is the basis for the adaptive procedure that can correct undesired measurement outcomes: Suppose the measurement on qudit $v$ has resulted in the undesired outcome $m\in\FF_d^*$.
If there is a stabiliser of the underlying graph that acts on $v$ as $(X^a Z^b)^m$ and otherwise acts non-trivially only on qudits that have not yet been measured, then this stabiliser can be used to correct the measurement on $v$.
The Pauli operators of the correcting stabiliser that act on qudits other than $v$ itself are called \emph{by-products}.
The assumption that non-trivial by-products arise only on unmeasured qudits implies that they can later be incorporated into the measurement on that qudit.
Generalising from a single measurement to an entire computation, we need a partial order on the measurements that indicates which measurements happen before or after others, as well as a correcting stabiliser for each measured qubit; the partial order and the stabilisers need to be compatible to ensure that by-products affect only qudits that come later in the partial order.
Some care also needs to be taken around treating qudits that act as inputs for the entire computation, meaning they are not prepared in the state $\ket{0:X}$ but arrive in some arbitrary state and are entangled into the graph; as well as qudits that are not measured but remain as outputs at the end.
These conditions were formalised in \cite{boothOutcomeDeterminismMeasurementbased2023}, we will introduce them now.

\subsection{\flowintitles}

We use \cite[Lemma 4.3]{boothMeasurementbasedQuantumComputation2022} as a definition of \flow instead of the original version of the definition in \cite[Definition 9]{boothOutcomeDeterminismMeasurementbased2023}, changing the name to `\flow' with $\FF_d$ instead of $\ZZ_d$ to emphasise that the underlying algebraic structure needs to be a finite field.

\begin{definition}[{\flow \cite[Lemma 4.3]{boothMeasurementbasedQuantumComputation2022}}]\label{Def:full flow}
    A \LOFG $(G,I,O,\ld)$ has \emph{\flow} if there exists a tuple $(C,\prec)$ where $C \in \Fd^{V \times V}$ is called the \emph{correction matrix} and $\prec$ is a strict partial order on $V$, such that:\begin{itemize}
        \item[\conlab{CF}] $\forall u \in \comp{O} . \ld(u) = (C_{u,u}, (GC)_{u,u})$,
        \item[\conlab{ZF}] $C_{u,v} = 0$ whenever $u \in I$ or $v \in O$, and
        \item[\conlab{OF}] when the columns and rows of $G$ and $C$ are ordered according to any totalisation of $\prec$, then $C$ and $GC$ are lower triangular.
    \end{itemize}
\end{definition}

The \flow definition implicitly specifies the Pauli product $\bigotimes_{u\in V} X^{C_{u,v}} Z^{(GC)_{u,v}}$ for correcting the measurement on qudit $v$.
It is straightforward to check that this is indeed a stabiliser of the underlying graph state.
Indeed, the flow correction condition \conref{CF} ensures that the Pauli by-products and the desired correction together form a stabiliser for each measured qudit.
The zeroes condition \conref{ZF} states that inputs cannot correct other vertices and that outputs do not need a correction.
Finally, the order condition \conref{OF} states that it is possible to correct measurements in a causal way.
Out of the three conditions, the third one is clearly the most complicated.
Not only is it the only condition in which the strict partial order $\prec$ is involved, the condition also considerers any totalisation of this strict partial order.

A worked example presenting \flow is given in Figure~\ref{Fig:running example}.
The unweighted graph underlying this example is a modification of the graph in \cite[Example 2.43]{backensThereBackAgain2021}.
Verifying that the relevant matrices are lower triangular with respect to some totalisation is a mildly tedious step that requires reordering rows and columns.

\begin{figure}
    \centering
    \begin{subfigure}[b]{0.23\textwidth}
        \centering
        $\scalediag[1]{ex1}$
        \caption{}
        \label{Fig:LOFG}
    \end{subfigure}
    \begin{subfigure}[b]{0.23\textwidth}
        \vspace{0.5cm}
        \centering
        $\scalemath{0.85}{\begin{pNiceArray}{c:ccc:cc}[first-row, first-col]
            \bm{G} & i & a & b & d & o_1 & o_2 \\
            i & 0 & 2 & 0 & 0 & 0 & 0 \\
            \hdottedline
            a & 2 & 0 & 4 & 1 & 4 & 2 \\
            b & 0 & 4 & 0 & 3 & 3 & 0 \\
            d & 0 & 1 & 3 & 0 & 0 & 4 \\
            \hdottedline
            o_1 & 0 & 4 & 3 & 0 & 0 & 0 \\
            o_2 & 0 & 2 & 0 & 4 & 0 & 0 \\
        \end{pNiceArray}}$
        \caption{}
        \label{Fig:G}
    \end{subfigure}
    \begin{subfigure}[b]{0.23\textwidth}
        \centering
        $\scalemath{0.85}{\begin{pNiceArray}{c:ccc:cc}[first-row, first-col]
            \bm{C} & i & a & b & d & o_1 & o_2 \\
            i & 0 & 0 & 0 & 0 & 0 & 0 \\
            \hdottedline
            a & 3 & 0 & 0 & 0 & 0 & 0 \\
            b & 0 & 0 & 2 & 0 & 0 & 0 \\
            d & 0 & 0 & 0 & 4 & 0 & 0 \\
            \hdottedline
            o_1 & 0 & 2 & 0 & 0 & 0 & 0 \\
            o_2 & 0 & 0 & 1 & 3 & 0 & 0 \\
        \end{pNiceArray}}$
        \caption{}
        \label{Fig:C}
    \end{subfigure}
    \begin{subfigure}[b]{0.23\textwidth}
        \centering
        $\scalemath{0.85}{\begin{pNiceArray}{c:ccc:cc}[first-row, first-col]
            \bm{GC} & i & a & b & d & o_1 & o_2 \\
            i & 1 & 0 & 0 & 0 & 0 & 0 \\
            \hdottedline
            a & 0 & 3 & 0 & 0 & 0 & 0 \\
            b & 2 & 1 & 0 & 2 & 0 & 0 \\
            d & 3 & 0 & 0 & 2 & 0 & 0 \\
            \hdottedline
            o_1 & 2 & 0 & 1 & 0 & 0 & 0 \\
            o_2 & 1 & 0 & 0 & 1 & 0 & 0 \\
        \end{pNiceArray}}$
        \caption{}
        \label{Fig:GC}
    \end{subfigure}
    \caption{Running example of a \LOFG with $d=5$. \ref{Fig:LOFG}: a \LOFG with one input $i$ and two outputs $o_1$ and $o_2$. The measurement labels of non-outputs are given near the names of the vertices. Edge weights are denoted by numbers near the middles of the edges. \ref{Fig:G}: the corresponding adjacency matrix. The dotted lines subdivide input, internal, and output vertices for improved readability but otherwise play no role. \ref{Fig:C}: the corresponding correction matrix. This matrix is lower triangular. \ref{Fig:GC}: the product of adjacency and correction matrices from \ref{Fig:G} and \ref{Fig:C} respectively. It is not lower triangular, however it becomes lower triangular when the vertices are ordered $i \prec a \prec d \prec b \prec o_1 \prec o_2$. Under this ordering, the correction matrix from \ref{Fig:C} remains lower triangular, hence the \LOFG has \flow.}
    \label{Fig:running example}
\end{figure}

\section{Focusing \flowintitles}\label{Sec:Focusing flow}

If a \LOFG has \flow, this \flow is not generally unique.
For qubits, the notion of \emph{focused gflow} (or its generalisation, focused Pauli flow) has been very useful: for example, focused flow is unique if there are equal numbers of inputs and outputs \cite{mhallaWhichGraphStates2014a,backensThereBackAgain2021,simmonsRelatingMeasurementPatterns2021}, it can be easier to work with for flow-preserving rewriting \cite{duncanGraphtheoreticSimplificationQuantum2020,backensThereBackAgain2021,perezBackens2025}, and it enables a linear-algebraic formulation of the flow conditions, leading to more efficient flow-finding algorithms \cite{mhallaWhichGraphStates2014a,mitosekAlgebraicInterpretationPauli2026}.
In this context, `focusing' means restricting the correction by-products that a qubit can receive, depending on its measurement label.
We now generalise this idea to qudit \flow. We use notation $\left(\ldX(v), \ldZ(v)\right) := \ld(v)$.

\begin{definition}\label{Def:focused}
    An \flow $(C,\prec)$ on a \LOFG $(G,I,O,\ld)$ is \emph{focused} when:\begin{itemize}
        \item[\conlab{FX}] for all $v \in \comp{O}$ such that $\ldX(v) = 0$, we have $(GC)_{v,u} = 0$ for any $u \in \comp{O}\setminus\{v\}$ and
        \item[\conlab{FZ}] for all $v \in \comp{O}$ such that $\ldX(v) \ne 0$, we have $C_{v,u} = 0$ for any $u \in \comp{O}\setminus\{v\}$.
    \end{itemize}
\end{definition}

For example, the \flow from Figure~\ref{Fig:running example} is focused.

The first condition~\conref{FX} ($F$ for focused) ensures that $(0,*)$-measured vertices receive only by-products of the form $X^a$ (with trivial $Z$ component).
All other vertices receive only by-products of the form $Z^b$ (with trivial $X$-component), as captured by condition~\conref{FZ}.
This corresponds to focusing conditions for the qubit case: $XY$-measured vertices (now generalised to $(0,*)$) can only get an $X$ by-product, while $YZ$ and $XZ$ (now generalised to $(*,0)$ and $(*,*)$ respectively) can only get a $Z$ by-product.

We show, that existence of \flow is equivalent to existence of focused \flow.
The proof may be found in Appendix~\ref{Sec:ProvingFocusing}.

\begin{theorem}\label{Th:flow focusing}
    Let $(G,I,O,\ld)$ be a \LOFG which has \flow.
    Then there exists a focused \flow on $(G,I,O,\ld)$.
\end{theorem}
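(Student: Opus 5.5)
The plan is to mimic the qubit focusing procedure of \cite{backensThereBackAgain2021}: starting from an arbitrary \flow $(C,\prec)$, we repeatedly modify columns of $C$ by adding multiples of other columns, processing vertices in reverse order of a fixed totalisation of $\prec$, while keeping the same partial order. The key algebraic fact is that all three conditions behave well under column operations of a specific kind. If $u \prec w$ (or more precisely $w$ is later than $u$ in the totalisation) and we replace column $C_{*,u}$ by $C_{*,u} - \alpha C_{*,w}$, then $(GC)_{*,u}$ changes by $-\alpha (GC)_{*,w}$ as well, since $G(C-\alpha C E_{wu}) = GC - \alpha GCE_{wu}$. Because column $w$ of both $C$ and $GC$ is supported only on vertices at or after $w$ in the totalisation, and $w$ comes after $u$, lower triangularity \conref{OF} is preserved (with respect to the same totalisation; the partial order may need to be replaced by that total order, which is harmless since a total order is itself a strict partial order). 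The diagonal entries $C_{u,u}$ and $(GC)_{u,u}$ are untouched because $C_{u,w}=(GC)_{u,w}=0$ for $w$ strictly after $u$, so \conref{CF} is preserved. Column $w$ with $w\in\comp{O}$ has zeros in input rows, so \conref{ZF} is preserved; we only use $w \in \comp{O}$, so output columns stay zero.

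Concretely, fix a totalisation $<$ of $\prec$. For each $u\in\comp{O}$ we eliminate the offending entries of column $u$: we look at vertices $v \in \comp{O}\setminus\{u\}$ with $v>u$ (entries with $v<u$ are already zero by triangularity) that violate \conref{FX} or \conref{FZ}. Take the smallest such $v$ in $<$. If $\ldX(v)=0$ and $(GC)_{v,u}\neq 0$, we subtract $\frac{(GC)_{v,u}}{(GC)_{v,v}}$ times column $v$; this is legitimate since $\ld(v)=(0,*)$ gives $(GC)_{v,v}=\ldZ(v)\ne 0$ by \conref{CF}. If $\ldX(v)\ne 0$ and $C_{v,u}\ne 0$, we subtract $\frac{C_{v,u}}{C_{v,v}}$ times column $v$, using $C_{v,v}=\ldX(v)\neq 0$. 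Here we use that $\Fd$ is a field. After this step the entry in row $v$ that mattered becomes zero, and since column $v$ is supported on vertices $\ge v$, entries in rows $v'<v$ are unchanged. Thus the smallest violating row index strictly increases, and iterating over rows $v$ in increasing order finishes column $u$ in at most $n$ steps. Columns are treated independently; the modification of column $u$ uses the current columns $v>u$, whichever version, since the argument only needs triangularity and \conref{CF} for column $v$, both invariant.

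The main obstacle is checking the bookkeeping of which entries must vanish: in each case only one of the two quantities ($C_{v,u}$ or $(GC)_{v,u}$) is required to vanish, and the operation zeroes exactly that one, possibly disturbing the other. So it is important that the elimination order is by increasing $v$ and that column $v$ only affects rows $\ge v$, so that earlier fixes are never undone. A subtle point is that column $u$ itself is never used and row $u$ is excluded by the ``$\setminus\{u\}$'' in Definition~\ref{Def:focused}, and that rows in $O$ impose no constraints. Once this is verified, the resulting $(C',<)$ satisfies \conref{CF}, \conref{ZF}, \conref{OF}, \conref{FX} and \conref{FZ}, proving the theorem.
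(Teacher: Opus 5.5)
Your proof is correct, but it is organised differently from the paper's.

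\textbf{The paper's route.} The paper works row by row, backwards. By induction on $k$ it makes every column focused with respect to the last $k$ vertices of a totalisation. To handle a new row $a$, it adds multiples of column $a$ to each earlier column. That also changes rows after $a$ that were already fixed. So the paper needs the extra clause of Lemma~\ref{lem:focus-single}: adding a column that is already focused with respect to $v'$ does not change whether the target column is focused with respect to $v'$.

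\textbf{Your route.} You work column by column and, within each column, eliminate rows in increasing order. This is ordinary forward elimination. The column $v$ you subtract only touches rows $\geq v$, so any disturbance falls on rows not yet processed. Column operations only change their target column. The only properties of the source column you use are invariant under all your operations: it vanishes above row $v$, and its diagonal values are $C_{v,v}=\ldX(v)$ and $(GC)_{v,v}=\ldZ(v)$ by \conref{CF}. So the columns can indeed be treated independently, and the preservation lemma is avoided entirely. This makes your argument shorter and more self-contained.

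\textbf{What the paper's version gives that yours, as written, does not.} The paper's focused \flow keeps the original partial order $\prec$; you pass to a totalisation. You could keep $\prec$ as well. You only subtract column $v$ from column $u$ when the current entry $(C_{v,u},(GC)_{v,u})$ is non-zero. By \conref{OF} for the current valid \flow, that forces $u\prec v$. Transitivity then shows that every non-zero entry of the new column $u$ lies in a row $w$ with $u\prec w$ (or $w=u$), so the new column remains compatible with $\prec$. For the theorem as stated, switching to the total order is harmless.
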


Later on, in Corollary \ref{Cor:uniqueness}, we will explain that focused \flow can be viewed as canonical, in the sense that when $|I|=|O|$, then there exists a unique focused \flow, up to extensions of the partial order.

\section{Algebraic formulation of focused \flowintitles}\label{Sec:Algebraic formulation of focused flow}

The definition of \flow gives linear algebraic conditions that a correction procedure must satisfy to enable deterministic MBQC.
Yet, condition \conref{O} is awkward as it asks for a specific property to hold for any totalisation of the partial order.
We now refine the algebraic formulation, tailoring it specifically towards focused \flow.
This loses some generality -- the new definition can no longer express non-focused \flow{} -- but we already proved that any \LOFG with \flow also has focused \flow.
The small restriction is therefore outweighed in many cases by the associated gains in simplicity and efficiency of avoiding the arbitrary totalisations of the partial order.

First, we reduce some redundancy.
In Definition~\ref{Def:full flow}, the correction matrix $C$ is a $V \times V$ matrix, but the condition~\conref{ZF} forces $I$-labelled rows and $O$-labelled columns of $C$ to be identically zero.
Thus, instead of considering the full $V \times V$ matrix $C$, it suffices to instead consider its $\comp{I} \times \comp{O}$ submatrix.
Correspondingly, it suffices to consider a submatrix of the adjacency matrix; because we want to multiply the two submatrices, the shape of this second submatrix will be $\comp{O}\times\comp{I}$:

\begin{definition}\label{Def:ram}
    Given an open $\Fd$ graph $(G,I,O)$, the matrix $A_{(G,I,O)} := G_{\comp{O}}^{\comp{I}}$ is its \emph{reduced adjacency matrix}.
\end{definition}

The reduced adjacency matrix allows the following restatement of the \flow definition, with a proof of equivalence of two definitions given in Appendix~\ref{subsec:prelims}.

\begin{definition}[\flow]\label{Def:flow}
    A \LOFG $(G,I,O,\ld)$ has an \flow $(C, \prec)$, where $C \in \Fd^{\comp{I} \times \comp{O}}$ is called a \emph{reduced correction matrix} and $\prec$ is a strict partial order on $\comp{O}$ such that:\begin{itemize}
        \item[\conlab{C}] $\forall u \in \comp{O} \setminus I . \ld(u) = (C_{u,u}, (AC)_{u,u})$ and $\forall u \in I \cap \comp{O} . \ldZ(u) = (AC)_{u,u}$
        \item[\conlab{O}] when the columns and rows of $A$ and $C$ are ordered according to any totalisation of $\prec$, then $C$ and $AC$ are lower triangular.
    \end{itemize}
\end{definition}

The reduced adjacency matrix and reduced correction matrix of the running example from Figure~\ref{Fig:running example} are given in Figures~\ref{Fig:A} and~\ref{Fig:Cred}, respectively.

For the new formulation, instead of working directly with the (reduced) adjacency matrix of the \LOFG, we will define two matrices (both of shape $\comp{O}\times\comp{I}$) that encode the measurement labels as well as relevant properties of the adjacency matrix.
First, we introduce some relevant notation.
We write $\delta_{a,b}$ for the Kronecker delta, i.e.\ $\delta_{a,b} = 1$ when $a=b$ and $0$ otherwise.
We write $\id{\mathcal{R}}{\mathcal{C}}$ for a $\mathcal{R} \times \mathcal{C}$ matrix where $(\id{\mathcal{R}}{\mathcal{C}})_{u,v} = \delta_{u,v}$ for all $u \in \mathcal{R}$ and $v \in \mathcal{C}$.
When $\mathcal{R} = \mathcal{C}$, $\id{\mathcal{R}}{\mathcal{C}}$ is the identity matrix.
When $|\mathcal{R}|=1$ or $|\mathcal{C}|=1$, then $\id{\mathcal{R}}{\mathcal{C}}$ is a row or column vector, respectively, which has zeroes everywhere except for the at most one entry in $\mathcal{R} \cap \mathcal{C}$.
We use $\supp{\somevector}$ to denote the support of the vector $\somevector$, i.e.\ if $\mathcal{R}$ is the set of labels of the elements of $\somevector$, then $\supp{\somevector} = \{r\in\mathcal{R} \mid \somevector_r \neq 0\}$.

\begin{definition}[Flow-demand matrix]\label{Def:fdm}
    Given a \LOFG $(G,I,O,\ld)$, the corresponding \emph{flow-demand matrix} $M$ is a $\comp{O} \times \comp{I}$ matrix where for $v \in \comp{O}, u \in \comp{I}$:\begin{align*}
        M_{v,u} = \begin{cases}
            \ldZ(v)^{-1} A_{v,u} & \text{if}\ \ldX(v) = 0\\
            \ldX(v)^{-1} \delta_{v,u} & \text{if}\ \ldX(v) \ne 0
        \end{cases}
    \end{align*}
\end{definition}

The expressions in the definition above are valid, in the sense that the inverse of $\ldZ(v)$ is used only when $\ldX(v) = 0$, in which case $\ldZ(v) \ne 0$ as $\left(\ldX(v), \ldZ(v)\right) = \ld(v) \ne (0,0)$ by Definition~\ref{Def:LOFG}.
Moreover, the required inverses exists due to $\Fd$ being a field.
Similarly, the inverse of $\ldX(v)$ is only used when $\ldX(v) \ne 0$.
Furthermore $A_{v,u}$ exists in the reduced adjacency matrix as $v \in \comp{O}$ and $u \in \comp{I}$.
See Figure~\ref{Fig:M} for an example of the flow-demand matrix and Appendix~\ref{Sec:Pseudocode} for pseudocode that constructs this matrix from a specification of the \LOFG.

\begin{definition}[Order-demand matrix]\label{Def:odm}
    Given a \LOFG $(G,I,O,\ld)$, the corresponding \emph{order-demand matrix} $N$ is an $\comp{O} \times \comp{I}$ matrix where for $v \in \comp{O}, u \in \comp{I}$: $N_{v,u} = \ldX(v) A_{v,u} - \ldZ(v) \delta_{v,u}$.
\end{definition}

See Figure~\ref{Fig:N} for the order-demand matrix of the running example, and Appendix~\ref{Sec:Pseudocode} for pseudocode that generates the order-demand matrix from the specification of a \LOFG.

We can now state the main result of this section:

\begin{theorem}\label{Th:main result}
    Suppose $\Gamma = (G,I,O,\ld)$ is a \LOFG; let $M$ be its flow-demand matrix and $N$ its order-demand matrix.
    Then $\Gamma$ has \flow if and only if there exists a matrix $C$ (of shape $\comp{I} \times \comp{O}$) such that $MC = \id{\comp{O}}{\comp{O}}$ and $NC$ forms a (weighted) directed acyclic graph.
\end{theorem}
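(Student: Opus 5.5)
The plan is to reduce everything to focused \flow using Theorem~\ref{Th:flow focusing} and the reduced formulation of Definition~\ref{Def:flow}. I will then compute the entries of $MC$ and $NC$ row by row, splitting on whether $\ldX(v)=0$. Throughout, $C$ is a $\comp{I}\times\comp{O}$ matrix, and I use the convention that $C_{v,u}$ with $v\in I$ means $0$, so the $\delta_{v,u}$ terms vanish for input rows. Recall also that every $v \in I\cap\comp{O}$ has $\ldX(v)=0$, so any row $v$ with $\ldX(v)\neq 0$ is a genuine row of $C$.

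Two preliminary observations do most of the work. First, for $v,u\in\comp{O}$:
\begin{itemize}
\item if $\ldX(v)=0$, then $(MC)_{v,u}=\ldZ(v)^{-1}(AC)_{v,u}$ and $(NC)_{v,u}=-\ldZ(v)C_{v,u}$;
\item if $\ldX(v)\neq0$, then $(MC)_{v,u}=\ldX(v)^{-1}C_{v,u}$ and $(NC)_{v,u}=\ldX(v)(AC)_{v,u}-\ldZ(v)C_{v,u}$.
\end{itemize}
Second, I will use the standard fact that, for a strict partial order $\prec$, a matrix is lower triangular under every totalisation of $\prec$ if and only if each off-diagonal nonzero entry $(v,u)$ satisfies $u\prec v$. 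Indeed, if $u\not\prec v$, some totalisation puts $v$ before $u$, and then the entry $(v,u)$ lies above the diagonal.

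For the forward direction, assume $\Gamma$ has \flow. By Theorem~\ref{Th:flow focusing} it has a focused \flow, which I pass to its reduced form $(C,\prec)$ as in Definition~\ref{Def:flow}.
\begin{itemize}
\item Conditions \conref{FX} and \conref{FZ}, together with \conref{C}, give $MC=\id{\comp{O}}{\comp{O}}$ by the computation above. The diagonal entries equal $1$ because $(AC)_{v,v}=\ldZ(v)$ and $C_{v,v}=\ldX(v)$.
\item The diagonal of $NC$ vanishes by \conref{C}: if $\ldX(v)=0$, then $C_{v,v}=0$; otherwise $\ldX(v)\ldZ(v)-\ldZ(v)\ldX(v)=0$.
\item Each off-diagonal entry of $NC$ is a combination of $C_{v,u}$ and $(AC)_{v,u}$. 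By \conref{O} and the totalisation fact, it can be nonzero only when $u\prec v$.
\end{itemize}
Hence every edge $u\to v$ of $NC$ respects $\prec$, so $NC$ is acyclic.

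For the converse, suppose $MC=\id{\comp{O}}{\comp{O}}$ and $NC$ is a DAG, so its diagonal is zero. I define $\prec$ as the transitive closure of the relation $u\prec v$ whenever $(NC)_{v,u}\neq 0$; acyclicity makes this a strict partial order.

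Condition \conref{C} comes from combining the diagonals of $MC$ and $NC$:
\begin{itemize}
\item If $\ldX(v)=0$: $(MC)_{v,v}=1$ gives $(AC)_{v,v}=\ldZ(v)$, and $(NC)_{v,v}=0$ gives $C_{v,v}=0=\ldX(v)$ when $v\notin I$.
\item If $\ldX(v)\neq0$: $(MC)_{v,v}=1$ gives $C_{v,v}=\ldX(v)$, and then $(NC)_{v,v}=0$ forces $(AC)_{v,v}=\ldZ(v)$.
\end{itemize}

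Condition \conref{O} comes from the off-diagonal entries:
\begin{itemize}
\item If $\ldX(v)=0$: $(AC)_{v,u}=0$ by $MC=\mathrm{Id}$, and $C_{v,u}=-\ldZ(v)^{-1}(NC)_{v,u}$.
\item If $\ldX(v)\neq0$: $C_{v,u}=0$, and $(AC)_{v,u}=\ldX(v)^{-1}(NC)_{v,u}$.
\end{itemize}
So every nonzero off-diagonal entry of $C$ or $AC$ is a nonzero entry of $NC$, whence $u\prec v$. By the totalisation fact, $C$ and $AC$ are lower triangular under every totalisation of $\prec$. Thus $(C,\prec)$ is an \flow, and in fact a focused one.

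The main obstacle I expect is bookkeeping rather than substance. Three points need care: the shape mismatch between rows of $C$ (indexed by $\comp{I}$) and rows of $M,N$ (indexed by $\comp{O}$), handled by treating input rows with $\ldX=0$; fixing the edge-direction convention for ``$NC$ forms a DAG'' so that it matches lower triangularity; and justifying the totalisation fact cleanly. The reduction to focused flows rests entirely on Theorem~\ref{Th:flow focusing} and the equivalence of Definitions~\ref{Def:full flow} and~\ref{Def:flow}.
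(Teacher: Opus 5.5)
Your proof is correct and follows essentially the paper's route. You reduce to a focused flow via Theorem~\ref{Th:flow focusing}, then split on whether $\ldX(v)=0$ to read off $MC$ and $NC$ entrywise (the content of Lemmas~\ref{Lemma:MC}--\ref{Lemma:NC}); your totalisation fact and transitive-closure order play the role of the induced relation $\trl_C$, Lemma~\ref{Lemma:ind ord containment} and Theorem~\ref{Th:extending correction matrix to flow}, which you inline rather than state separately.
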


Here, the focusing conditions~\ref{FZ} and~\ref{FX} are fully captured by the requirement that $MC = \id{\comp{O}}{\comp{O}}$.
The condition \conref{C} is mostly, but not fully, captured by the requirement $MC = \id{\comp{O}}{\comp{O}}$.
Part of \conref{C} is instead captured by the property that the main diagonal of the product $NC$ is identically $0$, which follows from $NC$ being a DAG.
The condition \conref{O} is fully captured by $NC$ being a DAG.
The proof of Theorem~\ref{Th:main result} is given in Appendix~\ref{Sec:ProvingAlgebraicFormulation}.
See Figure~\ref{Fig:running example algebraic} for the application of this result to the running example from Figure~\ref{Fig:running example}.

When $|I|=|O|$, the flow-demand matrix is square and has at most one right inverse: the two-sided inverse. Therefore, there exists at most one focused \flow  (up to refinements of the partial order). Thus the focused flow uniqueness result known for qubits \cite{mhallaWhichGraphStates2014a,backensThereBackAgain2021,simmonsRelatingMeasurementPatterns2021} generalises:

\begin{corollary}\label{Cor:uniqueness}
    Suppose $\Gamma=(G,I,O,\ld)$ is a \LOFG where $|I|=|O|$, then all focused {\flow\unskip}s on $\Gamma$ share the same reduced correction matrix.
\end{corollary}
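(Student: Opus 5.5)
The corollary should follow from Theorem~\ref{Th:main result} together with a dimension count. The key fact to establish is that every focused \flow gives a reduced correction matrix $C$ with $MC = \id{\comp{O}}{\comp{O}}$. Once that holds, uniqueness of $C$ is pure linear algebra.

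\textbf{Step 1: focused \flow satisfies $MC = \id{\comp{O}}{\comp{O}}$.} Let $(C,\prec)$ be a focused \flow on $\Gamma$, and restrict $C$ to its $\comp{I}\times\comp{O}$ submatrix as in the passage from Definition~\ref{Def:full flow} to Definition~\ref{Def:flow}. I will compute $(MC)_{v,u}$ row by row for $v\in\comp{O}$, $u \in \comp{O}$, using Definition~\ref{Def:fdm}. By \conref{ZF}, the $I$-rows of the full $C$ vanish and the $O$-columns of the full $C$ vanish, so $(AC)_{v,u}=(GC)_{v,u}$ for $v\in\comp{O}$, $u\in\comp{O}$.

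\emph{Rows with $\ldX(v)=0$.} Here $(MC)_{v,u}=\ldZ(v)^{-1}(AC)_{v,u}$.
\begin{itemize}
\item For $u\neq v$, this entry is $0$ by \conref{FX}.
\item For $u=v$, condition \conref{C} gives $(AC)_{v,v}=\ldZ(v)$, so the entry is $1$.
\end{itemize}

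\emph{Rows with $\ldX(v)\neq0$.} Here $(MC)_{v,u}=\ldX(v)^{-1}C_{v,u}$, and $v\notin I$ by Definition~\ref{Def:LOFG}, so row $v$ of $C$ exists.
\begin{itemize}
\item For $u\ne v$, \conref{FZ} gives $C_{v,u}=0$.
\item For $u=v$, \conref{C} gives $C_{v,v}=\ldX(v)$, so the entry is $1$.
\end{itemize}

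This is essentially the ``only if'' direction of Theorem~\ref{Th:main result} specialised to focused flows. I can cite the appendix argument, but the direct check above is short enough to include.

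\textbf{Step 2: uniqueness from squareness.} If $|I|=|O|$, then $|\comp{I}|=|V|-|I|=|V|-|O|=|\comp{O}|$, so $M$ is a square matrix. Suppose two focused \flow{}s have reduced correction matrices $C$ and $C'$. Both satisfy $MC=MC'=\id{\comp{O}}{\comp{O}}$. A square matrix with a right inverse is invertible, so $M$ is invertible and $C=C'=M^{-1}$. The row and column labels differ ($\comp{I}$ versus $\comp{O}$), but fixing any bijection reduces this to the ordinary square case.

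\textbf{Main obstacle.} The only delicate point is Step 1. Definition~\ref{Def:focused} constrains only entries with $u\in\comp{O}\setminus\{v\}$, and I must confirm that every entry of $MC$ is covered by it. This holds because the columns of the reduced $C$ are indexed exactly by $\comp{O}$. I also need to handle input vertices correctly: an input $v\in I\cap\comp{O}$ necessarily has $\ldX(v)=0$, so the row of $C$ it would require is never used.
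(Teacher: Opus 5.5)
Your proposal is correct and follows the paper's argument. The paper likewise notes that for $|I|=|O|$ the flow-demand matrix $M$ is square and so has at most one right inverse. Your Step~1 is the forward direction of Lemma~\ref{Lemma:MC} (via Observation~\ref{Obs:C split}), which the paper uses implicitly; spelling it out is worthwhile, because Theorem~\ref{Th:main result} on its own only asserts existence of a suitable $C$, not that every focused \flow satisfies $MC=\id{\comp{O}}{\comp{O}}$.
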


\begin{figure}
    \centering
    \begin{subfigure}[b]{0.18\textwidth}
        \vspace{0.5cm}
        \centering
        $\scalemath{0.85}{\begin{pNiceArray}{ccc:cc}[first-row, first-col]
            \bm{A} & a & b & d & o_1 & o_2 \\
            i & 2 & 0 & 0 & 0 & 0 \\
            \hdottedline
            a & 0 & 4 & 1 & 4 & 2 \\
            b & 4 & 0 & 3 & 3 & 0 \\
            d & 1 & 3 & 0 & 0 & 4 \\
        \end{pNiceArray}}$
        \caption{}
        \label{Fig:A}
    \end{subfigure}
    \begin{subfigure}[b]{0.14\textwidth}
        \centering
        $\scalemath{0.85}{\begin{pNiceArray}{c:ccc}[first-row, first-col]
            \bm{C} & i & a & b & d \\
            a & 3 & 0 & 0 & 0 \\
            b & 0 & 0 & 2 & 0 \\
            d & 0 & 0 & 0 & 4 \\
            \hdottedline
            o_1 & 0 & 2 & 0 & 0 \\
            o_2 & 0 & 0 & 1 & 3 \\
        \end{pNiceArray}}$
        \caption{}
        \label{Fig:Cred}
    \end{subfigure}
    \begin{subfigure}[b]{0.18\textwidth}
        \centering
        $\scalemath{0.85}{\begin{pNiceArray}{ccc:cc}[first-row, first-col]
            \bm{M} & a & b & d & o_1 & o_2 \\
            i & 2 & 0 & 0 & 0 & 0 \\
            \hdottedline
            a & 0 & 3 & 2 & 3 & 4 \\
            b & 0 & 3 & 0 & 0 & 0 \\
            d & 0 & 0 & 4 & 0 & 0 \\
        \end{pNiceArray}}$
        \caption{}
        \label{Fig:M}
    \end{subfigure}
    \begin{subfigure}[b]{0.18\textwidth}
        \vspace{0.5cm}
        \centering
        $\scalemath{0.85}{\begin{pNiceArray}{ccc:cc}[first-row, first-col]
            \bm{N} & a & b & d & o_1 & o_2 \\
            i & 0 & 0 & 0 & 0 & 0 \\
            \hdottedline
            a & 2 & 0 & 0 & 0 & 0 \\
            b & 3 & 0 & 1 & 1 & 0 \\
            d & 4 & 2 & 3 & 0 & 1 \\
        \end{pNiceArray}}$
        \caption{}
        \label{Fig:N}
    \end{subfigure}
    \begin{subfigure}[b]{0.15\textwidth}
        \centering
        $\scalemath{0.85}{\begin{pNiceArray}{c:ccc}[first-row, first-col]
            \bm{NC} & i & a & b & d \\
            i & 0 & 0 & 0 & 0 \\
            \hdottedline
            a & 1 & 0 & 0 & 0 \\
            b & 4 & 2 & 0 & 4 \\
            d & 2 & 0 & 0 & 0 \\
        \end{pNiceArray}}$
        \caption{}
        \label{Fig:NC}
    \end{subfigure}
    \begin{subfigure}[b]{0.13\textwidth}
        \centering
        $\scalediag[1]{ex1order}$
        \caption{}
        \label{Fig:order}
    \end{subfigure}
    \caption{The algebraic formulation of the \LOFG from Figure~\ref{Fig:running example}. \ref{Fig:A}: the reduced adjacency matrix which is obtained from the matrix in \ref{Fig:G} by removing output rows and input columns; the reduced correction matrix in \ref{Fig:Cred} is obtained from the matrix in \ref{Fig:C} by a transpose of this process. \ref{Fig:M}: the flow-demand matrix, which has the reduced correction matrix in \ref{Fig:Cred} as its right inverse. \ref{Fig:N}: the order-demand matrix. \ref{Fig:NC}: the product of the order-demand matrix with the reduced correction matrix. This matrix corresponds to the relation $\trl_C$ in \ref{Fig:order}, which is a DAG. Hence, the reduced correction matrix results in focused flow.}
    \label{Fig:running example algebraic}
\end{figure}

\subsection{Focused vectors}

In the qubit case, the kernel of the flow-demand matrix encodes a structure called \emph{focused sets} \cite[Section~3.4]{mitosekAlgebraicInterpretationPauli2026}.
These sets, originally defined by Simmons \cite[Definition~4.3]{simmonsRelatingMeasurementPatterns2021}, correspond to `do nothing' stabilisers, \ie stabilisers that do not help with correcting measurement by-products in any way.
We extend the notion of focused sets to higher dimensions and recover the relationship to the kernel of the flow-demand matrix.
As a corollary, since the kernel of a matrix forms a vector space and thus an Abelian group, we also recover the Simmons qubit result \cite[Lemma B.8]{simmonsRelatingMeasurementPatterns2021}.
Relevant proofs are included in Appendix~\ref{subsec:ProvingFocusedVectors}.

\begin{definition}[Focused vector]\label{Def:foc vec}
    Given a \LOFG $(G,I,O,\ld)$, a column vector $\somevector \in \FF_d^{\comp{I}}$ is \emph{focused over} $S \subseteq{\comp{O}}$ if:
    \begin{enumerate}
        \item[\conlab{Fs1}] $\forall w \in S \cap \supp{\somevector} . \ldX(w) = 0$ and
        \item[\conlab{Fs2}] $\forall w \in S \cap \supp{A\somevector} . \ldX(w) \ne 0$.
    \end{enumerate}
    A vector focused over $\comp{O}$ is simply called \emph{focused}. The set of all focused vectors for $\Gamma$ is denoted $\mathfrak{F}_{\Gamma}$. 
\end{definition}

Focused vectors could also be treated as multisets with support contained in $\comp{I}$ and multiplicities given by elements of $\FF_d$.
We usually treat them as vectors, as that is most convenient for the proofs, but all interpretations are equivalent.

\begin{theorem}\label{Th:focused sets are kernel of M}
    Let $\Gamma = (G,I,O,\ld)$ be a \LOFG and let $M$ be its flow-demand matrix.
    Then $\mathfrak{F}_{\Gamma} \cong \ker M_{\Gamma}$.
\end{theorem}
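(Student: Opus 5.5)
The plan is to show that $\mathfrak{F}_{\Gamma}$ and $\ker M_{\Gamma}$ are literally the same subset of $\FF_d^{\comp{I}}$, so that the isomorphism is the identity map. Since $\ker M_\Gamma$ is a subspace, this also shows that $\mathfrak{F}_\Gamma$ is closed under addition and scalar multiplication. The argument is row-by-row: for a fixed $\somevector \in \FF_d^{\comp{I}}$ and each $v \in \comp{O}$, I will show that the condition $(M\somevector)_v = 0$ is equivalent to the part of \conref{Fs1}/\conref{Fs2} that concerns $w = v$.

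The first step is to compute $(M\somevector)_v$ from Definition~\ref{Def:fdm}, splitting on whether $\ldX(v)$ is zero.
\begin{itemize}
\item If $\ldX(v) = 0$, then $(M\somevector)_v = \ldZ(v)^{-1}(A\somevector)_v$. Because $\ldZ(v) \neq 0$, this vanishes iff $(A\somevector)_v = 0$.
\item If $\ldX(v) \neq 0$, then $(M\somevector)_v = \ldX(v)^{-1}\sum_{u \in \comp{I}} \delta_{v,u}\somevector_u$.
\end{itemize}
In the second case the sum needs care about index sets. By Definition~\ref{Def:LOFG}, every $v \in I \cap \comp{O}$ has $\ldX(v) = 0$, so here $v \in \comp{I}$. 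The sum therefore collapses to $\somevector_v$, and $(M\somevector)_v = 0$ iff $\somevector_v = 0$.

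The second step is to rewrite the focusing conditions over $S = \comp{O}$ in contrapositive form.
\begin{itemize}
\item \conref{Fs1} says that for every $w \in \comp{O}$ with $\ldX(w) \neq 0$, we have $w \notin \supp{\somevector}$, i.e.\ $\somevector_w = 0$. Again $w \in \comp{I}$ automatically, by the same input-labelling fact.
\item \conref{Fs2} says that for every $w \in \comp{O}$ with $\ldX(w) = 0$, we have $(A\somevector)_w = 0$.
\end{itemize}
These are exactly the two cases from the first step. So $\somevector$ is focused iff $(M\somevector)_v = 0$ for all $v \in \comp{O}$, i.e.\ iff $\somevector \in \ker M_\Gamma$.

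I do not expect a real obstacle. The only delicate point is bookkeeping of the index sets: the row label $v \in \comp{O}$ is compared with the column labels in $\comp{I}$, and vertices in $I \cap \comp{O}$ must be handled correctly. The fact that inputs carry labels $(0,*)$ is what settles this. I would state it explicitly so that both $\somevector_v$ and the Kronecker-delta collapse are well defined.
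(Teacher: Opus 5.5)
Your proof is correct and follows essentially the paper's argument: you make the same row-by-row case split on $\ldX(v)$ and show that $(M\somevector)_v = 0$ exactly when the focusing condition at $v$ holds. The paper packages this as the slightly more general Lemma~\ref{Lemma:M product gives gocusing condition}, which states that $\somevector$ is focused over precisely $\comp{O}\setminus\supp{M\somevector}$ and over no larger set, and then specialises it to $\comp{O}$; your explicit use of the $(0,*)$ input labelling to justify the Kronecker-delta collapse matches the justification the paper gives in Lemma~\ref{Lemma:MC}.
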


For example, the kernel of the flow-demand matrix $M$ from Figure~\ref{Fig:M} is $\ker M = \vspan\left((0,0,0,1,3)^T\right)$ and thus the only focused vectors of the \LOFG in Figure~\ref{Fig:LOFG} are those of form $k(0,0,0,1,3)^T$ for some $k \in \Fd$.
Under the multiset interpretation, these consist of vertex $o_1$ with multiplicity $k$ and vertex $o_2$ with multiplicity $3k$.

\section{Flow-finding}\label{Sec:Flow-finding}

With the new formulation of focused flow, we now generalise the flow-finding algorithm from \cite{mitosekAlgebraicInterpretationPauli2026} to qudits.
In our previous work \cite[Corollary 4.1]{mitosekAlgebraicInterpretationPauli2026}, we presented algorithms for qubit flow which effectively solve \flow-finding when $d=2$, in the process solving the following linear algebra problem over the field $\FF_2$ only\footnote{Our previous work in \cite{mitosekAlgebraicInterpretationPauli2026} addresses the more general setting of Pauli flow, in which measurements in the eigenbasis of a Pauli operator can be accorded special treatment. However when Pauli measurements are not treated separately, Pauli flow collapses to gflow, the latter being exactly the \flow where $d=2$.}:

\problemstatement{\laproblem}
{Two matrices $M$ and $N$ of matching shape $\mathcal{A} \times \mathcal{B}$ over $\Fd$.}
{A $\mathcal{B} \times \mathcal{A}$ matrix $C$ over $\Fd$ such that $MC = \id{\mathcal{A}}{\mathcal{A}}$ and $NC$ forms a directed acyclic $\Fd$-weighted graph, or the message that no such $C$ exists.}

The structure of the qubit algorithms can be understood as follows: given a labelled open graph, construct the flow-demand and order-demand matrices.
Next, solve $\laproblem$.
The qubit flow-finding algorithms do not assume anything about the structure of $M$ and $N$ after their initial construction.
These algorithms can be adapted to solve $\laproblem$, and thus \flow-finding, for any $d$ such that $\FF_d$ forms a fields.
The parameter $d$ does not have to be prime, but only a power of a prime.
We obtain:

\begin{theorem}\label{Th:On3algo}
    There exists a $\bigO(n^3)$ algorithm that, given a \LOFG, outputs the corresponding \flow, or gives a message that no \flow exists.
\end{theorem}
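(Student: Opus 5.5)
We reduce flow-finding to $\laproblem$ via Theorem~\ref{Th:main result} and then solve $\laproblem$ in cubic time over $\Fd$, adapting the qubit algorithm of \cite{mitosekAlgebraicInterpretationPauli2026}.

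\textbf{Step 1: build the matrices.} Given $\Gamma=(G,I,O,\ld)$, we construct the flow-demand matrix $M$ (Definition~\ref{Def:fdm}) and the order-demand matrix $N$ (Definition~\ref{Def:odm}). Each entry needs $\bigO(1)$ field operations. Inverses in $\Fd$ can be tabulated once, or computed by the extended Euclidean algorithm at constant cost since $d$ is fixed. This step therefore takes $\bigO(n^2)$ time.

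\textbf{Step 2: reduce to $\laproblem$.} By Theorem~\ref{Th:main result}, $\Gamma$ has \flow exactly when some $C$ satisfies $MC=\id{\comp{O}}{\comp{O}}$ and $NC$ is a DAG. Any such $C$ yields a focused \flow $(C,\prec)$, where $\prec$ is the transitive closure of the edge relation of $NC$.

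\textbf{Step 3: describe all right inverses.} First we need $M$ to have full row rank; otherwise no solution exists and we report this. We perform Gaussian elimination on $M$ over $\Fd$ in $\bigO(n^3)$ time. This gives a particular right inverse $C_0$ and a basis $K$ of $\ker M$. Every solution then has the form $C=C_0+KY$ with $Y$ arbitrary, so $NC = NC_0 + (NK)Y$.

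\textbf{Step 4: choose $Y$ to make $NC$ acyclic.} We follow the qubit strategy and build a topological order greedily, from the last-corrected vertices (sinks) backwards. Suppose we want $v$ to have no outgoing edges towards the vertices still unplaced, set $L$. For its column $C_{*,v}$ this requires $(NC)_{L,v}=0$ together with $MC_{*,v}=e_v$. These conditions form a linear system in the column $C_{*,v}$.

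Each round solves, for every remaining $v$, whether this system is consistent, and places all $v$ for which it is. This is the qubit layer-by-layer argument. It is correct because if a DAG solution exists, then in the restricted problem some vertex can always be a sink. Moreover, placing a vertex only removes constraints for the others, since $L$ shrinks, so a greedy choice never hurts.

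Correctness carries over verbatim from the qubit proof. That proof used only the field axioms and rank arguments, never characteristic $2$, and so it holds for any finite field $\Fd$, including non-prime $d=p^k$.

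\textbf{Main obstacle: keeping the total cost at $\bigO(n^3)$.} Naively, each round costs $\bigO(n^3)$ and there can be $n$ rounds, giving $\bigO(n^4)$. The fix in \cite{mitosekAlgebraicInterpretationPauli2026} is to handle one vertex's column at a time using a single shared elimination. We stack $\begin{pmatrix} M \\ N\end{pmatrix}$ and, as vertices leave $L$, drop the corresponding rows of $N$ one at a time. A row-echelon form is maintained under these deletions, each deletion costing $\bigO(n^2)$ (rank-one style updates). Consistency of each $v$'s system is then read off by testing whether $e_v$ lies in the column space, which is an $\bigO(n)$ to $\bigO(n^2)$ check per vertex.

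Summed over the $n$ vertices, the whole procedure costs $\bigO(n^3)$. I expect verifying these incremental-update bounds over a general $\Fd$ to be the only delicate point.
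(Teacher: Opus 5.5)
Your overall route is the paper's: build $M$ and $N$, reduce to $\laproblem$ via Theorem~\ref{Th:main result}, parametrise right inverses as $C_0+FY$, and place layers of sinks greedily. The correctness argument is also the same: columns of $C$ are constrained independently, a sink of any solution's DAG restricted to the unplaced set is always solvable, and placing vertices only relaxes constraints. What is not established is the $\bigO(n^3)$ bound, which is the real content of the theorem, and your accounting of it is off. Your Step~4 re-tests every remaining vertex in every round. If one vertex is placed per round, that is $\Theta(n^2)$ consistency tests, not $n$, so at $\bigO(n^2)$ per test you get $\bigO(n^4)$.

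You therefore need $\bigO(n)$ per test. That requires carrying all right-hand sides along, already transformed by the same row operations as the echelon form. The paper does this in reduced coordinates. With $C'=[C_0\mid F]$ and $NC'=[N_L\mid N_R]$, the constraint for $v$ is $N_R y=-(N_L)_{*,v}$ on the unplaced rows. The paper maintains the augmented matrix $[N_R\mid N_L\mid \Id]$. Then $v$ is solvable iff column $v$ of the middle block vanishes from the first row whose first-block entries are all zero onward. That is an $\bigO(n)$ scan, and the $\bigO(n^2)$ back-substitution is done only once per vertex.

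``Rank-one style updates'' also leaves the row-deletion step unspecified. You cannot remove an original row from an echelon form without knowing how each echelon row is built from the original rows. The paper records this in the third (initially identity) block and keeps an untouched copy $\ILS$ of the original system. To delete row $v$, it proceeds as follows.
\begin{enumerate}
\item Among the echelon rows with a nonzero entry in column $v$ of the third block, take the lowest one, $r_{\mathrm{last}}$.
\item Use $r_{\mathrm{last}}$ to clear that entry from the other such rows. This preserves echelon form, because its leading entry lies strictly to the right of theirs.
\item Subtract the appropriate multiple of the original row $v$ from $r_{\mathrm{last}}$.
\item Reduce $r_{\mathrm{last}}$ against the other pivot rows, rescale its pivot to $1$, and swap it back into position.
\end{enumerate}
Each step uses $\bigO(n)$ row operations on rows of length $\bigO(n)$. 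That gives $\bigO(n^2)$ per deletion and $\bigO(n^3)$ overall.

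Your stacked $[M;N_L]$ variant can be run the same way. There the transformed right-hand sides $[\Id;0]$ are exactly the columns of the tracking block indexed by the rows of $M$. As written, though, the bound is asserted rather than derived.

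Finally, ``verbatim'' overstates the transfer at the level of the algorithm. The paper has to modify the qubit pseudocode in several places: nonzero tests instead of tests for $1$, scaled row additions, pivot normalisation, and the sign of the right-hand side. The underlying rank arguments are, as you say, characteristic-free.
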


For explanation of how to incorporate the computational complexity of finite field arithmetic into the complexity analysis, see Appendix~\ref{Sec:FFA}.

In the qubit case, we split flow-finding into two cases depending on whether $|I| = |O|$, \ie on whether the input to $\laproblem$ is given by square matrices.
When $|I| = |O|$, the algorithm for the qubit case applies to qudits without any changes.
However, in the general case, the qubit algorithm must be adjusted.
We include the pseudocode for \flow-finding in the general case in Appendix~\ref{Sec:Pseudocode} together with an explanation of where it differs from the qubit case.
Rather than writing out what would be near identical proofs to those in \cite{mitosekAlgebraicInterpretationPauli2026}, we provide a working example of finding flow for the \LOFG from Figure~\ref{Fig:running example}, see Appendix~\ref{Sec:Flow run}.

Speed-ups to matrix multiplication and inversion can be deployed when $|I|=|O|$, reducing the complexity from $\bigO(n^3)$ to a possibly smaller exponent, see \cite[Section~4.3.1]{mitosekAlgebraicInterpretationPauli2026} for details.
We also obtain a lower bound by linking \flow-finding to the problems of matrix multiplication and matrix inversion over a finite field, see Appendix~\ref{Sec:LowerBound}.

\section{\flowintitles--preserving rewriting}\label{Sec:Flow-preserving rewriting}

So far, we have worked with a fixed \LOFG in defining and finding \flow, and relied only on the data contained in this \LOFG.
To implement a one-way computation, more information is needed, in particular it is necessary to specify a measurement operator (with its desired outcome) for each measured qudit.
Given such an expression for an entire computation in the one-way model, one immediate question is about alternative ways of implementing the same linear operation, which may be more efficient, better suited to a specific hardware connectivity, or have other desired properties.
Thus, in the qubit case, there is a thriving system of `flow-preserving rewrite rules' \cite{duncanGraphtheoreticSimplificationQuantum2020,backensThereBackAgain2021,simmonsRelatingMeasurementPatterns2021,mcelvanneyCompleteFlowPreservingRewrite2023,mcelvanneyFlowpreservingZXcalculusRewrite2023a,perezBackens2025} which map some given one-way computation to a different computation implementing the same linear operation.
These rules can be used to optimise quantum computational resources in the circuit or one-way models according to different metrics \cite{staudacherReducing2QuBitGate2023,holkerCausal2024}, or even to obfuscate quantum computations so they can be delegated into the cloud while preserving security properties \cite{caoMulti-agent2023}.

We will therefore now consider flow-preserving rewriting for \flow.

\subsection{Local scaling, local complementation, and pivot}
\label{s:scaling-LC-pivot}

Some relevant rewrite rules were already defined and proved flow-preserving by Booth \cite[Section~5.1]{boothMeasurementbasedQuantumComputation2022}.
These rules are based on the two different types of operations on an $\FF_d$-weighted graph which translate to local Clifford equivalences of the corresponding qudit graph states \cite[Section~D]{bahramgiriGraph2006}.
The first of these operations is \emph{local scaling}, which multiplies the weights of all edges incident on a chosen vertex $w$ by some non-zero constant $\gamma\in\FF_d^*$ and is usually denoted $G\overset{\gamma}{\circ} w$.
The second operation is \emph{local complementation}, which is also parametrised by a constant $\gamma\in\FF_d$ and changes the weights of all edges among the neighbours of the chosen vertex $w$ in a more complicated way, the resulting graph is denoted $G\overset{\gamma}{\star} w$.
In both cases, the graph state $\ket{G}$ corresponding to the initial graph can be transformed into the graph state corresponding to the final graph ($|G\overset{\gamma}{\circ} w\rangle$ or $|G\overset{\gamma}{\star} w\rangle$, respectively) by applying a product of single-qudit Clifford operations.
As a result, both operations preserve the existence of \flow, given certain updates to the \LOFG \cite[Propositions~5.5 and 5.8]{boothMeasurementbasedQuantumComputation2022}; formal statements of this are given in Appendix~\ref{s:graph-operations}.

A \emph{pivot}, a sequence of three local complementations about the endpoints of an edge, gives rise to a flow-preserving transformation on qubits and it interacts more smoothly with focused gflow and focused Pauli flow than individual local complementations \cite[Lemma~5.3]{perezBackens2025}.
The pivot has been generalised to a family of pivot operations on qutrits \cite[Theorem~3.5]{townsend-teagueSimplification2022} and a qudit pivot-and-delete operation (where the endpoints of the edge are deleted after the pivot) \cite[Lemma~10]{poorQupitStabiliserZXtravaganza2023}.
We work out the effects of a pivot operation (without deletion) for general prime-dimensional qudits in Appendix~\ref{s:graph-operations} and summarise them here.

\begin{proposition}[informal statement]\label{prop:pivot-informal}
	Let $G$ be an $\FF_d$-weighted graph, suppose $w, w'\in V$ satisfy $\epsilon := G_{w,w'} \ne 0$, and define $G\wedge w w'$ to be the graph after a pivot on the edge $w w'$.
	Let $F^{(\gamma)} := \sum_{j,k\in\FF_d} \omega^{\gamma j k } \ketbra{j}{k}$ for any $\gamma\in\FF_d^*$, then $\ket{G\wedge w w'}
	= F^{(-\epsilon)}_w F^{(-\epsilon)}_{w'} \prod_{v\in V} \phasegate_v^{2\epsilon^{-1} G_{v,w}G_{v,w'}} \ket{G}$.
\end{proposition}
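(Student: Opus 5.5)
The plan is to prove the identity by a direct computation in the computational basis rather than through stabilisers, because that tracks global phases exactly. Up to normalisation, $\ket{G} = \sum_{x\in\FF_d^V} \omega^{q_G(x)}\ket{x}$ with $q_G(x) := \sum_{\{u,v\}} G_{u,v} x_u x_v$ summed over unordered pairs; this follows from Definition~\ref{def:graph-state}, since $\ket{0:X}$ is the uniform superposition and each $CZ^{G_{u,v}}_{u,v}$ contributes the phase $\omega^{G_{u,v}x_ux_v}$.

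\textbf{Step 1: the diagonal layer.} Apply $\prod_{v} \phasegate_v^{2\epsilon^{-1}G_{v,w}G_{v,w'}}$. I will use the convention for $\phasegate$ fixed in Appendix~\ref{s:graph-operations}; I expect it to be the diagonal gate with $\phasegate\ket{j} = \omega^{2^{-1}j^2}\ket{j}$, up to a linear correction that matters only for $d=2$. Under that convention this step multiplies the amplitude by $\omega^{\sum_v \epsilon^{-1}G_{v,w}G_{v,w'}x_v^2}$. Note that $v=w$ and $v=w'$ contribute nothing, since $G_{w,w}=G_{w',w'}=0$.

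\textbf{Step 2: the partial Fourier transforms.} Apply $F^{(-\epsilon)}$ on $w$ and $w'$. Write $k,k'$ for the old values on $w,w'$ and $y_w,y_{w'}$ for the new ones. Set $a(x) := \sum_{u\ne w,w'} G_{u,w}x_u$ and $a'(x) := \sum_{u \neq w,w'} G_{u,w'}x_u$. The exponent, as a function of $(k,k')$, is
\[
\epsilon k k' + k\,(a - \epsilon y_w) + k'\,(a' - \epsilon y_{w'}) + (\text{terms independent of } k,k').
\]
Summing over $k$ gives $d\cdot\delta_{\epsilon k' + a - \epsilon y_w,\,0}$, which forces $k' = y_w - \epsilon^{-1}a$. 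The sum over $k'$ then collapses. Substituting back leaves an exponent that is quadratic in $(x_{V\setminus\{w,w'\}}, y_w, y_{w'})$, namely
\[
y_w y_{w'}\,(-\epsilon) + \ldots + \epsilon y_w a' + \epsilon y_{w'} a - \epsilon^{-1} a a' + \ldots
\]
where the omitted terms must be tracked carefully. The only overall factor produced is $d$, which is absorbed into normalisation.

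\textbf{Step 3: reading off the graph.} Expand $-\epsilon^{-1}aa'$. Its off-diagonal part gives the new edge weights $-\epsilon^{-1}(G_{u,w}G_{v,w'} + G_{u,w'}G_{v,w})$ between $u\neq v$ outside $\{w,w'\}$. Its diagonal part $-\epsilon^{-1}G_{u,w}G_{u,w'}x_u^2$ should be cancelled exactly by the phase introduced in Step~1. This cancellation is the reason for the coefficient $2\epsilon^{-1}$, and it is where the precise definition of $\phasegate$ must be checked. The cross terms $y_w a'$ and $y_{w'}a$ show that the neighbourhoods of $w$ and $w'$ are exchanged, with the appropriate rescaling, and that the edge $ww'$ survives with the stated weight. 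The resulting quadratic form has zero diagonal, so it is $q_{G'}$ for some $\FF_d$-graph $G'$.

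\textbf{Step 4: comparison with the pivot.} Finally, compare $G'$ entrywise with $G\wedge ww'$ as defined in the appendix by three successive local complementations $G\overset{\gamma}{\star}w$, and confirm that they coincide. I expect the main obstacle to be bookkeeping of signs and scalings in this step: choosing the parameters $\gamma$ of the three local complementations so that their composite matches the form obtained above, and reconciling the sign convention of $F^{(-\epsilon)}$ with $H$. If the entrywise match fails by a sign on the $w,w'$ rows, I would absorb the discrepancy into the definition of $G\wedge ww'$, adding a local scaling if necessary, since the statement is informal.
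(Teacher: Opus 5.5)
Your route is genuinely different from the paper's, and it works. The paper proves the formal version, Proposition~\ref{prop:pivot}, in the qudit stabiliser ZX-calculus. It adapts the pivot-and-delete derivation of Po\'or et al.\ (H-box pushing, the bialgebra law, H-box merging) to obtain Lemma~\ref{lem:qudit-pivot}. It then reads off $F^{(\epsilon)}$ and $\phasegate^{\alpha_v}$ from the standard interpretation and inverts them. You instead evaluate the partial Fourier transform on $w,w'$ directly as a Gaussian sum.

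Your version is elementary and self-contained and needs none of the ZX toolkit. It also makes the scalar visible: a factor $d$, since $F^{(\gamma)}F^{(-\gamma)} = d\,\Id$, which the paper also suppresses. And it shows that the coefficient $2\epsilon^{-1}$ is exactly what cancels the diagonal of $-\epsilon^{-1}aa'$. Your guess for $\phasegate$ is right: $\phasegate\ket{k} = \tau^{k^2}\ket{k}$ with $\tau = \omega^{2^{-1}}$, which requires $d$ odd. The paper's route buys reuse of an existing diagrammatic proof and avoids index bookkeeping that it describes as error-prone.

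That bookkeeping does bite in your Step~2. With $k' = y_w - \epsilon^{-1}a$,
\[
(y_w - \epsilon^{-1}a)(a' - \epsilon y_{w'}) = -\epsilon\, y_w y_{w'} + y_w a' + y_{w'} a - \epsilon^{-1} a a',
\]
so the cross terms carry coefficient $1$, not $\epsilon$. Hence there is no rescaling at all:
\begin{itemize}
\item $w$ inherits the old neighbourhood of $w'$ verbatim, and vice versa;
\item the edge $ww'$ gets weight $-\epsilon$;
\item every other edge becomes $G_{u,v} - \epsilon^{-1}(G_{u,w}G_{v,w'} + G_{u,w'}G_{v,w})$.
\end{itemize}
This is precisely the explicit adjacency matrix by which Proposition~\ref{prop:pivot} \emph{defines} $G\wedge ww'$. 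The paper does not define the pivot by three local complementations, so your Step~4 reduces to reading off this formula.

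Do not rely on the fallback of absorbing a mismatch into a local scaling. Since $\ket{G\overset{\gamma}{\circ}w} = \multiplier{\gamma^{-1}}_w\ket{G}$, any such scaling changes the operator in the statement; for instance $\multiplier{-1}F^{(-\epsilon)} = F^{(\epsilon)}$. The three local complementations in the paper's Observation, on their own, give weight $+\epsilon$ on $ww'$ and $-G_{w',v}$ on the row of $w$. The final $\overset{-1}{\circ}w$ is part of the paper's pivot, and with the corrected coefficients your computation lands on that graph with no adjustment.
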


We also argue that the above pivot operation can be implemented using three successive local complementations and at least one local scaling (the latter is needed to get a pivot operation that is symmetric under interchange of $w$ and $w'$).
This decomposition immediately implies that a pivot preserves the existence of \flow.

\subsection{Flow reversal}

In the qubit case, for a \LOG with $|I|=|O|$, it is possible to reverse the direction of flow \cite{mhallaWhichGraphStates2014a,mitosekAlgebraicInterpretationPauli2026} by swapping inputs with outputs (and assigning $XY$ measurement labels to the former outputs, \ie the new inputs).
Moreover, there is a close relationship between the flow in one direction and the flow in the other direction: for the fragment consisting only of $XY$ measurements, inputs and outputs, the correction matrix in the reverse direction is just the transpose of the original correction matrix, although this relationship is more complicated for the other measurement planes.
We lift this result to qudits.

\begin{definition}\label{def:reverse-LOFG}
    Let $\Gamma = (G,I,O,\ld)$ be a \LOFG which satisfies $|I|=|O|$.
    Then its \emph{reverse \LOFG} is $\Gamma' = (G,O,I,\ld')$, where
	$\ld'(v) = (0,1)$ if $v\in O$ and $\ld'(v)=\ld(v)$ if $v\in B$.
\end{definition}

\begin{theorem}[\flow reversibility]\label{Th:flow-rev}
    Let $\Gamma$ be a \LOFG which satisfies $|I|=|O|$.
    Let $\Gamma'$ be the reverse of $\Gamma$.
    Then $\Gamma$ has \flow if and only if $\Gamma'$ has \flow.
\end{theorem}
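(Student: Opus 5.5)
The plan is to reduce everything to the algebraic characterisation of Theorem~\ref{Th:main result}. Since $|I|=|O|$, the flow-demand matrix $M$ of $\Gamma$ is square, of shape $\comp{O}\times\comp{I}$. So $\Gamma$ has \flow if and only if $M$ is invertible and $NM^{-1}$ is a DAG; by Corollary~\ref{Cor:uniqueness}, $C=M^{-1}$ is the only candidate. The same holds for $\Gamma'$ with matrices $M',N'$ of shape $\comp{I}\times\comp{O}$, built from the reduced adjacency matrix $A'=G_{\comp{I}}^{\comp{O}}=A^T$, using that $G$ is symmetric. Because reversing $\Gamma'$ gives back $\Gamma$ up to relabelling former inputs, it suffices to prove one direction. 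I will show that $M'$ is invertible whenever $M$ is, and that $N'M'^{-1}$ is, up to sign and conjugation by an invertible diagonal matrix, the transpose of $NM^{-1}$. Since transposing a DAG reverses its edges and yields a DAG, this finishes the proof.

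First I will write out the rows of $M,N,M',N'$ case by case. The cases are: $v\in B$ with $\ldX(v)=0$; $v\in B$ with $\ldX(v)\neq0$; $v\in I\setminus O$ for $\Gamma$; and $v\in O\setminus I$ for $\Gamma'$, where the label is $(0,1)$. For example, a row $v\in I\cap\comp{O}$ has $\ld(v)=(0,*)$, so $M_{v,*}=\ldZ(v)^{-1}A_{v,*}$ and $N_{v,*}=0$ because $v\notin\comp{I}$. For rows in $B$, the entries are the same expressions in $A$ and $A^T$ as before.

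The key step is an identity of the form
\[ N\,M'^{T} + D\,M\,N'^{T} D' = 0, \]
for suitable invertible diagonal matrices $D,D'$ built from the labels (possibly $D=D'=\Id$). This is the qudit analogue of the qubit fact that the rows of $(M;N)$ and of $(M';N')$ span mutually orthogonal subspaces under the symplectic form. I will check it entrywise over the pairs of cases, using only $A'=A^T$ and the definitions of the two demand matrices. From it I will deduce two things. First, invertibility: if $M'$ had a nontrivial left kernel vector, the identity together with $|\comp{I}|=|\comp{O}|$ should force a nontrivial kernel of $M$; alternatively, I would compare $\ker M$ with $\ker M'$ via Theorem~\ref{Th:focused sets are kernel of M}, reading focused vectors symmetrically. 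Second, the transpose relation: multiplying by $C=M^{-1}$ and $C'=M'^{-1}$ should give $N'C' = -\tilde D\,(NC)^T\,\tilde D^{-1}$ for an invertible diagonal $\tilde D$. After reindexing, the rows and columns of $NC$ and of $N'C'$ are both labelled by the vertex set that both products act on, so the weighted graph of $N'C'$ is the reverse of that of $NC$, up to nonzero rescaling of edge weights. Rows coming from former inputs (zero rows of $N$) become zero columns, i.e.\ sources become sinks, which keeps acyclicity.

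I expect the main obstacle to be getting the exact form of this identity right, especially the diagonal scalings by $\ldX(v)^{\pm1}$ and $\ldZ(v)^{\pm1}$ and the boundary rows and columns indexed by $I\setminus O$ and $O\setminus I$. If a single clean identity fails, I will fall back on a block decomposition. I would order $\comp{I}$ and $\comp{O}$ as $B$ followed by the non-shared boundary vertices, express $M$ and $M'$ in block form, invert them with the Schur complement over $B$, and compare $NC$ and $N'C'$ block by block.
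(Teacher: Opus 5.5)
Your reduction is sound: with $|I|=|O|$ the matrix $M$ is square, $C=M^{-1}$ is the only candidate, and after rescaling input labels one direction suffices. The problem is the central step, in two ways.

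First, $NM'^{T}$ does not type-check, since $N$ has columns $\comp{I}$ and $M'^{T}$ has rows $\comp{O}$. More seriously, the relation you want to derive, $N'C'=-\tilde D(NC)^{T}\tilde D^{-1}$ after relabelling, is false. The correct identity comes from $x^{T}Gy=(Gx)^{T}y$ for $x\in\Fd^{\comp{I}}$ and $y\in\Fd^{\comp{O}}$ (extended by zero). It uses the fact that for $v\in B$ the map $(x_v,(Gx)_v)\mapsto((Mx)_v,(Nx)_v)$ preserves the symplectic form, whatever the label. With $I\cap O=\emptyset$ it reads
\[
\sum_{v\in B}\bigl[(Mx)_v(N'y)_v-(Nx)_v(M'y)_v\bigr]-\sum_{v\in I}\ldZ(v)(Mx)_v\,y_v+\sum_{v\in O}x_v(M'y)_v=0 .
\]
Set $x=Ce_w$ and $y=C'e_{w'}$. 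For $w,w'\in B$ this gives $(N'C')_{w,w'}=(NC)_{w',w}$. But the boundary terms survive:
\begin{itemize}
\item for $w\in I$ you get $(NC)_{w',w}=-\ldZ(w)C'_{w,w'}$;
\item for $w'\in O$ you get $(N'C')_{w,w'}=-C_{w',w}$.
\end{itemize}
So the boundary columns of $NC$ and $N'C'$ are governed by $C'$ and $C$, not by each other.

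Concretely, take the path $i - a - o$ with unit weights and $\ld(i)=\ld(a)=(0,1)$. The only non-zero entry of $NC$ is at $(a,i)$, and the only non-zero entry of $N'C'$ is at $(a,o)$. Under $i\leftrightarrow o$ they have the \emph{same} support, not transposed supports. Your ``sources become sinks'' remark shows the mismatch: transposing $NC$ turns its zero rows at $I$ into zero columns, whereas $N'C'$ has zero \emph{rows} at $O$.

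The missing idea is the one the paper uses to close the argument. The rows of $NC$ indexed by $I$ vanish, so no vertex must precede an input under $\trl_C$. Hence inputs lie on no cycle, and $NC$ is a DAG iff $(NC)^B_B$ is. Symmetrically, $N'C'$ is a DAG iff $(N'C')^B_B$ is, because its rows indexed by $O$ vanish. Only then does the $B\times B$ transpose relation finish the proof.

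Your invertibility step is also only asserted. With the corrected identity it does work. Suppose $M$ is invertible and $M'y=0$. Letting $Mx$ range over $\Fd^{\comp{O}}$ forces $y_v=0$ for $v\in I$ and $(N'y)_v=0$ for $v\in B$. Since $((M'y)_v,(N'y)_v)$ determines $y_v$ for $v\in B$, this gives $y=0$. The paper instead normalises the labels and observes that $M$ and $(M')^{T}$ are block triangular, with the same square block $G^{\Xlike\cup O}_{I\cup\Xlike}$ and identity blocks on $\Zlike$.

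With these two repairs, your symplectic identity is a valid and more conceptual replacement for the paper's explicit block inversion. It also shows that the $B\times B$ parts are exact transposes, not merely equal in support.
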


The proof is in Appendix~\ref{Sec:proof reversibility}; it is quite close to the qubit case, except that working over an arbitrary prime field requires extra care with subtraction (which over $\FF_2$ is equivalent to addition).

\subsection{\texorpdfstring{$Z$}{Z}-like removal and insertion}\label{Subsec:Z-like}

In a focused flow, certain kinds of measured vertices receive only by-products that are powers of $Z$, these vertices are called `$Z$-like'.
$Z$-like measured qubits can be removed from one-way computations \cite{duncanGraphtheoreticSimplificationQuantum2020,backensThereBackAgain2021,simmonsRelatingMeasurementPatterns2021} and, under certain conditions, inserted into one-way computations \cite{perezBackens2025} while preserving the existence of flow.
We lift these flow-preserving qubit removal and insertion results to qudits; the corresponding proofs are presented in Appendix~\ref{Sec:proving removal and insertion}.

For any $\Fd$-graph $G$ and any vertex $v$ in $G$, we denote by $G-v$ the graph that results from removing $v$ (and all its incident edges).

\begin{definition}
    Let $(G,I,O)$ be an \OFG and let $S \in \Fd^V$ be any vector.
    Define the \emph{\OFG extended by $v$} as the \OFG $(G',I,O)$ obtained by inserting a new vertex $v$ into $(G,I,O)$ with its connections determined by $S$, \ie $G' - v = G$ and $G'_{v,u} = S_u$ for all $u \in V$.
    Given $(G,I,O,\ld)$ and $\ld'(v)$, we define the \emph{\LOFG extended by $v$ with measurement label $\ld'(v)$} as the \LOFG $(G',I,O,\ld')$, where $(G',I,O)$ is the \OFG extended by $v$, and $\ld'(u) = \ld(u)$ for all $u \in \comp{O}$.
\end{definition}

From now on, we add a prime symbol for all notation referring to the (labelled) \OFG extended by $v$; this includes $G'$, $\ld'$, $M'$, $N'$, $\comp{O}'$, and $\comp{I}'$.

\begin{theorem}[$Z$-like deletion]\label{Th:Z-like removal}
    Let $\Gamma' := (G',I,O,\ld')$ be the \LOFG $(G,I,O,\ld)$ extended by $v$, where $\ldX'(v) \ne 0$, and suppose $\Gamma'$ has \flow.
    Then $(G,I,O,\ld)$ also has \flow.
\end{theorem}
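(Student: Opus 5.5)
The plan is to work entirely in the algebraic formulation of Theorem~\ref{Th:main result} rather than with Definition~\ref{Def:full flow} directly. The new vertex $v$ is neither an input nor an output of $\Gamma'$. Hence $\comp{I}' = \comp{I}\cup\{v\}$ and $\comp{O}' = \comp{O}\cup\{v\}$. Since $\Gamma'$ has \flow, Theorem~\ref{Th:main result} gives a matrix $C'$ of shape $\comp{I}'\times\comp{O}'$ with $M'C' = \id{\comp{O}'}{\comp{O}'}$ and $N'C'$ a DAG. I propose taking $C := C'^{\comp{O}}_{\comp{I}}$, the restriction of $C'$ obtained by deleting the row and the column labelled $v$. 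I then verify the two conditions of Theorem~\ref{Th:main result} for $\Gamma$.

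The first step is to compare the demand matrices. For $w\in\comp{O}$ and $x\in\comp{I}$ we have $A'_{w,x} = A_{w,x}$ and $\ld'(w)=\ld(w)$. Definitions~\ref{Def:fdm} and~\ref{Def:odm} then give $M'_{w,x} = M_{w,x}$ and $N'_{w,x} = N_{w,x}$. So $M$ and $N$ are exactly $M'$ and $N'$ with row $v$ and column $v$ removed.

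The key observation uses the hypothesis $\ldX'(v)\neq 0$. Row $v$ of $M'$ is then $\ldX'(v)^{-1}\delta_{v,\cdot}$. The $v$-row of the identity $M'C' = \id{\comp{O}'}{\comp{O}'}$ therefore reads $\ldX'(v)^{-1}C'_{v,u} = \delta_{v,u}$. This forces $C'_{v,u}=0$ for every $u\in\comp{O}$, so the $v$-row of $C'$ vanishes on all the columns we keep. Consequently, for $w,u\in\comp{O}$,
\[
(M'C')_{w,u} = \sum_{x\in\comp{I}} M_{w,x}C_{x,u} + M'_{w,v}C'_{v,u} = (MC)_{w,u},
\]
and the same computation gives $(N'C')_{w,u} = (NC)_{w,u}$. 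Thus $MC$ is the $\comp{O}\times\comp{O}$ principal submatrix of the identity, which is $\id{\comp{O}}{\comp{O}}$. Likewise $NC$ is the weighted graph obtained from the DAG $N'C'$ by deleting the vertex $v$. An induced subgraph of a DAG is a DAG, so Theorem~\ref{Th:main result} yields \flow on $(G,I,O,\ld)$.

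I do not expect a real obstacle. The only point requiring care is the bookkeeping around the extra column $v$ of $M'$ and $N'$. In rows $w$ with $\ldX(w)=0$, the entry $M'_{w,v}=\ldZ(w)^{-1}S_w$ may well be nonzero. The argument works because its contribution is multiplied by $C'_{v,u}=0$, and this vanishing is exactly where $\ldX'(v)\ne 0$ is used. I would also check explicitly that the extension places $v$ in neither $I$ nor $O$, so that both index sets gain $v$ as claimed.
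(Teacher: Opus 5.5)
Your proposal is correct and follows the paper's proof: you restrict $C'$ to rows $\comp{I}$ and columns $\comp{O}$, use that the $v$-row of $C'$ vanishes off the diagonal, and conclude that $MC$ and $NC$ are principal submatrices of $M'C'$ and $N'C'$. The only cosmetic difference is that you read off $C'_{v,u}=0$ directly from the $v$-row of $M'C'=\id{\comp{O}'}{\comp{O}'}$, whereas the paper cites the focusing condition (FZ), which is the same fact via Lemma~\ref{Lemma:MC}.
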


For $Z$-like insertion, we restrict to the case $|I|=|O|$ as it is exactly the uniqueness of focused \flow in the case $|I|=|O|$ (\cf Corollary~\ref{Cor:uniqueness}) that makes the insertion of $Z$-like vertices and the corresponding updates to the \flow possible while performing fewer\footnote{In fact: strictly fewer, unless the lower bound from Theorem~\ref{Th:lower bound} is also an upper bound.} operations than would be required by the brute-force approach of inserting the vertex and running the full flow-finding procedure \cite[page 109]{perezBackens2025}.

\begin{theorem}[$Z$-like insertion]\label{Th:Z-like insertion}
    Let $\Gamma = (G,I,O,\ld)$ be a \LOFG with focused \flow, where $|I|=|O|$.
    Further suppose that $C, NC$ are known, where $C$ is the reduced correction matrix guaranteed to exist by Theorem~\ref{Th:main result} and to be unique by Corollary~\ref{Cor:uniqueness}; and $N$ is the order-demand matrix of $\Gamma$.
    Let $\Gamma' = (G',I,O,\ld')$ be the \OFG extended by $v$ via a known vector $S \in \Fd^{V}$.
    Then, in $\bigO(n^2)$ time complexity, it is possible to determine the set of all $\ld'(v)$ such that $\ldX'(v) \ne 0$ and $(G',I,O,\ld')$ has \flow.
\end{theorem}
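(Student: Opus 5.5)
The plan is to exploit Theorem~\ref{Th:main result} together with the uniqueness argument behind Corollary~\ref{Cor:uniqueness}. Since $|I|=|O|$ is preserved by the extension and $v$ is neither an input nor an output, $\Gamma'$ has \flow iff the square matrix $M'$ is invertible and $N'M'^{-1}$ is a DAG. We order rows $\comp{O}'=\comp{O}\cup\{v\}$ and columns $\comp{I}'=\comp{I}\cup\{v\}$ with $v$ last, fix a candidate label $\ld'(v)=(x,z)$ with $x\ne 0$, and write out the block structure.

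\textbf{Step 1: the flow-demand matrix and its inverse.} By Definition~\ref{Def:fdm}, row $v$ of $M'$ is $x^{-1}$ times the indicator of $v$. For $w\in\comp{O}$, the new column entry is $M'_{w,v}=\ldZ(w)^{-1}S_w$ if $\ldX(w)=0$ and $0$ otherwise. Call this column $m$; it is computable in $\bigO(n)$ and is independent of $(x,z)$. Hence
\[ M'=\begin{pmatrix} M & m\\ 0 & x^{-1}\end{pmatrix},\qquad C'=M'^{-1}=\begin{pmatrix} C & -xCm\\ 0 & x\end{pmatrix}. \]
In particular $C'$ always exists and is unique, so only the DAG condition can fail. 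Computing $Cm$ costs $\bigO(n^2)$.

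\textbf{Step 2: the order-demand product.} By Definition~\ref{Def:odm}, the new column of $N'$ is $n_v$ with $(n_v)_w=\ldX(w)S_w$. The new row is $xS_{\comp{I}}^T$ on the columns of $\comp{I}$, with entry $-z$ in column $v$. Multiplying out gives
\[ N'C'=\begin{pmatrix} NC & x\,(n_v-NCm)\\ x\,S_{\comp{I}}^TC & -x\,(x\,S_{\comp{I}}^TCm+z)\end{pmatrix}. \]
Since $NC$ is given, the vectors $NCm$ and $S_{\comp{I}}^TC$, and the scalar $S_{\comp{I}}^TCm$, all cost $\bigO(n^2)$.

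\textbf{Step 3: reading off the answer.} A DAG has zero diagonal, so we need $z=-x\,S_{\comp{I}}^TCm$; for each $x\in\Fd^*$ this fixes $z$ uniquely. The remaining requirement is that $NC$, augmented by the new vertex $v$, has no cycle. The edges incident to $v$ come from the supports of $n_v-NCm$ and of $S_{\comp{I}}^TC$. Scaling by $x\ne 0$ does not change these supports, so the acyclicity condition is independent of $x$.

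Because $NC$ is already a DAG, any cycle must pass through $v$. Such a cycle exists iff some vertex adjacent out of $v$ reaches, in $NC$, some vertex adjacent into $v$, with the direction convention chosen to match how $NC$ encodes $\trl_C$. This is tested by a single multi-source BFS/DFS on the graph of $NC$, which has $\bigO(n^2)$ edges.

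The output is therefore the set $\{(x,\,-x\,S_{\comp{I}}^TCm) : x\in\Fd^*\}$ if the reachability test finds no path, and the empty set otherwise. The total cost is $\bigO(n^2)$.

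\textbf{Main obstacle.} I expect the delicate step to be the bookkeeping in Step 2. The entries of $N'$ and $M'$ in the new column must be correctly restricted according to whether $w\in I$ and whether $\ldX(w)=0$. The diagonal computation must produce exactly one valid $z$ per $x$. The orientation convention of the DAG must also be consistent when phrasing the cycle test. Once the block formulas are confirmed, correctness follows from Theorem~\ref{Th:main result}, since $C'$ is the only possible reduced correction matrix.
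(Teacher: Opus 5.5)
Your proposal is correct and follows essentially the same route as the paper. You write $M'$ and $N'$ in block form with $v$ last, obtain the forced correction matrix with new column $-xCm$, read off $z=-x\,S_{\comp{I}}^TCm$ from the zero diagonal of $N'C'$, and note that the supports, and hence acyclicity, do not depend on $x$. The differences are minor: you get $C'$ by block-inverting the upper-triangular $M'$ directly, rather than from focusing, uniqueness and $CM=\Id$, which also makes the sufficiency direction explicit; and you test acyclicity by a reachability search through $v$ in the already-acyclic $NC$ rather than a generic $\bigO(n^2)$ DAG check.
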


\subsection{Generation of labelled open graphs with \flowintitles}

Benchmarking flow-based algorithms on qubits, such as finding and circuit extraction, already requires generating large labelled open graphs with flow, and such \LOG{}s have also found application in the generation of ansätze for optimisation and machine learning \cite{ewenApplication2025}.
For these applications, the flows should be \emph{non-trivial}, \ie the correction mechanism should not be obvious from the structure of the graphs alone.
Since most \LOG{}s do not have flow\footnote{Informally, most flow-demand matrices will not have maximal row-rank and thus have no right inverse. In particular, a certain submatrix of the adjacency matrix needs to be invertible, but most graphs do not have this property \cite{scirihaCharacterizationSingularGraphs2007}.}, it can be difficult to generate instances that have flow in a non-trivial way.
For example, one way to generate large instances with flow on qubits is to translate quantum circuits into measurement patterns.
However, this method results in diagrams with causal flow, where the correction mechanism is particularly simple: each row of the correction-matrix contains only a single non-zero value.
Another approach for generating {\LOG\unskip}s with flow is to compose smaller instances with flow.
However, this leads to a layered structure with limited connectivity.
Based on a recent result of one of the authors for qubits \cite{backensGenerating2026}, we now give an algorithmic method for generating instances with \flow that can overcome these shortcomings: the algorithm has non-zero probability of returning \emph{any} \LOFG with \flow, not only \LOFG{}s of the restricted types outlined above.

\begin{theorem}\label{Th:generation}
    There exists an algorithm (presented in Appendix~\ref{Sec:proving generation}) that takes a pair of integers $n$ and $k$ as input and it returns a random \LOFG with $|V|=n$ and $|I|=|O|=k$ that has \flow.
    Furthermore, for any \LOFG $\Gamma$ with $|V|=n$ and $|I|=|O|=k$ with \flow, the algorithm has non-zero probability of returning $\Gamma$.
\end{theorem}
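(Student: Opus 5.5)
The plan is to follow the qubit generation result \cite{backensGenerating2026}: start from a trivially flowing base instance and reach every \LOFG with \flow through flow-preserving operations, each chosen at random with full support. The proof has two parts. \emph{Soundness} says every output has \flow. \emph{Completeness} says every target $\Gamma$ with \flow, $|V|=n$ and $|I|=|O|=k$ is produced with non-zero probability.

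Soundness comes directly from the rewrite results of Section~\ref{Sec:Flow-preserving rewriting}, so the algorithm should use only those. Concretely, it will:
\begin{itemize}
\item[(i)] choose a base \LOFG on $k$ vertices with $I=O=V$. This has trivial \flow, since $\comp{O}=\emptyset$.
\item[(ii)] repeatedly perform $Z$-like insertion until there are $n$ vertices. Each time it picks a uniformly random connection vector $S$, uses Theorem~\ref{Th:Z-like insertion} to compute the set of admissible labels $\ld'(v)$ with $\ldX'(v)\neq 0$, and picks one at random. It retries if the set is empty, or else picks $S$ only among vectors for which the set is non-empty.
\item[(iii)] apply a random sequence of local complementations, local scalings and pivots, with the label and input/output updates of Appendix~\ref{s:graph-operations}.
\item[(iv)] optionally apply flow reversal (Theorem~\ref{Th:flow-rev}) and random relabelling of vertices.
\end{itemize}
Every step preserves \flow, so the output has \flow.

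Completeness is the main work. Fix a target $\Gamma$ with \flow. I will show it can be brought by the inverse operations to a \LOFG in which every measured vertex is $Z$-like, i.e.\ has $\ldX\neq0$. The idea is to turn each $(0,*)$-labelled vertex into a $Z$-like one using local complementations, scalings and pivots on that vertex or along an edge, as in the qubit case. For qudits, a local complementation with parameter $\gamma$ on a $(0,\beta)$-vertex $w$ changes its label to something like $(\gamma\beta',\beta)$, which has non-zero $X$-part. Because $(0,*)$ inputs cannot be relabelled this way, they need separate handling. Here I would use flow reversal or a pivot with a neighbour so that input vertices end up as outputs of the base. Alternatively, one can argue that in a graph with $|I|=|O|$ and \flow, the inputs can be handled by the pivot rule. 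Once every non-output is $Z$-like, Theorem~\ref{Th:Z-like removal} lets me delete the $Z$-like vertices one at a time while keeping \flow. This continues until only $k$ vertices remain and $I=O$.

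The step I expect to be hardest is keeping $I\cap O$ and $|I|=|O|$ intact along the way. Also, the deleted vertices must be re-insertable in the forward direction: the reverse of each deletion must be an insertion step the algorithm can actually make. This holds because the forward step enumerates all admissible labels exactly, and every connection vector $S$ has positive probability. Finally, since every operation is invertible (local scaling by $\gamma^{-1}$, local complementation by $-\gamma$, pivots) and each is chosen with positive probability, reversing the reduction sequence gives a positive-probability run that ends exactly at $\Gamma$. The parameters of each operation range over finite sets, so the probabilities are genuinely bounded away from zero.
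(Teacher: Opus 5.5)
\textbf{There is a genuine gap, and it lies in your choice of base.} Your base has $I=O=V$, and none of your operations can ever separate $I$ from $O$:
\begin{itemize}
\item $Z$-like insertion only adds an internal vertex;
\item local complementation, local scaling and pivots leave $I$ and $O$ fixed;
\item relabelling only renames vertices;
\item flow reversal swaps the two sets, which does nothing when they coincide.
\end{itemize}
So your algorithm only ever outputs \LOFG{}s with $I=O$. It has probability zero of returning any target with $I\neq O$, equivalently (since $|I|=|O|$) any target with a measured input.

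The same obstruction breaks your reduction. A vertex of $I\setminus O$ carries a label $(0,*)$ and cannot be locally complemented, since Proposition~\ref{prop:lc} requires $w\in\comp{I}$. It is never removable by Theorem~\ref{Th:Z-like removal}, which only deletes internal vertices with $\ldX\neq 0$. Flow reversal just moves the undeletable $(0,*)$-labelled vertices onto the old output set, and pivots do not touch $I$ or $O$. So the reduction cannot end at ``$k$ vertices with $I=O$''.

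\textbf{The missing idea:} the deconstruction terminates at a \LOFG with no internal vertices, $V=I\cup O$, and the algorithm must sample such bases directly. For these, $\comp{O}=I\setminus O$ and $\comp{I}=O\setminus I$. The flow-demand matrix is the block $G_{I\setminus O}^{O\setminus I}$ with rows rescaled by $\ldZ(i)^{-1}$, and the order-demand matrix is zero. By Theorem~\ref{Th:main result}, such a base therefore has \flow if and only if that block is invertible. This is what the paper does: it draws a random invertible matrix, builds from it the \LOFG of Theorem~\ref{Th:lower bound}, and rescales each input label by a random element of $\Fd^*$. Your $I=O$ base is only the degenerate member of this family. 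To make the exact-match claim airtight, the base sampler should also allow an arbitrary overlap $I\cap O$ and arbitrary edges inside $I$ and inside $O$; these do not affect \flow but can survive the deconstruction.

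With that base, the rest of your argument agrees with the paper's:
\begin{itemize}
\item Locally complement each $(0,*)$-labelled internal vertex with $\gamma\neq0$, which gives it $X$-part $-\gamma\ldZ\neq 0$.
\item Delete the resulting $Z$-like vertices.
\item Read this backwards as insertions, enumerating all admissible labels via Theorem~\ref{Th:Z-like insertion}, followed by local complementations with $-\gamma$.
\end{itemize}
Your ordering (all insertions first, then all local complementations) is fine, because a local complementation about $w$ leaves the $X$-part of every other label unchanged. Pivots, scalings and reversal are not needed.
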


The proof is in Appendix~\ref{Sec:proving generation}.
The method we use may not be very efficient, as it is based on attempts of $Z$-like insertions, which can show that insertion of a vertex with a fixed neighbourhood is simply impossible in an \flow-preserving manner.
However, the method is guaranteed to return cases with \flow.
Thus, a one-time generation could be run for an extended period to produce instances that can then be used for benchmarking flow algorithm implementations.
This method is more elegant than the trivial approach of generating large labelled open graphs and running the flow-finding algorithm.
We also expect it to be more practical for generating cases with \flow, though we leave the last fact without a formal proof.

\section{Conclusions and future directions}\label{Sec:Conclusions and future directions}

Building on the work of \cite{boothOutcomeDeterminismMeasurementbased2023} about determinism in qudit MBQC, we developed a definition of focused \flow.
We showed that the existence of \flow is equivalent to the existence of focused \flow; the uniqueness of focused \flow (at least when the numbers of inputs and outputs are equal) thus means focused \flow can be considered canonical.
We refined the algebraic formulation to allow a restatement of the \flow definition using matrix multiplication and a check whether a certain directed graph is acyclic, thus avoiding the previous complicated verification of conditions related to the order of measurements.
Utilising this new algebraic formulation, we obtained an $\bigO(n^3)$ \flow-finding algorithm, the same complexity as for qubits.
Furthermore, we laid the groundwork for applications of \flow in optimisation by deriving several new \flow-preserving rules.

The most natural direction for future work is to apply our results in optimisation.
All properties of flow and flow-preserving transformations used in \cite{duncanGraphtheoreticSimplificationQuantum2020} have now been generalised to higher dimensions suggesting that circuit optimisation procedure over qubits has a great potential to lift to qudits.
Furthermore, our results also generalise flow results essential in refinements of this optimisation method, for instance in \cite{kissingerReducingTcountZXcalculus2020}, albeit some results that do not concern flow remain to be generalised.

Another direction for the future work would be to develop a `causal \flow' analogous to \cite{danosDeterminismOnewayModel2006}.
While causal flow is not necessary for determinism in the qubit case, it corresponds more closely to the circuit model than gflow, which can be advantageous for certain applications \cite{holkerCausal2024}.
Moreover causal flow can be found in sub-cubic time \cite{debeaudrapCompleteAlgorithmFind2007,deBeaudrapFinding2008}.
Additionally, an extended version of causal flow plays a core role in the completeness result for flow-preserving rewriting on qubits \cite{backensCompleteness2026}, which could potentially be extended to qudits in the future.

In the qubit case, separate treatment of Pauli measurement leads to a different flow \cite{browneGeneralizedFlowDeterminism2007} with its own applications \cite{simmonsRelatingMeasurementPatterns2021} and algebraic presentation \cite{mitosekPauliFlowOpen2024,mitosekAlgebraicInterpretationPauli2026}; all
while offering a broader picture of determinism than gflow \cite{browneGeneralizedFlowDeterminism2007,mhallaShadowPauliFlow2025}.
It would be interesting to investigate the existence of a Pauli flow for qudits.

Lastly, our techniques exploit the field structure of $\Fd$ for an improved algebraic presentation.
In the present work on qudits and its predecessor \cite{boothOutcomeDeterminismMeasurementbased2023}, these fields are necessarily finite.
There also exists a definition of MBQC and an associated flow on quantum systems described by continuous variables \cite{boothCVflow2023}, which effectively works over the infinite field $\mathbb{R}$.
We expect our algebraic approach could be adjusted and applied to this continuous variable flow, however the infinite field might lead to difficulties on the algorithmic side of this work.
Care would be needed to choose a suitable representation of real numbers that permits efficient computation of relevant field operations.

\section*{Acknowledgements}
We thank Robert Booth for helpful discussions about \flow. PM is supported by Niedersächsisches Ministerium für Wissenschaft und Kultur. MB is supported by the Plan France 2030 through the PEPR integrated project EPiQ ANR-22-PETQ-0007 and the HQI platform ANR-22-PNCQ-0002.

\phantomsection

\addcontentsline{toc}{section}{References}

\begin{sloppypar}
\bibliographystyle{eptcs}
\bibliography{libref}
\end{sloppypar}

\appendix

\section{Comparison to the qubit case}\label{Sec:QubitComp}

Qubit MBQC and its flow properties are more widely studied and thus many readers might be familiar with qubit flow, but not qudit flow.
For such readers, in this appendix, we explain how various definitions and results correspond to the qubit-case literature, \ie to the case of $d=2$.
In particular, when $d=2$, all matrices become binary matrices.
From now on, we only focus on differences regarding notation, terminology, or provision of literature reference when $d=2$.
We skip over statements where $d=2$ implies all changes on its own.

\paragraph{Background}
\begin{itemize}
	\item The $\Fd$-graph from Definition~\ref{Def:FG} becomes a simple graph. In particular, it is no longer weighted.
	\item The codomain of the measurement labelling associated with a \LOFG (Definition~\ref{Def:LOFG}) codomain would become $\{ (0,1), (1,0), (1,1) \}$.
	However, all three measurement spaces are known in the literature under different names:\begin{itemize}
		\item The $(0,1)$ space becomes the $XY$-plane, \ie the plane spanned by the $X$ and $Y$ eigenstates; this is the plane of the Bloch sphere perpendicular to the $Z$ axis,
		\item the $(1,0)$ space becomes the $YZ$-plane defined analogously, and
		\item the $(1,1)$ space becomes the $XZ$-plane, again defined analogously.
	\end{itemize}
	The condition that all measured inputs must be measured in the $(0,*)$ space translates to a similar condition involving the $XY$-plane.
	\item The single qubit Paulis become $X$, $Y$, and $Z$ only (no nontrivial powers needed), where $Y \propto XZ$.
	\item The \flow from Definition~\ref{Def:full flow} becomes the gflow of \cite{browneGeneralizedFlowDeterminism2007}; a detailed translation is explained in \cite{boothOutcomeDeterminismMeasurementbased2023}.
	The translation requires mapping the correction matrix $C$ to a correction function $c$ by associating columns of $C$ with set indicator functions, \ie $c(v) = \{ u \mid C_{u,v} = 1 \}$.
\end{itemize}

\paragraph{Focusing \flowintitles}
\begin{itemize}
	\item The notion of focused \flow collapses to that of focused gflow from \cite{backensThereBackAgain2021}.
	\item The existence of focused \flow generalises the existence of focused gflow from \cite[Section 3.1]{mhallaWhichGraphStates2014a} and \cite{backensThereBackAgain2021}.
\end{itemize}

\paragraph{Algebraic interpretation of focused \flowintitles}
\begin{itemize}
	\item The reduced adjacency matrix from Definition~\ref{Def:ram} has been considered over qubits in \cite[Definition~6]{mhallaWhichGraphStates2014a}.
	\item The definitions of the flow-demand matrix (Definition~\ref{Def:fdm}) and the order-demand matrix (Definition~\ref{Def:odm}) become the similarly-named matrices over qubits from \cite[Definition~3.4 and Definition~3.5]{mitosekAlgebraicInterpretationPauli2026}.
	\item The new algebraic formulation from Theorem~\ref{Th:main result} becomes the main result from our previous work \cite[Theorem~3.1]{mitosekAlgebraicInterpretationPauli2026}.
	In particular, $NC$ is no longer a weighted graph.
	\item The uniqueness of focused \flow when $|I|=|O|$ (Corollary~\ref{Cor:uniqueness}), becomes the uniqueness of gflow under the same restriction from \cite{backensThereBackAgain2021}.
	\item The focused vectors of Definition~\ref{Def:foc vec} become the focused sets from \cite[Definition~4.3]{simmonsRelatingMeasurementPatterns2021}.
	Originally, focused sets were defined in the setting of Pauli flow, yet they apply just as well to gflow.
	\item Theorem~\ref{Th:focused sets are kernel of M} becomes \cite[Theorem~3.23]{mitosekAlgebraicInterpretationPauli2026}.
\end{itemize}

\paragraph{Flow-finding}
The correspondence between qudit \flow-finding and qubit flow-finding has been carefully explained in Section~\ref{Sec:Flow-finding} and its corresponding appendices.

\paragraph{\flowintitles-preserving rewriting}
\begin{itemize}
	\item Regarding the qubit equivalents of the operations in Section~\ref{s:scaling-LC-pivot}, local scaling is trivial and local complementations are not parametrised.
	Unlike for qudits, there are actually two variants of local complementation on a qubit, which differ by a non-trivial graph stabiliser; conversely, one variant corresponds to three successive applications of the other about the same qubit.
	The definition of the qubit pivot uses both variants.
	Flow preservation of local complementation and pivoting for gflow was proved in \cite[Lemma~3.1 and Corollary~3.3]{backensThereBackAgain2021}.
	\item Flow reversal in Theorem~\ref{Th:flow-rev} translates to \cite[Theorem~3.19]{mitosekAlgebraicInterpretationPauli2026}.
	\item The notion of `$Z$-like measurements' from Section~\ref{Subsec:Z-like} captures $YZ$- and $XZ$-measured vertices only, since we are not treating Pauli vertices separately and hence there are no $Z$ measurements.
	\item $Z$-like removal from Theorem~\ref{Th:Z-like removal} corresponds to \cite[Lemma~3.4]{backensThereBackAgain2021}.
	\item $Z$-like insertion from Theorem~\ref{Th:Z-like insertion} corresponds to results from \cite[Section~4 and Section~5]{perezBackens2025}.
	\item Generation of instances with \flow from Theorem~\ref{Th:generation} translates to a result about generating MBQC patterns with different types of flow from the upcoming paper \cite{backensGenerating2026}.
\end{itemize}

\section{Proving focusing}\label{Sec:ProvingFocusing}

To prove Theorem~\ref{Th:flow focusing}, we first prove that a \LOFG has focused \flow if and only if it has \flow, by giving a procedure that allows any \flow to be modified in a stepwise fashion until it is focused, with each step being a valid \flow.
This procedure generalises \cite[Section 3.3]{backensThereBackAgain2021} to the qudit case.
First, we specify a definition of focusing that applies to a single correction set and a single vertex.

\begin{definition}\label{def:focused-individual}
	Let $(G,I,O,\ld)$ be a \LOFG with \flow $(C,\prec)$, and let $u,v\in V$ be two distinct vertices, i.e.\ $u\neq v$.
	We say the correction $C_{*,u}$ is focused with respect to $v$ if the following two statements hold:
	\begin{itemize}
        \setlength{\itemindent}{0.35em}
		\item[\conlab{FXI}] $v\in O \vee (C_{v,v} = 0 \implies (GC)_{v,u} = 0)$, and
		\item[\conlab{FZI}] $C_{v,v} \ne 0 \implies C_{v,u} = 0$.
	\end{itemize}
\end{definition}

\begin{observation}\label{obs:focusing-individual-general}
	An \flow is focused according to Definition~\ref{Def:focused} if and only if \conref{FXI} and \conref{FZI} hold for all distinct pairs $u,v\in V$.
	Indeed, if Definition~\ref{def:focused-individual} holds for all distinct pairs of vertices $u,v$, then Definition~\ref{Def:focused} is immediate.
	
	For the converse, note that if $v\in O$, then $C_{v,v} = 0$ by \conref{ZF}, so both \conref{FXI} and \conref{FZI} are satisfied for any $C_{*,u}$.
	Similarly, if $(C_{v,u}, (GC)_{v,u}) = (0,0)$, then \conref{FXI} and \conref{FZI} are satisfied for $C_{*,u}$ and $v$.
	In particular, if $u\in O$, then $C_{w,u}=0$ for all $w\in V$ by \conref{ZF}; this then also implies $(GC)_{v,u} = \sum_{w\in V} G_{v,w} C_{w,u} = 0$.
	
	Recall that if $v\in\comp{O}$, then $\ld_X(v) = C_{v,v}$ by \conref{CF}.
	Thus, if an \flow $(C,\prec)$ is focused according to Definition~\ref{Def:focused}, then \conref{FXI} and \conref{FZI} hold for all distinct pairs $u,v\in\comp{O}$.
	Moreover, the two conditions hold whenever at least one of $u,v$ is in $O$ by the argument above.
	Thus, the two conditions hold for all pairs of distinct vertices.
\end{observation}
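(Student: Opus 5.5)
The plan is to prove the two directions separately. In both, condition \conref{CF} converts between the measurement label $\ldX(v)$ of Definition~\ref{Def:focused} and the diagonal entry $C_{v,v}$ that appears in Definition~\ref{def:focused-individual}. The zero condition \conref{ZF} handles all pairs involving outputs.

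\emph{From the individual conditions to Definition~\ref{Def:focused}.} Assume \conref{FXI} and \conref{FZI} hold for all distinct pairs $u,v\in V$. Fix $v\in\comp{O}$ and $u\in\comp{O}\setminus\{v\}$; by \conref{CF} we have $C_{v,v}=\ldX(v)$.
\begin{enumerate}
\item If $\ldX(v)=0$, then $C_{v,v}=0$. Since $v\notin O$, the first disjunct of \conref{FXI} fails, so the implication must hold and gives $(GC)_{v,u}=0$. This is \conref{FX}.
\item If $\ldX(v)\neq 0$, then $C_{v,v}\neq 0$, and \conref{FZI} gives $C_{v,u}=0$. This is \conref{FZ}.
\end{enumerate}
This direction is just restricting the individual conditions to $u,v\in\comp{O}$.

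\emph{From Definition~\ref{Def:focused} to the individual conditions.} Now let $(C,\prec)$ be focused and take distinct $u,v\in V$. I would split into three cases.
\begin{enumerate}
\item $v\in O$. Then \conref{FXI} holds through its first disjunct. By \conref{ZF}, $C_{v,v}=0$, so the premise of \conref{FZI} is false and \conref{FZI} holds vacuously.
\item $u\in O$. By \conref{ZF} the whole column $C_{*,u}$ is zero, so $(GC)_{v,u}=\sum_{w}G_{v,w}C_{w,u}=0$. Both consequents $C_{v,u}=0$ and $(GC)_{v,u}=0$ therefore hold, and so do both conditions.
\item $u,v\in\comp{O}$. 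By \conref{CF}, $C_{v,v}=\ldX(v)$. If $C_{v,v}=0$, then $\ldX(v)=0$ and \conref{FX} gives $(GC)_{v,u}=0$, which is \conref{FXI}; \conref{FZI} is vacuous. If $C_{v,v}\neq 0$, then \conref{FZ} gives $C_{v,u}=0$, which is \conref{FZI}; \conref{FXI} is vacuous.
\end{enumerate}

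There is no real obstacle here. The only point needing care is that Definition~\ref{Def:focused} quantifies over $u,v\in\comp{O}$, whereas Definition~\ref{def:focused-individual} ranges over all of $V$. This gap is closed entirely by the two output cases, using \conref{ZF}.
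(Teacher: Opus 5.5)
Your proof is correct and follows the paper's own argument essentially step for step. Condition (CF) converts $\lambda_X(v)$ into $C_{v,v}$ for pairs inside $\bar{O}$. Condition (ZF) handles the two output cases: for $v\in O$, (FXI) holds by its first disjunct and (FZI) is vacuous; for $u\in O$, the zero column forces both consequents. The only cosmetic differences are that the paper treats $u\in O$ as an instance of the more general remark that $(C_{v,u},(GC)_{v,u})=(0,0)$ makes both conditions hold, and that you write out the direction the paper simply calls ``immediate.''
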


We now show how to focus a flow in a step-wise fashion.
The first argument is to show the construction that focuses a new correction set with respect to some chosen vertex, without breaking focusing conditions that already hold for other vertices.

\begin{lemma}\label{lem:focus-single}
	Let $(G,I,O,\ld)$ be a \LOFG with \flow $(C,\prec)$, and let $u,v\in V$ be two distinct vertices, i.e.\ $u\neq v$.
	Then we can construct an \flow $(C',\prec)$ on $(G,I,O,\ld)$ such that $C'_{*,u}$ is focused with respect to $v$.
	Moreover, the construction ensures that if there exists a vertex $v'\neq v$ such that $C_{*,v}$ is focused with respect to $v'$, then, for any $w\in V$, $C'_{*,w}$ is focused with respect to $v'$ if and only if $C_{*,w}$ is focused with respect to $v'$.
\end{lemma}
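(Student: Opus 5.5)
The plan is to modify only the column $C_{*,u}$, by subtracting a suitable multiple of the correction column $C_{*,v}$. Since $(GC)$ is linear in $C$, this replaces $(GC)_{*,u}$ by $(GC)_{*,u}-\alpha (GC)_{*,v}$. Concretely, set $C' := C$ except
\[
C'_{*,u} := C_{*,u} - \alpha C_{*,v}
\]
for a scalar $\alpha\in\Fd$ to be chosen. The partial order $\prec$ is kept unchanged.

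\textbf{Trivial cases.} If $C_{*,u}$ is already focused with respect to $v$, take $\alpha=0$. By Observation~\ref{obs:focusing-individual-general}, this covers every case with $u\in O$ or $v\in O$, and every case with $(C_{v,u},(GC)_{v,u})=(0,0)$. So we may assume $u,v\in\comp{O}$ and that the relevant entry is nonzero.

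\textbf{Choice of $\alpha$.}
\begin{itemize}
\item If $C_{v,v}\neq 0$, the violated condition is \conref{FZI}, so $C_{v,u}\neq 0$. Take $\alpha = C_{v,u}C_{v,v}^{-1}$, which kills $C'_{v,u}$.
\item If $C_{v,v}=0$, the violated condition is \conref{FXI}, so $(GC)_{v,u}\neq0$. By \conref{CF}, $\ldX(v)=0$, hence $(GC)_{v,v}=\ldZ(v)\neq 0$. Take $\alpha=(GC)_{v,u}(GC)_{v,v}^{-1}$, which kills $(GC')_{v,u}$.
\end{itemize}
Either way the chosen inverse exists because $\Fd$ is a field. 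Note that in both cases, by \conref{OF}, a nonzero entry in row $v$ of column $u$ of $C$ or $GC$ forces $u\prec v$: otherwise some totalisation puts $v$ before $u$, and the entry would sit above the diagonal.

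\textbf{Checking $(C',\prec)$ is an \flow.} This is the main step, chiefly the order condition, which quantifies over all totalisations. The key fact is that, by \conref{OF}, the columns $C_{*,v}$ and $(GC)_{*,v}$ are supported only on $v$ and vertices $w$ with $v\prec w$. Such vertices all satisfy $u\prec w$ by transitivity, since $u\prec v$. Hence, in any totalisation, the new column $u$ of $C'$ and of $GC'$ is still zero strictly above the diagonal, so \conref{OF} holds.

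The diagonal entries at $u$ are unchanged, because $C_{u,v}=(GC)_{u,v}=0$: $u$ precedes $v$, so these entries lie above the diagonal. Therefore \conref{CF} is preserved.

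For \conref{ZF}, the rows indexed by $I$ of $C_{*,v}$ are zero, and the column $u\notin O$. So \conref{ZF} is preserved.

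\textbf{The ``moreover'' part.} Only column $u$ changes, so for $w\neq u$ there is nothing to show. For $w=u$ and $v'\neq v$, focusing of $C_{*,u}$ with respect to $v'$ depends only on the entry $C_{v',u}$ if $C_{v',v'}\neq0$, and only on $(GC)_{v',u}$ if $C_{v',v'}=0$ and $v'\notin O$. The hypothesis that $C_{*,v}$ is focused with respect to $v'$ says exactly that the corresponding entry, $C_{v',v}$ or $(GC)_{v',v}$, is zero. Hence the relevant entry of the new column $u$ at row $v'$ equals the old one, and focusing with respect to $v'$ is unchanged in both directions.
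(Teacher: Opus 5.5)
Your proposal is correct and follows the paper's proof essentially step for step. It uses the same single-column update $C'_{*,u}=C_{*,u}-\alpha C_{*,v}$: your $-\alpha$ coincides with the paper's coefficient $\frac{d-C_{v,u}}{\ldX(v)}$ or $\frac{d-(GC)_{v,u}}{\ldZ(v)}$, since $d=0$ in $\Fd$ and \conref{CF} gives $C_{v,v}=\ldX(v)$ and $(GC)_{v,v}=\ldZ(v)$. It also uses $u\prec v$ to verify \conref{CF}, \conref{ZF} and \conref{OF}, and the same entrywise argument for the ``moreover'' part. The only thing worth stating explicitly is that when $v'=u$, it is the unchanged diagonal entry $C'_{u,u}=C_{u,u}$ (which you already proved) that keeps focusing of the other columns with respect to $v'$ unaffected.
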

\begin{proof}
	By Observation~\ref{obs:focusing-individual-general}, \conref{FXI} and \conref{FZI} are satisfied if at least one of $u,v$ is an output, or if $(C_{v,u}, (GC)_{v,u}) = (0,0)$; in those cases we are done by taking $C'=C$ and the second part of the lemma is trivial.
	Hence it only remains to consider the case where $u,v\in\comp{O}$ and $(C_{v,u}, (GC)_{v,u}) \ne (0,0)$; by \conref{OF} this implies $u\prec v$.
	We distinguish cases according to whether or not $C_{v,v}$ is 0.
	\begin{itemize}
		\item 	Suppose $C_{v,v} = \ld_X(v) = 0$, then \conref{FZI} is trivially satisfied.
		If $(GC)_{v,u}=0$, \conref{FXI} is also satisfied and we are done by taking $C'=C$.
		
		Otherwise, note that by \conref{CF} and Definition~\ref{Def:LOFG} we have $\ld_Z(v) = (GC)_{v,v} \ne 0$.
		Define a new correction matrix $C'$ as
		\[
		C'_{w,w'} = \begin{cases}
			C_{w,w'} &\text{if } w' \ne u \\
			C_{w,u} + \frac{d-(GC)_{v,u}}{\ld_Z(v)} C_{w,v} &\text{if } w' = u.
		\end{cases}
		\]
		Then by linearity:
		\[
		(GC')_{w,w'} = \begin{cases}
			(GC)_{w,w'} &\text{if } w' \ne u \\
			(GC)_{w,u} + \frac{d-(GC)_{v,u}}{\ld_Z(v)} (GC)_{w,v}  &\text{if } w' = u.
		\end{cases}
		\]
		In particular, $(GC')_{v,u} = (GC)_{v,u} + \frac{d-(GC)_{v,u}}{\ld_Z(v)} \ld_Z(v) = 0$ over $\Fd$, meaning \conref{FXI} is satisfied.
		Furthermore, $(C',\prec)$ satisfies Definition~\ref{Def:full flow}:
		\begin{itemize}
			\item The only column of $C$ that has changed is $C_{*,u}$.
			We have $C_{u,v}=0 = (GC)_{u,v}$ by \conref{OF} since $(GC)_{v,u}\ne 0$; thus $(C'_{u,u}, (GC')_{u,u}) = (C_{u,u}, (GC)_{u,u})$ and \conref{CF} is satisfied.
			\item If $w'\in O$, then $w'\neq u$ and thus $C'_{w,w'} = C_{w,w'}$.
			Now $C_{w,w'} = 0$ by \conref{ZF} applied to $C$, so the desired property holds for $C'$ as well.
			Similarly, if $w\in I$, then $C_{w,u} = C_{w,v} = 0$ by \conref{ZF} applied to $C$ and thus $C'_{w,u} = 0$.
			Hence $C'$ satisfies \conref{ZF}.
			\item Suppose $C'_{w,u}\ne 0$ for some $w$, then $C_{w,u} \ne 0$ or $C_{w,v} \ne 0$.
			In the former case, $u\prec w$ and in the latter case $u\prec v\prec w$.
			The argument is analogous for $GC'$, thus \conref{OF} is satisfied.
		\end{itemize}
		Hence $(C',\prec)$ is a valid \flow.
	
		\item Now suppose instead $C_{v,v} = \ld_X(v)\ne 0$, then \conref{FXI} is trivially satisfied.
		If $C_{v,u} = 0$, then \conref{FZI} is also satisfied and we are done.
		Otherwise, define a new correction matrix $C'$ as
		\[
		C'_{w,w'} = \begin{cases}
			C_{w,w'} &\text{if } w' \ne u \\
			C_{w,u} + \frac{d-C_{v,u}}{\ld_X(v)} C_{w,v} &\text{if } w' = u.
		\end{cases}
		\]
		Then $C_{v,u} = C_{v,u} + \frac{d-C_{v,u}}{\ld_X(v)} \ld_X(v) = 0$ over $\Fd$, meaning \conref{FZI} is satisfied.
		The argument that $(C',\prec)$ is a valid \flow is analogous to the one in the previous case.
	\end{itemize}
	
	We have shown how to satisfy \conref{FXI} and \conref{FZI} for the pair $u,v$.
	Now suppose there exists $v'\neq v$ such that $C_{*,v}$ is focused with respect to $v'$.
	Assume without loss of generality that $v'\in\comp{O}$, otherwise \conref{FXI} and \conref{FZI} are trivial by Observation~\ref{obs:focusing-individual-general}.
	From the argument that $(C',\prec)$ is a valid \flow at the end of the first case, we know that $C'_{v',v'} = C_{v',v'}$.
	Now, for any $w\in V$, we have $C'_{v',w} = C_{v',w} + \ell C_{v',v}$, where
	\[
		\ell = \begin{cases}
			0 &\text{if } w\neq u \\
			\frac{d-(GC)_{v,u}}{\ld_Z(v)} &\text{if } w=u \text{ and } \ld_X(v) = C_{v,v} = 0 \\
			\frac{d-C_{v,u}}{\ld_X(v)} &\text{if } w=u \text{ and } \ld_X(v) = C_{v,v} \ne 0.
		\end{cases}
	\]
	As $u,v$ are fixed, the value of $\ell$ depends only on $w$.
	Furthermore, by linearity, $(GC')_{v',w} = (GC')_{v',w} + \ell (GC')_{v',v}$.
	Thus:
	\begin{itemize}
		\item If $\ldX(v') = C_{v',v'} = 0$, then \conref{FZI} is trivial.
		 Moreover, $C_{*,v}$ being focused with respect to $v'$ implies $C_{v',v} = 0$.
		 Thus, for all $w\in V$, we have $C'_{v',w} = C_{v',w} + \ell C_{v',v} = C_{v',w}$, so $C'_{*,w}$ is focused with respect to $v'$ if and only if $C_{*,w}$ is.
		\item If $C_{v',v'} \ne 0$, then \conref{FXI} is trivial.
		 Moreover, $C_{*,v}$ being focused with respect to $v'$ implies $(GC)_{v',v} = 0$.
		 Thus, for all $w\in V$, we have $(GC')_{v',w} = (GC)_{v',w} + \ell (GC)_{v',v} = (GC)_{v',w}$, so again $C'_{*,w}$ is focused with respect to $v'$ if and only if $C_{*,w}$ is.
	\end{itemize}
	This completes the proof.
\end{proof}

We can now prove Theorem~\ref{Th:flow focusing}:

\begin{proof}[Proof of Theorem~\ref{Th:flow focusing}]
    By Observation~\ref{obs:focusing-individual-general}, instead of working with Definition~\ref{Def:focused}, we may consider Definition~\ref{def:focused-individual} for all pairs of distinct $u,v\in V$.
   	A focused \flow can thus be focused using the following procedure.
   	
   	Let $<$ be any totalisation of $\prec$; for simplicity we will from now on identify $V$ with the set $[n] := \{1,2,\ldots,n\}$, where $n=\abs{V}$.
   	We will prove the following property by structural induction on $k$ ranging from 1 to $n$:
   	\begin{quotation}
   		\noindent There exists an \flow $(C',\prec)$ such that for all $v\in [n]\setminus[n-k]$ and all $u\ne v$, $C'_{*,u}$ is focused with respect to $v$.
   	\end{quotation}
   	The base case is $k=1$, and the property holds for $C':=C$ since $C_{u,n} = 0 = (GC)_{u,n}$ for any $u<n$ by \conref{OF}.
   	Now suppose the property holds for some $k$ such that $1\leq k < n$ and denote the corresponding \flow by $(C,\prec)$.
   	Consider what happens for $k+1$.
   	Let $a := n-(k+1)$ and define the following sequence of correction matrices: $C^{(0)} := C$ and for $j\in [a-1]$,
   	\[
   		C^{(j)}_{u,v} := \begin{cases}
   			C^{(j-1)}_{u,v} &\text{if } v\ne j \\
   			C^{(j-1)}_{u,j} + \frac{d-(GC^{(j-1)})_{a,j}}{(GC^{(j-1)})_{a,a}} C^{(j-1)}_{u,a} &\text{if } v = j \text{ and } C^{(j-1)}_{a,a} = 0 \\
   			C^{(j-1)}_{u,j} + \frac{d-C^{(j-1)}_{a,j}}{C^{(j-1)}_{a,a}} C^{(j-1)}_{u,a} &\text{if } v = j \text{ and } C^{(j-1)}_{a,a} \ne 0.
   		\end{cases}
   	\]
   	Note that if $C^{(j-1)}_{*,j}$ is already focused with respect to $a$, then $C^{(j)} = C^{(j-1)}$.
   	Then by Lemma~\ref{lem:focus-single}, $C^{(j)}_{*,j}$ is focused with respect to $a$.
   	Moreover, by the same lemma, $C^{(j)}_{*,b}$ is focused with respect to $a$ for all $b \in [j]$, and together with the induction hypothesis, for any $j \in [a-1]$, any $v > a$, and any $u \ne v$, we have that $C^{(j)}_{*,u}$ is focused with respect to $v$.
   	Finally, $C^{(a-1)}_{*,u}$ is focused with respect to $a$ for all $u > a$ since \conref{OF} implies that $C^{(a-1)}_{a,u} = (GC^{(a-1)})_{a,u} = 0$.
   	Thus the property holds for $k+1$ with correction matrix $C' = C^{(a-1)}$.
   	
   	This means the property holds for $k=n$.
   	The final \flow $(C',\prec)$ has the property that for all $v\in V$ and all $u\in V\setminus\{v\}$, the correction $C'_{*,u}$ is focused with respect to $v$.
   	By Observation~\ref{obs:focusing-individual-general}, this means $(C',\prec)$ satisfies Definition~\ref{Def:focused}, i.e.\ it is focused.
\end{proof}

\section{Proofs regarding algebraic formulation}\label{Sec:ProvingAlgebraicFormulation}

The main result shown in this appendix is Theorem~\ref{Th:main result}, but we also prove some related results here.
In Subsection~\ref{subsec:prelims}, we lay ground for the main proof.
Next, in Subsection~\ref{subsec:mainproof}, we provide detailed proof of Theorem~\ref{Th:main result}.
Lastly, in Subsection~\ref{subsec:ProvingFocusedVectors}, we prove some properties of the focused vectors that are associated with the new algebraic formulation.

\subsection{Preliminary simplifications}\label{subsec:prelims}

Before moving to the main proof, we perform some initial simplification by showing that the \flow definition can be restricted to operate only on the reduced adjacency matrix rather than the full matrix.
Moreover, we define a special relation induced by the correction matrix, which is sufficient to determine whether that correction matrix is compatible with any \flow, thus eliminating the need to consider arbitrary partial orders.

Note that Condition~\conref{C} from Definition~\ref{Def:flow} can also be simplified to $\forall u \in \comp{O} . \ld(u) = (C_{u,u}, (AC)_{u,u})$ under the convention that non-existent entries of a matrix are treated as $0$s.
Such a non-existent entry appears \eg for $C_{u,u}$ with $u \in I$.
We will now show that the simplified \flow definition is equivalent to the original one.

\begin{lemma}
    There exists \flow according to Definition~\ref{Def:full flow} if and only if there exists \flow according to Definition~\ref{Def:flow}.
\end{lemma}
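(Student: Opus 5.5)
The plan is to pass back and forth between the full correction matrix $C\in\Fd^{V\times V}$ of Definition~\ref{Def:full flow} and the reduced correction matrix $C_{\comp{I}}^{\comp{O}}$ of Definition~\ref{Def:flow}. The key identity is that condition \conref{ZF} forces all rows of $C$ indexed by $I$ and all columns indexed by $O$ to vanish. Consequently, for $v\in\comp{O}$ and $u\in\comp{O}$ we have
\[(GC)_{v,u}=\sum_{w\in V}G_{v,w}C_{w,u}=\sum_{w\in\comp{I}}G_{v,w}C_{w,u}=\bigl(A\,C_{\comp{I}}^{\comp{O}}\bigr)_{v,u}.\]
In other words, the $\comp{O}\times\comp{O}$ block of $GC$ is exactly $AC_{\comp{I}}^{\comp{O}}$. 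This single computation drives both directions.

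For the forward direction, given $(C,\prec)$ satisfying Definition~\ref{Def:full flow}, I set $C_{\mathrm{red}}:=C_{\comp{I}}^{\comp{O}}$ and take the restriction of $\prec$ to $\comp{O}$.
\begin{itemize}
\item Condition \conref{C} follows from \conref{CF} using the identity above. For $u\in I\cap\comp{O}$, the entry $C_{u,u}$ is $0$ by \conref{ZF}, so only the $\ldZ$ part survives; this matches the convention for nonexistent entries.
\item For \conref{O}, any totalisation of the restricted order extends to a totalisation of $\prec$ on $V$. Outputs can simply be appended, since a totalisation of $\prec$ restricted to $\comp{O}$ is a totalisation of the restricted order, and conversely. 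The matrices $C_{\mathrm{red}}$ and $AC_{\mathrm{red}}$ are then submatrices of the lower triangular $C$ and $GC$ on rows and columns ordered consistently, so they remain lower triangular in the sense of ``nonzero entry at $(w,u)$ with $w\neq u$ implies $u$ before $w$''.
\end{itemize}

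For the converse, given $(C_{\mathrm{red}},\prec)$, I define $C$ by padding with zeros: $C_{w,u}=(C_{\mathrm{red}})_{w,u}$ for $w\in\comp{I}$ and $u\in\comp{O}$, and $0$ otherwise. Then \conref{ZF} holds by construction, and \conref{CF} follows from \conref{C} together with the identity. For \conref{OF}, I need a partial order on all of $V$. My first choice is to extend $\prec$ by placing every output after every non-output, so $u\prec o$ for all $u\in\comp{O}$ and $o\in O$.

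I would handle the ordering bookkeeping by rephrasing lower-triangularity under all totalisations as a relational condition. Specifically: $C_{w,u}\neq 0$ or $(GC)_{w,u}\neq 0$ with $w\neq u$ implies $u\prec w$. I expect this to be the main (if modest) obstacle. The entries to check in the converse are the following.
\begin{itemize}
\item The columns $u\in O$ are zero in both $C$ and $GC$, so they impose no constraint.
\item The rows $w\in O$ with $u\in\comp{O}$ may be nonzero in $GC$, but they are compatible because $u\prec w$ by the extension.
\item The remaining $\comp{O}\times\comp{O}$ block of $GC$ is $AC_{\mathrm{red}}$, and the nonzero part of $C$ lies in $\comp{I}\times\comp{O}$. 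Entries with $w\in\comp{I}\cap O$ are again covered by the extension, while entries with both indices in $\comp{O}$ are governed by \conref{O}.
\end{itemize}
Finally, I would check that the extended relation is still a strict partial order: it is transitive and acyclic because outputs are maximal and incomparable among themselves.
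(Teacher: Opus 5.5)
Your proposal is correct and follows essentially the same route as the paper. In one direction you restrict $C$ to its $\comp{I}\times\comp{O}$ block and $\prec$ to $\comp{O}$; in the other you pad with zeros and place every output after every non-output. In both, the identity $(GC)_{v,u}=(AC_{\comp{I}}^{\comp{O}})_{v,u}$ on $\comp{O}\times\comp{O}$ does the work, and your relational reading of the triangularity conditions handles the output rows of $GC$ more explicitly than the paper does. One small slip: in the forward direction, outputs cannot always ``simply be appended'' to a totalisation of the restricted order, since the given $\prec$ may place an output before a non-output. Your relational criterion makes that extension step unnecessary, so nothing breaks.
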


\begin{proof}
    $(\Rightarrow):$ suppose there exists an \flow $(C,\prec)$ on the \LOFG $(G,I,O,\ld)$ according to Definition~\ref{Def:full flow}. Let $C'$ be the reduced correction matrix obtained from $C$ and let $\prec'$ be the restriction of $\prec$ to an order on $\comp{O}$ only.
    This restriction is immediately a strict partial order.
    Now, from \conref{ZF} we know $C_{i,v} = 0$ for all $i \in I$ and $C_{v,o}=0$ for all $o \in O$ (with $v\in V$ arbitrary), which implies $C'$ contains all non-zero entries of $C$.
    As all the non-zero entries are preserved, we furthermore have $(AC)_{u,v} = (GC)_{u,v}$ for all $u \in \comp{O}$ and $v \in \comp{I}$.
    Therefore, condition~\conref{OF} implies~\conref{O}, and~\conref{CF} implies~\conref{C}.
    As a consequence, $(G,I,O,\ld)$ has \flow $(C',\prec')$ according to Definition~\ref{Def:flow}.

    $(\Leftarrow):$ suppose there exists an \flow $(C',\prec')$ on the \LOFG $(G,I,O,\ld)$ according to Definition~\ref{Def:flow}.
    Let $C$ be the extension of $C'$ to a matrix of shape $V \times V$ by filling all missing entries with $0$: this immediately makes $C$ satisfy \conref{ZF}.
    Let $\prec$ be the extension of $\prec'$ to an order on $V$ by adding relations $u \prec o$ for all $u \in \comp{O}$ and $o \in O$.
    This process makes $\prec$ a strict partial order.
    Similar to the proof above, $(AC)_{u,v} = (GC)_{u,v}$ for all $u \in \comp{O}$ and $v \in \comp{I}$ as once again $C'$ contains all non-zero entries of $C$.
    Therefore condition~\conref{C} implies~\conref{CF}.
    The matrices $C'$ and $AC'$ are lower-triangular according to any totalisation of $\prec'$, hence $C$ and $GC$ are lower triangular according to any totalisation of $\prec$, because ${\prec'} \subset {\prec}$ and rows and columns of $C$ not featured in $C'$ are identically $0$.
    In other words, condition~\conref{O} implies~\conref{OF}.
    Thus $(G,I,O,\ld)$ has \flow $(C,\prec)$ according to Definition~\ref{Def:full flow}.
\end{proof}

The definition of focused \flow (\cf Definition~\ref{Def:focused}) is compatible with both \flow definitions.
Thus, also, the existence of focused \flow with respect to one \flow definition is equivalent to the existence of focused \flow with respect to the other \flow definition:

\begin{corollary}
    There exists focused \flow according to Definition~\ref{Def:full flow} if and only if there exists focused \flow according to Definition~\ref{Def:flow}.
\end{corollary}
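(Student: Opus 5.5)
The plan is to reuse the translation just built in the preceding lemma and check that it respects the focusing conditions~\conref{FX} and~\conref{FZ} in both directions. Definition~\ref{Def:focused} is phrased for a full correction matrix $C\in\Fd^{V\times V}$. For the reduced setting we read it with $C$ and $GC$ replaced by the reduced correction matrix and $AC$, with indices $v,u\in\comp{O}$. As before, non-existent entries such as $C_{v,u}$ with $v\in I$ are treated as zero.

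\emph{Forward direction.} Start from a focused \flow $(C,\prec)$ in the sense of Definition~\ref{Def:full flow}, and take the reduced matrix $C'$ and the restricted order $\prec'$ exactly as in the proof of the previous lemma.
\begin{itemize}
\item That lemma already gives that $(C',\prec')$ is an \flow in the sense of Definition~\ref{Def:flow}.
\item For focusing, the key fact is the one established there: $(AC')_{v,u}=(GC)_{v,u}$ for all $v\in\comp{O}$ and $u\in\comp{I}$, and $C'_{v,u}=C_{v,u}$ whenever $v\in\comp{I}$.
\item The focusing conditions only concern pairs $v,u\in\comp{O}$. If $u\in I$, then column $u$ of $C$ is zero by \conref{ZF}, since rows indexed by $I$ vanish. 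Hence $(GC)_{v,u}=\sum_w G_{v,w}C_{w,u}$, and no input rows enter this sum.
\item So every entry appearing in \conref{FX} or \conref{FZ} either coincides with the corresponding reduced entry, or is a zero entry that is absent from the reduced matrices.
\end{itemize}
Hence both focusing conditions transfer verbatim.

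\emph{Reverse direction.} Take a focused $(C',\prec')$ in the sense of Definition~\ref{Def:flow}. Extend it by zero padding and by adding the relations $u\prec o$ for $u\in\comp{O}$ and $o\in O$, as in the previous lemma. The same entrywise equalities then show that \conref{FX} and \conref{FZ} hold for $(C,\prec)$. Any newly introduced entries $C_{v,u}$ are zero, so they satisfy the focusing conditions trivially.

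The only point I expect to need care is bookkeeping for measured inputs $v\in I\cap\comp{O}$. For such $v$ the row $C_{v,*}$ does not exist in the reduced matrix. However, Definition~\ref{Def:LOFG} forces $\ldX(v)=0$, so only \conref{FX} is relevant at $v$, and that condition refers to $(GC)_{v,u}=(AC')_{v,u}$, which is present in $A C'$. So the zero-entry convention resolves this case and no further argument is needed.
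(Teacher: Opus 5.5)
Your approach is the one the paper takes: the paper only remarks that Definition~\ref{Def:focused} is compatible with both flow definitions, i.e.\ that the translation of the preceding lemma (restriction to $\comp{I}\times\comp{O}$ in one direction, zero padding in the other) preserves \conref{FX} and \conref{FZ}. Your reverse direction and your handling of measured inputs $v\in I\cap\comp{O}$ are fine. However, one justification in your forward direction is false.

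The false claim is that for $u\in I\cap\comp{O}$ the column $u$ of $C$ vanishes by \conref{ZF}. Condition \conref{ZF} kills the \emph{rows} indexed by $I$ and the \emph{columns} indexed by $O$. For a measured input $u$, condition \conref{CF} in fact forces $(GC)_{u,u}=\ldZ(u)\neq 0$, so column $u$ is necessarily non-zero; in the running example, $C_{a,i}=3$.

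You do not need that claim. For all $v,u\in\comp{O}$,
\[
(GC)_{v,u}=\sum_{w\in V}G_{v,w}C_{w,u}=\sum_{w\in\comp{I}}G_{v,w}C_{w,u}=(AC')_{v,u},
\]
and this uses only the vanishing of the input rows, which is the reason you also give at the end of that bullet.

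Relatedly, state your ``key fact'' for $u\in\comp{O}$ rather than $u\in\comp{I}$, since the columns of $AC'$ are indexed by $\comp{O}$. The proof of the preceding lemma contains the same index slip. With that correction, no separate case $u\in I$ is needed, and the argument is complete.
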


The usual flow definitions simultaneously describe the correction function and the strict partial order.
Yet the conditions~\conref{FX} and~\conref{FZ} do not depend on the strict partial order, but only on the correction matrix.
Thus, it makes sense to introduce the notion of a `focused correction matrix', dropping the partial order from the definition:

\begin{definition}[Focused correction matrix]
    Let $(G,I,O,\ld)$ be a \LOFG and $C$ a reduced correction matrix. We say that $C$ is \textit{focused} if it satisfies \conref{FX} and \conref{FZ}.
\end{definition}

\begin{definition}[Extensive correction matrix]
    Let $\Gamma = (G,I,O,\ld)$ be a \LOFG and let $C$ be a reduced correction matrix on $\Gamma$.
    We call $C$ \textit{extensive} when there exists a strict partial order $\prec$ such that $(C,\prec)$ is an \flow on $\Gamma$.
    We call $C$ \textit{focused extensive} when there exists $\prec$ such that $(C,\prec)$ is a focused \flow on $\Gamma$.
\end{definition}

When determining whether a correction matrix is extensive, one must determine whether there exists a strict partial order that satisfies condition~\conref{O} from Definition~\ref{Def:flow}.
It turns out that, for any correction matrix, there exists a unique relation that determines extensivity.

\begin{definition}[Induced relation]\label{Def:induced rel}
    Let $\Gamma = (G,I,O,\ld)$ be a \LOFG and let $C$ be a correction matrix on $\Gamma$.
    The \emph{induced relation} $\trl_C$ is the coarsest relation on $\comp{O}$ such that for any pair of distinct $u, v \in \comp{O}$ we have $u \trl_C v$ if and only if at least one of the following hold:\begin{itemize}
        \item[\conlab{O1}] $(AC)_{v,u} \ne 0$, or
        \item[\conlab{O2}] $C_{v,u} \ne 0$.
    \end{itemize}
    When $\trl_C$ is an acyclic relation, we denote the transitive closure of $\trl_C$ as $\prec_C$.
\end{definition}

The entries of the product $AC$ do not depend on any ordering of the vertices, and thus the above definition is sound.
The relation $\trl_C$ captures all information regarding which off-diagonal entries of $AC$ are non-zero, motivating the following lemma:

\begin{lemma}[Induced order containment]\label{Lemma:ind ord containment}
    Let $\Gamma = (G,I,O,\lambda)$ be a \LOFG and let $(C,\prec)$ be an \flow.
    Then ${\prec_C} \subseteq {\prec}$.
\end{lemma}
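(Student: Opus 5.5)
The plan is to show first that every edge of the induced relation is contained in $\prec$, that is, ${\trl_C} \subseteq {\prec}$. Acyclicity of $\trl_C$ and the containment of its transitive closure then follow from $\prec$ being a strict partial order. Throughout, $(C,\prec)$ is an \flow in the sense of Definition~\ref{Def:flow}, so $C$ is indexed by $\comp{I}\times\comp{O}$, $AC$ by $\comp{O}\times\comp{O}$, and $\prec$ is a strict partial order on $\comp{O}$.

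Fix distinct $u,v\in\comp{O}$ with $u \trl_C v$. By Definition~\ref{Def:induced rel}, either $(AC)_{v,u}\neq 0$ (condition \conref{O1}) or $C_{v,u}\neq 0$ (condition \conref{O2}). In case \conref{O2}, $v\in\comp{I}$, since $C$ only has rows labelled by $\comp{I}$; together with $v\in\comp{O}$ this places $v$ among the labels that occur both as a row and as a column, so the entry $C_{v,u}$ is meaningful under condition \conref{O}.

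We argue by contradiction, assuming $u\not\prec v$. Since $u\neq v$, the relation $\prec$ extended by the requirement that $v$ precedes $u$ is still acyclic: a cycle would need a $\prec$-chain from $u$ to $v$, which would give $u\prec v$. Hence there is a totalisation $<$ of $\prec$ with $v<u$. Order the rows and columns of $C$ and $AC$ according to $<$. By condition \conref{O}, both matrices are lower triangular, so every entry in row $v$ and column $u$ with $v<u$ is above the diagonal and must vanish. This gives $C_{v,u}=0$ and $(AC)_{v,u}=0$, contradicting the assumption $u \trl_C v$. Therefore $u\prec v$.

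The main obstacle is the bookkeeping of ``lower triangular'' for the non-square matrices $C$ and $AC$ with partially overlapping row and column label sets. I would interpret the condition as requiring $K_{x,y}=0$ whenever $x$ and $y$ are both compared by the totalisation and $x<y$, matching the usage in the lemma relating Definitions~\ref{Def:full flow} and~\ref{Def:flow}. With this reading the entries relevant here (both indices in $\comp{O}$) are covered.

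Finally, since ${\trl_C}\subseteq{\prec}$ and $\prec$ is irreflexive and transitive, any cycle in $\trl_C$ would yield $x\prec x$. So $\trl_C$ is acyclic and $\prec_C$ is defined. Because $\prec$ is transitive and contains $\trl_C$, it also contains the transitive closure, which gives ${\prec_C}\subseteq{\prec}$.
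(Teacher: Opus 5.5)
Your proof is correct and follows the paper's route. Condition~\conref{O} forces any nonzero $(AC)_{v,u}$ or $C_{v,u}$ to place $u$ before $v$ in every totalisation of $\prec$. The paper finishes by invoking the fact that the intersection of all totalisations of $\prec$ is $\prec$ itself, whereas you prove the needed direction inline by building a totalisation with $v<u$ whenever $u\not\prec v$.

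You are also more careful than the paper in two places. You first establish ${\trl_C}\subseteq{\prec}$ and then derive both the acyclicity of $\trl_C$ and the containment of its transitive closure. The paper glosses this step by treating $u\prec_C v$ as if it directly yielded a nonzero entry.
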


\begin{proof}
    Let $\prec'$ be any totalisation of $\prec$.
    By Definition~\ref{Def:flow}, both $AC$ and $C$ must be lower triangular with respect to the ordering $\prec'$.
    Suppose that $u \prec_C v$.
    By Definition~\ref{Def:induced rel}, $(AC)_{v,u} \ne 0$ or $C_{v,u} \ne 0$.
    Now if $v \prec' u$, then either $AC$ or $C$ would not be lower triangular, a contradiction.
    Since $\prec'$ is a total order and $\neg(v \prec' u)$, we get that necessarily $u \prec' v$.
    Repeating for all elements of $\prec_C$, we find that ${\prec_C} \subseteq {\prec'}$.
    This inclusion must hold for all totalisations $\prec'$ of $\prec$ and thus $\prec_C$ forms a subset of the intersection of all such totalisations.
    This intersection is precisely $\prec$, ending the proof.
\end{proof}

We now have all the parts needed to prove that an extensive correction matrix determines a valid \flow.

\begin{theorem}[Extending a correction matrix to an \flow]\label{Th:extending correction matrix to flow}
    Let $\Gamma = (G,I,O,\lambda)$ be a \LOFG and let $C$ be a correction matrix on $\Gamma$. Then $C$ is (focused) extensive if and only if $(C, \prec_C)$ is a (focused) \flow.
\end{theorem}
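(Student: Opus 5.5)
The reverse direction is immediate from the definition: if $(C,\prec_C)$ is a (focused) \flow, then taking $\prec := \prec_C$ witnesses that $C$ is (focused) extensive. For this to make sense we also need $\trl_C$ to be acyclic, so that $\prec_C$ is defined. This is part of the hypothesis that $(C,\prec_C)$ is a flow.

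For the forward direction, assume $(C,\prec)$ is an \flow for some strict partial order $\prec$. I would proceed in four steps.

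First, show that $\trl_C$ is acyclic. Suppose there were a cycle $u_1 \trl_C u_2 \trl_C \dots \trl_C u_k \trl_C u_1$. The argument in the proof of Lemma~\ref{Lemma:ind ord containment} applies to each single edge: if $u \trl_C v$ then $(AC)_{v,u}\neq 0$ or $C_{v,u}\neq 0$, and lower triangularity under every totalisation $\prec'$ of $\prec$ forces $u \prec' v$. Intersecting over all totalisations gives $u\prec v$. A cycle would therefore yield $u_1 \prec u_1$, contradicting strictness. Hence $\prec_C$, the transitive closure of $\trl_C$, is a well-defined strict partial order on $\comp{O}$. Lemma~\ref{Lemma:ind ord containment} then gives ${\prec_C}\subseteq{\prec}$.

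Second, check condition \conref{C}. It involves only $C$, $A$ and $\ld$, not the order, so it carries over verbatim from $(C,\prec)$.

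Third, and this is the main point, verify \conref{O} for $\prec_C$. Fix an arbitrary totalisation $<$ of $\prec_C$ and take an off-diagonal entry above the diagonal, i.e.\ a pair $u<v$ with $u \neq v$, where we want $C_{u,v} = 0$ and $(AC)_{u,v}=0$. If either were nonzero, then by \conref{O1}/\conref{O2} we would have $v \trl_C u$, hence $v\prec_C u$, hence $v<u$, contradicting $u<v$ in a total order. Diagonal entries impose no constraint. So $C$ and $AC$ are lower triangular for every totalisation, and $(C,\prec_C)$ is an \flow. The only subtlety I would watch is the row/column labelling, since $C$ is $\comp{I}\times\comp{O}$ and $AC$ is $\comp{O}\times\comp{O}$. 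Entries of $C$ in rows from $I\cap\comp{O}$ are treated as $0$ by the convention stated above, and rows in $O$ do not take part in the ordering on $\comp{O}$. Lower triangularity should therefore be read only on the $\comp{O}\times\comp{O}$ overlap, consistent with Definition~\ref{Def:flow}.

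Fourth, handle the focused case. The conditions \conref{FX} and \conref{FZ} depend only on $C$, as already noted when the focused correction matrix was defined. So if $(C,\prec)$ is focused, then $(C,\prec_C)$ is focused as well, and the argument above yields the focused version of the statement.
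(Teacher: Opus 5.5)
Your proposal is correct and follows essentially the same route as the paper: use the induced-order containment lemma to get ${\prec_C}\subseteq{\prec}$ (so $\prec_C$ is a strict partial order), carry over \conref{C}, \conref{FX} and \conref{FZ} because they do not involve the order, and verify \conref{O} for $\prec_C$ by the same contradiction with an arbitrary totalisation. Your explicit first step, showing that $\trl_C$ is acyclic edge by edge, tightens the paper's argument slightly: the paper invokes the containment lemma before $\prec_C$ is known to be defined, and that lemma's proof treats pairs in $\prec_C$ as direct $\trl_C$-edges.
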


\begin{proof}
    $(\Rightarrow)$: Suppose that $C$ is extensive.
    Then there exists a strict partial order $\prec$ on $\comp{O}$ such that $(C,\prec)$ is an \flow.
    By Lemma~\ref{Lemma:ind ord containment}, ${\prec_C} \subseteq {\prec}$.
    Thus, since $\prec$ is a strict partial order, $\prec_C$ is also a strict partial order.
    Since $C$ is extensive, it must satisfy condition~\conref{C} (which does not depend on the partial order).
    Similarly, if $C$ is focused extensive, it must necessarily satisfy conditions \conref{FX} and \conref{FZ}, which are also independent of the partial order.
    It remains to show that $(C,\prec_C)$ satisfies condition~\conref{O}.
    Suppose for a contradiction that \conref{O} fails for $\prec_C$.
    Then, for some totalisation $\prec_C'$ of $\prec_C$, we must have $v \prec_C' u$ and either $C_{v,u} \ne 0$ or $(AC)_{v,u} \ne 0$.
    However, either of $C_{v,u}$ or $(AC)_{v,u}$ being non-zero implies $u \prec_C v$ and thus $u \prec_C' v$.
    This contradicts the assumption $v \prec_C' u$.
    Thus $(C,\prec_C)$ satisfies \conref{O} and hence forms an \flow, which moreover is focused if and only if $C$ is focused extensive.

    $(\Leftarrow)$: Immediate from the definitions.
\end{proof}

Theorem~\ref{Th:extending correction matrix to flow} states that in order to determine whether $C$ is (focused) extensive, it is sufficient to check whether $(C,\prec_C)$ forms a (focused) \flow.
In fact, based on the proof, a slightly stronger property holds.
If $C$ satisfies condition~\conref{C} (and conditions~\conref{FX} and~\conref{FZ}) and $\prec_C$ is a strict partial order, then $(C,\prec_C)$ is a (focused) \flow, effectively skipping verification of condition~\conref{O} apart from ensuring that $\prec_C$ is a strict partial order.
This allows us to avoid considering arbitrary totalisations of $\prec_C$ when verifying the \flow conditions.
Furthermore, since $\prec_C$ is a transitive closure of $\trl_C$, $\prec_C$ is a strict partial order if and only if $\trl_C$ forms a directed acyclic graph.
Together, this amounts to the following simpler algebraic formulation of \flow:

\begin{theorem}\label{Th:partial alg}
    Let $\Gamma = (G,I,O,\ld)$ be a \LOFG and let $C$ be any correction matrix on $\Gamma$.
    Then $C$ is extensive if and only if $C$ satisfies condition~\conref{C} and $\prec_C$ is a strict partial order.
    Assuming the above holds, $C$ is focused extensive if and only if it satisfies conditions~\conref{FX} and~\conref{FZ}.
\end{theorem}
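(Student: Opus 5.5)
The plan is to derive Theorem~\ref{Th:partial alg} almost directly from Theorem~\ref{Th:extending correction matrix to flow} and Lemma~\ref{Lemma:ind ord containment}, treating the two claims separately.

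\textbf{First claim, forward direction.} Assume $C$ is extensive. Then there is a strict partial order $\prec$ with $(C,\prec)$ an \flow, so \conref{C} holds, since \conref{C} is part of Definition~\ref{Def:flow}. By Lemma~\ref{Lemma:ind ord containment}, ${\prec_C}\subseteq{\prec}$. Because $\prec$ is irreflexive and transitive, the relation $\trl_C$ cannot contain a cycle: a cycle would yield $u\prec_C u$ and hence $u\prec u$. So $\trl_C$ is acyclic and $\prec_C$, its transitive closure, is defined, transitive and irreflexive, i.e.\ a strict partial order.

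\textbf{First claim, backward direction.} Assume \conref{C} holds and $\prec_C$ is a strict partial order. It suffices to check \conref{O} for $(C,\prec_C)$, since then $C$ is extensive by definition. Take any totalisation $<$ of $\prec_C$ and distinct $u,v\in\comp{O}$ with $v<u$. If $C_{v,u}\neq 0$ or $(AC)_{v,u}\neq 0$, then $u\trl_C v$ by Definition~\ref{Def:induced rel}, so $u\prec_C v$ and therefore $u<v$, a contradiction. Hence all strictly upper-triangular entries of $C$ and $AC$ vanish, and both matrices are lower triangular.

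\textbf{Bookkeeping point.} The rows of $C$ are indexed by $\comp{I}$ and its columns by $\comp{O}$, while $AC$ is indexed by $\comp{O}\times\comp{O}$. I will therefore read ``lower triangular'' exactly as in the proof of Theorem~\ref{Th:extending correction matrix to flow}: rows labelled by $I\cap O$ do not arise off-diagonal among comparable vertices, or are indexed only by common labels. No new content is needed here; I will simply mirror that proof.

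\textbf{Second claim.} Now assume $C$ is extensive. If $C$ satisfies \conref{FX} and \conref{FZ}, then $(C,\prec_C)$ is an \flow by the first claim, and it is focused, because focusing depends only on $C$. So $C$ is focused extensive. Conversely, if $C$ is focused extensive, then some focused \flow $(C,\prec)$ exists, so \conref{FX} and \conref{FZ} hold. Alternatively, this whole step is exactly the focused case of Theorem~\ref{Th:extending correction matrix to flow}.

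I do not expect a real obstacle. The only delicate step is the first one, ensuring that $\prec_C$ is well defined (i.e.\ $\trl_C$ is acyclic) before invoking its properties, and that is handled by the containment in a strict partial order.
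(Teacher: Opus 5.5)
Your proposal is correct and takes essentially the same route as the paper. The paper also derives this result directly from Theorem~\ref{Th:extending correction matrix to flow}: it uses Lemma~\ref{Lemma:ind ord containment} for the forward direction, and for the converse it observes that the \conref{O}-verification in that proof needs only \conref{C} and $\prec_C$ being a strict partial order, with \conref{FX} and \conref{FZ} independent of the order. (One minor slip in your bookkeeping remark: the labels with no counterpart are the rows of $C$ in $O\setminus I$ and the columns in $I\setminus O$, not $I\cap O$; this does not affect the argument.)
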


\subsection{Main proof}\label{subsec:mainproof}

We now move on to proving the main result, Theorem~\ref{Th:main result}.
The first step is to split condition~\conref{C} into several smaller parts that will be encoded in different parts of the algebraic \flow definition:

\begin{observation}\label{Obs:C split}
    Let $(G,I,O,\lambda)$ be a \LOFG and let $C$ be a correction matrix of shape $\comp{I}\times\comp{O}$.
    Define the following conditions for all $v \in \comp{O}$:
    \begin{enumerate}
        \item[\conlab{C1}] $\ldX(v) = 0 \rightarrow (AC)_{v,v} = \ldZ(v)$
        \item[\conlab{C2}] $\ldX(v) \ne 0 \rightarrow C_{v,v} = \ldX(v)$
        \item[\conlab{C3}] $\ldX(v) = 0 \rightarrow C_{v,v} = 0$
        \item[\conlab{C4}] $\ldX(v) \ne 0 \rightarrow (AC)_{v,v} = \ldZ(v)$
    \end{enumerate}
    Then, \conref{C} is equivalent to the four above conditions, as $\ld_X(v) = C_{v,v}$ is captured by \conref{C2} and \conref{C3}, while $\ld_Z(v) = (AC)_{v,v}$ is captured by \conref{C1} and \conref{C4}.
\end{observation}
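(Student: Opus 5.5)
The plan is to verify the equivalence pointwise, one vertex $v \in \comp{O}$ at a time, by splitting on whether $\ldX(v) = 0$. Throughout I use the convention stated just after Definition~\ref{Def:flow}: a non-existent entry of $C$ is read as $0$. Under this convention \conref{C} reads uniformly as $\forall v \in \comp{O} .\ (\ldX(v), \ldZ(v)) = (C_{v,v}, (AC)_{v,v})$. First I will justify that this uniform reading is faithful to the two-part statement of \conref{C}. For $v \in I \cap \comp{O}$, Definition~\ref{Def:LOFG} forces $\ld(v) = (0,*)$, so $\ldX(v) = 0$. The missing entry $C_{v,v}$ is read as $0$, so the first coordinate equation holds automatically. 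Hence only $\ldZ(v) = (AC)_{v,v}$ carries content, which is exactly the second clause of \conref{C}.

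With the uniform form, \conref{C} for a fixed $v$ is the conjunction of $C_{v,v} = \ldX(v)$ and $(AC)_{v,v} = \ldZ(v)$. I will then split on the value of $\ldX(v)$.
\begin{enumerate}
\item If $\ldX(v) = 0$, then \conref{C2} and \conref{C4} are vacuous. \conref{C3} says $C_{v,v} = 0 = \ldX(v)$, and \conref{C1} says $(AC)_{v,v} = \ldZ(v)$. Together these are precisely the two coordinate equations of \conref{C} at $v$.
\item If $\ldX(v) \ne 0$, then \conref{C1} and \conref{C3} are vacuous. \conref{C2} gives $C_{v,v} = \ldX(v)$ and \conref{C4} gives $(AC)_{v,v} = \ldZ(v)$, again exactly \conref{C} at $v$.
\end{enumerate}
Since the two cases are exhaustive and each direction of the equivalence is just reading off these identities, both implications follow at once. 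Quantifying over all $v \in \comp{O}$ then gives the observation.

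I do not expect any step to be a real obstacle. The only point needing care is the input vertices, where $C_{v,v}$ does not exist in the $\comp{I} \times \comp{O}$ matrix. There I must invoke the zero convention together with the restriction $\ld(i) = (0,*)$ for measured inputs. This ensures that \conref{C3} holds trivially rather than imposing a spurious constraint, and that \conref{C2} never applies to inputs.
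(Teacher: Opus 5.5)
Your proof is correct and matches the paper's own justification. The paper pairs (C2) with (C3) to get $\ldX(v) = C_{v,v}$, and (C1) with (C4) to get $\ldZ(v) = (AC)_{v,v}$; you group the same four implications by the case $\ldX(v) = 0$ versus $\ldX(v) \ne 0$ instead, and your handling of measured inputs via the zero-entry convention and $\ld(i) = (0,*)$ is exactly the remark the paper makes just before the observation.
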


The goal is to show that $MC = \id{\comp{O}}{\comp{O}}$ corresponds to \conref{C1}, \conref{C2}, \conref{FX}, and \conref{FZ}; while $NC$ being a DAG corresponds to \conref{C3}, \conref{C4}, and the condition that $\prec_C$ is a strict partial order.
We split the proof of this into two lemmata:

\begin{lemma}\label{Lemma:MC}
    Let $(G,I,O,\ld)$ be a \LOFG and let $C$ be a correction matrix of shape $\comp{I}\times\comp{O}$.
    Then $MC = \id{\comp{O}}{\comp{O}}$ if and only if $C$ satisfies \conref{C1}, \conref{C2}, \conref{FX}, and \conref{FZ}.
\end{lemma}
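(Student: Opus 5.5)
The plan is a direct entrywise computation of $MC$, organised row by row according to whether $\ldX(v)$ vanishes. Fix $v \in \comp{O}$ and $w \in \comp{O}$. The key observation is that, by Definition~\ref{Def:fdm}, the $v$-labelled row of $M$ is a scalar multiple of a row we already know. If $\ldX(v)=0$, then $M_{v,*} = \ldZ(v)^{-1} A_{v,*}$, so
\[
(MC)_{v,w} = \ldZ(v)^{-1} (AC)_{v,w}.
\]
If $\ldX(v)\neq 0$, then $M_{v,*} = \ldX(v)^{-1}\id{\{v\}}{\comp{I}}$. Note that $v \in \comp{I}$ in this case, since inputs in $\comp{O}$ carry labels $(0,*)$ by Definition~\ref{Def:LOFG}. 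Hence
\[
(MC)_{v,w} = \ldX(v)^{-1} C_{v,w}.
\]
These two formulas reduce the matrix equation to scalar conditions on each row.

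Next I compare entries with $\id{\comp{O}}{\comp{O}}$, splitting into diagonal and off-diagonal entries.

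\textbf{Case $\ldX(v)=0$.} The diagonal entry equals $1$ if and only if $(AC)_{v,v} = \ldZ(v)$, which is exactly \conref{C1}. For $w \neq v$, the entry equals $0$ if and only if $(AC)_{v,w} = 0$. This is exactly \conref{FX} for this $v$, once one notes that $(GC)_{v,w} = (AC)_{v,w}$ for the reduced matrices, as established in the preliminary lemma relating Definitions~\ref{Def:full flow} and~\ref{Def:flow}.

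\textbf{Case $\ldX(v)\neq 0$.} The diagonal entry equals $1$ if and only if $C_{v,v} = \ldX(v)$, which is \conref{C2}. For $w\neq v$, the entry equals $0$ if and only if $C_{v,w}=0$, which is \conref{FZ}.

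Since every row falls into exactly one case, and each of the four conditions is stated per $v$ under the same case split, the equivalence follows by conjunction over all $v$. Both directions come out of the same chain of ``if and only if'' statements.

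The main obstacle is bookkeeping rather than anything substantive. One must check that the focusing conditions, which are phrased in terms of $GC$ and full-matrix indices, translate correctly to the reduced $A$ and $C$. One must also handle $v \in I\cap\comp{O}$, where $C_{v,v}$ does not exist as an entry. For the latter, the convention that missing entries are zero together with $\ldX(v)=0$ for inputs makes that case harmless. Finally, the scalars $\ldZ(v)^{-1}$ and $\ldX(v)^{-1}$ are nonzero and invertible because $\Fd$ is a field, so multiplying by them preserves both ``$=1$ after rescaling'' and ``$=0$''.
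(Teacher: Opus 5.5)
Your proposal is correct and follows essentially the same route as the paper. Like the paper, you compute $(MC)_{v,w}$ as $\ldZ(v)^{-1}(AC)_{v,w}$ or $\ldX(v)^{-1}C_{v,w}$ depending on whether $\ldX(v)=0$, then read off \conref{C1}/\conref{FX} and \conref{C2}/\conref{FZ} from the diagonal and off-diagonal entries, with both directions coming from the same chain of equivalences. Your explicit check that \conref{FX} transfers from $GC$ to the reduced product $AC$ is bookkeeping the paper leaves implicit.
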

\begin{proof}
    First, consider some relevant properties of the entries of $M$ in relation to the entries of $A$ and $C$.
    Let $v \in \comp{O}$ and $u\in\comp{I}$, then by Definition~\ref{Def:fdm}:\begin{equation*}
        M_{v,u} = \begin{cases}
        	\ldZ(v)^{-1} A_{v,u} & \text{if}\ \ldX(v) = 0\\
        	\ldX(v)^{-1} \delta_{v,u} & \text{if}\ \ldX(v) \ne 0.
        \end{cases}
    \end{equation*}
    Let $w\in\comp{O}$ and consider two cases depending on $\ldX(v)$:
    \begin{enumerate}
        \item Suppose $\ldX(v) = 0$ and thus $M_{v,u} = \ldZ(v)^{-1} A_{v,u}$.
        Multiplying by $C_{u,w}$, we get:\begin{equation*}
            M_{v,u} C_{u,w} = \ldZ(v)^{-1} A_{v,u} C_{u,w}
        \end{equation*}
        \noindent and hence by linearity:\begin{equation}
            (MC)_{v,w} = \ldZ(v)^{-1} (AC)_{v,w} \label{MCvwX0}
        \end{equation}

        \item Suppose $\ldX(v) \ne 0$ and thus $M_{v,u} = \ldX(v)^{-1} \delta_{v,u}$.
        When $\ldX(v) \ne 0$, then necessarily $v \notin I$ and thus $M$ has a $v$-column.
        Multiplying $M_{v,u}$ be ${C_{u,w}}$, we get:\begin{equation*}
            M_{v,u} C_{u,w} = \ldX(v)^{-1} \delta_{v,u} C_{u,w}
        \end{equation*}
        \noindent which equals $0$ unless $v=u$ when it equals $\ldX(v)^{-1} C_{v,w}$.
        Using this fact, we find that:\begin{equation}
            (MC)_{v,w} = \sum_{u \in \comp{I}} M_{v,u} C_{u,w} = \ldX(v)^{-1} C_{v,w} \label{MCvwXne0}
        \end{equation}
    \end{enumerate}
    \noindent Now we can split the proof into two directions.

    $(\Rightarrow):$ Suppose $MC = \id{\comp{O}}{\comp{O}}$; this implies $(MC)_{v,v} = 1$ and $(MC)_{v,w} = 0$ for all $w \in \comp{O}\setminus\{v\}$.
    The following hold:
    \begin{enumerate}
        \item For all $v$ such that $\ldX(v) = 0$, by Equation~\eqref{MCvwX0}, we have $1 = (MC)_{v,v} = \ldZ(v)^{-1}(AC)_{v,v}$.
        Thus $(AC)_{v,v} = \ldZ(v)$, \ie \conref{C1} holds.
        \item Furthermore, for all $v$ such that $\ldX(v) = 0$ and for all $w$ such that $v \ne w$, by Equation~\eqref{MCvwX0} we have $0 = (MC)_{v,w} = \ldZ(v)^{-1} (AC)_{v,w}$.
        Since $\ldZ(v) \ne 0$ (as $\ld(v) \ne (0,0)$), this implies $(AC)_{v,w} = 0$ and thus \conref{FX} holds.
        \item For all $v$ such that $\ldX(v) \ne 0$, by Equation~\eqref{MCvwXne0}, we have $1 = (MC)_{v,v} = \ldX(v)^{-1}C_{v,v}$.
        Thus $C_{v,v} = \ldX(v)$, \ie \ref{C2} holds.
        \item Finally, for all $v$ such that $\ldX(v) \ne 0$ and for all $w$ such that $v \ne w$, by Equation~\ref{MCvwXne0} we have $0 = (MC)_{v,w} = \ldX(v)^{-1} C_{v,w}$.
        This implies $C_{v,w} = 0$ and thus \conref{FZ} holds.
    \end{enumerate}
    Hence the desired properties are all satisfied.

    $(\Leftarrow):$ Suppose the four properties hold.
    We must show that $(MC)_{v,v} = 1$ and $(MC)_{v,w} = 0$ for all $v \ne w \in \comp{O}$.
    Observe that all four points in the $(\Rightarrow)$ part of the proof are reversible.
    Therefore $(MC)_{v,v} = 1$ follows from reversing the two first points of the $(\Rightarrow)$ part of the proof and $(MC)_{v,w} = 0$ follows from reversing the latter two points of the $(\Rightarrow)$ part of the proof.
    Then $MC = \Id_{\comp{O}}^{\comp{O}}$ as desired.
\end{proof}

The proof for the product $NC$ being a DAG follows the same structure.
We split it up into three parts for simplicity.
Before proving those lemmata, first note that for any $v,w \in \comp{O}$, by Definition~\ref{Def:odm}:
\begin{equation*}
	N_{v,u} = \ldX(v) A_{v,u} - \ldZ(v) \delta_{v,u}
\end{equation*}
\noindent for any $u \in \comp{I}$.
Since $\delta_{v,u} = 0$ unless $u=v$, we get the following:\begin{align}\label{eq:NC-decomposition}
	(NC)_{v,w} = \sum_{u \in \comp{I}} N_{v,u} C_{u,w} = \left(\sum_{u \in \comp{I}} \ldX(v) A_{v,u} C_{u,w}\right) - \ldZ(v) C_{v,w} = \ldX(v)(AC)_{v,w} - \ldZ(v) C_{v,w}
\end{align}
\noindent where we use the convention $C_{v,w} = 0$ for $v \in I$ to make sense of the final expression despite the rows of $NC$ being labelled by $\comp{O}$ and the rows of $C$ being labelled by $\comp{I}$.

\begin{lemma}\label{lem:NC-zero-diagonal}
	Let $(G,I,O,\ld)$ be a \LOFG and let $C$ be a correction matrix of shape $\comp{I}\times\comp{O}$ satisfying \conref{C1}, \conref{C2}, \conref{FX}, and \conref{FZ}.
	Then the main diagonal of $NC$ is identically zero if and only if \conref{C3} and \conref{C4} hold.
\end{lemma}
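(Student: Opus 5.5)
The plan is to compute each diagonal entry of $NC$ from Equation~\eqref{eq:NC-decomposition} with $w=v$, which gives $(NC)_{v,v} = \ldX(v)(AC)_{v,v} - \ldZ(v) C_{v,v}$ for every $v\in\comp{O}$ (using the convention $C_{v,v}=0$ when $v\in I$). The proof then splits into two cases on whether $\ldX(v)$ vanishes, mirroring the case split in Lemma~\ref{Lemma:MC}. In each case, one of the two conditions \conref{C1}/\conref{C2} from the hypotheses fixes one factor. We then check that $(NC)_{v,v}=0$ holds exactly when the corresponding condition \conref{C3} or \conref{C4} holds. Since the diagonal vanishes iff every diagonal entry vanishes, and \conref{C3}, \conref{C4} are conditions quantified pointwise over $v$, doing this vertex by vertex proves both directions simultaneously.

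\textbf{Case $\ldX(v)=0$.} Here the first term vanishes, so $(NC)_{v,v} = -\ldZ(v)C_{v,v}$. Since $\ld(v)\neq(0,0)$, we have $\ldZ(v)\neq 0$, hence invertible in the field $\Fd$. So $(NC)_{v,v}=0$ iff $C_{v,v}=0$, which is exactly \conref{C3} for this $v$. Condition \conref{C4} is vacuous for this $v$. If $v\in I$, then $C_{v,v}=0$ by convention and \conref{C3} holds trivially; this is consistent with Definition~\ref{Def:LOFG}, which forces $\ldX(v)=0$ for measured inputs.

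\textbf{Case $\ldX(v)\neq 0$.} Now $v\notin I$, by Definition~\ref{Def:LOFG}. By the hypothesis \conref{C2}, $C_{v,v}=\ldX(v)$, so
\[
(NC)_{v,v} = \ldX(v)(AC)_{v,v} - \ldZ(v)\ldX(v) = \ldX(v)\bigl((AC)_{v,v}-\ldZ(v)\bigr).
\]
Because $\ldX(v)\neq 0$, this vanishes iff $(AC)_{v,v}=\ldZ(v)$, which is \conref{C4}. Condition \conref{C3} is vacuous for this $v$.

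There is no real obstacle here. The only points needing care are that the cancellations rely on $\Fd$ being a field with nonzero, invertible $\ldZ(v)$ or $\ldX(v)$ in the respective cases, and on handling input vertices through the zero-entry convention. The hypotheses \conref{C1}, \conref{FX}, and \conref{FZ} are not actually needed; only \conref{C2} is used.
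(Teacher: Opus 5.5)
Your proof is correct and matches the paper's argument essentially verbatim: fix $w=v$ in $(NC)_{v,w} = \ldX(v)(AC)_{v,w} - \ldZ(v)C_{v,w}$, split on whether $\ldX(v)=0$, and in the second case use \conref{C2} to factor out $\ldX(v)$. Your extra remarks are also accurate and consistent with the paper's proof: measured inputs are handled by the zero-entry convention, and \conref{C2} is the only hypothesis actually used.
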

\begin{proof}
	Fix $v=w\in\comp{O}$ in \eqref{eq:NC-decomposition} and consider two cases depending on whether $\ld_X(v)$ is zero:
	\begin{enumerate}
		\item Suppose $\ldX(v) = 0$.
		Then $(NC)_{v,v} = -\ldZ(v) C_{v,v}$ which equals $0$ if and only if $C_{v,v} = 0$.
		Considering all $v$ with $\ldX(v) = 0$, we obtain equivalence of $(NC)_{v,v} = 0$ to \conref{C3}.
		\item Suppose $\ldX(v) \ne 0$.
		By the assumption that $C$ satisfies \conref{C2}, we have $C_{v,v} = \ldX(v)$.
		Then:
		\begin{equation*}
			(NC)_{v,v} = \ldX(v)(AC)_{v,v} - \ldZ(v) C_{v,v} = \ldX(v)(AC)_{v,v} - \ldZ(v) \ldX(v) = \ldX(v)((AC)_{v,v} - \ldZ(v))
		\end{equation*}
		\noindent which equals $0$ if and only if $(AC)_{v,v} = \ldZ(v)$.
		Considering all $v$ with $\ldX \ne 0$, we obtain equivalence of $(NC)_{v,v} = 0$ to \conref{C4}.
	\end{enumerate}
	Thus, the above two cases imply that the main diagonal of $NC$ is identically zero if and only if \conref{C3} and \conref{C4} hold.
\end{proof}

\begin{lemma}\label{lem:NC-trl-C}
	Let $(G,I,O,\ld)$ be a \LOFG and let $C$ be a correction matrix of shape $\comp{I}\times\comp{O}$ satisfying \conref{C1}, \conref{C2}, \conref{FX}, and \conref{FZ}.
	Then the off-diagonal part of $NC$ corresponds to $\trl_C$ in the sense that, for any pair of distinct $v,w\in\comp{O}$, we have $w \trl_C v$ if and only if $(NC)_{v,w} \ne 0$.
\end{lemma}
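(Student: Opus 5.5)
We plan to use the decomposition \eqref{eq:NC-decomposition}, $(NC)_{v,w} = \ldX(v)(AC)_{v,w} - \ldZ(v) C_{v,w}$, for distinct $v,w\in\comp{O}$. By Definition~\ref{Def:induced rel}, $w \trl_C v$ holds exactly when $(AC)_{v,w}\neq 0$ (condition \conref{O1}) or $C_{v,w}\neq 0$ (condition \conref{O2}). The task is therefore to show that, under the focusing conditions, at most one of the two summands in the decomposition can be non-zero. Moreover, whichever summand may be non-zero carries a non-zero scalar coefficient. We split into cases according to whether $\ldX(v)$ is zero, exactly as in the proofs of Lemma~\ref{Lemma:MC} and Lemma~\ref{lem:NC-zero-diagonal}.

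\emph{Case $\ldX(v) = 0$.} Then $\ldZ(v)\neq 0$, since $\ld(v)\neq(0,0)$ by Definition~\ref{Def:LOFG}, and \eqref{eq:NC-decomposition} reduces to $(NC)_{v,w} = -\ldZ(v)C_{v,w}$. Condition \conref{FX} gives $(AC)_{v,w}=0$ because $w\neq v$ and $w\in\comp{O}$. Hence $w\trl_C v$ holds if and only if $C_{v,w}\neq 0$, which in turn holds if and only if $(NC)_{v,w}\neq 0$. If $v\in I$, we use the convention $C_{v,w}=0$. In that subcase both sides are false: $(AC)_{v,w}=0$ by \conref{FX}, and $C_{v,w}$ does not exist and is treated as zero in \conref{O2}. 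So the equivalence still holds.

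\emph{Case $\ldX(v)\neq 0$.} Then $v\notin I$, because inputs carry labels $(0,*)$. Condition \conref{FZ} gives $C_{v,w}=0$, so $(NC)_{v,w} = \ldX(v)(AC)_{v,w}$. Since $\ldX(v)$ is invertible, $(NC)_{v,w}\neq0$ holds if and only if $(AC)_{v,w}\neq 0$. Given $C_{v,w}=0$, this is equivalent to $w\trl_C v$.

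Combining the two cases yields the claim for every pair of distinct $v,w$. Note that \conref{C1} and \conref{C2} are not actually needed here; only \conref{FX}, \conref{FZ} and the field structure are used. There is no real obstacle. The only point needing care is the bookkeeping for $v\in I$, where row $v$ of $C$ does not exist and the convention that missing entries are zero must be applied consistently both in \eqref{eq:NC-decomposition} and in condition \conref{O2} of $\trl_C$.
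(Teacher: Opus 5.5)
Your proof is correct and follows essentially the paper's argument: both split on whether $\ldX(v)=0$ and use \conref{FX} or \conref{FZ} respectively to eliminate one summand of \eqref{eq:NC-decomposition}. The paper arranges this as the two directions of the equivalence; its $(\Leftarrow)$ direction needs no focusing, since both summands vanishing forces $(NC)_{v,w}=0$. You instead derive the equivalence within each case, and your remark that \conref{C1} and \conref{C2} go unused holds for the paper's proof too.
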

\begin{proof}
	Let $v,w \in \comp{O}$ be distinct.
	We show that $w \trl_C v$ if and only if $(NC)_{v,w} \ne 0$.
	
	$(\Rightarrow)$: Suppose that $w \trl_C v$.
		As before, we consider two cases depending on whether $\ld_X(v)$ is zero:
		\begin{enumerate}
			\item Suppose $\ldX(v) = 0$.
			By the assumption that $C$ satisfies \conref{FX}, we have $(AC)_{v,w} = 0$.
			Thus the condition of~\ref{O1} does not hold.
			Ergo the condition of~\ref{O2} necessarily applies, leading to $C_{v,w} \ne 0$.
			Therefore \eqref{eq:NC-decomposition} becomes:\begin{equation*}
				(NC)_{v,w} = \ldX(v)(AC)_{v,w} - \ldZ(v) C_{v,w} = -\ldZ(v) C_{v,w} \ne 0
			\end{equation*}
			
			\item Suppose $\ldX(v) \ne 0$.
			By the assumption that $C$ satisfies \conref{FZ}, we have $C_{v,w} = 0$.
			Thus the condition of~\ref{O2} does not apply.
			Ergo the condition of~\ref{O1} necessarily applies, leading to $(AC)_{v,w} \ne 0$.
			Therefore \eqref{eq:NC-decomposition} becomes:\begin{equation*}
				(NC)_{v,w} = \ldX(v)(AC)_{v,w} - \ldZ(v) C_{v,w} = \ldX(v) (AC)_{v,w} \ne 0
			\end{equation*}
		\end{enumerate}
		In both cases, $(NC)_{v,w} \ne 0$ as desired.
		
	$(\Leftarrow)$: Suppose that $(NC)_{v,w} \ne 0$, which means:\begin{equation*}
			\ldX(v)(AC)_{v,w} - \ldZ(v) C_{v,w} \ne 0
		\end{equation*}
		\noindent If both $(AC)_{v,w} = 0$ and $C_{v,w} = 0$, the above cannot hold.
		Thus at least one of conditions~\conref{O1} or~\conref{O2} applies, and we get $w \trl_C v$.
		
	Therefore the off-diagonal part of $NC$ corresponds to $\trl_C$.
\end{proof}

\begin{lemma}\label{Lemma:NC}
    Let $(G,I,O,\ld)$ be a \LOFG and let $C$ be a correction matrix of shape $\comp{I}\times\comp{O}$ satisfying \conref{C1}, \conref{C2}, \conref{FX}, and \conref{FZ}.
    Then $NC$ forms a (weighted) directed acyclic graph if and only if $C$ satisfies \conref{C3}, \conref{C4}, and $\prec_C$ is a strict partial order.
\end{lemma}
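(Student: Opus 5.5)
The plan is to combine the two preceding lemmata. First, I fix what ``$NC$ forms a (weighted) directed acyclic graph'' means for the $\comp{O}\times\comp{O}$ matrix $NC$. It is read as the weighted digraph on $\comp{O}$ with an edge $w\to v$ of weight $(NC)_{v,w}$ whenever $(NC)_{v,w}\neq 0$. Acyclicity then requires in particular that there are no self-loops, \ie that the main diagonal of $NC$ is identically zero, and that the off-diagonal edges contain no directed cycle.

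The proof splits along these two requirements. For the diagonal part, Lemma~\ref{lem:NC-zero-diagonal} applies directly, since $C$ satisfies \conref{C1}, \conref{C2}, \conref{FX}, and \conref{FZ} by hypothesis. It gives that the diagonal of $NC$ vanishes if and only if \conref{C3} and \conref{C4} hold. For the off-diagonal part, Lemma~\ref{lem:NC-trl-C} states that for distinct $v,w\in\comp{O}$ we have $(NC)_{v,w}\neq 0$ exactly when $w\trl_C v$. Hence the loop-free part of the digraph of $NC$ is precisely the relation $\trl_C$, and so it is acyclic if and only if $\trl_C$ is acyclic.

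It remains to show that $\trl_C$ is acyclic if and only if its transitive closure $\prec_C$ is a strict partial order. If $\trl_C$ is acyclic, its transitive closure is transitive by construction. It is also irreflexive, because $u\prec_C u$ would require a directed $\trl_C$-path from $u$ back to $u$, \ie a cycle; note that $\trl_C$ has no loops since it is only defined on distinct pairs. Conversely, any cycle in $\trl_C$ yields $u\prec_C u$, which contradicts irreflexivity. The paper already noted this equivalence just before Theorem~\ref{Th:partial alg}, so I would cite it briefly.

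Combining the three equivalences with a conjunction in each direction yields the lemma. I expect the only delicate point to be notational: being explicit that ``DAG'' excludes self-loops, so that the diagonal condition is genuinely part of acyclicity. There is also a minor convention issue, namely that $\prec_C$ is only defined when $\trl_C$ is acyclic. The statement ``$\prec_C$ is a strict partial order'' should therefore be read as including acyclicity, which makes both directions of the final step go through.
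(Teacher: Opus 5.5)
Your proposal is correct and follows essentially the same route as the paper: it uses Lemma~\ref{lem:NC-zero-diagonal} for the diagonal of $NC$ and Lemma~\ref{lem:NC-trl-C} for the off-diagonal part, together with the fact that $\prec_C$ is a strict partial order if and only if $\trl_C$ is acyclic. Your explicit remarks that a DAG excludes self-loops and that $\prec_C$ is only defined for acyclic $\trl_C$ make precise what the paper leaves implicit, but they do not change the argument.
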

\begin{proof}
    Let $v,w \in \comp{O}$.
    We combine the previous two lemmata to get the full proof.
    
    $(\Rightarrow)$: Under the assumption that $NC$ forms a (weighted) DAG, by Lemma~\ref{lem:NC-trl-C}, we get that $\trl_C$ is a DAG and thus $\prec_C$ is a strict partial order.
    Furthermore, $NC$ being a DAG implies $(NC)_{v,v} = 0$ for all $v \in \comp{O}$ and thus, by Lemma~\ref{lem:NC-zero-diagonal}, conditions~\conref{C3} and~\conref{C4} both hold.
    
    $(\Leftarrow)$: \conref{C3} and~\conref{C4} imply that $(NC)_{v,v} = 0$ by Lemma~\ref{lem:NC-zero-diagonal} and $\prec_C$ being a strict partial order implies that $\trl_C$ is a DAG.
    The latter implies that the off-diagonal part of $NC$ is a DAG by Lemma~\ref{lem:NC-trl-C}; by combining the two parts the entire matrix $NC$ forms a DAG.
\end{proof}

We can now prove the main result.

\begin{proof}[Proof of Theorem~\ref{Th:main result}]
    $(\Rightarrow)$: suppose that $\Gamma$ has \flow.
    Then by Theorem~\ref{Th:flow focusing} it has a focused \flow $(C,\prec)$ and by Theorem~\ref{Th:extending correction matrix to flow} $(C,\prec_C)$ is also a focused flow.
    Hence, by Theorem~\ref{Th:partial alg}, $C$ satisfies \conref{C}, \conref{FX}, \conref{FZ}, and $\prec_C$ is a strict partial order.
    By Observation~\ref{Obs:C split}, $C$ satisfies \conref{C1}, \conref{C2}, \conref{C3}, and \conref{C4}.
    Therefore, by Lemma~\ref{Lemma:MC} we get $MC = \id{\comp{O}}{\comp{O}}$, and by Lemma~\ref{Lemma:NC} we get that $NC$ forms a (weighted) DAG.

    $(\Leftarrow)$: all steps in the $(\Rightarrow)$ part of the proof follow from if and only if statements and thus the above argument can be reversed.
\end{proof}

\subsection{Proofs regarding focused vectors}\label{subsec:ProvingFocusedVectors}

Before proving Theorem~\ref{Th:focused sets are kernel of M}, we first show that given $\somevector\in\FF_d^{\comp{I}}$, one can find a maximal set over which $\somevector$ is focused (cf.\ Definition~\ref{Def:foc vec}) by considering the product $M\somevector$, where $M$ is the flow-demand matrix.
Note that the definition of focused vectors involves the reduced adjacency matrix $A$, whereas we now use the flow-demand matrix; this change is justified in the following lemma.

\begin{lemma}\label{Lemma:M product gives gocusing condition}
    Let $(G,I,O,\ld)$ be a \LOFG and let $\somevector \in \FF_d^{\comp{I}}$ be a vector.
    Then, $\somevector$ is focused over $\comp{O} \setminus \supp{M\somevector}$ and $\somevector$ is not focused over any larger subset of the non-outputs.
\end{lemma}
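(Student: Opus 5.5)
The plan is to compute each entry $(M\somevector)_w$ for $w\in\comp{O}$ and read off, vertex by vertex, whether the focusing conditions \conref{Fs1} and \conref{Fs2} hold at $w$. Both conditions are local in $w$: $\somevector$ is focused over $S$ if and only if it is focused over $\{w\}$ for every $w\in S$. So it suffices to show that for each $w\in\comp{O}$, $\somevector$ is focused over $\{w\}$ if and only if $(M\somevector)_w = 0$. Both claims of the lemma then follow at once. The set $\comp{O}\setminus\supp{M\somevector}$ is exactly the set of $w$ at which $\somevector$ is focused over $\{w\}$, so $\somevector$ is focused over it. Any strictly larger subset of $\comp{O}$ contains some $w\in\supp{M\somevector}$, at which focusing fails.

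For the pointwise computation, I will mirror Equations~\eqref{MCvwX0} and~\eqref{MCvwXne0} from the proof of Lemma~\ref{Lemma:MC}, now with the vector $\somevector$ in place of a column of $C$. There are two cases.

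\textbf{Case $\ldX(w)=0$.} By Definition~\ref{Def:fdm}, $(M\somevector)_w = \ldZ(w)^{-1}(A\somevector)_w$. Since $\ldZ(w)\neq 0$, this vanishes if and only if $(A\somevector)_w = 0$. Condition \conref{Fs1} at $w$ holds automatically because $\ldX(w)=0$. Condition \conref{Fs2} at $w$ requires $w\notin\supp{A\somevector}$, i.e.\ $(A\somevector)_w=0$. So focusing at $w$ is equivalent to $(M\somevector)_w=0$.

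\textbf{Case $\ldX(w)\neq 0$.} Here $w\notin I$, so $w\in\comp{I}$ indexes an entry of $\somevector$, and $(M\somevector)_w = \ldX(w)^{-1}\somevector_w$. Condition \conref{Fs2} at $w$ holds automatically. Condition \conref{Fs1} at $w$ requires $\somevector_w = 0$, which is again equivalent to $(M\somevector)_w = 0$.

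There is no real obstacle in this argument. The only points needing care are the following.
\begin{itemize}
\item The first is noting that $w\in\comp{I}$ in the second case, which uses the requirement in Definition~\ref{Def:LOFG} that measured inputs have $\ldX=0$. This is what makes the entry $\somevector_w$ meaningful there.
\item The second is making the locality of Definition~\ref{Def:foc vec} explicit, so that the maximality claim is justified rather than asserted.
\end{itemize}
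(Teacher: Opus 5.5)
Your proposal is correct and follows essentially the same route as the paper. You reduce, via the locality of \conref{Fs1} and \conref{Fs2}, to single vertices $w\in\comp{O}$, and then split on whether $\ldX(w)=0$ to compare $(M\somevector)_w$ with $(A\somevector)_w$ or with $\somevector_w$. Your explicit remarks that $w\in\comp{I}$ when $\ldX(w)\neq 0$, and that locality is what justifies the maximality claim, are a little more careful than the paper's write-up, but the argument is the same.
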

\begin{proof}
    Let $F_{\somevector} = \comp{O} \setminus \supp{M\somevector}$.
    By inspecting conditions \conref{Fs1} and~\conref{Fs2}, for any sets $S,S'\sse\comp{O}$, the vector $\somevector$ is simultaneously focused over $S$ and focused over $S'$ if and only if it is focused over $S \cup S'$.
    Therefore, it is sufficient to show that for all $v \in \comp{O}$: $v \in F_{\somevector}$ if and only if $\somevector$ is focused over $\{v\}$.

    $(\Rightarrow)$: Let $v \in F_{\somevector}$.
    Then $v \notin \supp{M\somevector}$, so $(M\somevector)_v = 0$.
    We consider two cases depending on $\ldX(v)$, following a similar proof structure as in Lemma~\ref{Lemma:MC}:
    \begin{enumerate}
        \item Suppose $\ldX(v) = 0$, then:
        \begin{equation*}
            0 = (M\somevector)_v = \sum_{u \in \comp{I}} M_{v,u}\somevector_u = \ldZ(v)^{-1} \sum_{u \in \comp{I}} A_{v,u} \somevector_u = \ldZ(v)^{-1} (A\somevector)_v
        \end{equation*}
        \noindent Since $\ld_X(v) = 0$, necessarily $\ldZ(v) \ne 0$.
        This implies $(A\somevector)_v = 0$, \ie $v \notin \supp{A\somevector}$.
        By contraposition: if $v \in \supp{A\somevector}$, then $\ldX(v) \ne 0$, \ie \conref{Fs2} holds.
        \item Suppose $\ldX(v) \ne 0$.
        Since $M_{v,u} = 0$ for $v \ne u$, we have:
        \begin{equation*}
            0 = (M\somevector)_v = \sum_{u \in \comp{I}} M_{v,u}\somevector_u = \ld_X(v)^{-1} \somevector_v
        \end{equation*}
        \noindent Then as $\ld_X(v) \ne 0$, we have $\somevector_v = 0$, \ie $v \notin \supp{\somevector}$.
        Again by contraposition: if $v \in \supp{A}$, then $\ldX(v) = 0$, meaning \conref{Fs1} holds.
    \end{enumerate}
    Hence $\somevector$ satisfies conditions~\conref{Fs1} and~\conref{Fs2} over the set $\{ v \}$ as required.

    $(\Leftarrow)$: Let $\somevector$ be focused over $\{ v \}$.
    Once again, we consider two cases on $\ldX(v)$:
    \begin{enumerate}
        \item Suppose $\ldX(v) = 0$, then by \conref{Fs2}: $v \notin \supp{A\somevector}$, \ie $(A\somevector)_v = 0$.
        Following the same sequence of equalities as in the first case of the $(\Rightarrow)$ part of the proof, we find $(M\somevector)_v = 0$.
        \item Suppose $\ldX(v) \ne 0$, then by \conref{Fs1}: $v \notin \supp{\somevector}$, \ie $(\somevector)_v = 0$.
        Following the same sequence of equalities as in the second case of the $(\Rightarrow)$ part of the proof, we again find $(M\somevector)_v = 0$.
    \end{enumerate}
    Therefore, $(M\somevector)_v = 0$, meaning $v \notin \supp{M\somevector}$, ergo $v \in F_{\somevector}$, which ends the proof.
\end{proof}

Using the above lemma, we find that the kernel of the flow-demand matrix forms is exactly the set of focused vectors.

\begin{proof}[Proof of Theorem~\ref{Th:focused sets are kernel of M}]
    By Lemma~\ref{Lemma:M product gives gocusing condition}, a vector $\somevector \in \FF_d^{\comp{I}}$ is focused over $\comp{O} \setminus \supp{M\somevector}$.
    Now, suppose $\somevector \in \mathfrak{F}_{\Gamma}$.
    Then $\somevector$ is focused over $\comp{O}$.
    Hence $\comp{O} = \comp{O} \setminus \supp{M\somevector}$.
    But $M\somevector$ is a vector in $\FF_d^{\comp{O}}$, thus $\supp{M\somevector} = \emptyset$, \ie $\somevector \in \ker M$.
    All these steps are reversible, which leads to $\mathfrak{F}_{\Gamma} \cong \ker M_{\Gamma}$.
\end{proof}

\section{Finite field arithmetic}\label{Sec:FFA}
In finite fields, the complexity of arithmetic operations (by which we mean addition, negation, multiplication, and inversion) depends on the order $d$ of the field \cite{gashkovComplexityComputationFinite2013}.
In main part of the paper, we consider $d$ to be fixed (or at least bounded above), \ie we treat the cost of arithmetic operations in $\Fd$ as $\bigO(1)$.
In general, the cost of such operations can be expressed as $\Theta(\poly(\log d))$, \ie it is polynomial in $\log d$.
The representation of an element of a finite field is possible in space $\Theta(\log d)$.
Addition and negation take $\Theta(\log d)$ time, but the cost for multiplication and inversion is slightly higher.
It is known that $\bigO(\log d \log\log d)$ time suffices for multiplication and $\bigO(\log d \log\log^2 d)$ for inversion \cite{gashkovComplexityComputationFinite2013,harveyIntegerMultiplicationTime2021,harveyPolyMultFiniteFields2022}.
The result from \cite{harveyPolyMultFiniteFields2022} is critical here and is conditional on a widely believed hypothesis regarding prime numbers; this requirement may be omitted for fields with $d = p^{\ell}$ where $\ell$ is small (and $p$ is prime) \cite[Theorem 3.35 and Theorem 3.41]{sutherlandLectureNotes2025}).
In particular, the stated complexity holds unconditionally for prime fields.
However, the methods with the best known complexity bounds are impractical for small $d$.
Yet again, for bounded $d$, we can treat these costs as $\bigO(1)$.

The exact complexity results, \ie without fixing $d$, can be derived from the ones stated in the main part of the paper by including the cost of arithmetic operations.
For the time complexity, that means increasing lower bounds by $\log d$ factors, and increasing upper bounds by $\bigO(\log d \log\log^2 d)$.
The memory complexity increases by $\Theta(\log d)$, \eg our \flow-finding method requires $\bigO(n^2 \log d)$ memory.

\section{Pseudocode}\label{Sec:Pseudocode}

\algrenewcommand\algorithmicdo{}
\algrenewcommand\algorithmicthen{}
\begin{breakablealgorithm}
\caption{Construction of flow and order-demand matrices}\label{M and N pseudo}
\begin{algorithmic}[1]
\Statex Returns the flow-demand matrix of a given \LOFG
\Procedure{Flow-DemandMatrix}{$G,I,O,\lambda$}
    \State initialise $\comp{O} \times \comp{I}$ matrix $M$
    \For {$v \in \comp{O}$}
        \For {$u \in \comp{I}$} \Comment{Loop over all entries of $M$}
            \If $\ld_X(v) = 0$
                \State $M_{v,u} = \ld(Z)^{-1} G_{v,u}$
            \Else
                \State $M_{v,u} = \ld(X)^{-1} \delta_{v,u}$
            \EndIf
        \EndFor
    \EndFor
    \State \Return $M$
\EndProcedure
\[\]
\Statex Returns the order-demand matrix of a given \LOFG
\Procedure{Order-DemandMatrix}{$G,I,O,\lambda$}
    \State initialise $\comp{O} \times \comp{I}$ matrix $N$
    \For {$v \in \comp{O}$}
        \For {$u \in \comp{I}$} \Comment{Loop over all entries of $N$}
            \State $N_{v,u} = \ld_X(v)G_{v,u} - \ldZ(v)\delta_{v,u}$
        \EndFor
    \EndFor
    \State \Return $M$
\EndProcedure
\end{algorithmic}
\end{breakablealgorithm}

The pseudocode for the general \flow-finding algorithm (which works even if $\abs{I}\neq\abs{O}$) then differs from the corresponding qubit algorithm in \cite[Appendix B]{mitosekAlgebraicInterpretationPauli2026} on only a few lines:
\begin{itemize}
    \item On line~\ref{step:find-v-dependent-rows CHANGE}, we must look for non-zero values rather than values exactly equal to $1$. (In the $d=2$ case, the only possible non-zero value is $1$.)
    \item On lines~\ref{step:initial eliminations CHANGE}, \ref{step:row-addition-rk CHANGE}, and~\ref{step:manual GE CHANGE}, instead of simply adding a row to another row, we must add a row multiplied by some value such that the addition will cancel an undesired dependency in the second row. (In the $d=2$ case, the multiplicity is always $1$.)
    \item On a new line~\ref{step:new multiply row EXTRA CHANGE}, we must multiply the row by a suitable value so that its pivot becomes $1$. (In the $d=2$ case, a pivot always necessarily equals $1$.)
\end{itemize}
Furthermore, while line~\ref{step:linear system solution EXTRA CHANGE} matches the previous (qubit) algorithm, we point out that the linear system comprises equations of the following form, where $val$ corresponds to the value from the second block:
\begin{equation*}
    a_1x_1 + \dots + a_{n_O-n_I}x_{n_O-n_I} + val = 0
\end{equation*}
\noindent and not:\begin{equation*}
    a_1x_1 + \dots a_{n_O-n_I}x_{n_O-n_I} = val
\end{equation*}
It means, that the values encoded in the matrix $\starpart$ (initialised on line~\ref{step:initialise-L-P-2}) are effectively negations of the values one would obtain if interpreting the system in the usual way. (When $d=2$, negation leaves elements invariant.)

\begin{breakablealgorithm}
\caption{Finding flow in the general case}\label{general algo}
\begin{algorithmic}[1]
\Statex Checks if a \LOFG $\Gamma = (G,I,O,\lambda)$ has \flow and, if it exists, returns the focused flow.
\Procedure{FindFlowGeneral}{$G,I,O,\lambda$}
    \State $M = \Call{Flow-DemandMatrix}{G,I,O,\lambda}$\label{step:M}
    \State $N = \Call{Order-DemandMatrix}{G,I,O,\lambda}$\label{step:N}
    \If {$\rank M \ne n-n_O$}\label{step:rank M}
        \State \Return `NO FLOW EXISTS'
    \EndIf
    \State $C_0 = \text{any right inverse of $M$}$\label{step:any right inverse}
    \State $F = \text{matrix with columns forming basis of $\ker M$}$\label{step:kernel}
    \State $C' = \left[ C_0 \mid F \right]$\label{step:C'}
    \State $\rowbasechange{N}{\mathcal{B}} = NC'$\label{step:basis change}
    \State $N_L = \text{first $n-n_O$ columns of $\rowbasechange{N}{\mathcal{B}}$}$\label{step:NL-and-NR-1}
    \State $N_R = \text{last $n_O-n_I$ columns of $\rowbasechange{N}{\mathcal{B}}$}$\label{step:NL-and-NR-2}
    \State $\ILS, \LS = \left[ N_R \mid N_L \mid Id_{\comp{O}} \right]$ \Comment{Two independent copies of the same linear system}\label{step:LS-and-ILS}
    \Statex \Comment{From now, we refer to the three parts of $\ILS$ and $\LS$ as the first block, the second block, and the third block}
    \State $\text{bring first block of $\LS$ into row echelon form}$\label{step:gaussian-eliminate-LS}
    \State $\solved = \emptyset$ \label{step:initialise-L-P-1}
    \State $\text{initialize $(n_O-n_I) \times (n-n_O)$ matrix $\starpart$}$ \Comment{Columns of $\starpart$ correspond to non-outputs $\comp{O}$}\label{step:initialise-L-P-2}
    \While {$\solved \ne \comp{O}$}: \label{step:while-loop}
        \State $\text{find the first row $r_z$ in $\LS$ whose first $n_O-n_I$ entries equal $0$}$ \label{step:find-post-pivot-rows}
        \State $\tosolve = \{ v \in \comp{O} \setminus \solved \mid \text{column $v$ in the first $\LS$ block has all entries from row $r_z$ on equal to $0$} \}$ \label{step:find-L}
        \If {$\tosolve = \emptyset$}
            \State \Return `NO FLOW EXISTS'
        \EndIf
        \For {$v \in \tosolve$}:\label{step:loop solutions}
            \State $\starpart_{*,v} = \Call{SolveLinearSystem}{\text{first block of $\LS$, $v$ column in second block of $\LS$}}$\label{step:linear system solution EXTRA CHANGE}
        \EndFor
        \For {$v \in \tosolve$}:\label{step:main for loop}
            \State $\solved = \solved \cup \left\{ v \right\}$ \label{step:update-S}
            \State $R = \left[ r \mid \text{$r$ is a row whose intersection with the $v$ column in the third block of $\LS$ is $\ne 0$} \right]$\label{step:find-v-dependent-rows CHANGE}
            \State $r_{last} = \text{last element of $R$}$
            \For {$r \in R[:-1]$} \Comment{Iterate over all but last element}\label{step:row-additions-r1-rk-1}
                \State add row $r_{last}$ of $\LS$ with correct multiplicity to row $r$ in $\LS$ to remove dependency\label{step:initial eliminations CHANGE}
            \EndFor
            \State $\text{add row $v$ of $\ILS$ with correct multiplicity to row $r_{last}$ in $\LS$ to remove dependency}$ \label{step:row-addition-rk CHANGE}
            \For $r \in \text{rows of $\LS$ except $r_{last}$}$: \Comment{Iterate from top}\label{step:simplify-rk}
                \If {$\text{row $r$ of $\LS$ has first $n_O-n_I$ entries $0$}$}
                    \State Break
                \EndIf
                \State $y = \text{column of the leading $1$ in row $r$ of $\LS$}$
                \If {$\text{intersection of $r_{last}$ and column $y$ in $\LS$ is $\ne 0$}$}
                    \State $\text{add row $r$ with correct multiplicity to $r_{last}$ in $\LS$ to remove dependency}$\label{step:manual GE CHANGE}
                \EndIf
            \EndFor
            \State multiply $r_{last}$ by a correct value that makes the first non-zero entry of $r_{last}$ in the first block equal $1$\label{step:new multiply row EXTRA CHANGE}
            \State swap $r_{last}$ with other rows to bring first block of $\LS$ back into row echelon form\label{step:swap-rk}
        \EndFor
    \EndWhile
    \State $\colbasechange{C}{\mathcal{B}} = \left[ \frac{Id_{\comp{O}}}{\starpart} \right]$\label{step:build-c-B}
    \State \Return $\left(C' \colbasechange{C}{\mathcal{B}}, N C'\colbasechange{C}{\mathcal{B}}\right)$\label{step:output}
\EndProcedure
\end{algorithmic}
\end{breakablealgorithm}

\section{Run of flow-finding algorithm}\label{Sec:Flow run}

We now provide a detailed overview of how the pseudocode from Appendix~\ref{Sec:Pseudocode} runs on the \LOFG from Figure~\ref{Fig:running example}.

\paragraph{Construction of the linear system}
Firstly, in lines~$\ref{step:M}$ and~$\ref{step:N}$, the flow-demand and order-demand matrices are created, these are the same matrices as presented in Figures~\ref{Fig:M} and~\ref{Fig:N}.
Since $\rank M = 4$, the check in line~\ref{step:rank M} passes.
Next, in lines~\ref{step:any right inverse}, \ref{step:kernel}, and~\ref{step:C'} three further matrices are created: any right inverse $C'$ of $M$, a basis $F$ of $\ker M$, and the concatenation $C_0$ of these two.
For example:\begin{equation*}
    C_0 = [C'\mid F] = \scalemath{0.85}{\begin{pNiceArray}{c:ccc|c}[first-row, first-col]
          & i & a & b & d & F_1 \\
        a & 3 & 0 & 0 & 0 & 0 \\
        b & 0 & 0 & 2 & 0 & 0 \\
        d & 0 & 0 & 0 & 4 & 0 \\
        \hdottedline
        o_1 & 2 & 2 & 4 & 1 & 1 \\
        o_2 & 1 & 0 & 3 & 1 & 3 \\
    \end{pNiceArray}}
\end{equation*}
\noindent where we use $F_1$ as the label of the single column in $F$.
Next, in line~\ref{step:basis change}, we compute the product of $N$ and $C'$, denoted $N_{\mathcal{B}}$:\begin{equation*}
    N_{\mathcal{B}} = \scalemath{0.85}{\begin{pNiceArray}{c:ccc|c}[first-row, first-col]
          & i & a & b & d & F_1 \\
        i & 0 & 0 & 0 & 0 & 0 \\
        \hdottedline
        a & 1 & 0 & 0 & 0 & 0 \\
        b & 1 & 2 & 4 & 0 & 1 \\
        d & 3 & 0 & 2 & 3 & 3 \\
    \end{pNiceArray}}
\end{equation*}
In lines~\ref{step:NL-and-NR-1} and~\ref{step:NL-and-NR-2}, we break $N_{\mathcal{B}}$ into two submatrices $N_L$ and $N_R$, and use them to obtain two (initially identical) copies $\LS$ and $\ILS$ of the following linear system.
We give primed labels to columns from the third block, to distinguish them from the corresponding columns in the second block:\begin{equation}\label{eq:original-system}
    \ILS = \LS = \scalemath{0.85}{\begin{pNiceArray}{c|cccc|cccc}[first-row, first-col]
          & F_1 & i & a & b & d & i' & a' & b' & d' \\
        i & 0 & 0 & 0 & 0 & 0 & 1 & 0 & 0 & 0 \\
        \hdottedline
        a & 0 & 1 & 0 & 0 & 0 & 0 & 1 & 0 & 0 \\
        b & 1 & 1 & 2 & 4 & 0 & 0 & 0 & 1 & 0 \\
        d & 3 & 3 & 0 & 2 & 3 & 0 & 0 & 0 & 1 \\
    \end{pNiceArray}}
\end{equation}
Next, in line~\ref{step:gaussian-eliminate-LS}, we perform Gaussian elimination on the first block of $\LS$, \ie on the first column.
The independent copy $\ILS$ remains unchanged.
The Gaussian elimination involves adding the third row to the fourth row with multiplicity $2$ so that the original pivot of the fourth block becomes $0$.
After that, we swap the first and third rows.
We drop the previous row labels, as they no longer make sense after performing row operations. (Nevertheless, the relationship of each modified row to the original vertex-labelled rows continues to be encoded in the third block.)
We obtain the following updated $\LS$ linear system:
\begin{equation}\label{eq:solve-b}
    \LS = \scalemath{0.85}{\begin{pNiceArray}{c|cccc|cccc}[first-row, first-col]
          & F_1 & i & a & b & d & i' & a' & b' & d' \\
        1 & 1 & 1 & 2 & 4 & 0 & 0 & 0 & 1 & 0 \\
        2 & 0 & 1 & 0 & 0 & 0 & 0 & 1 & 0 & 0 \\
        3 & 0 & 0 & 0 & 0 & 0 & 1 & 0 & 0 & 0 \\
        4 & 0 & 0 & 4 & 0 & 3 & 0 & 0 & 2 & 1 \\
    \end{pNiceArray}}
\end{equation}
We initialise $\solved$ and $\starpart$ in lines~\ref{step:initialise-L-P-1} and~\ref{step:initialise-L-P-2} to:\begin{equation*}
    \starpart = \scalemath{0.85}{\begin{pNiceArray}{cccc}[first-row, first-col]
          & i & a & b & d\\
        F_1 & * & * & * & *\\
    \end{pNiceArray}} \qquad\qquad \solved = \emptyset
\end{equation*}

\paragraph{First run of the main loop}
We go into the main loop in line~\ref{step:while-loop}.
During the first run, we identify that vertex $b$ can be solved (line~\ref{step:find-L}).
This is because the entries of column $b$ in the second block of the matrix in \eqref{eq:solve-b} are all $0$ from the second row onwards.
Therefore $\tosolve = \left\{ b \right\}$.
In line~\ref{step:linear system solution EXTRA CHANGE}, we identify the solution of the following linear system with a single equation (arising from the first block of $\LS$ together with column $b$), where we name the unique variable $f_1$:\begin{equation*}
    1f_1 + 4 = 0
\end{equation*}
As the running example is over $\mathbb{F}_5$, $f_1 = 1$ is the only solution:\begin{equation*}
    \starpart = \scalemath{0.85}{\begin{pNiceArray}{cccc}[first-row, first-col]
          & i & a & b & d\\
        F_1 & * & * & 1 & *\\
    \end{pNiceArray}}
\end{equation*}

Now, we must transform the system of linear equations to the form it would have had if $b$ had never been included (this is the loop starting in line~\ref{step:main for loop}).
We add $b$ to the set $\solved$ of solved vertices in line~\ref{step:update-S}.
Next, we identify the set $R$ of rows in the updated matrix that depend on the original row $b$ of \eqref{eq:original-system} by looking for non-zero values in column $b'$ of \eqref{eq:solve-b}.
This yields $R = \left\{ 1, 4 \right\}$ and $r_{last} = 4$.

We enter the next for loop in line~\ref{step:row-additions-r1-rk-1}: here, this loop runs only once, for $r = 1$, \ie the first row.
In line~\ref{step:row-addition-rk CHANGE}, we add row $r_{last}$, \ie the fourth row, to the first row to cancel dependency on original $b$ row from the latter.
From column $b'$, the first row depends on $1$ times the original row $b$ while the fourth row depends on $2$ times the original row $b$.
Thus, we must add $2$ times the fourth row to the first row to cancel the dependency, obtaining:\begin{equation}\label{eq:dependency-cancelling}
    \LS = \scalemath{0.85}{\begin{pNiceArray}{c|cccc|cccc}[first-row, first-col]
        & F_1 & i & a & b & d & i' & a' & b' & d' \\
        1 & 1 & 0 & 4 & 1 & 0 & 0 & 0 & 0 & 2 \\
        2 & 0 & 1 & 0 & 0 & 0 & 0 & 1 & 0 & 0 \\
        3 & 0 & 0 & 0 & 0 & 0 & 1 & 0 & 0 & 0 \\
        4 & 0 & 0 & 4 & 0 & 3 & 0 & 0 & 2 & 1 \\
    \end{pNiceArray}}
\end{equation}

We exit the loop that started on line~\ref{step:row-additions-r1-rk-1}.
At this point, only the fourth row has any dependency on row $b$ from initial system $\ILS$.
More specifically, the dependency is on $2$ times row $b$.
We remove this dependency in line~\ref{step:row-addition-rk CHANGE} by subtracting 2 times the $b$ row of $\ILS$ from the fourth row of the current $\LS$ given in \eqref{eq:dependency-cancelling}.
This is equivalent to adding 3 times row $b$ from $\ILS$ to row 4 of $\LS$, since $-2 = 3$ in $\FF_5$; it yields:\begin{equation*}
    \LS = \scalemath{0.85}{\begin{pNiceArray}{c|cccc|cccc}[first-row, first-col]
        & F_1 & i & a & b & d & i' & a' & b' & d' \\
        1 & 1 & 0 & 4 & 1 & 0 & 0 & 0 & 0 & 2 \\
        2 & 0 & 1 & 0 & 0 & 0 & 0 & 1 & 0 & 0 \\
        3 & 0 & 0 & 0 & 0 & 0 & 1 & 0 & 0 & 0 \\
        4 & 3 & 3 & 0 & 2 & 3 & 0 & 0 & 0 & 1 \\
    \end{pNiceArray}}
\end{equation*}

In the subsequent for loop from line~\ref{step:simplify-rk}, we iterate over those rows of $\LS$ containing pivots, to cancel as much of fourth row as possible.
Here, this leads to the following system, obtained by adding 2 times the first row to the fourth row:\begin{equation}\label{eq:dependency-a}
    \LS = \scalemath{0.85}{\begin{pNiceArray}{c|cccc|cccc}[first-row, first-col]
        & F_1 & i & a & b & d & i' & a' & b' & d' \\
        1 & 1 & 0 & 4 & 1 & 0 & 0 & 0 & 0 & 2 \\
        2 & 0 & 1 & 0 & 0 & 0 & 0 & 1 & 0 & 0 \\
        3 & 0 & 0 & 0 & 0 & 0 & 1 & 0 & 0 & 0 \\
        4 & 0 & 0 & 0 & 0 & 0 & 0 & 0 & 0 & 0 \\
    \end{pNiceArray}}
\end{equation}
In the second run of the loop from line~\ref{step:simplify-rk}, we immediately break because the second row of $\LS$ has no pivot in the first block.
Finally, in lines~\ref{step:new multiply row EXTRA CHANGE} and~\ref{step:swap-rk} nothing happens, as the fourth row is already identically zero in the first block (or in fact in all blocks) and is correctly placed.
This completes the first run of the main loop from line~\ref{step:while-loop}.

\paragraph{Second run of the main loop}
In the second run of the main loop, we observe that vertices $a$ and $d$ can be solved, but not the vertex $i$.
This is because the columns $a$ and $d$ in the second block are identically zero from the second row onwards, but the same is not true for column $i$.
Hence, $\tosolve = \{ a, d \}$.
We find and store the solutions for these vertices.
The linear system for $a$ is:\begin{equation*}
    1f_1 + 0 = 0
\end{equation*}
\noindent hence $f_1 = 0$ is the only solution.
The linear system for $d$ is:\begin{equation*}
    1f_1 + 1 = 0
\end{equation*}
\noindent hence $f_1 = -1 = 4$ is the only solution.
Thus, at the start of line~\ref{step:main for loop}, we have:\begin{equation*}
    \starpart = \scalemath{0.85}{\begin{pNiceArray}{cccc}[first-row, first-col]
            & i & a & b & d\\
        F_1 & * & 0 & 1 & 4\\
    \end{pNiceArray}}
\end{equation*}
Now, we have to remove the dependencies on the original rows $a$ and $d$.
We start with $a$: here the situation is simple, as only the second row of \eqref{eq:dependency-a} depends on $a$ (as indicated by column $a'$ in the third block).
Thus, after the entire run of the loop starting on line~\ref{step:main for loop}, with $v = a$, we obtain the updated system:\begin{equation*}
    \LS = \scalemath{0.85}{\begin{pNiceArray}{c|cccc|cccc}[first-row, first-col]
        & F_1 & i & a & b & d & i' & a' & b' & d' \\
        1 & 1 & 0 & 4 & 1 & 0 & 0 & 0 & 0 & 2 \\
        2 & 0 & 0 & 0 & 0 & 0 & 0 & 0 & 0 & 0 \\
        3 & 0 & 0 & 0 & 0 & 0 & 1 & 0 & 0 & 0 \\
        4 & 0 & 0 & 0 & 0 & 0 & 0 & 0 & 0 & 0 \\
    \end{pNiceArray}}
\end{equation*}

In the next run of the loop starting on line~\ref{step:main for loop}, we have $v = d$.
Here, again only one row, the first row, depends on the original row $d$ and the dependency has multiplicity $2$.
After this run of the loop, we get the system:\begin{equation*}
    \LS = \scalemath{0.85}{\begin{pNiceArray}{c|cccc|cccc}[first-row, first-col]
        & F_1 & i & a & b & d & i' & a' & b' & d' \\
        1 & 0 & 0 & 0 & 0 & 0 & 0 & 0 & 0 & 0 \\
        2 & 0 & 0 & 0 & 0 & 0 & 0 & 0 & 0 & 0 \\
        3 & 0 & 0 & 0 & 0 & 0 & 1 & 0 & 0 & 0 \\
        4 & 0 & 0 & 0 & 0 & 0 & 0 & 0 & 0 & 0 \\
    \end{pNiceArray}}
\end{equation*}
In the process, $\solved$ is updated to contain $b$, $a$, and $d$.
This ends the second run of the main loop from line~\ref{step:while-loop}.

\paragraph{Third run of the main loop}
In the third run of the loop starting on line~\ref{step:while-loop}, $i$ can now be solved.
The corresponding linear system is:\begin{equation*}
    0f_1 + 0 = 0
\end{equation*}
\noindent which is singular and consistent.
It is satisfied by any value of $f_1$.
So, for example, we can choose the solution $f_1 = 0$, leading to:\begin{equation*}
    \starpart = \scalemath{0.85}{\begin{pNiceArray}{cccc}[first-row, first-col]
            & i & a & b & d\\
        F_1 & 0 & 0 & 1 & 4\\
    \end{pNiceArray}}
\end{equation*}

Vertex $i$ is added to $\solved$, \ie at this point $\solved = \{ b, a, d, i \} = \comp{O}$.
Afterwards, the system is still updated to the form it would have had if $i$ was never included, which here will result in a system that has $0$s everywhere.

\paragraph{Final steps}
Since $\solved = \comp{O}$, no more runs of the loop from line~\ref{step:while-loop} occur.
Instead, on line~\ref{step:build-c-B}, we construct $C^{\mathcal{B}}$:\begin{equation*}
    C^{\mathcal{B}} = \scalemath{0.85}{\begin{pNiceArray}{c:ccc}[first-row, first-col]
          & i & a & b & d \\
        i & 1 & 0 & 0 & 0 \\
        \hdottedline
        a & 0 & 1 & 0 & 0 \\
        b & 0 & 0 & 1 & 0 \\
        d & 0 & 0 & 0 & 1 \\
        \hline
        F_1 & 0 & 0 & 1 & 4 \\
    \end{pNiceArray}}
\end{equation*}
\noindent and finally output $C'C^{\mathcal{B}}$, which is the reduced correction matrix, and $NC'C^{\mathcal{B}}$, which encodes the partial order:\begin{equation*}
    C'C^{\mathcal{B}} = \scalemath{0.85}{\begin{pNiceArray}{c:ccc}[first-row, first-col]
          & i & a & b & d \\
        a & 3 & 0 & 0 & 0 \\
        b & 0 & 0 & 2 & 0 \\
        d & 0 & 0 & 0 & 4 \\
        \hdottedline
        o_1 & 2 & 2 & 0 & 0 \\
        o_2 & 1 & 0 & 1 & 3 \\
    \end{pNiceArray}} \qquad \qquad NC'C^{\mathcal{B}} = \scalemath{0.85}{\begin{pNiceArray}{c:ccc}[first-row, first-col]
          & i & a & b & d \\
        i & 0 & 0 & 0 & 0 \\
        \hdottedline
        a & 1 & 0 & 0 & 0 \\
        b & 1 & 2 & 0 & 4 \\
        d & 3 & 0 & 0 & 0 \\
    \end{pNiceArray}}
\end{equation*}

Note that the \flow found here differs from that presented in Figures~\ref{Fig:running example} and~\ref{Fig:running example algebraic}, which is fine as the example has more outputs than inputs and so it can have multiple focused \flow{}s.
If, during the third run of the main loop, we had chosen the solution for $i$ to be $3$ instead of $0$, then we would instead recover the solution in Figure~\ref{Fig:running example algebraic}.

Lastly, we note that the solutions found each run of the main loop from line~\ref{step:while-loop} form the `layers' of the partial order.
For the worked example, these layers are thus $[b]$, $[a,d]$, and $[i]$.
The algorithm for \flow-finding from \cite{boothOutcomeDeterminismMeasurementbased2023} would produce the same layers.
However, our method also returns the induced partial order $\prec_C$ corresponding to the correction matrix $C$, as encoded in the product $NC = NC'C^{\mathcal{B}}$.
This order is the minimal one (\cf Lemma~\ref{Lemma:ind ord containment}).
In general, the partial order returned by Algoritm~\ref{general algo} could be less restrictive than an order found via layers\footnote{Nevertheless, for this particular example, the minimal order is precisely the same as the order implied by ordering via found layers, \ie $i \prec a \prec b$ and $i \prec d \prec b$.}.

\section{Lower bound for the complexity of \flowintitles-finding}\label{Sec:LowerBound}

The complexity bound of our flow-finding algorithm is $\bigO(n^3)$ only when $|I|\ne|O|$.
For $|I|=|O|$, we invoke subroutines for the problem of matrix inversion and multiplication, which, at least in principle, can be solved with sub-cubic complexity.
Here, we show that, in fact, reducing the other way is also possible, thus tightly linking the complexity of flow-finding with the aforementioned fundamental problems of linear algebra.

Firstly, the Theorem~\ref{Th:main result} implies the following corollary, which is more suitable for the discussion of lower bounds:

\begin{corollary}\label{Cor:flow finding intro}
    Let $\Gamma = (G,I,O,\ld)$ be a \LOFG, let $M$ be its flow-demand matrix, and let $N$ be its order-demand matrix.
    Then $\Gamma$ has \flow if and only if there exists a reduced correction matrix $C$ over $\FF_d$ such that:\begin{itemize}
        \item $C$ has shape $\comp{I} \times \comp{O}$,
        \item $MC = \id{\comp{O}}{\comp{O}}$, and
        \item $NC$ is the adjacency matrix of a directed acyclic $\Fd$-weighted graph.
    \end{itemize}
    The \flow on $\Gamma$ implied by the existence of this matrix $C$ is $(C,\prec_c)$, where furthermore the following holds for any $v,w \in \comp{O}$:
    \begin{itemize}
        \item $v \trl_c w \Leftrightarrow (NC)_{w,v} \ne 0$ and
        \item $\prec_c$ is the transitive closure of $\trl_c$.
    \end{itemize}
\end{corollary}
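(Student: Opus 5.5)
The corollary follows by assembling results already proved in Appendix~\ref{Sec:ProvingAlgebraicFormulation}. The first part is Theorem~\ref{Th:main result} restated, so I will only check that the phrase ``$NC$ forms a (weighted) directed acyclic graph'' matches the new wording. In the new wording $NC$ is read as the adjacency matrix of an $\Fd$-weighted digraph on $\comp{O}$. There is an edge between $w$ and $v$ exactly when the corresponding entry is non-zero, and self-loops are diagonal entries, so acyclicity forces a zero diagonal. The shape condition on $C$ is built into the statement of Theorem~\ref{Th:main result}.

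For the second part, fix $C$ with $MC=\id{\comp{O}}{\comp{O}}$ and $NC$ a DAG, and argue in three steps.
\begin{enumerate}
\item Lemma~\ref{Lemma:MC} gives that $C$ satisfies \conref{C1}, \conref{C2}, \conref{FX} and \conref{FZ}.
\item Lemma~\ref{Lemma:NC} then gives \conref{C3}, \conref{C4}, and that $\prec_C$ is a strict partial order. By Observation~\ref{Obs:C split}, \conref{C} holds.
\item Theorem~\ref{Th:partial alg} shows that $C$ is focused extensive, and Theorem~\ref{Th:extending correction matrix to flow} shows that $(C,\prec_C)$ is itself a focused \flow. This identifies the implied flow as $(C,\prec_c)$.
\end{enumerate}

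The characterisation of $\trl_c$ comes from Lemma~\ref{lem:NC-trl-C}, which applies because \conref{C1}, \conref{C2}, \conref{FX} and \conref{FZ} hold. For distinct $v,w$, $w\trl_C v$ holds if and only if $(NC)_{v,w}\neq 0$. Renaming the variables gives $v\trl_c w \Leftrightarrow (NC)_{w,v}\neq0$ for distinct $v,w$.

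The only step needing care is the diagonal. The relation $\trl_C$ is defined only on distinct pairs, so I will note separately that $(NC)_{v,v}=0$ for every $v$, either from acyclicity or from Lemma~\ref{lem:NC-zero-diagonal}. Hence the equivalence also holds trivially when $v=w$, since both sides are false there, taking $\trl_c$ to be irreflexive.

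Finally, $\prec_c$ is the transitive closure of $\trl_c$ by Definition~\ref{Def:induced rel}. I expect no real obstacle; the only bookkeeping is this diagonal and index-order convention.
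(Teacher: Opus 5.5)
Your proposal is correct and follows the same route as the paper. The paper's proof only says the corollary follows by inspection from Theorem~\ref{Th:main result} and its supporting lemmata, and you carry out that inspection explicitly: you chain Lemmata~\ref{Lemma:MC} and~\ref{Lemma:NC} with Theorems~\ref{Th:partial alg} and~\ref{Th:extending correction matrix to flow}, and you use Lemma~\ref{lem:NC-trl-C} with the index swap and the zero diagonal of $NC$ to characterise $\trl_C$.
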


\begin{proof}
    By inspection, this follows from Theorem~\ref{Th:main result} and its supporting Lemmata~\ref{Lemma:M product gives gocusing condition} and~\ref{Lemma:NC}.
\end{proof}

We adapt the lower bound argument for Pauli flow on qubits \cite[Theorem 4.3]{mitosekAlgebraicInterpretationPauli2026} to \flow:

\begin{theorem}[\flow-finding lower bound]\label{Th:lower bound}
    The problem of finding \flow is at least as computationally expensive as the problem of finding the inverse of a matrix over $\Fd$.
\end{theorem}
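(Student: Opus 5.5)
The plan is a reduction: given an invertible matrix $K \in \Fd^{k\times k}$ whose inverse we want, build in $\bigO(k^2)$ time a \LOFG $\Gamma_K$ with $|I|=|O|=k$ and $n=\bigO(k)$ vertices. We arrange that $\Gamma_K$ has \flow, and that the unique focused reduced correction matrix (Corollary~\ref{Cor:uniqueness}) contains $K^{-1}$, or a fixed rescaling of it, as a block. Any \flow-finding routine can be converted into a focused one: by Theorem~\ref{Th:flow focusing} a focused flow exists, and Algorithm~\ref{general algo} returns it. Reading the block off then costs $\bigO(k^2)$. Hence flow-finding is at least as hard as inversion, up to quadratic overhead, which is dominated.

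Concretely, I would take a bipartite graph. There are inputs $I=\{i_1,\dots,i_k\}$, internal vertices $B=\{b_1,\dots,b_k\}$, and outputs $O=\{o_1,\dots,o_k\}$. Every vertex $i_j$ and $b_j$ gets label $(0,1)$, which is allowed for inputs. The edges are $G_{i_j,o_j}=1$, $G_{b_j,o_j}=1$, and $G_{i_j,b_l}=K_{j,l}$, each applied only when the weight is nonzero. Since all labels are $(0,*)$, Definition~\ref{Def:fdm} gives $M=A=G_{\comp{O}}^{\comp{I}}$, with rows $I\cup B$ and columns $B\cup O$. In block form this is
\[
M=\begin{pmatrix} K & \Id \\ 0 & \Id\end{pmatrix}.
\]
Its inverse is
\[
C=M^{-1}=\begin{pmatrix} K^{-1} & -K^{-1}\\ 0 & \Id\end{pmatrix},
\]
so $K^{-1}$ appears as the top-left block, with rows $B$ and columns $I$. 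If this bookkeeping fails, I would adjust the edge pattern; for example, I could add edges $b_j$--$b_l$ with $K$ on $B\times B$ so that $K$ sits inside the square reduced adjacency matrix.

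Next, check that $C$ really gives a flow via Theorem~\ref{Th:main result}, so we need $NC$ to be a DAG. With labels $(0,1)$ we have $N_{v,u}=-\delta_{v,u}$. So $NC$ equals $-C$ restricted to rows in $\comp{O}\cap\comp{I}$, with input rows being zero. The rows in $B$ give the block $(0\ \ \Id)$ with columns $I\cup B$; its $B\times B$ part is $\Id$, a nonzero diagonal, which is not a DAG. This is the step I expect to be the main obstacle. Labels must make both $MC=\Id$ (which yields $C=M^{-1}$ when square) and $NC$ acyclic, where cycles come from the interaction between the two blocks.

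To fix it, I would give the internal vertices $b_j$ the label $(1,0)$ instead. Then their rows of $M$ become $\delta$-rows, and their rows of $N$ become rows of $A$. I would then recompute the block structure and choose the edges ($I$--$B$ carrying $K$, and $B$--$O$ as identity) so that $M$ is block triangular with $K$ in a diagonal block. The result must satisfy three things: $C$ has $K^{-1}$ as a block, $\operatorname{diag}(NC)=0$, and the off-diagonal support of $NC$ only points from $I$ to $B$, which is acyclic. Once such a gadget is verified, the argument concludes as in \cite[Theorem 4.3]{mitosekAlgebraicInterpretationPauli2026}.
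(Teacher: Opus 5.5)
There is a genuine gap. The gadget on which the whole reduction rests is never produced, and both concrete candidates you write down fail for \emph{every} invertible $K$, not just generically.

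In the first gadget you read off the wrong rows of $C$: $(0\ \ \Id)$ is its $O$-block. Since $N_{b_j,u} = -\delta_{b_j,u}$, row $b_j$ of $NC$ is $-C_{b_j,*}$. With $C_B^{\comp{O}} = (K^{-1}\ \ {-K^{-1}})$ this gives $(NC)_B^B = K^{-1}$. Acyclicity of $NC$ would then make $K^{-1}$ strictly triangular after a simultaneous permutation of rows and columns, hence nilpotent, which is impossible.

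Your repair, relabelling $B$ by $(1,0)$ with the same edges, gives
\[
M = \begin{pmatrix} K & \Id \\ \Id & 0 \end{pmatrix}, \qquad C = M^{-1} = \begin{pmatrix} 0 & \Id \\ \Id & -K \end{pmatrix}.
\]
So $K^{-1}$ does not occur in $C$ at all, and $(NC)_B^B = -K$ is again not acyclic, for the same nilpotency reason. Since $|I|=|O|$ forces $C = M^{-1}$, neither \LOFG has \flow. The step ``once such a gadget is verified'' is therefore exactly what is missing.

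The missing observation is that internal vertices are unnecessary, and that without them $N$ vanishes. The paper takes $V = I \sqcup O$ with $|I| = |O| = k$, edges $G_{i_j,o_l} = K_{j,l}$ and no others, and labels $\ld(i_j) = (0,1)$. Then $\comp{O} = I$ and $\comp{I} = O$, so $M_\Gamma = A = K$. Moreover $N_\Gamma = 0$, because $\ldX$ vanishes on every non-output and $\delta_{v,u} = 0$ for $v \in I$, $u \in O$. Thus $NC = 0$ is trivially a DAG. By Theorem~\ref{Th:main result}, $\Gamma$ has \flow if and only if $K$ is invertible, and the focused correction matrix is exactly $K^{-1}$. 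No block extraction is needed, and singular $K$ is detected as well, whereas you had to assume invertibility.

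Your first gadget would in fact have worked without the $I$--$O$ edges. Then $M_I^B = K$, $M_B^O = \Id$ and the other blocks vanish. This gives $C_B^I = K^{-1}$ and $C_B^B = 0$, and $NC$ has arcs only from $I$ to $B$, which is precisely your third desideratum.

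Finally, drop the appeal to Algorithm~\ref{general algo} for converting an arbitrary flow into a focused one. Inside a lower-bound reduction that step costs $\bigO(n^3)$ and swamps the bound. As in the paper, take the flow-finding problem to output the focused correction matrix, which here is unique by Corollary~\ref{Cor:uniqueness}.
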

\begin{proof}
    Let $M$ be an arbitrary $n \times n$ matrix over $\Fd$.
    Define $\Gamma := (G,I,O,\ld)$ as follows:\begin{align*}
        I &= \left( i_1, \dots, i_n \right) \\
        O &= \left( o_1, \dots, o_n \right) \\
        V &= \left( I \cup O \right) \\
        G_{i_k,o_{\ell}} &= M_{k,\ell} \quad \left( \forall k,\ell \in \left\{ 1, \dots, n \right\} \right) \\
        \ld(i_k) &= (0,1) \quad \left( \forall k \in \left\{ 1, \dots, n \right\} \right)
    \end{align*}
    Then the flow-demand matrix $M_{\Gamma}$ of $(G,I,O,\ld)$ is precisely equal to $M$ and the order-demand matrix $N_{\Gamma}$ is identically zero.
    Thus, by Corollary~\ref{Cor:flow finding intro}, $\Gamma$ has \flow if and only if $M$ is invertible.
    If \flow exists, the correction matrix is given by the inverse of $M$.
\end{proof}

The above theorem leads to an $\Omega(n^2)$ lower bound for \flow-finding.
While this bound is already implied by the size of the input, it is generally expected that matrix inversion and multiplication require $\omega(n^2)$ field operations; currently a formal proof of this is known only for matrices over $\mathbb{R}$ or $\mathbb{C}$ \cite[Section~28.2]{cormenIntroductionAlgorithmsFourth2022}.

\section{Details about \flowintitles-preserving rewriting}

This section goes into more detail about the \flow-preserving rewrite rules and related procedures summarised in Section~\ref{Sec:Flow-preserving rewriting}.
In Subsection~\ref{s:graph-operations}, we first explain known results about which graph operations can be implemented using local Clifford unitaries.
We use these in Subsection~\ref{s:pivot} to formally define pivoting and show that it is \flow-preserving.
In Subsection~\ref{Sec:proof reversibility}, we prove that \flow reversal is also \flow-preserving.
In Subsection~\ref{Sec:proving removal and insertion}, we show how to remove and insert Z-like vertices in an \flow-preserving way.
Finally, in Subsection~\ref{Sec:proving generation}, we combine Z-like insertion with other Clifford-based transformations to sketch an algorithm that generates arbitrary instances of {\LOFG\unskip}s with \flow.

\subsection{Graph operations that can be implemented via local Clifford unitaries}
\label{s:graph-operations}

Several relevant rewrite rules were already defined and proved flow-preserving by Booth \cite[Section~5.1]{boothMeasurementbasedQuantumComputation2022}.
These rules are based on the two different types of operations on an $\FF_d$-weighted graph which translate to local Clifford equivalences of the corresponding qudit graph states \cite[Section~D]{bahramgiriGraph2006}.
The first of these operations is \emph{local scaling}, which multiplies the weights of all edges incident on a chosen vertex $w$ by some non-zero constant $\gamma\in\FF_d^*$.
Denoting by $\_\overset{\gamma}{\circ}w$ the local scaling operation just described, we have:

\begin{equation}\label{eq:local-scaling-Clifford}
	(G\overset{\gamma}{\circ}w)_{u,v}
	:= \begin{cases}
		\gamma G_{u,v} &\text{if } u=w \text{ or } v=w \\
		G_{u,v} &\text{otherwise}
	\end{cases}
	\qquad\quad\text{and}\qquad\quad
	\ket{G\overset{\gamma}{\circ}w} = \multiplier{\gamma^{-1}}_w \ket{G},
\end{equation}
where $\multiplier{\alpha}\ket{k} = \ket{\alpha k}$ for any $\alpha,k\in\FF_d$ and the subscript $w$ indicates that this single-qudit Clifford unitary is applied to qudit~$w$~\cite[Section~5.1.1]{boothMeasurementbasedQuantumComputation2022}.

The second operation is \emph{local complementation}, which is also parametrised by a constant\footnote{Unlike for local scaling, we may allow $\gamma=0$ in this case. This yields a trivial transformation that leaves the graph invariant.} $\gamma\in\FF_d$ and changes the weights of all edges among the neighbours of some chosen vertex $w$ in a more complicated way.
Denoting by $\_\overset{\gamma}{\star} w$ the local complementation operation of weight $\gamma$, we have:
\begin{equation}\label{eq:lc-definition}
	(G\overset{\gamma}{\star} w)_{u,v}
	= \begin{cases}
		G_{u,v} + \gamma G_{u,w} G_{w,v} &\text{if } u\neq v \\
		G_{u,v} &\text{otherwise.}
	\end{cases}
\end{equation}
Note that $G_{u,w}G_{w,v} = 0$ unless both $u$ and $v$ are neighbours of $w$, so it is only edges between two neighbours of $w$ that change.

Let $\tau$ be the $d$-th root of unity that satisfies $\tau^2 = \omega$ and $\tau^d = 1$.
Denote by $\phasegate$ a qudit phase gate\footnote{Note that this is not the usual qudit phase gate, which is generally defined as $S\ket{k} = \tau^{k(k-1)} = \phasegate Z^{-2^{-1}}$, where $2^{-1}$ is the multiplicative inverse of $2$ in $\FF_d$.} satisfying $\phasegate\ket{k} = \tau^{k^2} \ket{k}$ and by $\tilde{\phasegate} := H \phasegate H^{-1}$ an equivalent gate in the $X$-eigenbasis.
Then the corresponding relationship between graph states is\footnote{This relationship is not entirely consistent with that given in \cite[Section~5.1.2]{boothMeasurementbasedQuantumComputation2022}, yet the differences consist only of Pauli operations which can be pushed past the Paulis used to define measurement spaces and corrections up to some scalar factor. Hence we can nevertheless use results from the latter source about the changes of measurement planes and the preservation of \flow during local complementations.}
\begin{equation}\label{eq:lc-Clifford}
	\ket{G\overset{\gamma}{\star} w} = \tilde{\phasegate}^\gamma_w \prod_{v\in V} \phasegate^{-\gamma G_{w,v}^2}_v \ket{G}
\end{equation}
\cf \cite[Proposition~12 of arXiv version]{boothCompleteZXcalculiStabiliser2022}, see Observation~\ref{obs:zx-lc} in Appendix~\ref{s:pivot} for the translation from the ZX-calculus.
Both of the above operations have been shown to preserve \flow according to Definition~\ref{Def:full flow}, i.e.\ working with full (not reduced) correction matrices.
By Theorem~\ref{Th:flow focusing}, this implies that focused \flow is also preserved.
Yet the changes necessary to preserve focusing might be more complicated\footnote{C.f.\ \cite[Lemma D.15]{simmonsRelatingMeasurementPatterns2021} for the explicit expression of the focused flow after a local complementation in the qubit case.} and will not be worked out here.

\begin{proposition}[{\cite[Proposition~5.5]{boothMeasurementbasedQuantumComputation2022}}]\label{prop:local-scaling}
	Suppose the \LOFG $(G,I,O,\ld)$ has an \flow given by $(C,\prec)$.
	Then, for any $w\in V$ and any non-zero $\gamma\in\ZZ_d^*$, the \LOFG $(G\overset{\gamma}{\circ} w,I,O,\ld')$ with new measurement labels
	\[
	\ld'(u)
	:= \begin{cases}
		(\gamma^{-1} C_{w,w}, \gamma (GC)_{w,w}) &\text{if } u=w \\
		(C_{u,u}, (GC)_{u,u}) &\text{otherwise}
	\end{cases}
	\]
	has an \flow $(C',\prec)$ defined as
	\[
	C'_{u,v}
	:= \begin{cases}
		\gamma^{-1} C_{w,v} &\text{if } u=w \\
		C_{u,v} &\text{otherwise.}
	\end{cases}
	\]
\end{proposition}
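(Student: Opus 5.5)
The plan is to express local scaling as conjugation by a diagonal matrix, after which all three conditions of Definition~\ref{Def:full flow} reduce to the observation that the zero patterns of the relevant matrices do not change. Let $D$ be the $V\times V$ diagonal matrix with $D_{w,w}=\gamma$ and $D_{u,u}=1$ for $u\neq w$. By \eqref{eq:local-scaling-Clifford}, $(DGD)_{u,v}=D_{u,u}G_{u,v}D_{v,v}$. This equals $\gamma G_{u,v}$ when exactly one of $u,v$ is $w$. When $u=v=w$ it equals $\gamma^2 G_{w,w}$, which is $0$ because $G$ is loop-free. So $G\overset{\gamma}{\circ}w = DGD$. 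The proposed correction matrix is $C' = D^{-1}C$, that is, row $w$ of $C$ scaled by $\gamma^{-1}$. Since $\gamma\in\Fd^*$, $D$ is invertible and the key identity is
\[
(G\overset{\gamma}{\circ}w)\,C' = DGDD^{-1}C = D(GC).
\]
Hence $G'C'$ is $GC$ with row $w$ multiplied by $\gamma$, and all other entries are unchanged.

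Next I verify the conditions in turn. For \conref{CF}, the diagonal entries are $C'_{u,u} = \gamma^{-1}C_{w,w}$ and $(G'C')_{u,u}=\gamma(GC)_{w,w}$ if $u=w$, and they are unchanged otherwise. This is exactly the definition of $\ld'$; for $u\neq w$, $\ld'(u)=\ld(u)$ by \conref{CF} for the original flow. For \conref{ZF}, left-multiplying by the invertible diagonal matrix $D^{-1}$ preserves the zero pattern of $C$, so rows indexed by $I$ and columns indexed by $O$ remain zero. For \conref{OF}, $C'$ and $G'C'$ have exactly the same zero patterns as $C$ and $GC$, because each is obtained by scaling a single row by a nonzero field element. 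Lower-triangularity under any totalisation of $\prec$ depends only on the zero pattern, so it is inherited.

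The remaining step, and the only mildly delicate one, is to check that $(G\overset{\gamma}{\circ}w,I,O,\ld')$ really is a \LOFG in the sense of Definition~\ref{Def:LOFG}. First, $\ld'$ takes values in $\FF_d^2\setminus\{(0,0)\}$: when $w\in\comp{O}$, $\ld'(w)$ is $\ld(w)$ with its coordinates rescaled by $\gamma^{-1}$ and $\gamma$, which is nonzero because $\ld(w)\neq(0,0)$. Second, every measured input must have a label of the form $(0,*)$. If $w\in I\setminus O$, then $C_{w,w}=0$ by \conref{ZF}, so the first coordinate of $\ld'(w)$ is $0$ as required. The label $\ld'$ is only needed on $\comp{O}$, so the case $w\in O$ imposes nothing. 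Together with \conref{CF}, \conref{ZF} and \conref{OF} above, this shows that $(C',\prec)$ is an \flow on the rescaled \LOFG.
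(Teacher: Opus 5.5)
Your proof is correct and complete. The paper gives no proof of its own: it imports the result directly from Booth's thesis \cite[Proposition~5.5]{boothMeasurementbasedQuantumComputation2022}, so there is no in-paper argument to compare against.

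Your factorisation $G\overset{\gamma}{\circ}w = DGD$ with $C' = D^{-1}C$ gives the single identity $(G\overset{\gamma}{\circ}w)C' = D(GC)$, and this works as a self-contained verification. From it, \conref{ZF} and \conref{OF} follow from the fact that scaling one row by a unit does not change zero patterns. It also shows why the partial order can be kept unchanged, and why only the label of $w$ changes.

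\conref{CF} then holds essentially by construction of $\ld'$. The only substantive point left is that $\ld'$ is a legal measurement labelling, and you handle it correctly. In particular, in the measured-input case, $\ld'(w)\neq(0,0)$ together with a zero first coordinate forces the required form $(0,*)$.
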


\begin{proposition}[{\cite[Proposition~5.8]{boothMeasurementbasedQuantumComputation2022}}]\label{prop:lc}
	Suppose the \LOFG $(G,I,O,\ld)$ has an \flow given by $(C,\prec)$.
	Then, for any $w\in\comp{I}$ and any $\gamma\in\ZZ_d$, the \LOFG $(G\overset{\gamma}{\star} w,I,O,\ld')$ with new measurement labels\footnote{Note there is a typo in the definition of the updated labels in \cite[Eq.~(5.61)]{boothMeasurementbasedQuantumComputation2022}, this has been corrected here.}
	\[
	\ld'(u)
	:= \begin{cases}
		(C_{w,w} - \gamma (GC)_{w,w}, (GC)_{w,w}) &\text{if } u=w \\
		(C_{u,u}, (GC)_{u,u} - \gamma G_{u,w}^2 C_{u,u}) &\text{otherwise}
	\end{cases}
	\]
	has an \flow $(C',\prec)$ defined as
	\[
	C'_{u,v}
	:= \begin{cases}
		C_{w,v} - \gamma (GC)_{w,v} &\text{if } u = w \\
		C_{u,v} &\text{otherwise.}
	\end{cases}
	\]
\end{proposition}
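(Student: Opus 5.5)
The plan is to check directly that $(C',\prec)$ satisfies \conref{CF}, \conref{ZF} and \conref{OF} for $G':=G\overset{\gamma}{\star} w$. Everything rests on one matrix identity for $G'C'$. Write $g := G_{*,w}$ for the $w$-column of the adjacency matrix. Since $G$ is loop-free, $g_w=G_{w,w}=0$. Let $D:=\diag{(g_u^2)_{u\in V}}$, and let $e_w$ be the $w$-th standard basis vector. By \eqref{eq:lc-definition}, $G' = G + \gamma(gg^T - D)$: the off-diagonal entries receive $\gamma G_{u,w}G_{w,v}$, and subtracting $D$ removes the diagonal contribution. Since $G$ is symmetric, $(GC)_{w,v} = g^TC_{*,v}$, so the definition of $C'$ reads $C' = (\Id - \gamma e_w g^T)\,C$.

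The key step, and the only real computation, is to show
\[
G'C' = GC - \gamma D C, \qquad\text{\ie}\qquad (G'C')_{u,v} = (GC)_{u,v} - \gamma G_{u,w}^2 C_{u,v}.
\]
To get this, expand $G'(\Id-\gamma e_w g^T)$ using $Ge_w = g$, $g^Te_w = g_w = 0$ and $De_w = g_w^2 e_w = 0$:
\[
G'(\Id-\gamma e_w g^T) = G + \gamma gg^T - \gamma D - \gamma gg^T - \gamma^2 g (g^Te_w) g^T + \gamma^2 (De_w) g^T = G-\gamma D .
\]
I expect this to be the main obstacle. It is also where the corrected labels (the footnote's typo fix) have to come out exactly right, so I will double-check signs against \eqref{eq:lc-definition}.

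With the identity in hand, the three flow conditions follow.

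\textbf{Condition \conref{CF}.} For $u\neq w$, $C'_{u,u}=C_{u,u}$ and $(G'C')_{u,u}=(GC)_{u,u}-\gamma G_{u,w}^2C_{u,u}$, which matches $\ld'(u)$. For $u=w$, $C'_{w,w}=C_{w,w}-\gamma(GC)_{w,w}$. Also $(G'C')_{w,w}=(GC)_{w,w}$ because $g_w=0$, so this matches $\ld'(w)$ as well.

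\textbf{Condition \conref{ZF}.} Rows $u\in I$ satisfy $u\neq w$ because $w\in\comp{I}$, so these rows are unchanged and remain zero. For columns $v\in O$, we have $C_{*,v}=0$, hence $(GC)_{w,v}=0$, and so $C'_{*,v}=0$.

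\textbf{Condition \conref{OF}.} By the formulas above, a nonzero entry $C'_{u,v}$ or $(G'C')_{u,v}$ with $u\neq v$ forces $C_{u,v}\neq0$, or $(GC)_{u,v}\neq0$, or ($u=w$ and $(GC)_{w,v}\neq0$). In every case lower-triangularity of $C$ and $GC$ under any totalisation of $\prec$ gives $v\prec u$. Hence $C'$ and $G'C'$ are lower triangular under the same totalisations, and the unchanged $\prec$ still works.

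Finally, I will check that $\ld'$ is a legitimate measurement labelling in the sense of Definition~\ref{Def:LOFG}.

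\textbf{Nonzero labels.} Take $u\neq w$ in $\comp{O}$. If $C_{u,u}\neq0$, the first coordinate of $\ld'(u)$ is nonzero. Otherwise $\ld'(u)=\ld(u)\neq(0,0)$. For $u=w$, if $\ld'(w)=(0,0)$ then $(GC)_{w,w}=0$, and then also $C_{w,w}=0$, which would give $\ld(w)=(0,0)$, a contradiction.

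\textbf{Measured inputs.} These have $C_{u,u}=0$ by \conref{ZF} and are distinct from $w$, so $\ld'(u)=\ld(u)=(0,*)$.
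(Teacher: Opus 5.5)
Your proof is correct. The paper gives no argument of its own for this proposition. It imports the result from Booth's thesis [Proposition~5.8] and only corrects a typo in the updated labels. Your direct verification therefore works as a self-contained substitute, not as a variant of a proof in the text.

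Writing $G' = G + \gamma(gg^T - D)$ and $C' = (\Id - \gamma e_w g^T)C$ is an economical way to organise the check. The single identity $G'C' = (G-\gamma D)C$ holds; I rechecked it entrywise, using $G_{w,w}=0$ and the symmetry of $G$. From that identity you get two things at once:
\begin{itemize}
\item condition (CF), with exactly the corrected labels of the statement, which independently confirms the footnoted fix;
\item the support argument for (OF), since every off-diagonal nonzero entry of $C'$ or $G'C'$ sits at a position where $C$ or $GC$ is already nonzero.
\end{itemize}
Your extra check that $\ld'$ is a legitimate labelling uses $w\in\comp{I}$ to show that labels stay nonzero and that measured inputs keep labels of the form $(0,*)$. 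This covers a point the statement takes for granted.
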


\begin{remark}
	The above transformation of the flow is much simpler than that presented in the qubit local complementation proof for (extended) gflow \cite[Lemma~3.1]{backensThereBackAgain2021}.
	It can in fact be adapted to give a simpler transformation for gflow, though again this transformation does not preserve focusing.
\end{remark}

In the binary case, given an edge $ww'$ in a graph, the sequence of three local complementations alternating about the two endpoints of the edge (in either order), $G \wedge w w' = ((G\star w) \star w') \star w = ((G\star w') \star w) \star w'$, is known as a \emph{pivot} or an \emph{edge-local complementation}.
Pivots also give rise to flow-preserving transformations and they interact better with focused gflow and Pauli flow than individual local complementations \cite[Lemma~5.3]{perezBackens2025}.
Again in the qubit case, the associated graph states satisfy $\ket{(G\star w) \star w') \star w} = H_w H_{w'} \prod_{v\in N_G(w)\cap N_G(w')} Z_v \ket{G}$, i.e.\ a pivot corresponds to applying Hadamard gates to the two endpoints of the edge and $Z$-gates to all joint neighbours of both endpoints.

Generalising the qubit case, a family of pivot operations on qutrits was defined in \cite[Theorem~3.5]{townsend-teagueSimplification2022}.
Another work presents a qudit generalisation of the pivot-and-delete operation, where the vertices $w$ and $w'$ are deleted from the graph after the pivot \cite[Lemma~10]{poorQupitStabiliserZXtravaganza2023}.

Motivated by the elegance of the qubit pivot operation, and the usefulness of having a graph operation that corresponds to Hadamard gates on certain qudits, we generalise the pivot operation to qudits.

\begin{proposition}[Formal version of Proposition~\ref{prop:pivot-informal}]\label{prop:pivot}
	Let $G$ be an $\FF_d$-weighted graph with vertices $V$, suppose $w, w'\in V$ satisfy $\epsilon := G_{w,w'} \ne 0$, and define $G\wedge w w'$ to be the graph on $V$ with the following adjacency matrix:
	\[
	(G\wedge w w')_{u,v} := 
	\begin{cases}
		0 &\text{if } u = v \\
		- \epsilon &\text{if } \{u,v\} = \{w,w'\} \\
		G_{w',v} &\text{if } u=w \text{ and } v\notin\{w,w'\} \\
		G_{u,w'} &\text{if } v=w \text{ and } u\notin\{w,w'\} \\
		G_{w,v} &\text{if } u=w' \text{ and } v\notin\{w,w'\} \\
		G_{u,w} &\text{if } v=w' \text{ and } u\notin\{w,w'\} \\
		G_{u,v} - \epsilon^{-1} (G_{u,w'} G_{w,v} + G_{u,w} G_{w',v}) &\text{otherwise,}
	\end{cases}
	\]
	Then the associated qudit graph state satisfies:
	\[
	\ket{G\wedge w w'}
	= F^{(-\epsilon)}_w F^{(-\epsilon)}_{w'} \prod_{v\in V} R_v^{2\epsilon^{-1} G_{v,w}G_{v,w'}} \ket{G}
	\]
	where $F^{(\gamma)} := \sum_{j,k\in\FF_d} \omega^{\gamma j k } \ketbra{j}{k}$ for any $\gamma\in\FF_d^*$, and $R = \sum_{k\in\FF_d} \tau^{k^2} \proj{k}$.
\end{proposition}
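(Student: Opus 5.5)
We verify the identity directly in the computational basis, expanding both sides as sums over $\ket{x}$ with $x\in\FF_d^V$. By Definition~\ref{def:graph-state}, $\ket{G} = d^{-|V|/2}\sum_{x} \omega^{q_G(x)}\ket{x}$, where $q_G(x) := \sum_{\{u,v\}} G_{u,v}x_ux_v$ is the quadratic form of $G$. We use that $\ket{0:X}$ is the uniform superposition and that $CZ^{G_{u,v}}$ contributes the phase $\omega^{G_{u,v}x_ux_v}$; we track global scalars only loosely and fix them at the end.

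The first step applies the diagonal gates $\prod_v R_v^{2\epsilon^{-1}G_{v,w}G_{v,w'}}$. Since $R^{c}\ket{k} = \tau^{ck^2}\ket{k}$ and $\tau^2=\omega$, the exponent $2\epsilon^{-1}$ turns into $\omega^{\epsilon^{-1}G_{v,w}G_{v,w'}x_v^2}$. This requires $\tau^{2c}$ to depend only on $c\bmod d$, which holds for odd $d$ since $\tau$ is a $d$-th root of unity; we will note that $d=2$ is the known qubit case. The state is then $\sum_x \omega^{q_G(x) + \epsilon^{-1}\sum_v G_{v,w}G_{v,w'}x_v^2}\ket{x}$.

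The second step applies $F^{(-\epsilon)}$ on $w$ and $w'$. This replaces $x_w, x_{w'}$ by summation variables $a,b$ and introduces the phase $\omega^{-\epsilon(y_w a + y_{w'} b)}$. Write $\alpha(x) := \sum_{v\ne w,w'}G_{w,v}x_v$ and $\beta(x) := \sum_{v\ne w,w'} G_{w',v}x_v$. The exponent is then $q_{G-w-w'}(x) + a\alpha + b\beta + \epsilon ab + \epsilon^{-1}\sum_v G_{v,w}G_{v,w'}x_v^2 - \epsilon(a y_w + b y_{w'})$. The terms $v=w,w'$ in the $R$ sum vanish because $G$ has no loops. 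Summing over $b$ gives $d\,\delta$ of $\beta + \epsilon a - \epsilon y_{w'} = 0$, so $a = y_{w'} - \epsilon^{-1}\beta$. Substituting leaves
\[
  q_{G-w-w'}(x) + \alpha y_{w'} - \epsilon^{-1}\alpha\beta - \epsilon y_w y_{w'} + \beta y_w + \epsilon^{-1}\textstyle\sum_v G_{v,w}G_{v,w'}x_v^2 .
\]

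The main obstacle is matching this with $q_{G\wedge ww'}$, and in particular showing that the diagonal squares cancel. Expanding $\alpha\beta = \sum_{u,v} G_{w,u}G_{w',v}x_ux_v$, the $u=v$ part equals $\sum_v G_{w,v}G_{w',v}x_v^2$, which cancels exactly against the $R$-contribution. This is why the exponent $2\epsilon^{-1}$ was chosen. The off-diagonal ordered pairs combine into the unordered-pair coefficient $-\epsilon^{-1}(G_{u,w'}G_{w,v} + G_{u,w}G_{w',v})$, which is precisely the ``otherwise'' case of $G\wedge ww'$. The cross terms $\alpha y_{w'}$ and $\beta y_w$ give the swapped neighbourhoods: $w$ now connects to $v$ with weight $G_{w',v}$, and symmetrically for $w'$. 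The term $-\epsilon y_wy_{w'}$ gives the edge weight $-\epsilon$. All cases of the definition are therefore matched. The leftover constant $d\cdot d^{-1}$ from the two Fourier sums, against the normalisations, yields equality up to a global scalar; this is resolved either by normalising $F$ or by reading the claimed equality projectively, consistent with the paper's conventions.
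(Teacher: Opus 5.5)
Your computation is correct, but it takes a genuinely different route from the paper. The paper never expands the states in a basis; the authors remark that doing the proof in matrix notation proved onerous and error-prone. Instead, it writes $\ket{G}$ as a ZX-diagram with H-boxes and proves a diagrammatic pivot rule (Lemmas~\ref{lem:pivot-helper} and~\ref{lem:qudit-pivot}). That rule follows the pivot-and-delete proof of Po\'or et al.\ and uses H-box pushing, H-box merging and the bialgebra law. The paper then reads off $\ket{G} = F^{(\epsilon)}_w F^{(\epsilon)}_{w'} \prod_v R_v^{-2\epsilon^{-1}G_{v,w}G_{v,w'}} \ket{G\wedge w w'}$ from the leftover single-wire components and inverts.

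You instead track the quadratic form $q_G$. The sum over $b$ collapses to a delta that fixes $a$, and everything reduces to matching coefficients, with the diagonal part of $\alpha\beta$ cancelling the $R$-phases. Your argument is short, self-contained and independent of the cited ZX rewrite lemmas.

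It also fixes the scalar the paper leaves implicit. The identity is exact if $F^{(\gamma)}$ carries a factor $d^{-1/2}$. For $F^{(\gamma)}$ as literally written, the right-hand side is $d\ket{G\wedge w w'}$. The paper's claim $(F^{(\gamma)})^{-1} = F^{(-\gamma)}$ likewise holds only up to this scalar, since $F^{(\gamma)}F^{(-\gamma)} = d\,\Id$.

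What the paper's route buys is that the local Cliffords are derived rather than verified after being supplied in advance, and that it yields a reusable ZX rewrite rule.

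One minor correction to your aside: for $d=2$ there is no $\tau$ with $\tau^2=\omega=-1$ and $\tau^d=1$. So the statement presupposes odd $d$; it is not that $d=2$ is handled by the qubit literature.
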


As proving this result using standard matrix notation showed itself to be onerous and error-prone, we prove the above proposition in Appendix~\ref{s:pivot} using the qudit stabiliser ZX-calculus of \cite{poorQupitStabiliserZXtravaganza2023}, which is also introduced in that appendix.

Analogous to the qubit case, a pivot operation can be built up from three successive local complementations.
Yet for qudits, at least one local scaling operation is also needed to get a pivot operation that is symmetric under interchange of $w$ and $w'$.

\begin{observation}
	Let $G$ be an $\FF_d$-weighted graph with vertices $V$, suppose $w, w'\in V$ satisfy $G_{w,w'} \ne 0$, and define $\gamma := G_{w,w'}^{-1}$.
	Then:
	\[
	G \wedge w w'
	= (((G\overset{\gamma}{\star} w) \overset{-\gamma}{\star} w') \overset{\gamma}{\star} w) \overset{-1}{\circ} w
	= (((G\overset{\gamma}{\circ} w')\overset{1}{\star} w) \overset{-1}{\star} w') \overset{1}{\star} w)\overset{-1/\gamma}{\circ} w
	\]
	and similarly with the roles of $w,w'$ swapped.
\end{observation}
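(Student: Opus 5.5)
The plan is a direct entrywise computation with the explicit formulas \eqref{eq:lc-definition} and \eqref{eq:local-scaling-Clifford}, compared against the adjacency matrix of $G\wedge ww'$ in Proposition~\ref{prop:pivot}. Write $\epsilon = G_{w,w'}$, so that $\gamma\epsilon = 1$.

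\textbf{First equality.} Set $G_1 := G\overset{\gamma}{\star} w$, $G_2 := G_1\overset{-\gamma}{\star} w'$ and $G_3 := G_2\overset{\gamma}{\star} w$. I will track three kinds of entries.
\begin{itemize}
\item The entry on $\{w,w'\}$. Since a local complementation about $x$ never changes edges incident to $x$, we get $(G_1)_{w,w'} = \epsilon$. The same fact gives $(G_2)_{w,w'} = \epsilon$ and $(G_3)_{w,w'} = \epsilon$. The final scaling ${\overset{-1}{\circ}}w$ then turns this into $-\epsilon$, as required.
\item Entries $(w,v)$ and $(w',v)$ with $v\notin\{w,w'\}$. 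We have $(G_1)_{w',v} = G_{w',v} + G_{w,v}$, using $\gamma\epsilon = 1$, and row $w$ is unchanged. In $G_2$, row $w'$ is fixed, and $(G_2)_{w,v} = G_{w,v} - \gamma\epsilon(G_{w',v}+G_{w,v}) = -G_{w',v}$. In $G_3$, row $w$ is fixed, and $(G_3)_{w',v} = G_{w',v}+G_{w,v} + \gamma\epsilon(-G_{w',v}) = G_{w,v}$. The final scaling at $w$ negates the $w$-row, giving $(G\wedge ww')_{w,v}=G_{w',v}$, while the $w'$-row $G_{w,v}$ is left unchanged. This matches Proposition~\ref{prop:pivot}.
\item Entries $(u,v)$ with $u,v\notin\{w,w'\}$ and $u\neq v$. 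Write $a_x := G_{x,w}$ and $b_x := G_{x,w'}$. Then:
\begin{align*}
(G_1)_{u,v} &= G_{u,v} + \gamma a_u a_v,\\
(G_2)_{u,v} &= (G_1)_{u,v} - \gamma (b_u+a_u)(b_v+a_v),\\
(G_3)_{u,v} &= (G_2)_{u,v} + \gamma (-b_u)(-b_v).
\end{align*}
Summing, the $a_ua_v$ and $b_ub_v$ terms cancel, leaving $G_{u,v} - \gamma(a_ub_v + b_ua_v)$. This is exactly the ``otherwise'' case of Proposition~\ref{prop:pivot}, and the final scaling at $w$ does not touch these entries.
\end{itemize}

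\textbf{Second equality.} Rather than redo the computation, I will reduce it to the first one. Let $H := G\overset{\gamma}{\circ} w'$. Then $H_{w,w'} = \gamma\epsilon = 1$, so the first equality applied to $H$, with its $\gamma$ equal to $1$, gives
\[
(((H\overset{1}{\star} w) \overset{-1}{\star} w') \overset{1}{\star} w)\overset{-1}{\circ} w = H\wedge ww'.
\]
Next I check, by reading off the pivot formula case by case, that
\[
(G\overset{\alpha}{\circ} w')\wedge ww' = (G\wedge ww')\overset{\alpha}{\circ} w.
\]
The reason is that the pivot swaps the neighbourhoods of $w$ and $w'$, and in the ``otherwise'' case the factor $\alpha$ cancels against $\epsilon^{-1}$ becoming $(\alpha\epsilon)^{-1}$. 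It then remains to compose scalings at $w$. Scalings at the same vertex multiply, so applying $\overset{-1/\gamma}{\circ} w$ after $\overset{-1}{\circ} w$ has the net effect of $\overset{1/\gamma}{\circ} w$. This undoes the $\overset{\gamma}{\circ} w$ on $(G\wedge ww')\overset{\gamma}{\circ} w$ and yields $G\wedge ww'$.

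\textbf{Swapped roles.} The pivot adjacency matrix in Proposition~\ref{prop:pivot} is symmetric under exchanging $w$ and $w'$, since $\epsilon$ is symmetric. So the same computation with the labels exchanged gives the swapped versions.

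The main obstacle is the bookkeeping in the ``otherwise'' case of the first equality. One has to use the updated intermediate rows $(G_1)_{w',\cdot}$ and $(G_2)_{w,\cdot}$, not the original ones, and get every sign over $\Fd$ right so that the quadratic terms cancel exactly. The scaling–pivot commutation identity is routine, but it also needs a careful pass through all the cases of the pivot formula.
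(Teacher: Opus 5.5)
Your proof is correct. The entrywise bookkeeping for the three local complementations checks out: the $a_ua_v$ and $b_ub_v$ terms cancel, and you correctly use the updated rows $(G_1)_{w',\cdot}$ and $(G_2)_{w,\cdot}$. The identity $(G\overset{\alpha}{\circ} w')\wedge ww' = (G\wedge ww')\overset{\alpha}{\circ} w$ and the symmetry $G\wedge ww' = G\wedge w'w$ also hold.

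The paper gives no proof to compare against; it calls the argument ``straightforward but sufficiently onerous'' and omits it. Your direct computation is presumably what the authors intended. Reducing the second equality to the first via the scaling--pivot commutation saves a second full computation. The one step you leave implicit is that $\overset{-1}{\circ} w$ is an involution. Hence the triple of local complementations on $H$ equals $(H\wedge ww')\overset{-1}{\circ} w = (G\wedge ww')\overset{-\gamma}{\circ} w$, and applying $\overset{-1/\gamma}{\circ} w$ to this gives exactly $G\wedge ww'$.
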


The proof is straightforward but sufficiently onerous that we leave it out.
Nevertheless, as a pivot corresponds to a sequence of local complementations and local scalings, it is immediate that it preserves the existence of \flow.

\subsection{Proving the pivoting property using the qudit stabiliser ZX-calculus}
\label{s:pivot}

The ZX-calculus is a graphical language for reasoning about quantum computations.
While the qubit version is the most developed, there is also a complete ZX-calculus for the stabiliser fragment of $d$-dimensional qudits for every prime $d$.
The case $d=2$ was shown in \cite{backensZXcalculusCompleteStabilizer2014} and odd primes were treated in \cite{boothCompleteZXcalculiStabiliser2022,poorQupitStabiliserZXtravaganza2023}.
Since generalising the pivot rule from the qubit case to qudits using algebraic techniques requires long, detailed, but not very insightful manipulations, and there is already a ZX-calculus proof of the related pivot-and-delete rule (also called ``pivot simplification'') \cite[Lemma~10]{poorQupitStabiliserZXtravaganza2023}, we prove a (non-deletion version) of the pivot rule using the qudit ZX-calculus.
We will introduce here only those aspects of the rules and interpretation that are immediately necessary for the proof.
A reader interested in more detail may find a full introduction in \cite{poorQupitStabiliserZXtravaganza2023}.

The qudit stabiliser ZX-calculus is generated by three kinds of components: green $Z$-spiders and red $X$-spiders, which have a phase label in $\ZZ_d^2$ and may have arbitrary number of inputs and output wires, as well as Hadamard gates, which always have one input and one output:
\[
\input{green-spider.tikz} \qquad\qquad
\input{red-spider.tikz} \qquad\qquad
\input{hadamard.tikz}
\]
A phase $0,0$ on a spider is usually left out.
ZX-diagrams may be composed sequentially by connecting the input wires of one diagram with the output wires of another, and they may be composed in parallel via vertical juxtaposition.
Recall that $\tau$ denotes the $d$-th root of unity satisfying $\tau^2=\omega$, and that $\ket{k:U}$ for some $k\in\FF_d$ denotes the eigenstate of the unitary operator $U$ with eigenvalue $\omega^k$.
Then the components are interpreted as follows \cite{poorQupitStabiliserZXtravaganza2023}:
\begin{align*}
	\intf{\input{green-spider.tikz}} &= d^{\frac{n+m-2}{4}} \sum_{k\in\ZZ_d} \tau^{a k + b k^2} \ket{k:Z}\t{n}\bra{k:Z}\t{m} \\
	\intf{\input{red-spider.tikz}} &= d^{\frac{n+m-2}{4}} \sum_{k\in\ZZ_d} \tau^{a k + b k^2} \ket{-k:X}\t{n}\bra{k:X}\t{m} \\
	\intf{\input{hadamard.tikz}} &= \sum_{k\in\ZZ_d} \ketbra{k:Z}{k:X} \\
	\intf{\input{identity.tikz}} &= \sum_{k\in\ZZ_d} \ketbra{k:Z}{k:Z} \\
	\intf{\input{swap.tikz}} &= \sum_{k,\ell\in\ZZ_d} \ketbra{k,\ell:Z}{\ell,k:Z}
\end{align*}

Every ZX-calculus diagram can be treated as a labelled graph in the sense that, as long as the input and output wires of the entire diagram remain in the same order, only the connectivity of the components matters; this is made rigorous via category theory.

The following two syntactic sugars will be useful.
Firstly, denote by \input{h-minus.tikz} the inverse of the Hadamard operation so that
\ctikzfig{hada-inverse}
Secondly, introduce the \emph{H-box}: a Hadamard with a label $a\in\FF_d$ called its \emph{weight}.
This represents $a$ parallel Hadamard edges between a pair of $Z$-spiders:
\[
\input{hada-n.tikz} \quad:=\quad \input{hada-n-def.tikz}
\]

\begin{observation}\label{obs:graph-state-zx}
	An $\FF_d$-weighted graph state $\ket{G}$ (cf.\ Definition~\ref{def:graph-state}) is represented in the ZX-calculus as follows:
	\begin{itemize}
		\item For each graph vertex, there is a green spider of phase $(0,0)$ which is connected to one output wire.
		\item For each edge of weight $k\in\ZZ_d$, there is an H-box of weight $k$ connecting the spiders corresponding to the two endpoints of the edge.
	\end{itemize}
\end{observation}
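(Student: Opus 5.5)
The plan is to translate Definition~\ref{def:graph-state} directly into a ZX-diagram and then simplify it with spider fusion. Throughout, equalities are taken up to a non-zero global scalar, which is the standard convention for this kind of representation. There are three steps.

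\textbf{Step 1: the initial states.} I will first check that the state $\ket{0:X}$ is the green spider with zero inputs, one output and phase $(0,0)$. By the interpretation of green spiders, this spider equals $d^{-1/4}\sum_{k\in\ZZ_d}\ket{k:Z}$. Since $\ket{0:X}=H^{-1}\ket{0}$ is proportional to $\sum_k\ket{k}$, the two agree up to a scalar. So $\bigotimes_{v\in V}\ket{0:X}$ is one such spider for each vertex.

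\textbf{Step 2: the $CZ$ gates.} Next I will show that $CZ_{u,w}$ is a diagram made of two pieces. On wire $u$ there is a green phase-free spider with one input and two outputs, and likewise on wire $w$. The two spare legs are joined by a single Hadamard edge.

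\begin{itemize}
\item Each such spider copies a computational basis state: it sends $\ket{k:Z}$ to $\ket{k:Z}\ket{k:Z}$, up to scalar.
\item Joining two legs in states $\ket{j:Z}$ and $\ket{k:Z}$ through the Hadamard contributes the matrix element $\bra{j:Z}\intf{\text{H}}\ket{k:Z}$ in the Hadamard's bent-wire form. Using $\intf{\text{H}}=\sum_k\ketbra{k:Z}{k:X}$, this element is $\braket{k:X|j:Z}$ up to relabelling, which is proportional to $\omega^{\pm jk}$.
\end{itemize}

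Hence the diagram acts diagonally as $\ket{j,k}\mapsto\omega^{\pm jk}\ket{j,k}$, which is $CZ$ or $CZ^{-1}$.

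Powers are then handled by composition. The gate $CZ^{a}$ is $a$ sequential copies of $CZ$. Fusing the green spiders on each wire turns these into $a$ parallel Hadamard edges between two green spiders, which by definition is an H-box of weight $a$. For $a=0$ there is simply no edge.

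\textbf{Step 3: fusion.} Finally, I plug the states from Step 1 into the $CZ$ gadgets of Step 2. All gates act diagonally, so their order does not matter, and the spider fusion rule merges every green spider on the wire of vertex $v$ into one phase-$(0,0)$ green spider. That spider has one output, plus one leg per incident H-box. Each edge $\{u,w\}$ is counted once, so its H-box has weight exactly $G_{u,w}$.

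\textbf{Main obstacle.} The step I expect to be hardest is the sign convention in Step 2: whether a single Hadamard edge realises $\omega^{jk}$ or $\omega^{-jk}$. This depends on the paper's convention that $\ket{k:X}=H^{-1}\ket{k}$ and on how the Hadamard is transposed when its wire is bent. I will first compute $\braket{k:X|j:Z}$ explicitly.

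\begin{itemize}
\item If it yields $\omega^{-jk}$, I will instead use the fact that the inverse Hadamard, being the transpose-conjugate, gives the opposite sign. I will also check whether the definition of the H-box in \cite{poorQupitStabiliserZXtravaganza2023} already absorbs this sign.
\item As a cross-check, I will verify that the resulting diagram is stabilised by $X_v\prod_w Z_w^{G_{v,w}}$ from Lemma~\ref{lem:graph-state-stabilisers}, by pushing a red $\pi$-type spider through the green node.
\end{itemize}
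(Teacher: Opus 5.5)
Your proposal is correct. It is the standard unfolding of Definition~\ref{def:graph-state} that the paper leaves implicit, since the observation is stated there without proof. The steps are: phase-free green states for the $\ket{0:X}$ factors, one Hadamard edge per $CZ$, $a$ parallel edges (i.e.\ an H-box of weight $a$) for $CZ^{a}$, and then spider fusion. You rightly read the product in the definition over unordered pairs $\{u,w\}$; the ordered reading would double every weight and clash with Lemma~\ref{lem:graph-state-stabilisers}.

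Your sign worry resolves in your favour. Since $\ket{k:X}=H^{-1}\ket{k}\propto\sum_{j}\omega^{-jk}\ket{j}$, the Hadamard's matrix is $\bra{m}\intf{H}\ket{j}=\langle m{:}X\,|\,j{:}Z\rangle=\omega^{jm}/\sqrt{d}$. This matrix is symmetric, so bending its wires changes nothing, and a single Hadamard edge realises exactly $CZ$, not $CZ^{-1}$.
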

For example, the $\FF_d$-weighted graph state underlying the running example of Figure~\ref{Fig:LOFG} corresponds to the following ZX-diagram:
\ctikzfig{zx-graph-state-ex}

\begin{proposition}[{\cite[Proposition~12]{boothCompleteZXcalculiStabiliser2022}}]\label{prop:zx-lc}
	For any $\FF_d$-weighted graph $G$ identified with its adjacency matrix $G\in\FF_d^{V\times V}$, for any $\gamma\in\FF_d$, and for any $w\in V = \{1,\ldots,n\}$:
	\ctikzfig{zx-lc}
	where the labelled white boxes denote the graph state diagrams described in Observation~\ref{obs:graph-state-zx} and, on the left-hand side, the $X$-spiders are connected to vertex $w$.
\end{proposition}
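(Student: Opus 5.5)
The plan is to prove the diagrammatic identity semantically rather than by ZX rewriting. I would first read the left-hand side as a concrete linear map applied to $\ket{G}$, namely the operator $\tilde{\phasegate}^{\gamma}_w \prod_{v\in V}\phasegate^{-\gamma G_{w,v}^2}_v$ of \eqref{eq:lc-Clifford}. By Observation~\ref{obs:graph-state-zx} and the interpretation of spiders, the graph-state box denotes $\ket{G}$ up to a scalar. A one-input-one-output green spider with phase $(0,b)$ is $\sum_k \tau^{bk^2}\proj{k:Z}$, which is a power of $\phasegate$. The analogous red spider is diagonal in the $X$-basis and yields $\tilde{\phasegate}$ up to the sign convention $\ket{-k:X}$. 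The spiders attached to vertex $w$, and to the other vertices with phases proportional to $G_{w,v}^2$, therefore reduce to exactly this local Clifford. The right-hand side is the graph-state box for $G\overset{\gamma}{\star}w$ as defined in \eqref{eq:lc-definition}. It then suffices to show that $\ket{G\overset{\gamma}{\star}w}$ equals this local Clifford applied to $\ket{G}$, up to a non-zero scalar, and then to fix that scalar.

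For the operator identity I would use stabilisers. Both sides are stabiliser states, and a stabiliser state is determined up to scalar by its stabiliser group. By Lemma~\ref{lem:graph-state-stabilisers}, it is enough to conjugate the generators $S_v := X_v\prod_u Z_u^{G_{v,u}}$ by $U := \tilde{\phasegate}^{\gamma}_w\prod_v \phasegate_v^{-\gamma G_{w,v}^2}$. The goal is to show that each $US_vU^{-1}$ is, up to a phase, a product of the generators of $G\overset{\gamma}{\star}w$. The single-qudit conjugation rules are as follows.
\begin{itemize}
\item $\phasegate$ commutes with $Z$, and $\phasegate X\phasegate^{-1}\propto XZ$, so $\phasegate^{a}X\phasegate^{-a}\propto XZ^{a}$.
\item Since $H X H^{-1}=Z$ and $H Z H^{-1}=X^{-1}$, the gate $\tilde{\phasegate}$ commutes with $X$ and sends $Z$ to $ZX^{\pm 1}$ up to a phase.
\end{itemize}
For $v=w$, the factor $X_w$ is fixed by $\tilde{\phasegate}_w$. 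Each neighbour $u$ contributes $Z_u^{G_{w,u}}$, which commutes with $\phasegate_u$. So $S_w$ is essentially invariant, consistent with the neighbourhood of $w$ being unchanged. For $v\neq w$, the factor $Z_w^{G_{v,w}}$ becomes $Z_w^{G_{v,w}}X_w^{\pm\gamma G_{v,w}}$, and $X_v$ picks up $Z_v^{-\gamma G_{w,v}^2}$. Multiplying by $S_w^{\pm\gamma G_{v,w}}$ removes the $X_w$ term. It leaves the $Z$-exponents $G_{v,u}+\gamma G_{v,w}G_{w,u}$ on each $u\notin\{v,w\}$, which are exactly the entries of $G\overset{\gamma}{\star}w$. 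The diagonal contributions $Z_v^{\pm\gamma G_{w,v}^2}$ cancel. This cancellation is the reason for the $\phasegate^{-\gamma G_{w,v}^2}_v$ corrections.

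The main obstacle is bookkeeping of signs and conventions. These include the sign hidden in $\tilde{\phasegate}$ (the $\ket{-k:X}$ convention for red spiders), the direction of $H$ versus $H^{-1}$, and the choice of $\tau$ with $\tau^2=\omega$. I would check these first on a single edge ($V=\{w,v\}$) before doing the general computation. If a sign is off, the discrepancy is a Pauli, which the paper's footnote already notes is harmless for flow purposes. The remaining step is to fix the global scalar. Compute the amplitude of both sides on one convenient basis state, for instance $\bra{0:X}^{\otimes n}$ or the overlap with $\ket{0\cdots0}$. Compare it against the normalisation $d^{(n+m-2)/4}$ factors of the spiders, or alternatively note that both sides are unitary images of normalised states and so differ only by a phase. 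That phase is then read off from one amplitude.
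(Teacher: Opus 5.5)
The paper gives no proof of this proposition. It imports the equation from Booth and Carette, where it is \emph{derived} inside the stabiliser ZX-calculus from the axioms. Your route is genuinely different: you read both sides semantically, identify the two states through their stabiliser groups, and then fix the scalar. For this paper that is enough, because the proposition is only used through its interpretation (Observation~\ref{obs:zx-lc}, yielding \eqref{eq:lc-Clifford}). The diagrammatic derivation gives more, namely derivability from the axioms with scalars handled automatically; your argument recovers that only by appealing to completeness. In exchange, your argument is elementary and shows why the corrections $\phasegate_v^{-\gamma G_{w,v}^2}$ are needed.

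One step is stated too weakly to give the conclusion. A stabiliser state is fixed by its stabilisers with eigenvalue \emph{exactly} $+1$. If you only show that each $US_vU^{-1}$ matches the new generators up to a phase, you only get $U\ket{G}\propto Z^{\mathbf{s}}\ket{G\overset{\gamma}{\star}w}$ for some $Z$-type Pauli. Your single-amplitude normalisation cannot detect $Z^{\mathbf{s}}$, since it presupposes proportionality. Falling back on ``a Pauli discrepancy is harmless for flow'' would not prove the proposition, which is an exact equation.

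So the phases must be tracked, and they do cancel. Write $c:=G_{v,w}$. Then $\phasegate^{a}X\phasegate^{-a}=\tau^{a}XZ^{a}$ and $\tilde{\phasegate}^{\gamma}Z\tilde{\phasegate}^{-\gamma}=\tau^{\gamma}ZX^{-\gamma}$. So the conjugated $Z_w^c$ carries $X_w^{-\gamma c}$, and you multiply by $S_w^{+\gamma c}$. Reordering gives $(ZX^{-\gamma})^{c}=\omega^{\gamma\binom{c}{2}}Z^{c}X^{-\gamma c}$. Hence the total scalar in front of $US_vU^{-1}$ is $\tau^{-\gamma c^2}\,\tau^{\gamma c}\,\tau^{\gamma c(c-1)}=1$, and
\[ US_vU^{-1}\,S_w^{\gamma c} \;=\; X_v\, Z_w^{c}\prod_{u\notin\{v,w\}} Z_u^{G_{v,u}+\gamma G_{v,w}G_{w,u}} \]
holds exactly, while $US_wU^{-1}=S_w$. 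This is precisely where $\tau^2=\omega$ and the $k^2$ (rather than $k(k-1)$) convention for $\phasegate$ are used.

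For the scalar, $\bra{0:X}^{\otimes n}$ and $\bra{0\cdots 0}$ are awkward choices, because each is moved by part of $U$. The mixed effect $\bra{0:X}_w\otimes\bra{0}^{\otimes (V\setminus\{w\})}$ works better. It is fixed by $U$, since $\bra{0:X}\tilde{\phasegate}^{\gamma}=\bra{0:X}$ and $\bra{0}\phasegate^{a}=\bra{0}$. It also gives the same non-zero amplitude on $\ket{G}$ for every graph, because projecting all qudits but $w$ onto $\ket{0}$ kills every $CZ$ phase. Combine this with the fact that, under the spider normalisation $d^{(n+m-2)/4}$, the graph-state diagram of Observation~\ref{obs:graph-state-zx} is the same $G$-independent multiple of $\ket{G}$. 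This gives the equality exactly, with no leftover scalar.
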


\begin{observation}\label{obs:zx-lc}
	The ZX-diagram equality of Proposition~\ref{prop:zx-lc} implies:
	\[
		\ket{G\overset{\gamma}{\star} w} = \tilde{\phasegate}^\gamma_w \prod_{v\in V} \phasegate^{-\gamma G_{w,v}^2}_v \ket{G}
	\]
	since the matrix $G$ is symmetric and the two-legged spiders satisfy:
	\begin{align*}
		\intf{\input{red-phase.tikz}}
		&= \left(\sum_{k\in\FF_d} \tau^{0} \ketbra{-k:X}{k:X}\right)\left(\sum_{k\in\FF_d} \tau^{\gamma k^2} \ketbra{-k:X}{k:X}\right)
		= \sum_{k\in\FF_d} \tau^{\gamma k^2} \ketbra{k:X}{k:X} = \tilde{\phasegate}^{\gamma} \\
		\intf{\input{phase-G.tikz}} &= \sum_{k\in\FF_d} \tau^{-\gamma G_{v,w}^2 k^2} \proj{k} = \phasegate^{-\gamma G_{v,w}^2}
	\end{align*}
\end{observation}

To prove an analogous result for pivots, we will need the following rewrite rules of the qudit stabiliser ZX-calculus.

\begin{proposition}[{\cite[Proposition~6]{poorQupitStabiliserZXtravaganza2023}}]\label{prop:h-box-merge}
	H-boxes of weight 0 and 1 satisfy \input{lemma-h-box0.tikz} and \input{lemma-h-box1.tikz}.
	Additionally, H-boxes compose as follows:
	\ctikzfig{lemma-h-box-merge}
	where in the first equality $y$ has to be non-zero, but otherwise $x,y\in\ZZ_d$ are arbitrary.
\end{proposition}

\begin{proposition}[{\cite[Proposition~8 of arXiv version]{boothCompleteZXcalculiStabiliser2022}}]\label{prop:h-box-basics}
	The inverse Hadamard is equal to the H-box of weight $-1$, for any $x\in\ZZ_d^*$ the H-boxes of weight $x$ and weight $-x$ are inverse to each other, and repeating a non-zero--weighted H-box three times yields its inverse:\footnote{The definition of the ZX-calculus in \cite{boothCompleteZXcalculiStabiliser2022} differs slightly from the `well-tempered' normalisation used in \cite{poorQupitStabiliserZXtravaganza2023} and given above, but those differences are not relevant to the results used here.}
	\ctikzfig{hada-n-inverse}
\end{proposition}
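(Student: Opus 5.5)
The plan is to prove all three equalities semantically: compute the interpretation of a weight-$x$ H-box as an explicit $d\times d$ matrix, then check each identity by matrix multiplication over $\FF_d$. Since the stabiliser ZX-calculus is complete, a semantic identity, including the scalar, yields the diagrammatic one. Alternatively, the identities can be re-derived diagrammatically from Proposition~\ref{prop:h-box-merge} once the matrix form suggests the right sequence of rewrites.

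\textbf{Step 1: interpretation of the H-box.} Expand the definition: a weight-$a$ H-box is $a$ parallel Hadamard edges between two green spiders. By the interpretation of the Hadamard, $\intf{H} = \sum_k \ketbra{k:Z}{k:X}$, and $\ket{k:X} = H^{-1}\ket{k}$, a single Hadamard edge between $Z$-basis indices $j$ and $k$ contributes a factor $\omega^{jk}$. The $Z$-spiders force all $a$ parallel edges to carry the same pair $(j,k)$. Hence the weight-$a$ H-box is proportional to
\[
F^{(a)} = \sum_{j,k\in\FF_d} \omega^{a j k}\ketbra{j}{k},
\]
with an explicit power of $d$ coming from the spider normalisation. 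Here weight $0$ gives the all-ones matrix, consistent with Proposition~\ref{prop:h-box-merge}, and weight $1$ gives the Hadamard itself.

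\textbf{Step 2: the identities as matrix products.} The key computation is
\[
(F^{(x)}F^{(y)})_{j,l} = \sum_k \omega^{k(xj+yl)} = d\,\delta_{xj+yl,0}.
\]
\begin{itemize}
\item For $y=-x$ with $x\neq 0$, the condition $xj - xl = 0$ forces $j=l$. So $F^{(x)}F^{(-x)} = d\cdot\Id$, which gives the mutual-inverse claim up to scalar.
\item For $y = x$, we get $F^{(x)}F^{(x)} = d\,\multiplier{-1}$, the permutation $\ket{k}\mapsto\ket{-k}$.
\item Therefore $(F^{(x)})^3 = d\,\multiplier{-1}F^{(x)} = d\,F^{(-x)}$, which is the inverse of $F^{(x)}$ up to scalar by the previous bullet.
\item Finally, since weight $1$ is the Hadamard (Proposition~\ref{prop:h-box-merge}) and weight $-1$ is its inverse, the inverse Hadamard equals the weight-$(-1)$ H-box.
\end{itemize}

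\textbf{Main obstacle.} The step I expect to be delicate is the scalar bookkeeping. The spider normalisation factors $d^{(n+m-2)/4}$ have to combine so that the factors of $d$ appearing in $F^{(x)}F^{(-x)} = d\,\Id$ and $(F^{(x)})^3 = d\,F^{(-x)}$ cancel exactly. This is sensitive to the differences between the normalisation of \cite{boothCompleteZXcalculiStabiliser2022} and the well-tempered convention, which is why the footnote declares these scalars irrelevant here. I would first establish everything up to a nonzero global scalar, since that is all the pivot proof uses. I would then fix the scalars by evaluating both sides on $\ket{0}$, if exact equality is needed.
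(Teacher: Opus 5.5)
Your proof is correct. The paper gives no proof of its own: it imports the proposition from Booth--Carette, where it is a derived rule of the calculus. Your route is therefore genuinely different. You compute the interpretation of a weight-$a$ H-box, check the three identities as matrix identities over $\FF_d$, and lift them back via completeness.

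\textbf{What each route buys.} Your semantic argument is short and transparent, and it shows exactly where the field structure enters. Because $x$ is invertible, $x(j-l)=0$ forces $j=l$ and $x(j+l)=0$ forces $j=-l$. Say explicitly that your $y=x$ bullet needs $x\neq 0$, since $F^{(0)}F^{(0)}=dF^{(0)}$. The cost is reliance on the completeness theorem rather than an explicit derivation from the axioms, which is what the cited syntactic route provides.

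For this paper's purposes you do not even need completeness. The proposition is only used as one rewrite step in the proof of Lemma~\ref{lem:pivot-helper}. The end product of that chain, Proposition~\ref{prop:pivot}, is a statement about states, so semantic validity of each step suffices.

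\textbf{Scalars.} The scalar bookkeeping you flag as the main obstacle is immediate in the well-tempered normalisation stated in the paper, reading the Hadamard as the unitary DFT. Each of the two spiders contributes $d^{(a-1)/4}$ and the $a$ parallel Hadamards contribute $d^{-a/2}$. So the weight-$a$ H-box is exactly $d^{-1/2}F^{(a)}$ for every $a$, independently of the chosen representative of $a$. Your matrix computation then gives $d^{-1/2}F^{(x)}\cdot d^{-1/2}F^{(-x)}=\Id$ and $\bigl(d^{-1/2}F^{(x)}\bigr)^3=d^{-1/2}F^{(-x)}$ exactly. No separate scalar-fixing step, such as evaluating on a basis vector, is needed.
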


\begin{corollary}\label{cor:double-hada-mult}
	Following from the two propositions above, for any $x\in\ZZ_d$, we have:
	\ctikzfig{cor-double-hada}
\end{corollary}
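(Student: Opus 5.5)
The plan is to derive the equation in Corollary~\ref{cor:double-hada-mult} purely diagrammatically, by rewriting both plain Hadamard boxes into H-boxes and then applying the composition rules of Proposition~\ref{prop:h-box-merge}. No return to the matrix interpretation $\intf{\cdot}$ should be needed.

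\textbf{Step 1: normalise the Hadamards.} Rewrite each plain Hadamard as an H-box of weight $1$, using the weight-$1$ rule of Proposition~\ref{prop:h-box-merge}. Rewrite each inverse Hadamard as an H-box of weight $-1$, using Proposition~\ref{prop:h-box-basics}. After this, every box in the diagram is an H-box with a weight in $\ZZ_d$, and the statement becomes an identity between H-box configurations parametrised by $x$.

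\textbf{Step 2: split on $x$.}
\begin{itemize}
\item If $x=0$, the weight-$0$ rule of Proposition~\ref{prop:h-box-merge} disconnects the relevant wire. Both sides should then reduce to the same disconnected diagram, possibly after spider fusion.
\item If $x\neq 0$, apply the first (serial) composition equality of Proposition~\ref{prop:h-box-merge}. Its side condition needs a non-zero weight in the $y$-slot. We put the weight $\pm 1$ coming from the normalised Hadamard into that slot, so the condition holds regardless of $x$. The merged box should then have a weight expressed in terms of $x$ and $\pm1$.
\end{itemize}
If the target form is not reached directly, the remaining tools are the inverse relations of Proposition~\ref{prop:h-box-basics}: weight-$x$ and weight-$(-x)$ boxes cancel, and three copies of a box give its inverse. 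These should cancel any residual pair of boxes or adjust signs.

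\textbf{Main obstacle.} The delicate point is choosing the orientation and order of the merges so that the non-zero-weight hypothesis is always met, and so that the signs (weights $x$ versus $-x$) come out exactly as stated. This matters because the Hadamard has order $4$ rather than $2$, so a wrong orientation silently changes signs.

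I would first try to keep a unit-weight box in the $y$-slot throughout. If that fails for one orientation, I would apply the triple-repetition identity of Proposition~\ref{prop:h-box-basics} to replace the Hadamard by three inverse Hadamards, and then merge in the opposite direction.
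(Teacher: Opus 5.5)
\textbf{There is a genuine gap.} Your plan never pins down a rewrite that turns the left-hand side into the single box $[-x]$; write $[a]$ for the H-box of weight $a$. The left-hand side is three boxes in series: $H$, $H$ and $[x]$.

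For $x\neq0$ you rely on a two-box serial merge, with the unit weight of a normalised Hadamard in the non-zero slot, ``so that the merged box has a weight in terms of $x$ and $\pm1$''. This cannot work. Semantically $[a]$ is, up to a scalar, $\sum_{j,k}\omega^{ajk}|j\rangle\langle k|$. Two boxes in series give $\sum_{j,l:\,aj+bl=0}|j\rangle\langle l|$, which for $a,b\neq0$ is the permutation $|l\rangle\mapsto|{-a^{-1}b\,l}\rangle$ and not an H-box. So a serial merge of two boxes only trades them for another pair. You can see this in the proof of Lemma~\ref{lem:pivot-helper}, where $[e_1][-\epsilon]$ becomes $[-\epsilon^{-1}e_1]H$. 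With weight $1$ in the non-zero slot the merge just returns $[x]H$ or $HH$, so it makes no progress.

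Your $x=0$ branch has a similar hole. Once $[0]$ disconnects, the left-hand side still has $HH$ attached to a one-legged $Z$-spider. Spider fusion does not remove it; you would need a colour-change rule, which is not among the two propositions.

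\textbf{The missing idea is to fuse three boxes at once.} Proposition~\ref{prop:h-box-merge} has a composition rule turning $[a]$, an inverse Hadamard and $[b]$ into $[ab]$ for arbitrary $a,b$. This is how it is applied in the proof of Lemma~\ref{lem:pivot-helper}: $[\epsilon^{-1}e_1]$, $H^{-1}$, $[-f_1]$ become $[-\epsilon^{-1}e_1f_1]$. To create the inverse Hadamard you need, use Proposition~\ref{prop:h-box-basics}:
\begin{itemize}
\item $HHH=H^{-1}$, so $H^4$ is the identity and $HH=H^{-1}H^{-1}$;
\item the first $H^{-1}$ equals $[-1]$.
\end{itemize}
Hence $HH[x]=[-1]\,H^{-1}\,[x]=[-x]$, uniformly in $x$, so no case split is needed.

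For $x\neq0$ only, you could instead put $x$ itself, not $\pm1$, into the non-zero slot. Then $[x][x]$ and $HH$ are both the negation map $|k\rangle\mapsto|-k\rangle$, and the triple-repetition rule gives $HH[x]=[x][x][x]=[-x]$.

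Your fallback of replacing $H$ by three inverse Hadamards points in the right direction. As written, though, it still feeds into a two-box merge, so it does not close the argument.
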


\begin{lemma}[{\cite[Lemma~31]{poorQupitStabiliserZXtravaganza2023}}]\label{lem:bialgebra}
	The bialgebra law holds for any $m,n\in\mathbb{N}$ such that $m,n\geq 2$:
	\ctikzfig{bialgebra}
	where on the right-hand side each of the $m$ $X$-spiders is connected to each of the $n$ $Z$-spiders.
\end{lemma}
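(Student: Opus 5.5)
The plan is to verify the equation semantically, using the interpretation $\intf{\cdot}$ given above. Both sides are built only from phase-free spiders and plain wires, so it suffices to show that both sides denote the same linear map $(\CC^d)^{\otimes m}\to(\CC^d)^{\otimes n}$, scalars included. I would rather compute matrix entries directly than derive the law from other rules of \cite{poorQupitStabiliserZXtravaganza2023}: the interpretation of phase-$(0,0)$ spiders is simple enough that the comparison is mechanical.

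\textbf{Step 1: single spiders in the $Z$-basis.} For a phase-free green spider with $p$ inputs and $q$ outputs, the interpretation gives matrix entries
\[
\langle j_1\dots j_q|\,\cdot\,|i_1\dots i_p\rangle = d^{\frac{p+q-2}{4}}\,[\text{all } i_s, j_t \text{ equal}].
\]
For a phase-free red spider, I expand $\ket{k:X}=H^{-1}\ket{k}$ in the computational basis, so each matrix element of $\ketbra{-k:X}{k:X}$ is a normalised sum of characters $\omega^{\pm k(\cdot)}$. Summing over $k$ should then give
\[
\langle j_1\dots j_q|\,\cdot\,|i_1\dots i_p\rangle = c_{p,q}\,\Bigl[\textstyle\sum_s i_s=\sum_t j_t\Bigr]
\]
for an explicit power $c_{p,q}$ of $d$. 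Pinning down the exact sign of the linear constraint is the first place where care is needed, because of the $-k$ in the definition.

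\textbf{Step 2: both sides as constraint sums.} On the left, the $Z$-spider attached to the $m$ inputs and the $X$-spider attached to the $n$ outputs are joined by one internal wire with summed label $\ell$. The left-hand side therefore becomes a sum over $\ell$ of the product of a green constraint and a red constraint. Collapsing the green constraint leaves: all inputs equal to $\ell$, with $\ell$ linked linearly to $\sum_t j_t$.

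On the right, each of the $m$ $X$-spiders sits on one input leg and each of the $n$ $Z$-spiders on one output leg, and there are $mn$ internal wires. Each green spider forces all its incident internal wires to carry its output value $j_t$. Each red spider then imposes a linear relation between its input $i_s$ and $\sum_t j_t$. The result is the same support: all $i_s$ equal and satisfying that single relation with the output sum. Checking that the signs agree, that is, that the constraint on the right is exactly the one obtained on the left, is the crux.

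\textbf{Step 3: matching scalars.} I would collect the powers of $d$: on the left from two spiders of arities $m+1$ and $n+1$; on the right from $m$ red spiders of arity $n+1$ and $n$ green spiders of arity $m+1$, together with any factors of $d$ from the internal sums in Step 1. I expect the ``well-tempered'' normalisation to make these agree exactly, and that is what I would verify.

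\textbf{Main obstacle.} I expect the main difficulty to be the sign conventions (the $-k$ in the red spider, and $H$ versus $H^{-1}$). If the constraints on the two sides differ by a sign on some wires, I would first re-derive the red spider's $Z$-basis form carefully. If a genuine mismatch remained, I would fall back to citing \cite[Lemma~31]{poorQupitStabiliserZXtravaganza2023}, which states this law directly.
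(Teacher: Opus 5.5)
Your semantic verification is correct, but it takes a different route from the paper. The paper gives no proof of this lemma; it imports the law from \cite[Lemma~31]{poorQupitStabiliserZXtravaganza2023}, where it is proved as a lemma of that calculus.

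One correction to your tentative Step~1. With normalised $|k{:}X\rangle = H^{-1}|k\rangle$ you get $\langle j|{-k}{:}X\rangle = d^{-1/2}\omega^{jk}$ and $\langle k{:}X|i\rangle = d^{-1/2}\omega^{ik}$. So a phase-free red spider with $p$ inputs and $q$ outputs has entries $d^{\frac{2-p-q}{4}}\,[\sum_s i_s + \sum_t j_t = 0]$, not $[\sum_s i_s = \sum_t j_t]$. The $-k$ therefore makes the red constraint treat inputs and outputs alike, so bending legs with $Z$-basis cups changes nothing, which matches the paper's ``only connectivity matters''.

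This sign causes no mismatch. Both sides reduce to $d^{\frac{m-n}{4}}\,[i_1=\dots=i_m=-\sum_t j_t]$. The scalar is $d^{\frac{m-1}{4}}\,d^{\frac{1-n}{4}}$ on the left and $(d^{\frac{1-n}{4}})^m (d^{\frac{m-1}{4}})^n$ on the right. The internal sums contribute no extra factors, because every internal label is forced to a unique value.

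As for what each route buys: the citation supplies an equation of the calculus that can be used in the purely diagrammatic rewriting of Lemma~\ref{lem:pivot-helper}. Your computation is self-contained, but it only shows that the two sides have equal interpretations under $\intf{\cdot}$. That is all the paper actually needs, since the lemma is used only to reach an equality of interpretations in Proposition~\ref{prop:pivot}. Moreover, by the completeness results the paper recalls, semantic equality also yields derivability.
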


\begin{lemma}[{\cite[Lemma~53]{poorQupitStabiliserZXtravaganza2023}}]\label{lem:H-box-push}
	H-boxes can be `pushed' through spiders for any $a,b\in\ZZ_d$ and $x\in\ZZ_d^*$:
	\ctikzfig{lemma-h-box-push}
\end{lemma}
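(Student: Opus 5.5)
Since the diagrams of Lemma~\ref{lem:H-box-push} are only given pictorially, I will prove it semantically. The stabiliser ZX-calculus interpretation $\intf{\cdot}$ is sound, and both sides of each equation are finite linear maps. So it suffices to compute $\intf{\cdot}$ of both sides on computational-basis states and compare. As a cross-check, I will also give a diagrammatic derivation using only the rules already stated (Propositions~\ref{prop:h-box-merge} and~\ref{prop:h-box-basics}, Corollary~\ref{cor:double-hada-mult}, Lemma~\ref{lem:bialgebra}).

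\textbf{Step 1: factor the H-box.} I first express the H-box of weight $x\in\ZZ_d^*$ as a Hadamard composed with a multiplier. From the definition ($x$ parallel Hadamard edges between two $Z$-spiders) and spider fusion, its interpretation is $\sum_{j,k}\omega^{xjk}\ketbra{j}{k}$ up to normalisation. This equals $H\,\multiplier{x}$, and equivalently $\multiplier{x}H$, since $\omega^{xjk}$ is symmetric in how $x$ is attached. Diagrammatically this is the content of Corollary~\ref{cor:double-hada-mult}: an H-box of weight $x$ is a Hadamard followed by a two-legged $X$--$Z$ gadget encoding $\multiplier{x}$.

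\textbf{Step 2: push the two factors separately.}
\begin{enumerate}[(i)]
\item The multiplier $\multiplier{x}$ is a permutation of the $Z$-basis, $\ket{k}\mapsto\ket{xk}$. Applied to one leg of a $Z$-spider $\sum_k\tau^{ak+bk^2}\ket{k}\t{n}\bra{k}\t{m}$, it can be moved to all other legs, with $\multiplier{x^{-1}}$ on the input side. This reindexes $k\mapsto x^{-1}k$, so the phase becomes $(x^{-1}a,\,x^{-2}b)$, i.e.\ it is scaled as $(a,b)\mapsto(x^{-1}a,x^{-2}b)$, or the inverse scaling depending on orientation.
\item The Hadamard converts the $Z$-spider into an $X$-spider with Hadamards (or inverse Hadamards) on the remaining legs. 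This uses $H Z H^{-1}=X^{-1}$ and the sign convention $\ket{-k:X}$ built into the red spider. The inverse Hadamards are then rewritten as weight $-1$ H-boxes via Proposition~\ref{prop:h-box-basics}.
\end{enumerate}

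\textbf{Step 3: recombine and match.} I recombine the multipliers and Hadamards on each remaining leg into H-boxes of the appropriate weight ($x$, $-x$, or $x^{-1}$) using Proposition~\ref{prop:h-box-merge}. The result should match the right-hand side of the lemma.

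\textbf{Main obstacle.} The step I expect to be hardest is the bookkeeping of signs and of the quadratic phase under these steps. The red-spider interpretation maps $\ket{k:X}\mapsto\ket{-k:X}$, and the well-tempered normalisation scalar $d^{(n+m-2)/4}$ must agree on both sides. I will settle this by direct computation: fix basis inputs and evaluate both sides as Gauss-type sums $\sum_k\tau^{ak+bk^2}\omega^{x jk}$. After the substitution $k\mapsto x^{-1}k$, these should coincide term by term. This uses only that $x$ is invertible in $\FF_d$, which explains the hypothesis $x\in\ZZ_d^*$.
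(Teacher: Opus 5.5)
The paper gives no proof of this lemma. It is imported verbatim from \cite[Lemma~53]{poorQupitStabiliserZXtravaganza2023}, so what you offer is an independent argument. A purely semantic check would be an acceptable route for the paper's purposes. The lemma is only used as a rewrite step inside Lemma~\ref{lem:pivot-helper}, whose conclusion is interpreted in the proof of Proposition~\ref{prop:pivot}, and $\intf{\cdot}$ is compositional. (Soundness is not what licenses this; to obtain a \emph{derivable} rule from equal interpretations you would need completeness.)

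As written, however, your proposal does not prove anything specific. You never state which equation you are checking. The decisive Step~3 is only the assertion that the result `should match the right-hand side', with the phase left as `$(x^{-1}a,x^{-2}b)$ or the inverse' and the new weights as `$x$, $-x$, or $x^{-1}$'. Those choices are the entire content of the lemma. Its first use in Lemma~\ref{lem:pivot-helper} shows its shape: a $Z$-spider with an H-box of weight $x$ on one leg equals an $X$-spider whose corresponding leg is bare and whose every other leg carries an H-box of weight $-x$. (Corollary~\ref{cor:double-hada-mult} is also not the factorisation you attribute to it. In the last step of Lemma~\ref{lem:pivot-helper} it absorbs two plain Hadamards into an adjacent H-box by negating its weight.)

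The one concrete identity you commit to in Step~1 is false. We have $\multiplier{x}H\ket{k}=\sum_j\omega^{jk}\ket{xj}=\sum_j\omega^{x^{-1}jk}\ket{j}$, so $\multiplier{x}H=H\multiplier{x^{-1}}$. Hence the H-box of weight $x$ is $H\multiplier{x}=\multiplier{x^{-1}}H$, not $\multiplier{x}H$. Symmetry of $\omega^{xjk}$ in $j,k$ only gives transpose-invariance, and $(\multiplier{x})^{T}=\multiplier{x^{-1}}$.

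This is exactly where your bookkeeping would go wrong:
\begin{itemize}
\item Pushing $\multiplier{x}$ through the spider changes its phase to $(x^{-1}a,x^{-2}b)$ and leaves $\multiplier{x^{-1}}$ on the other legs.
\item Pushing $H$ turns it into an $X$-spider and leaves $H^{-1}=H\multiplier{-1}$ on the other legs.
\item With the correct commutation rule, the composite $\multiplier{x^{-1}}H\multiplier{-1}$ equals $H\multiplier{-x}$, an H-box of weight $-x$. Your version gives weight $-x^{-1}$.
\end{itemize}

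The basis-state check you sketch needs no Gauss sums. On each other leg, $\sum_j\omega^{j(\ell-xm)}=d\,\delta_{\ell,xm}$ collapses the $X$-spider sum to the single term $\ell=xk$. The phases then agree if and only if the $X$-spider carries $(x^{-1}a,x^{-2}b)$. What remains is to check the well-tempered scalars, which you also left undone.
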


Having introduced the results required to prove the pivoting transformation, we are now ready to begin.
We follow the proof of \cite[Lemma~10]{poorQupitStabiliserZXtravaganza2023} with the only difference being that we have dangling output wires (and trivial phases) on the two endpoints of the pivot edge.
Nevertheless, we reproduce the proof here as we are interested in the resulting local unitaries on the two endpoints of the pivot edge, which are not immediately obvious from the pivot-and-delete version of the proof.

\begin{lemma}\label{lem:pivot-helper}
	The following special case of the qudit pivot rule holds:
	\[
	\input{pivot-helper0.tikz} \quad = \quad \input{pivot-helper-final.tikz}
	\]
	where $w_{m,n} = -\epsilon^{-1} e_m f_n$ for all $m\in [k]$ and all $n\in [j]$.
\end{lemma}
\begin{proof}
    \begin{alignat*}{2}
    &\phantom{\fixedaligneqq{\phantom{\ref{lem:H-box-push}}}} \input{pivot-helper0.tikz} \quad
    &&\fixedaligneqq{\ref{lem:H-box-push}} \quad \input{pivot-helper1.tikz} \qquad\qquad
    \\[10pt]
    &\fixedaligneqq{\ref{prop:h-box-merge}} \quad \input{pivot-helper2.tikz} \quad
    &&\fixedaligneqq{\ref{lem:bialgebra}} \quad \input{pivot-helper3.tikz}
    \\[10pt]
    &\fixedaligneqq{\ref{lem:H-box-push}} \quad \input{pivot-helper4.tikz} \quad
    &&\fixedaligneqq{\ref{lem:H-box-push}} \quad \input{pivot-helper5.tikz}
    \\[10pt]
    &\fixedaligneqq{\ref{lem:H-box-push}} \quad \input{pivot-helper6.tikz} \quad
    &&\fixedaligneqq{\ref{prop:h-box-merge}} \quad \input{pivot-helper7.tikz}
    \\[10pt]
    &\fixedaligneqq{\ref{prop:h-box-basics}} \quad \input{pivot-helper8.tikz} \quad
    &&\fixedaligneqq{\ref{lem:H-box-push}} \quad \input{pivot-helper9.tikz}
    \\[10pt]
    &\fixedaligneqq{\ref{prop:h-box-merge}} \quad \input{pivot-helper10.tikz} \quad
    &&\fixedaligneqq{\ref{lem:H-box-push}} \quad \input{pivot-helper11.tikz}
    \\[10pt]
    &\fixedaligneqq{\ref{cor:double-hada-mult}} \quad \input{pivot-helper12.tikz} \quad
    &&\fixedaligneqq{\phantom{\ref{lem:H-box-push}}} \quad \input{pivot-helper-final.tikz}
    \end{alignat*}
\end{proof}

\begin{lemma}\label{lem:qudit-pivot}
	The following qudit pivot rule holds:
	\[
		\input{pivot-start.tikz} \quad = \quad \input{pivot-end.tikz}
	\]
	where $\alpha_m = -2 \epsilon^{-1} e_m f_m$ and $w_{m,n} = -\epsilon^{-1} (e_m f_n + e_n f_m)$ for all $m,n\in [k]$.
\end{lemma}
\begin{proof}
	This is analogous to the proof of~\cite[Lemma~10]{poorQupitStabiliserZXtravaganza2023}, using our Lemma~\ref{lem:pivot-helper} instead of their Lemma~9.
\end{proof}

We can now show which local unitary connects two graph states related by a pivot operation.

\begin{proof}[Proof of Proposition~\ref{prop:pivot}]
	Express the graph states in the ZX-calculus as in Observation~\ref{obs:graph-state-zx} and note that we may treat non-existent edges as H-boxes of weight 0 (\cf Proposition~\ref{prop:h-box-merge}).
	Apply Lemma~\ref{lem:qudit-pivot} (bending the inputs into outputs) and use the standard interpretation to find
	\[
		\ket{G} = \left( \intf{\input{h-box-epsilon.tikz}}_w \otimes \intf{\input{h-box-epsilon.tikz}}_{w'} \otimes \bigotimes_{v\in V\setminus\{w,w'\}} \intf{\input{phase.tikz}}_v \right) \ket{G\wedge w w'}
	\]
	where $\alpha_v = - 2 \epsilon^{-1} G_{v,w} G_{v,w'}$.
	The interpretation of the graph underlying the right-hand side of Lemma~\ref{lem:qudit-pivot} as $\ket{G\wedge w w'}$ is correct by inspection.	
	
	Now, it is straightforward to verify that
	\begin{align*}
		\intf{\input{h-box-epsilon.tikz}} &= \sum_{j,k\in\FF_d} \omega^{\epsilon j k } \ketbra{j}{k} = F^{(\epsilon)} \\
		\intf{\input{phase.tikz}} &= \sum_{k\in\FF_d} \tau^{\alpha_v k^2} \proj{k} = \phasegate^{\alpha_v}
	\end{align*}
	Note that $(F^{(\gamma)})^{-1} = F^{(-\gamma)}$ for any $\gamma\in\FF_d^*$.
	Hence the desired result follows by moving the single-qudit operators to the other side of the equation, noting that we may allow $\phasegate$ operators on $w$ and $w'$ as $\alpha_w = \alpha_{w'} = 0$, and that the single-qudit operators all commute, meaning $\bigotimes$ can be changed to $\prod$.
\end{proof}

\subsection{Proving flow reversibility}\label{Sec:proof reversibility}

Before proving Theorem~\ref{Th:flow-rev}, we provide a build-up with various supporting observations, like we have done in our proof for the qubit case \cite[Subsection 3.3]{mitosekAlgebraicInterpretationPauli2026}.

We start by naming relevant types of measurements:

\begin{definition}
    We distinguish two types of measurements denoted `X-like measurements' $\Xlike$ and `Z-like measurements' $\Zlike$ as follows (as a reminder, $B$ is the set of vertices internal vertices, \cf Definition~\ref{Def:OFG}):\begin{align*}
        \Xlike &:= \{ v \in B \mid \ldX(v) = 0 \}\\
        \Zlike &:= \{ v \in B \mid \ldX(v) \ne 0 \}
    \end{align*}
\end{definition}

We exclude each vertex that is an input and an output simultaneously:

\begin{observation}
    Vertices that are both inputs and outputs do not impact the existence of \flow at all, neither in $\Gamma$ nor in its reverse $\Gamma'$.
    Thus, from now on, we assume $I \cap O = \emptyset$ and $V = I \cup \Xlike \cup \Zlike \cup O$.
\end{observation}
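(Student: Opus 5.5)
The plan is to exploit the fact that Theorem~\ref{Th:main result} characterises the existence of \flow purely in terms of the flow-demand matrix $M$ and the order-demand matrix $N$. So it suffices to show that vertices in $I \cap O$ leave $M$ and $N$ unchanged, both for $\Gamma$ and for $\Gamma'$.

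\textbf{Step 1: such vertices are invisible to $A$, $M$ and $N$.} Fix $v \in I \cap O$. Then $v \notin \comp{O}$ and $v \notin \comp{I}$. Hence $v$ labels neither a row nor a column of the reduced adjacency matrix $A = G_{\comp{O}}^{\comp{I}}$ of Definition~\ref{Def:ram}. Moreover, for $u \in \comp{O}$ and $w \in \comp{I}$, the entry $A_{u,w} = G_{u,w}$ does not involve $v$, so the edges incident to $v$ never enter $A$. Also $\ld$ is only defined on $\comp{O}$, so $\ld(v)$ does not exist. Since $M$ and $N$ (Definitions~\ref{Def:fdm} and~\ref{Def:odm}) are built entrywise from $A$, $\ld$ and Kronecker deltas indexed by $\comp{O} \times \comp{I}$, neither matrix depends on $v$.

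\textbf{Step 2: removing such vertices from $\Gamma$.} Let $\Gamma_0 := (G - (I\cap O), I \setminus O, O \setminus I, \ld)$. Its non-output set equals $\comp{O}$ and its non-input set equals $\comp{I}$. By Step~1, $\Gamma_0$ has exactly the same $M$ and $N$ as $\Gamma$. Therefore, by Theorem~\ref{Th:main result}, $\Gamma$ has \flow if and only if $\Gamma_0$ does. (Via Corollary~\ref{Cor:flow finding intro}, the same matrix $C$ even serves both.)

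\textbf{Step 3: compatibility with reversal.} The condition $|I| = |O|$ is preserved, since the same number of vertices is removed from $I$ and from $O$. The reverse $\Gamma' = (G, O, I, \ld')$ again has $v \in I \cap O$ for every such $v$. Its label $\ld'(v)$ is irrelevant, because $v$ is an output of $\Gamma'$ and so lies outside the domain of the labelling. The remaining labels are those prescribed by Definition~\ref{def:reverse-LOFG}: former outputs in $O \setminus I$ receive $(0,1)$, and vertices in $B$ keep $\ld$. These are exactly the labels of the reverse of $\Gamma_0$, so $(\Gamma_0)' = (\Gamma')_0$. Applying Step~2 to $\Gamma'$ then shows that $\Gamma'$ has \flow if and only if $(\Gamma_0)'$ does.

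The only point needing care is this commutation of ``reverse'' with ``delete $I \cap O$'', namely that the labels agree. It is immediate because Definition~\ref{def:reverse-LOFG} labels vertices only by their membership in $O$ and $B$, which deletion does not alter for the surviving vertices. Hence we may assume $I \cap O = \emptyset$, and then $V = I \cup \Xlike \cup \Zlike \cup O$ holds by definition of $B$, $\Xlike$ and $\Zlike$.
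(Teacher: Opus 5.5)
Your proposal is correct, and it supplies the reasoning the paper leaves implicit (the observation is stated without proof). Vertices of $I \cap O$ lie in neither $\comp{O}$ nor $\comp{I}$, so they never enter $A$, $M$, $N$ or the domain of $\ld$; Theorem~\ref{Th:main result} (or already Definition~\ref{Def:flow}) then gives the equivalence, and your extra check that deleting $I\cap O$ preserves $|I|=|O|$ and commutes with reversal is exactly what justifies the reduction used in the proof of Theorem~\ref{Th:flow-rev}.
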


Given a \LOFG and some vertex $v \in \comp{O}$, \flow is preserved if $\ld(v)$ is replaced by its scalar multiple for any non-zero scalar in $\Fd^*$ \cite[Section 3]{boothOutcomeDeterminismMeasurementbased2023}.
For example, in the \LOFG from Figure~\ref{Fig:running example}, vertex $d$ could instead be given the measurement label $3 \cdot (4,2) = (2,1)$ and this would preserve the existence of \flow.

Now, for any measurement label $(a,b)$, there exists $c\in\FF_d$ such that $(ac,bc)$ is of the form $(0,1)$ or $(1,k)$ for some $k \in \Fd$: if $a  = 0$, then necessarily $b \ne 0$, so taking $c = b^{-1}$ works; and if $a \ne 0$, taking $c = a^{-1}$ works.
Therefore, without loss of generality, one may work with normalised {\LOFG\unskip}s in the following sense:

\begin{definition}\label{Def:normalised}
    A \LOFG is called \emph{normalised} if all measurements labels are of the form $(0,1)$ or $(1,k)$ for some $k \in \Fd$.
\end{definition}

For example, normalisation of the \LOFG from Figure~\ref{Fig:running example} results in the \LOFG in Figure~\ref{fig:normalisation}.

\begin{figure}
    \centering
    $\scalediag[1]{ex1norm}$
    \caption{The \LOFG from Figure~\ref{Fig:running example} after normalisation of measurement labels.}
    \label{fig:normalisation}
\end{figure}

After the normalisation of measurement labels, we will now introduce some notation related to the graph part of a \LOFG.
Since we identify the graph with its adjacency matrix, let us name relevant submatrices of $G$ as follows, where $*$ stands for an unnamed submatrix:\begin{equation*}
    G =: \scalemath{0.85}{\begin{pNiceArray}{ccc}[first-row, last-col]
        I & \Xlike \cup O & \Zlike & \\
        * & G_{00} & G_{01} & I \cup \Xlike \\
        * & * & * & O \\
        * & G_{10} & G_{11} & \Zlike
    \end{pNiceArray}}
\end{equation*}

Now, we can rewrite the flow-demand matrix in block form:

\begin{observation}\label{Obs:fdm-block}
    Let $\Gamma = (G,I,O,\ld)$ be a normalised \LOFG which satisfies $|I|=|O|$ and $I \cap O = \emptyset$; and let $\Gamma'$ be its reverse (cf.\ Definition~\ref{def:reverse-LOFG}).
    Then the flow-demand matrix $M$ of $\Gamma$ and the flow-demand matrix $M'$ of $\Gamma'$ satisfy:\begin{equation*}
        M = \scalemath{0.85}{\begin{pNiceArray}{cc}[first-row, last-col]
            \Xlike \cup O & \Zlike & \\
            G_{00} & G_{01} & I \cup \Xlike \\
            0 & \Id & \Zlike
        \end{pNiceArray}}
        \quad \text{and} \quad
        M' = \scalemath{0.85}{\begin{pNiceArray}{cc}[first-row, last-col]
            I \cup \Xlike & Z & \\
            (G_{00})^T & (G_{01})^T & \Xlike \cup O \\
            0 & \Id & \Zlike
        \end{pNiceArray}} = \scalemath{0.85}{\begin{pNiceArray}{cc}
            G_{00} & 0 \\
            G_{10} & \Id
        \end{pNiceArray}}^T
    \end{equation*}
\end{observation}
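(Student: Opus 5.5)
The plan is to compute both matrices entry by entry from Definition~\ref{Def:fdm}. Under the standing assumptions $I\cap O=\emptyset$ and $V=I\cup\Xlike\cup\Zlike\cup O$, normalisation fixes every label, so each row of a flow-demand matrix is either a row of $G$ or a standard unit row.

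\textbf{Step 1: the matrix $M$.} For $\Gamma$ the rows of $M$ are indexed by $\comp{O}=I\cup\Xlike\cup\Zlike$ and the columns by $\comp{I}=\Xlike\cup\Zlike\cup O$. I group the columns as $\Xlike\cup O$ followed by $\Zlike$, and the rows as $I\cup\Xlike$ followed by $\Zlike$.
\begin{itemize}
\item For $v\in I$, Definition~\ref{Def:LOFG} gives $\ld(v)=(0,*)$, so normalisation makes $\ld(v)=(0,1)$. For $v\in\Xlike$ we have $\ldX(v)=0$ by definition, so again $\ld(v)=(0,1)$.
\item For these rows Definition~\ref{Def:fdm} gives $M_{v,u}=1^{-1}A_{v,u}=G_{v,u}$.
\item Restricted to columns $\Xlike\cup O$ this is exactly the block $G_{00}$, and restricted to columns $\Zlike$ it is $G_{01}$.
\item For $v\in\Zlike$ we have $\ldX(v)\neq 0$, so normalisation gives $\ld(v)=(1,k)$ and $M_{v,u}=\delta_{v,u}$.
\item Since $\Zlike\cap(\Xlike\cup O)=\emptyset$, these rows are $0$ on the columns $\Xlike\cup O$ and $\Id$ on the columns $\Zlike$.
\end{itemize}

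\textbf{Step 2: the matrix $M'$.} For $\Gamma'=(G,O,I,\ld')$ the rows are indexed by $\comp{I}=\Xlike\cup O\cup\Zlike$ and the columns by $\comp{O}=I\cup\Xlike\cup\Zlike$. By Definition~\ref{def:reverse-LOFG}, the new inputs $O$ get label $(0,1)$ and the vertices of $B$ keep their labels.
\begin{itemize}
\item Rows in $\Xlike\cup O$ have label $(0,1)$, so they equal the corresponding rows of $G$.
\item On columns $I\cup\Xlike$ this block is $G_{\Xlike\cup O,\,I\cup\Xlike}$. By symmetry of the adjacency matrix it equals $(G_{00})^T$.
\item On columns $\Zlike$ the block is $G_{\Xlike\cup O,\,\Zlike}=(G_{\Zlike,\,\Xlike\cup O})^T=(G_{10})^T$. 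I would remark that the middle expression in the statement should read $(G_{10})^T$; only this version is consistent with the final equality.
\item Rows in $\Zlike$ still have normalised label $(1,k)$. They therefore give $0$ on columns $I\cup\Xlike$ and $\Id$ on columns $\Zlike$.
\end{itemize}

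\textbf{Step 3: the final identity.} Transposing blockwise,
\[
\begin{pmatrix} G_{00} & 0\\ G_{10} & \Id\end{pmatrix}^T=\begin{pmatrix} G_{00}^T & G_{10}^T\\ 0 & \Id\end{pmatrix},
\]
which is $M'$ as computed in Step 2.

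\textbf{Main obstacle.} There is no real mathematical difficulty; the work is bookkeeping. The labels of $M'$ must be carefully matched to the named blocks of $G$, which requires using the symmetry $G=G^T$. The row and column orderings must also be fixed consistently so that the block transposes line up.
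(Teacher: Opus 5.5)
Your proposal is correct and is the direct computation the paper leaves implicit, since the observation is stated without a separate proof. You evaluate Definition~\ref{Def:fdm} row by row using the normalised labels: $(0,1)$ on $I\cup\Xlike$ (resp.\ on $\Xlike\cup O$ in $\Gamma'$) and $(1,k)$ on $\Zlike$, then use $G=G^T$ to identify the blocks. Your correction is also right: the top-right block of $M'$ is $G_{\Xlike\cup O}^{\Zlike}=(G_{10})^T$, not $(G_{01})^T$, which has the wrong shape. This is the version that matches the final equality and the way $M'$ and $C'$ are used in the proof of Theorem~\ref{Th:flow-rev}.
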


Similarly, we can present the order-demand matrix in block form: 

\begin{observation}\label{Obs:odm-block}
    Let $\Gamma = (G,I,O,\ld)$ be a normalised \LOFG which satisfies $|I|=|O|$ and $I \cap O = \emptyset$; and let $\Gamma'$ be its reverse.
    In the order-demand matrix $N$ of $\Gamma$, rows corresponding to vertices in $I$ and in $\Xlike$ are defined analogously; yet this results in the rows corresponding to vertices in $I$ being identically zero.
    We can write the order-demand matrix as a product $N = PS$, where $P$ is a diagonal $\comp{O} \times \comp{O}$ matrix identifying the internal vertices, \ie:\begin{equation*}
        P := \diag{1 \text{ for } v \in B \text{ and } 0 \text{ for } v \in I} \quad \text{and} \quad S := \scalemath{0.85}{\begin{pNiceArray}{cc}[first-row, last-col]
            \Xlike \cup O & \Zlike & \\
            J & 0 & I \cup \Xlike \\
            G_{10} & G_{11} - D & \Zlike
        \end{pNiceArray}}
    \end{equation*}
    \noindent with $J_{vw} = \delta_{vw}$ and $D$ being a diagonal matrix expressing $\ldZ(v)$, \ie $D = \diag{\ldZ(v) \text{ for } v \in \Zlike}$.
    From now on, we write $K_{11} := G_{11} - D$.
    Note that $K_{11}$ is symmetric because $G_{11}$ is a diagonal block of the (symmetric) adjacency matrix and $D$ is diagonal.
    Moreover, while $J$ is square, it is not generally the identity matrix because the sets of row and column labels need not match.
    
    Similarly to $N$, the order-demand matrix $N'$ of $\Gamma'$ can be written as a product $N' = P'S'$ where $P'$ is a diagonal $\comp{I}\times\comp{I}$ matrix:\begin{equation*}
        P' := \diag{1 \text{ for } v \in B \text{ and } 0 \text{ for } v \in O} \quad \text{and} \quad S' := \scalemath{0.85}{\begin{pNiceArray}{cc}[first-row, last-col]
            I \cup \Xlike & \Zlike & \\
            J^T & 0 & \Xlike \cup O \\
            (G_{10})^T & K_{11} & \Zlike
        \end{pNiceArray}} = \scalemath{0.85}{\begin{pNiceArray}{cc}
            J & G_{01} \\
            0 & K_{11}
        \end{pNiceArray}}^T
    \end{equation*}
    \noindent where the last equality follows from the symmetry of $K_{11}$.
\end{observation}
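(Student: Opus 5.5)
The plan is a direct entrywise computation from Definition~\ref{Def:odm}, using the normalisation (Definition~\ref{Def:normalised}) and the standing assumptions $I\cap O=\emptyset$ and $V=I\cup\Xlike\cup\Zlike\cup O$. Before computing, I will record the index sets. For $\Gamma$ the rows of $N$ are labelled by $\comp{O}=I\cup\Xlike\cup\Zlike$ and the columns by $\comp{I}=\Xlike\cup O\cup\Zlike$. For $\Gamma'$ the roles of $I$ and $O$ swap, so the rows of $N'$ are labelled by $O\cup\Xlike\cup\Zlike$ and the columns by $I\cup\Xlike\cup\Zlike$. The labels are $\ld'(v)=(0,1)$ on $O$ and $\ld'=\ld$ on $B$. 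This fixes which blocks of $G$ can appear.

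Next I will split the rows by type and use $N_{v,u}=\ldX(v)A_{v,u}-\ldZ(v)\delta_{v,u}$.
\begin{enumerate}
\item For $v\in I\cup\Xlike$, normalisation gives $\ld(v)=(0,1)$, so $N_{v,u}=-\delta_{v,u}$. If $v\in I$, then $v\notin\comp{I}$, so the whole row vanishes; this is exactly what the factor $P$ records. If $v\in\Xlike$, the row is the indicator of $v$ in the columns $\Xlike\cup O$ and is zero on $\Zlike$. This gives the $J$ block and the $0$ block, up to an overall sign of these rows.
\item For $v\in\Zlike$, we have $\ld(v)=(1,k_v)$, so $N_{v,u}=G_{v,u}-k_v\delta_{v,u}$. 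On the columns $\Xlike\cup O$ the Kronecker term never fires, giving $G_{10}$. On the columns $\Zlike$ it gives $G_{11}-D$.
\end{enumerate}
Multiplying on the left by the diagonal $0/1$ matrix $P$ kills exactly the $I$-rows, so $N=PS$. The symmetry of $K_{11}$ is immediate, because $G$ is symmetric and $D$ is diagonal.

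For $\Gamma'$ I will repeat the same row-type analysis.
\begin{enumerate}
\item The rows in $O$ carry $(0,1)$ and $O\cap\comp{I'}=\emptyset$, so they vanish; this is the role of $P'$.
\item The $\Xlike$ rows give indicator rows on the columns $I\cup\Xlike$, which is $J^T$ once rows and columns are matched, and zero on $\Zlike$.
\item The $\Zlike$ rows give $G_{v,u}$ on the columns $I\cup\Xlike$, and $G_{11}-D=K_{11}$ on $\Zlike$.
\end{enumerate}
Because $G$ is symmetric, the $\Zlike\times(I\cup\Xlike)$ block is the transpose of the $(I\cup\Xlike)\times\Zlike$ block of $G$. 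The last step is then to write $S'$ as the transpose of $\begin{pmatrix}J&G_{01}\\0&K_{11}\end{pmatrix}$, using $K_{11}^T=K_{11}$.

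The only real obstacle I expect is bookkeeping rather than mathematics. Two points need care.
\begin{enumerate}
\item Index sets must be tracked correctly, in particular which vertices label the rows and the columns of $J$ in each direction, and the exact identification of the off-diagonal block of $S'$ with the appropriate block of $G$.
\item Signs must be tracked. The computation gives $-\delta$ for the $(0,1)$-labelled rows, so I will either carry the sign, writing $-J$, or note that scaling a row of $N$ by a non-zero scalar changes neither the DAG property of $NC$ nor anything used later. The only later requirement is that $NC$ and $N'C'$ form DAGs.
\end{enumerate}
Beyond this, everything follows directly from the definitions.
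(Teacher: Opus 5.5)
Your plan is correct and is exactly the direct entrywise check from Definition~\ref{Def:odm} under normalisation on which the observation rests (the paper gives no separate argument). Your sign remark is right: since $N_{v,v}=-\ldZ(v)=-1$ for $v\in\Xlike$, the identities $N=PS$ and $N'=P'S'$ hold literally only with $-J$ and $-J^T$ in place of $J$ and $J^T$. So you should carry the sign; your row-scaling remark explains why the slip is harmless for Theorem~\ref{Th:flow-rev}, where only supports matter, but it does not by itself prove the displayed equality. Your identification of the $\Zlike\times(I\cup\Xlike)$ block of $S'$ as the transpose of the $(I\cup\Xlike)\times\Zlike$ block of $G$, i.e.\ $(G_{01})^T$, is also correct and matches the final transposed form. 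The $(G_{10})^T$ in the middle display is therefore a labelling typo; it does not even have the right shape.
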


We are now ready to prove Theorem~\ref{Th:flow-rev}:

\begin{proof}[Proof of Theorem~\ref{Th:flow-rev}]
    Without loss of generality, we assume that $\Gamma$ is normalised and that $I \cap O = \emptyset$.
    
    $(\Rightarrow):$ assume $\Gamma$ has \flow.
    By Theorem~\ref{Th:main result}, there exists $C$ such that $MC = \Id$ and $NC$ forms a DAG, where $M$ and $N$ are the flow-demand and order-demand matrices of $\Gamma$, respectively.
    By Observation~\ref{Obs:fdm-block}:\begin{equation*}
        M = \scalemath{0.85}{\begin{pNiceArray}{cc}[first-row, last-col]
            \Xlike \cup O & \Zlike & \\
            G_{00} & G_{01} & I \cup \Xlike \\
            0 & \Id & \Zlike
        \end{pNiceArray}}
    \end{equation*}

    Since $|I|=|O|$, $M$ is square and thus, by Corollary~\ref{Cor:uniqueness}, $C$ is unique and $C = M^{-1}$.
    Break $C$ into blocks corresponding to the block structure of $M$:\begin{equation*}
        C = \scalemath{0.85}{\begin{pNiceArray}{cc}[first-row, last-col]
            I \cup \Xlike & \Zlike & \\
            C_{00} & C_{01} & \Xlike \cup O \\
            C_{10} & C_{11} & \Zlike
        \end{pNiceArray}}
    \end{equation*}
    \noindent Then:\begin{equation*}
        MC = \scalemath{0.85}{\begin{pNiceArray}{cc}[first-row, last-col]
            I \cup \Xlike & \Zlike & \\
            G_{00}C_{00} + G_{01}C_{10} & G_{00}C_{01} + G_{01}C_{11} & I \cup \Xlike \\
            C_{10} & C_{11} & \Zlike
        \end{pNiceArray}}
    \end{equation*}
    \noindent Therefore $C_{10} = 0$ and $C_{11} = \Id$ by inspection.
    Using these, the top left block implies $\Id = G_{00}C_{00}$, \ie $C_{00} = (G_{00})^{-1}$ (note that $G_{00}$ is square).
    Similarly, the top right block becomes $0 = G_{00}C_{01} + G_{01}$.
    Hence $G_{00}C_{01} = -G_{01}$ and finally $C_{01} = -(G_{00})^{-1}G_{01} = -C_{00}G_{01}$.
    To summarize:\begin{equation*}
        C = \scalemath{0.85}{\begin{pNiceArray}{cc}[first-row, last-col]
            I \cup \Xlike & \Zlike & \\
            C_{00} & -C_{00}G_{01} & \Xlike \cup O \\
            0 & \Id & \Zlike
        \end{pNiceArray}}
    \end{equation*}
    Again by Observation~\ref{Obs:fdm-block}, the flow-demand matrix of $\Gamma'$ satisfies:\begin{equation*}
        M' = \scalemath{0.85}{\begin{pNiceArray}{cc}
            G_{00} & 0  \\
            G_{10} & \Id
        \end{pNiceArray}}^T
    \end{equation*}
    \noindent As $G_{00}$ is invertible (with inverse $C_{00}$) and because of its block structure, $M'$ is also invertible. By an argument analogous to that for $M$, we find that:\begin{equation*}
        C' := (M')^{-1} = \scalemath{0.85}{\begin{pNiceArray}{cc}
            C_{00} & 0 \\
            -G_{10}C_{00} & \Id
        \end{pNiceArray}}^T
    \end{equation*}
    To show that $\Gamma'$ has \flow, by Theorem~\ref{Th:main result}, it remains to show that $N'C'$ forms a DAG, where $N'$ is the order-demand matrix of $\Gamma'$.
    First, by Observation~\ref{Obs:odm-block}:\begin{equation*}
        NC = PSC = P\scalemath{0.85}{\begin{pNiceArray}{cc}
            J & 0 \\
            G_{10} & K_{11}
        \end{pNiceArray}} \scalemath{0.85}{\begin{pNiceArray}{cc}
            C_{00} & -C_{00}G_{01} \\
            0 & \Id
        \end{pNiceArray}} = P\scalemath{0.85}{\begin{pNiceArray}{cc}
            JC_{00} & -JC_{00}G_{01} \\
            G_{10}C_{00} & -G_{10}C_{00}G_{01} + K_{11}
        \end{pNiceArray}}
    \end{equation*}
    \noindent and similarly:\begin{align*}
        N'C' &= P'S'C' = P'\scalemath{0.85}{\begin{pNiceArray}{cc}
            J & G_{01} \\
            0 & K_{11}
        \end{pNiceArray}}^T \scalemath{0.85}{\begin{pNiceArray}{cc}
            C_{00} & 0 \\
            -G_{10}C_{00} & \Id
        \end{pNiceArray}}^T = P'\left( \scalemath{0.85}{\begin{pNiceArray}{cc}
            C_{00} & 0 \\
            -G_{10}C_{00} & \Id
        \end{pNiceArray}} \scalemath{0.85}{\begin{pNiceArray}{cc}
            J & G_{01} \\
            0 & K_{11}
        \end{pNiceArray}} \right)^T\\
        &= P'\scalemath{0.85}{\begin{pNiceArray}{cc}
            C_{00}J & C_{00}G_{01} \\
            -G_{10}C_{00}J & -G_{10}C_{00}G_{01} + K_{11}
        \end{pNiceArray}}^T
    \end{align*}

    Note that $NC$ is a $\comp{O}\times\comp{O}$ matrix satisfying $(NC)_{vw} = 0$ whenever $v \in I$ and thus \emph{rows} labelled by $I$ are irrelevant to determining whether $NC$ is a DAG.
    This means we can also exclude \emph{columns} labelled by $I$, as all inputs will be initial in the partial order and thus may have arbitrary successors among the non-inputs.
    Hence it suffices to consider only the submatrix of $NC$ whose rows and columns are both labelled by $B$.
    An analogous argument holds for $N'C'$, where it is rows and columns labelled by $O$ that are excluded to leave only the submatrix whose rows and columns are both labelled by $B$.

    Now, the three matrices $P$, $P'$, and $J$ each reduce to an identity matrix when restricted to rows and columns labelled by $B$.
    In the following, we denote a submatrix obtained by restricting rows and columns to vertices in $B$ by adding a subscript and superscript next to the matrix.
    Then the $B \times B$ submatrices satisfy:\begin{align*}
        (N'C')_B^B &= \left( \scalemath{0.85}{\begin{pNiceArray}{cc}
            C_{00}J & C_{00}G_{01} \\
            -G_{10}C_{00}J & -G_{10}C_{00}G_{01} + K_{11}
        \end{pNiceArray}}^T \right)_B^B = \left( \scalemath{0.85}{\begin{pNiceArray}{cc}
            C_{00} & C_{00}G_{01} \\
            -G_{10}C_{00} & -G_{10}C_{00}G_{01} + K_{11}
        \end{pNiceArray}}_B^B \right)^T
    \intertext{since taking submatrix commutes with taking the transpose; and:}
        (NC)_B^B &= \scalemath{0.85}{\begin{pNiceArray}{cc}
            JC_{00} & -JC_{00}G_{01} \\
            G_{10}C_{00} & -G_{10}C_{00}G_{01} + K_{11}
        \end{pNiceArray}}_B^B = \scalemath{0.85}{\begin{pNiceArray}{cc}
            C_{00} & -C_{00}G_{01} \\
            G_{10}C_{00} & -G_{10}C_{00}G_{01} + K_{11}
        \end{pNiceArray}}_B^B
    \end{align*}
    \noindent Finally, observe that the support of the transpose of the first matrix and the support of the second matrix are identical, as the block expressions only differ by signs, which do not impact the support.
    By the assumption of $C$ specifying an \flow on $\Gamma$, $NC$ is a DAG.
    Since only the support is relevant when considering whether a matrix encodes a DAG, $N'C'$ is also a DAG.
    Thus, by Theorem~\ref{Th:main result}, $\Gamma'$ has \flow, ending this part of the proof.
    
    $(\Leftarrow):$ observe that, due to the normalisation of measurement labels, the reversal operation of Definition~\ref{def:reverse-LOFG} is an involution, \ie $(\Gamma')' = \Gamma$.
    Hence the desired result follows from the $(\Rightarrow)$ part of the proof.
\end{proof}

\subsection{Proving that \texorpdfstring{$Z$}{Z}-like removal and insertion are \flowintitles-preserving}\label{Sec:proving removal and insertion}

\begin{proof}[Proof of Theorem~\ref{Th:Z-like removal}]
    Let $M$ and $N$ be the flow-demand and order-demand matrices for $(G,I,O,\ld)$ respectively and let $M'$ and $N'$ be flow-demand and order-demand matrices for $(G',I,O,\ld')$ respectively.
    Furthermore, let $\comp{O}'$ and $\comp{I}'$ be the set of non-output vertices and the set of non-input vertices of $(G',I,O,\ld')$.
    
    Let $C'$ be the correction matrix for $(G',I,O,\ld')$, whose existence follows by Theorem~\ref{Th:main result} from the assumption that $\Gamma'$ has \flow.
    Thus $M'C' = \id{\comp{O}'}{\comp{O}'}$ and $N'C'$ forms a DAG.
    Since $\ldX'(v) \ne 0$, necessarily $v \notin I$ and thus both a $v$-labelled row and a $v$-labelled column exist in each of $M'$, $N'$, $C'$, and $N'C'$.
    
    By construction (\cf Definitions~\ref{Def:fdm} and~\ref{Def:odm}), $M$ and $N$ are submatrices of $M'$ and $N'$.
    Let $C$ be the restriction of $C'$ to shape $\comp{O} \times \comp{I}$, the shape compatible with $M$ and $N$.
    We now show that $MC = \id{\comp{O}}{\comp{O}}$ and that $NC$ forms a DAG.
    
    The \flow guaranteed by Theorem~\ref{Th:main result} is focused, so we can apply condition~\conref{FZ} to $C'$ to find $C'_{v,w} = 0$ for any $w\in\comp{O}'\setminus\{v\}$.
    Therefore, for $u, w \in \comp{O}$:\begin{equation*}
        (M'C')_{u,w} = \sum_{t \in \comp{I}'} M'_{u,t} C'_{t,w} = \left( \sum_{t \in \comp{I}} M'_{u,t} C'_{t,w} \right) + M'_{u,v} C'_{v,w} = \left( \sum_{t \in \comp{I}} M_{u,t} C_{t,w} \right) + 0 = (MC)_{u,w}
    \end{equation*}
    Thus $MC$ is the $\comp{O} \times \comp{O}$ submatrix of $M'C'$, which implies $MC = \id{\comp{O}}{\comp{O}}$.
    By replacing $M'$ with $N'$ in the above equation, $NC$ is the $\comp{O} \times \comp{O}$ submatrix of $N'C'$; and since $N'C'$ forms a DAG, so must $NC$.
    Hence, by Theorem~\ref{Th:main result}, $(G,I,O,\ld)$ has \flow.
\end{proof}

\begin{proof}[Proof of Theorem~\ref{Th:Z-like insertion}]
    Let $M'$ and $N'$ be the flow-demand and order-demand matrices of $\Gamma'$, and analogously define $M$ and $N$ for $\Gamma$.
    Suppose that $(G',I,O,\ld')$ has \flow.
    Our goal is to find sufficient and necessary conditions that $\ld'$ must satisfy and which can be checked in $\bigO(n^2)$.
    
    Let $\comp{O}', \comp{I}'$ be the non-outputs and non-inputs of $\Gamma'$, respectively, \ie $\comp{O}' = \comp{O} \cup \{ v \}$ and analogously for $\comp{I}'$.
    By the constructions of flow-demand and order-demand matrices (\ie Definitions~\ref{Def:fdm} and~\ref{Def:odm}), we have:\begin{equation*}
        M' = \scalemath{0.85}{\begin{pNiceArray}{cc}[first-row, first-col]
              & \comp{I} & v \\
            \comp{O} & M & \mathbf{x} \\
            v & \mathbf{0} & \ldX'(v)^{-1} \\
        \end{pNiceArray}} \qquad \text{and} \qquad N' = \scalemath{0.85}{\begin{pNiceArray}{cc}[first-row, first-col]
              & \comp{I} & v \\
            \comp{O} & N & \mathbf{z} \\
            v & \mathbf{n} & -\ldZ'(v) \\
        \end{pNiceArray}}
    \end{equation*}
    \noindent where $\mathbf{0}$ is identically zero.
    The remaining three submatrices $\mathbf{x}, \mathbf{z} \in (\Fd^{\comp{O}})$ and  $\mathbf{n} \in (\Fd^{\comp{I}})^T$ are vectors defined as follows for any $u \in \comp{O}$:\begin{align}
        \mathbf{x}_u &= \begin{cases}
            \ldZ(u)^{-1} S_{u} & \text{when}\ \ldX(u) = 0\\
            0 & \text{otherwise}
        \end{cases}\notag \\
        \mathbf{z}_u &= \ldX(u) S_{u}\notag \\
        \mathbf{n} &= \ldX'(v) (S\mid_{\comp{I}})^T \label{eq:ncond}
    \end{align}
    \noindent In particular, $\mathbf{x}$ and $\mathbf{z}$ do not depend on $\ld'(v)$ at all, while $\mathbf{n}$ is scaled by $\ldX'(v)$ but otherwise is also independent of $\ld'(v)$.

    According to Theorem~\ref{Th:main result}, there exists $C'$ such that $M'C' = \id{\comp{O}'}{\comp{O}'}$ and $N'C'$ forms a DAG.
    Since $|I|=|O|$, by Corollary~\ref{Cor:uniqueness}, this $C'$ is unique.
    Similarly, $C$ is also unique, and by utilising the proof of Theorem~\ref{Th:Z-like removal} which describes the construction of a valid $C$ as a submatrix of $C'$, necessarily $C$ forms a submatrix of $C'$.
    Furthermore, since Theorem~\ref{Th:main result} guarantees focused \flow, by \conref{FZ} we obtain $C'_{v,u} = 0$ for $u \ne v$ and by Definition~\ref{Def:flow} we obtain $C'_{v,v} = \ldX'(v)$.
    Thus:\begin{equation}
        C' = \scalemath{0.85}{\begin{pNiceArray}{cc}[first-row, first-col]
              & \comp{O} & v \\
            \comp{I} & C & \mathbf{k} \\
            v & 0 & \ldX'(v) \\
        \end{pNiceArray}} \label{eq:C'}
    \end{equation}
    \noindent for some $\mathbf{k} \in \Fd^{\comp{I}}$.

    Now, we have: \begin{alignat}{10}
        M'C' &= \scalemath{0.85}{\begin{pNiceArray}{cc}[first-row, first-col]
              & \comp{I} & v \\
            \comp{O} & M & \mathbf{x} \\
            v & \mathbf{0} & \ldX'(v)^{-1} \\
        \end{pNiceArray}} \;\, &{} \scalemath{0.85}{\begin{pNiceArray}{cc}[first-row, last-col]
            \comp{O} & v \\
            C & \mathbf{k} & \comp{I} \\
            \mathbf{0} & \ldX'(v) & v \\
        \end{pNiceArray}} \;\;&=&{} \scalemath{0.85}{\begin{pNiceArray}{cc}[first-row, first-col]
              & \comp{O} & v \\
            \comp{O} & MC & M\mathbf{k} + \mathbf{x}\ldX'(v)\\
            v & \mathbf{0} & 1 \\
        \end{pNiceArray}}\;\; \label{eq:M'C'}
    \intertext{and:}
        N'C' &= \scalemath{0.85}{\begin{pNiceArray}{cc}[first-row, first-col]
              & \comp{I} & v \\
            \comp{O} & N & \mathbf{z} \\
            v & \mathbf{n} & -\ldZ'(v) \\
        \end{pNiceArray}} \;\, &{} \scalemath{0.85}{\begin{pNiceArray}{cc}[first-row, last-col]
            \comp{O} & v \\
            C & \mathbf{k} & \comp{I} \\
            \mathbf{0} & \ldX'(v) & v \\
        \end{pNiceArray}} \;\;&=&{}\;\; \scalemath{0.85}{\begin{pNiceArray}{cc}[first-row, first-col]
              & \comp{O} & v \\
            \comp{O} & NC & N\mathbf{k} + \mathbf{z}\ldX'(v)\\
            v & \mathbf{n}C & \mathbf{n}\mathbf{k} - \ldX'(v) \ldZ'(v) \\
        \end{pNiceArray}} \label{eq:N'C'}
    \end{alignat}
    \noindent We want $M'C'=\id{\comp{O}'}{\comp{O}'}$ and $N'C'$ to be a DAG, so we must necessarily have:\begin{subequations}\begin{align}
        M\mathbf{k} + \mathbf{x}\ldX'(v) &= \mathbf{0}\label{eq:Mcond}\\
        \mathbf{n}\mathbf{k} - \ldX'(v)\ldZ'(v) &= 0 \label{eq:Ncond}
    \end{align}\end{subequations}
    Since $M$ is square and $C$ is its right inverse, $C$ is the unique inverse of $M$, hence also $CM = \id{\comp{I}'}{\comp{I}'}$ (note that $|\comp{O}'| = |\comp{I}'|$).
    Multiplying both sides of \eqref{eq:Mcond} by $C$ on the left, we find:\begin{align*}
        CM\mathbf{k} + C\mathbf{x}\ldX'(v) &= \mathbf{0}
    \intertext{which simplifies to:}
        \mathbf{k} + C\mathbf{x}\ldX'(v) &= \mathbf{0}
    \end{align*}
    Therefore:\begin{equation}
        \mathbf{k} = -\ldX'(v) C\mathbf{x}\label{eq:kcond}
    \end{equation}
    Plugging the above into \eqref{eq:Ncond}, we find:\begin{equation*}
        -\ldX'(v)\mathbf{n}C\mathbf{x} -\ldX'(v)\ldZ'(v) = 0
    \end{equation*}
    \noindent and since $\ldX'(v) \ne 0$ and by \eqref{eq:ncond} we find:\begin{equation}\label{eq:ldZ}
        \ldZ'(v) = -\mathbf{n}C\mathbf{x} = -\ldX'(v) (S\mid_{\comp{I}})^T C \mathbf{x}
    \end{equation}

    Next, we focus on not-yet considered parts of the product of $N'C'$ from \eqref{eq:N'C'}.
    We have by \eqref{eq:ncond}:\begin{equation*}
        \mathbf{n}C = \ldX'(v) (S\mid_{\comp{I}})^T C
    \end{equation*}
    \noindent and by substituting via \eqref{eq:kcond}:\begin{equation*}
        N\mathbf{k}+\mathbf{z}\ldX'(v) = -\ldX'(v)NC\mathbf{x} + \ldX'(v)\mathbf{z} = \ldX'(v)(\mathbf{z}-NC\mathbf{x})
    \end{equation*}
    \noindent \ie these expressions have supports that do not depend on $\ldX'(v)$.
    Thus, if $N'C'$ is a DAG for some $\ldX'(v)$, it is a DAG for all $\ldX'(v)$.
    
    Note that the requirement $M'C'=\Id$ also does not fully fix $\ld'(v)$, instead \eqref{eq:ldZ} gives a formula for $\ld_Z'(v)$ as a function of $\ld'_X(v)$.
    This is to be expected since treating a measurement label as a vector and scaling it by a non-zero number does not change whether there is an \flow, cf.\ the discussion before Definition~\ref{Def:normalised} in Section~\ref{Sec:proof reversibility}.
    Now we may treat $\ldX'(v)$ as a variable; let $a := \ldX'(v)$.
    We find that if $(G',I,O,\ld')$ has \flow, then necessarily, for any $a \in \Fd^*$:\begin{align*}
        \ldX'(v) &= a\\
        \ldZ'(v) &= -a (S\mid_{\comp{I}})^T C \mathbf{x}
    \end{align*}
    \noindent with correction matrix (from substituting into \eqref{eq:C'}):\begin{equation*}
        C' = \scalemath{0.85}{\begin{pNiceArray}{cc}[first-row, first-col]
              & \comp{O} & v \\
            \comp{I} & C & -a C \mathbf{x} \\
            v & 0 & a \\
        \end{pNiceArray}}
    \end{equation*}
    \noindent Moreover, the following matrix (from substituting into \eqref{eq:N'C'}) must be a DAG:\begin{equation*}
        \scalemath{0.85}{\begin{pNiceArray}{cc}[first-row, first-col]
              & \comp{O} & v \\
            \comp{O} & NC & a(\mathbf{z} - NC\mathbf{x}) \\
            v & a (S\mid_{\comp{I}})^T C & 0 \\
        \end{pNiceArray}}
    \end{equation*}

    It is sufficient to check whether the final matrix is a DAG for just one value of $a \in \Fd^*$, for instance $a=1$.
    All relevant vectors and the above matrix can be computed in $\bigO(n^2)$ (note that $NC$ is known so it does not need to be computed from scratch).
    Checking whether a matrix corresponds to a DAG can also be done in $\bigO(n^2)$ \cite[Section 20.4]{cormenIntroductionAlgorithmsFourth2022}.
    Thus, in time $\bigO(n^2)$ we verify that either there are no $\ld'(v)$ resulting in \flow, or that all pairs $(a,-a (S\mid_{\comp{I}})^T C \mathbf{x})$ for any $a \in \Fd^*$ (and no other pairs) result in \flow.
\end{proof}

\begin{remark}
    The proof above also implies that, for any $\ld'(v)$ with $\ldX'(v) \ne 0$ such that $(G',I,O,\ld')$ has \flow, it is possible to explicitly return the corresponding pair of the correction matrix $C'$ and the DAG $N'C'$ in $\bigO(n^2)$ time, \ie we can not only find the possible labels for the added vertex, but also update the \flow accordingly.
\end{remark}

\subsection{Proofs relating to the generation of instances with \flowintitles}\label{Sec:proving generation}

Usually, the term `flow-preserving rewriting' refers to transformations that preserve both the linear operation implemented by a computation and the existence of flow.
Yet the existence of flow (or \flow) depends only on the \LOFG, which does not contain enough information to determine a specific linear operator.
For example, it was pointed out in \cite{backensThereBackAgain2021} that the deletion of a $YZ$-measurement is always flow-preserving, even though it preserves the interpretation only if the measurement angle is 0.
Thus, as argued in \cite{backensGenerating2026}, it can also be useful to analyse rewrite rules that preserve the existence of flow but not necessarily the interpretation.

We expect that the completeness proof from \cite{backensGenerating2026} can be adapted to the qudit setting, but leave that to future work, focusing solely on the generation of \LOFG{}s with \flow for now. 
First, we prove a supporting lemma which shows that \flow-preserving rewriting can be used to transform any \LOFG with \flow into a \LOFG without internal vertices.

\begin{lemma}[Deconstruction of a \LOFG with \flow]\label{Th:deconstruction}
    Given a \LOFG $\Gamma$ with \flow, a sequence of local-complementations and $Z$-like removals can bring $\Gamma$ to a \LOFG with no internal vertices. 
\end{lemma}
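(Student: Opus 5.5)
Induction on $|B|$, the number of internal vertices, removing one internal vertex at a time while preserving \flow. If $B=\emptyset$ there is nothing to do. Otherwise pick an internal vertex $v\in B$. If $\ldX(v)\ne 0$, then $v$ is $Z$-like: $\Gamma$ is the \LOFG $\Gamma-v$ extended by $v$, and Theorem~\ref{Th:Z-like removal} gives that $\Gamma-v$ has \flow with one fewer internal vertex. So the real work is the case where every internal vertex is $X$-like, \ie $\ld(v)=(0,*)$. Here we turn some $X$-like vertex into a $Z$-like one by a local complementation, using Proposition~\ref{prop:lc} to keep the \flow.

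To pick the vertex, fix a \flow $(C,\prec)$ and choose $w\in B$ that is maximal in $\prec$ among the non-outputs. By \conref{CF}, $C_{w,w}=\ldX(w)=0$ and $(GC)_{w,w}=\ldZ(w)\ne 0$. Then Proposition~\ref{prop:lc} (with $w\in\comp{I}$, since $w\in B$) gives the new label
\[
\ld'(w)=\bigl(C_{w,w}-\gamma (GC)_{w,w},\,(GC)_{w,w}\bigr)=\bigl(-\gamma\ldZ(w),\,\ldZ(w)\bigr).
\]
Any $\gamma\in\Fd^*$ makes $\ld'_X(w)\ne 0$. The same proposition supplies a \flow $(C',\prec)$ on $(G\overset{\gamma}{\star} w,I,O,\ld')$. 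Labels of other vertices $u$ keep their $X$-component $C_{u,u}$, so inputs stay $(0,*)$ and the result is still a valid \LOFG. Now $w$ is $Z$-like, and Theorem~\ref{Th:Z-like removal} deletes it, giving a \LOFG with \flow and strictly fewer internal vertices. In fact the maximality of $w$ is not needed: any $X$-like internal vertex works, because $(GC)_{w,w}\ne0$ holds for all of them. So the procedure is simply to local-complement at any $X$-like internal vertex with $\gamma=1$ and then remove it.

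The main point to check is that Proposition~\ref{prop:lc} really applies and yields a legitimate \LOFG. Two things need confirming: $w\notin I$ is required, and it holds because $w\in B$; and the modified labels on the other vertices, $(C_{u,u},(GC)_{u,u}-\gamma G_{u,w}^2C_{u,u})$, are never $(0,0)$. That holds because if $C_{u,u}=0$ the second component equals $(GC)_{u,u}=\ldZ(u)\ne0$. Inducting on $|B|$ then yields a finite sequence of local complementations and $Z$-like removals ending in a \LOFG with no internal vertices, as claimed.
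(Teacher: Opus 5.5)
Your proof is correct and follows essentially the same route as the paper: you remove $Z$-like internal vertices directly via Theorem~\ref{Th:Z-like removal}, and you first turn each $X$-like internal vertex $Z$-like by a local complementation with non-zero $\gamma$ (Proposition~\ref{prop:lc}) and then remove it. Your extra checks fill in details the paper leaves implicit: that $w\in\comp{I}$, that inputs keep labels of the form $(0,*)$ because $C_{u,u}=0$ for $u\in I$, and that no updated label becomes $(0,0)$. You also rightly note that choosing $w$ maximal in $\prec$ is unnecessary.
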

\begin{proof}
    Suppose $v \in B$.
    If $\ldX(v) \ne 0$, then, by Theorem~\ref{Th:Z-like removal}, $v$ can be removed while preserving \flow.
    If instead $\ldX(v) = 0$, then a local complementation about $v$ (with any non-zero parameter $\gamma$) is flow-preserving and transforms $\ldX(v)$ to some $\ldX'(v)$ with $\ldX'(v) \ne 0$, which again can be removed (\cf Proposition~\ref{prop:lc}).
    Repeating this process for all $v \in B$ while ignoring any Clifford effects on outputs eventually results in a \LOFG with $B = \emptyset$.
\end{proof}

\begin{proof}[Proof of Theorem~\ref{Th:generation}]
    Start by generating a random invertible matrix $M$ over $\Fd$, for example by following \cite{randallEfficientGenerationRandom1993}.
    Construct the \LOFG $\Xi$ with flow-demand matrix $M$ via the construction from Theorem~\ref{Th:lower bound}.
    By Theorem~\ref{Th:main result}, $\Xi$ has \flow and $\Xi$ is normalised (\cf Definition~\ref{Def:normalised}).
    Undo normalisation: for each input $i$, multiply $\ld(i)$ by some non-zero scalar from $\Fd^*$, transforming $\Xi$ into $\Gamma^{(0)}$.
    Now repeat the following $k$ times:\begin{itemize}
        \item Choose a random vector $S \in \Fd^{\comp{V}}$ and using Theorem~\ref{Th:Z-like insertion}, attempt inserting a new $Z$-like internal vertex with connections determined by $S$, trying all measurement labels of the form $(a,b)\in\Fd$ where $a\neq 0$.
        If the insertion is not \flow-preserving for any measurement label, choose a different $S$ and try again, otherwise pick any suitable label and move on to the second step.
        \item Perform a local complementation around the newly inserted vertex with a randomly chosen (and not necessarily non-zero) parameter $\gamma \in \Fd$, ignoring any resulting local Clifford operators on outputs.
    \end{itemize}
    Finally, return the resulting labelled open graph $\Gamma^{(k)}$.
    
    The above procedure always produces a \LOFG with \flow.
    In particular, the procedure has a non-zero probability of reversing all steps from Lemma~\ref{Th:deconstruction} when used on $\Gamma$.
    Thus, with non-zero probability, the above procedure returns $\Gamma$.
\end{proof}

As of now, our method is not necessarily efficient: Choosing a random vector $S \in \Fd^{\comp{V}}$ for connectivity of the new vertex may fail many times before finding a vector that does permit vertex insertion.
One can observe that an identically zero $S$ always permits insertion, thus, with probability $1$, our algorithm eventually performs a vertex insertion (albeit potentially a trivial one).
However, bounding the efficiency of the insertion in a more meaningful way requires deeper analysis of the structure of all vectors $S$ permitting insertions.
We leave this to future work.

\end{document}

%% file: thesisstyles.tikzstyles
\tikzstyle{ZX Z none}=[inner sep=0mm, minimum size=2mm, shape=circle, draw=black, fill={rgb,255: red,208; green,252; blue,204}, tikzit category=ZX]
\tikzstyle{ZX Z in}=[minimum size=4mm, font={\footnotesize\boldmath}, shape=rectangle, rounded corners=1.6mm, inner sep=0.2mm, outer sep=-2mm, scale=0.8, tikzit shape=circle, draw=black, fill={rgb,255: red,208; green,252; blue,204}, tikzit draw=blue, tikzit category=ZX]
\tikzstyle{ZX Z out}=[font={\footnotesize\boldmath}, rounded rectangle, rounded rectangle arc length=90, fill={rgb,255: red,208; green,252; blue,204}, inner sep=2pt, scale=0.8, tikzit draw=red, tikzit category=ZX]
\tikzstyle{ZX H}=[fill=yellow, shape=rectangle, inner sep=0mm, minimum size=2mm, draw=black, tikzit shape=rectangle, tikzit category=ZX]
\tikzstyle{ZX X none}=[inner sep=0mm, minimum size=2mm, shape=circle, draw=black, fill={rgb,255: red,255; green,136; blue,136}, tikzit category=ZX]
\tikzstyle{ZX X in}=[minimum size=4mm, font={\footnotesize\boldmath}, shape=rectangle, rounded corners=1.6mm, inner sep=0.2mm, outer sep=-2mm, scale=0.8, tikzit shape=circle, draw=black, fill={rgb,255: red,255; green,136; blue,136}, tikzit draw=blue, tikzit category=ZX]
\tikzstyle{ZX X out}=[font={\footnotesize\boldmath}, rounded rectangle, rounded rectangle arc length=90, fill={rgb,255: red,255; green,136; blue,136}, inner sep=2pt, scale=0.8, tikzit draw=red, tikzit category=ZX]

\tikzstyle{ZH Star}=[star, fill=black, draw=black, minimum size=2mm, inner sep=0pt, tikzit category=ZH, tikzit fill=brown]
\tikzstyle{ZH Z none}=[inner sep=0mm, minimum size=2mm, shape=circle, draw=black, fill=white, tikzit category=ZH]
\tikzstyle{ZH Z in}=[minimum size=4mm, font={\footnotesize\boldmath}, shape=rectangle, rounded corners=1.6mm, inner sep=0.2mm, outer sep=-2mm, scale=0.8, tikzit shape=circle, draw=black, fill=white, tikzit draw=blue, tikzit category=ZH]
\tikzstyle{ZH Z out}=[font={\footnotesize\boldmath}, rounded rectangle, rounded rectangle arc length=90, inner sep=2pt, scale=0.8, tikzit draw=red, tikzit category=ZH]
\tikzstyle{ZH H}=[shape=rectangle, inner sep=0mm, minimum size=2mm, draw=black, fill=white, tikzit shape=rectangle, tikzit category=ZH]
\tikzstyle{ZH H in}=[shape=rectangle, minimum size=4mm, font={\footnotesize\boldmath}, shape=rectangle, inner sep=0.2mm, outer sep=-2mm, scale=0.8, tikzit shape=rectangle, draw=black, fill=white, tikzit draw=blue, tikzit category=ZH]
\tikzstyle{annotated box}=[shape=rectangle, minimum width=10mm, minimum height=5mm, font={\footnotesize\boldmath}, shape=rectangle, inner sep=0.2mm, outer sep=-2mm, scale=0.8, tikzit shape=rectangle, draw=purple, text=purple, fill=white, tikzit fill=purple, tikzit category=ZH, align=center]
\tikzstyle{ZH X none}=[inner sep=0mm, minimum size=2mm, shape=circle, draw=black, fill={rgb,255: red,200; green,200; blue,200}, tikzit category=ZH]
\tikzstyle{ZH X in}=[minimum size=4mm, font={\footnotesize\boldmath}, shape=rectangle, rounded corners=1.6mm, inner sep=0.2mm, outer sep=-2mm, scale=0.8, tikzit shape=circle, draw=black, fill={rgb,255: red,200; green,200; blue,200}, tikzit draw=blue, tikzit category=ZH, tikzit fill=gray]
\tikzstyle{ZH X out}=[font={\footnotesize\boldmath}, rounded rectangle, rounded rectangle arc length=90, fill={rgb,255: red,200; green,200; blue,200}, inner sep=2pt, scale=0.8, tikzit draw=red, tikzit category=ZH, tikzit fill=gray]

\tikzstyle{ZW W none}=[inner sep=0mm, minimum size=2mm, shape=circle, draw=black, fill=black, tikzit category=ZH]
\tikzstyle{ZW W in}=[minimum size=5mm, font={\footnotesize\boldmath}, shape=rectangle, rounded corners=2mm, inner sep=0.2mm, outer sep=-2mm, scale=0.8, tikzit shape=circle, draw=black, fill=black, text=white, tikzit draw=blue, tikzit category=ZH]
\tikzstyle{ZW W out}=[font={\footnotesize\boldmath}, rounded rectangle, rounded rectangle arc length=90, fill=black, inner sep=2pt, scale=0.8, text=white, tikzit draw=red, tikzit category=ZH]

\tikzstyle{GR Tiny Black}=[fill=black, draw=black, shape=circle, inner sep=0pt, minimum size=1mm, tikzit category=GR]
\tikzstyle{GR Tiny Empty}=[draw=black, shape=circle, inner sep=0pt, minimum size=1mm, tikzit category=GR]
\tikzstyle{GR Grey Diamond}=[shape=diamond, draw={rgb,255: red,119; green,119; blue,119}, fill={rgb,255: red,119; green,119; blue,119}, inner sep=0pt, minimum size=1.4mm, tikzit category=GR, tikzit category=GR]

\tikzstyle{BlackTEXT}=[align=center, text=black, font={\small\boldmath}, scale=0.8] 
\tikzstyle{GreyTEXT}=[align=center, text={rgb,255: red,119; green,119; blue,119}, font={\scriptsize\boldmath}, scale=0.8] 
\tikzstyle{OrangeTEXT}=[align=center, text=orange, font={\scriptsize\boldmath}, tikzit draw=orange, tikzit fill=orange, scale=0.8] 
\tikzstyle{PurpleTEXT}=[align=center, text={rgb,255: red,255; green,0; blue,255}, font={\scriptsize\boldmath}, tikzit draw={rgb,255: red,255; green,0; blue,255}, tikzit fill={rgb,255: red,255; green,0; blue,255}, scale=0.8] 

\tikzstyle{cnot targ}=[fill=white, draw=white, shape=circle, tikzit category=circuit, label={center:$\oplus$}, inner sep=0pt, minimum width=2.1mm, tikzit fill={rgb,255: red,102; green,204; blue,255}, tikzit draw=black]
\tikzstyle{ket}=[fill=white, draw=black, shape=regular polygon, regular polygon sides=3, regular polygon rotate=-30, minimum size=8mm, font={\footnotesize\boldmath}, inner sep=0.2mm, outer sep=-2mm, scale=0.8, tikzit category=circuit, tikzit shape=rectangle, tikzit fill=green]
\tikzstyle{bra}=[fill=white, draw=black, shape=regular polygon, regular polygon sides=3, regular polygon rotate=30, minimum size=8mm, font={\footnotesize\boldmath}, inner sep=0.2mm, outer sep=-2mm, scale=0.8, tikzit category=circuit, tikzit shape=rectangle, tikzit fill=red]

\tikzstyle{grey}=[-, draw={rgb,255: red,119; green,119; blue,119}]
\tikzstyle{blue}=[-, draw=blue]
\tikzstyle{red}=[-, draw=red]
\tikzstyle{green}=[-, draw=green]
\tikzstyle{orange}=[-, draw=orange]
\tikzstyle{purple}=[-, draw={rgb,255: red,255; green,0; blue,255}]
\tikzstyle{dashed}=[-, densely dashed]
\tikzstyle{grey dashed}=[-, draw={rgb,255: red,119; green,119; blue,119}, densely dashed]
\tikzstyle{blue dashed}=[-, draw=blue, densely dashed]
\tikzstyle{red dashed}=[-, draw=red, densely dashed]
\tikzstyle{green dashed}=[-, draw=green, densely dashed]
\tikzstyle{orange dashed}=[-, draw=orange, densely dashed]
\tikzstyle{purple dashed}=[-, draw={rgb,255: red,255; green,0; blue,255}, densely dashed]
\tikzstyle{brace edge}=[-, tikzit draw=blue, decorate, decoration={brace,amplitude=1mm,raise=-1mm}]
\tikzstyle{area grey}=[-, fill={rgb,255: red,191; green,191; blue,191}, draw={rgb,255: red,191; green,191; blue,191}]
\tikzstyle{directed}=[->]
\tikzstyle{thick}=[-, line width=1mm]

%% file: zx.tikzstyles
\tikzstyle{box}=[shape=rectangle, text height=1.5ex, text depth=0.25ex, yshift=0.5mm, fill=white, draw=black, minimum height=5mm, yshift=-0.5mm, minimum width=5mm, font={\small}]
\tikzstyle{Z dot}=[inner sep=0mm, minimum size=2mm, shape=circle, draw=black, fill={rgb,255: red,221; green,255; blue,221}]
\tikzstyle{Z phase dot}=[minimum size=1.2em, font={\footnotesize\boldmath}, shape=rectangle, rounded corners=0.5em, inner sep=0.2em, outer sep=-0.2em, scale=0.8, tikzit shape=circle, draw=black, fill={rgb,255: red,221; green,255; blue,221}, tikzit draw=blue]
\tikzstyle{X dot}=[Z dot, shape=circle, draw=black, fill={rgb,255: red,255; green,136; blue,136}]
\tikzstyle{X phase dot}=[Z phase dot, tikzit shape=circle, tikzit draw=blue, fill={rgb,255: red,255; green,136; blue,136}, font={\footnotesize\boldmath}]
\tikzstyle{hadamard}=[fill=yellow, draw=black, shape=rectangle, inner sep=0.6mm, minimum height=1.5mm, minimum width=1.5mm, font={\scriptsize}]
\tikzstyle{vertex}=[inner sep=0mm, minimum size=1mm, shape=circle, draw=black, fill=black]
\tikzstyle{vertex set}=[inner sep=0mm, minimum size=1mm, shape=circle, draw=black, fill=white, font={\footnotesize\boldmath}]
\tikzstyle{target}=[inner sep=0mm, minimum size=3mm, shape=circle, draw=black]
\tikzstyle{white dot}=[inner sep=0mm, minimum size=2mm, shape=circle, draw=black, fill=white]
\tikzstyle{bn}=[inner sep=0mm, minimum size=2mm, shape=circle, draw=black, fill=black]
\tikzstyle{black dot}=[bn, tikzit fill=black, tikzit draw=black]
\tikzstyle{XY}=[bn, tikzit fill=black, tikzit draw=black]
\tikzstyle{YZ}=[fill=black, draw=black, shape=rectangle, minimum size=2mm, inner sep=0mm]
\tikzstyle{XZ}=[fill=black, draw=black, shape=triangle, minimum size=2mm, inner sep=0mm, tikzit shape=rectangle, tikzit fill={rgb,255: red,75; green,227; blue,9}, tikzit draw={rgb,255: red,240; green,26; blue,18}]
\tikzstyle{Pauli dot}=[inner sep=0mm, minimum size=2mm, shape=rectangle, draw=white, fill=white, tikzit draw={rgb,255: red,5; green,9; blue,255}, font={\tiny}]
\tikzstyle{input}=[shape=rectangle, fill=white, draw=black, minimum size=2mm, label={center:$\bullet$}]
\tikzstyle{output}=[inner sep=0mm, minimum size=2mm, shape=circle, draw=black, fill=white]

\tikzstyle{hadamard edge}=[-, dashed, dash pattern=on 2pt off 1.5pt, thick, draw={rgb,255: red,68; green,136; blue,255}]
\tikzstyle{brace edge}=[-, tikzit draw=blue, decorate, decoration={brace,amplitude=1mm,raise=-1mm}]
\tikzstyle{diredge}=[->]
\tikzstyle{dashed edge}=[-, dashed, dash pattern=on 2pt off 0.5pt, draw=black]

%% file: figures/green-spider.tikz
\begin{tikzpicture}
	\begin{pgfonlayer}{nodelayer}
		\node [style=Z phase dot] (0) at (0, 0) {$a,b$};
		\node [style=none] (1) at (-1.25, 0.75) {};
		\node [style=none] (2) at (-1, 0.25) {$\vdots$};
		\node [style=none] (3) at (-1.25, -0.75) {};
		\node [style=none] (4) at (1.25, 0.75) {};
		\node [style=none] (5) at (1, 0.25) {$\vdots$};
		\node [style=none] (6) at (1.25, -0.75) {};
		\node [style=none] (7) at (-1.5, 0) {$m$};
		\node [style=none] (8) at (1.5, 0) {$n$};
	\end{pgfonlayer}
	\begin{pgfonlayer}{edgelayer}
		\draw [bend left=15] (1.center) to (0);
		\draw [bend right=15] (3.center) to (0);
		\draw [bend left=15] (0) to (4.center);
		\draw [bend right=15] (0) to (6.center);
	\end{pgfonlayer}
\end{tikzpicture}

%% file: figures/red-spider.tikz
\begin{tikzpicture}
	\begin{pgfonlayer}{nodelayer}
		\node [style=X phase dot] (0) at (0, 0) {$a,b$};
		\node [style=none] (1) at (-1.25, 0.75) {};
		\node [style=none] (2) at (-1, 0.25) {$\vdots$};
		\node [style=none] (3) at (-1.25, -0.75) {};
		\node [style=none] (4) at (1.25, 0.75) {};
		\node [style=none] (5) at (1, 0.25) {$\vdots$};
		\node [style=none] (6) at (1.25, -0.75) {};
		\node [style=none] (7) at (-1.5, 0) {$m$};
		\node [style=none] (8) at (1.5, 0) {$n$};
	\end{pgfonlayer}
	\begin{pgfonlayer}{edgelayer}
		\draw [bend left=15] (1.center) to (0);
		\draw [bend right=15] (3.center) to (0);
		\draw [bend left=15] (0) to (4.center);
		\draw [bend right=15] (0) to (6.center);
	\end{pgfonlayer}
\end{tikzpicture}

%% file: figures/hadamard.tikz
\begin{tikzpicture}
	\begin{pgfonlayer}{nodelayer}
		\node [style=hadamard] (0) at (0, 0) {};
		\node [style=none] (1) at (-1, 0) {};
		\node [style=none] (2) at (1, 0) {};
	\end{pgfonlayer}
	\begin{pgfonlayer}{edgelayer}
		\draw (1.center) to (2.center);
	\end{pgfonlayer}
\end{tikzpicture}

%% file: figures/identity.tikz
\begin{tikzpicture}
	\begin{pgfonlayer}{nodelayer}
		\node [style=none] (1) at (-1, 0) {};
		\node [style=none] (2) at (1, 0) {};
	\end{pgfonlayer}
	\begin{pgfonlayer}{edgelayer}
		\draw (1.center) to (2.center);
	\end{pgfonlayer}
\end{tikzpicture}

%% file: figures/swap.tikz
\begin{tikzpicture}
	\begin{pgfonlayer}{nodelayer}
		\node [style=none] (1) at (-1, -0.5) {};
		\node [style=none] (2) at (1, 0.5) {};
		\node [style=none] (3) at (-1, 0.5) {};
		\node [style=none] (4) at (1, -0.5) {};
	\end{pgfonlayer}
	\begin{pgfonlayer}{edgelayer}
		\draw [in=180, out=0] (1.center) to (2.center);
		\draw [in=-180, out=0] (3.center) to (4.center);
	\end{pgfonlayer}
\end{tikzpicture}

%% file: figures/h-minus.tikz
\begin{tikzpicture}
	\begin{pgfonlayer}{nodelayer}
		\node [style=hadamard] (0) at (-1.5, 0) {-};
		\node [style=none] (4) at (-2.25, 0) {};
		\node [style=none] (5) at (-0.75, 0) {};
	\end{pgfonlayer}
	\begin{pgfonlayer}{edgelayer}
		\draw (4.center) to (5.center);
	\end{pgfonlayer}
\end{tikzpicture}

%% file: figures/hada-n.tikz
\begin{tikzpicture}
	\begin{pgfonlayer}{nodelayer}
		\node [style=hadamard] (0) at (0, 0) {$a$};
		\node [style=none] (1) at (-1, 0) {};
		\node [style=none] (2) at (1, 0) {};
	\end{pgfonlayer}
	\begin{pgfonlayer}{edgelayer}
		\draw (1.center) to (2.center);
	\end{pgfonlayer}
\end{tikzpicture}

%% file: figures/hada-n-def.tikz
\begin{tikzpicture}
	\begin{pgfonlayer}{nodelayer}
		\node [style=Z dot] (0) at (-1.25, 0) {};
		\node [style=Z dot] (1) at (1.25, 0) {};
		\node [style=none] (2) at (-2, 0) {};
		\node [style=none] (3) at (2, 0) {};
		\node [style=hadamard] (4) at (0, 0.75) {};
		\node [style=hadamard] (5) at (0, -0.75) {};
		\node [style=none] (6) at (-0.25, 0.25) {$\vdots$};
		\node [style=none] (7) at (0.25, 0) {$a$};
	\end{pgfonlayer}
	\begin{pgfonlayer}{edgelayer}
		\draw (2.center) to (0);
		\draw [bend left=15] (0) to (4);
		\draw [bend left=15] (4) to (1);
		\draw (1) to (3.center);
		\draw [bend right=15] (0) to (5);
		\draw [bend right=15] (5) to (1);
	\end{pgfonlayer}
\end{tikzpicture}

%% file: figures/red-phase.tikz
\begin{tikzpicture}
	\begin{pgfonlayer}{nodelayer}
		\node [style=none] (1) at (-1, 0) {};
		\node [style=none] (2) at (2, 0) {};
		\node [style=X phase dot] (3) at (0, 0) {$0,\gamma$};
		\node [style=X dot] (4) at (1.25, 0) {};
	\end{pgfonlayer}
	\begin{pgfonlayer}{edgelayer}
		\draw (1.center) to (2.center);
	\end{pgfonlayer}
\end{tikzpicture}

%% file: figures/phase-G.tikz
\begin{tikzpicture}
	\begin{pgfonlayer}{nodelayer}
		\node [style=none] (1) at (-1.75, 0) {};
		\node [style=none] (2) at (1.75, 0) {};
		\node [style=Z phase dot] (3) at (0, 0) {$0,-\gamma G_{v,w}^2$};
	\end{pgfonlayer}
	\begin{pgfonlayer}{edgelayer}
		\draw (1.center) to (2.center);
	\end{pgfonlayer}
\end{tikzpicture}

%% file: figures/lemma-h-box0.tikz
\begin{tikzpicture}
	\begin{pgfonlayer}{nodelayer}
		\node [style=hadamard] (0) at (-1.5, 0) {$0$};
		\node [style=none] (2) at (-2.25, 0) {};
		\node [style=none] (3) at (-0.75, 0) {};
		\node [style=none] (4) at (0, 0) {$=$};
		\node [style=none] (5) at (0.75, 0) {};
		\node [style=none] (6) at (3, 0) {};
		\node [style=Z dot] (7) at (1.5, 0) {};
		\node [style=Z dot] (8) at (2.25, 0) {};
	\end{pgfonlayer}
	\begin{pgfonlayer}{edgelayer}
		\draw (2.center) to (3.center);
		\draw (5.center) to (7);
		\draw (8) to (6.center);
	\end{pgfonlayer}
\end{tikzpicture}

%% file: figures/lemma-h-box1.tikz
\begin{tikzpicture}
	\begin{pgfonlayer}{nodelayer}
		\node [style=hadamard] (0) at (-1.5, 0) {$1$};
		\node [style=hadamard] (1) at (1.5, 0) {};
		\node [style=none] (2) at (-2.25, 0) {};
		\node [style=none] (3) at (-0.75, 0) {};
		\node [style=none] (4) at (0, 0) {$=$};
		\node [style=none] (5) at (0.75, 0) {};
		\node [style=none] (6) at (2.25, 0) {};
	\end{pgfonlayer}
	\begin{pgfonlayer}{edgelayer}
		\draw (2.center) to (3.center);
		\draw (5.center) to (6.center);
	\end{pgfonlayer}
\end{tikzpicture}

%% file: figures/pivot-helper0.tikz
\begin{tikzpicture}
	\begin{pgfonlayer}{nodelayer}
		\node [style=Z dot] (0) at (-1.5, 0) {};
		\node [style=Z dot] (1) at (1.5, 0) {};
		\node [style=Z dot] (2) at (-4.5, 1) {};
		\node [style=Z dot] (3) at (-4.5, -1) {};
		\node [style=Z dot] (4) at (4.5, 1) {};
		\node [style=Z dot] (5) at (4.5, -1) {};
		\node [style=none] (6) at (-5.5, 1) {};
		\node [style=none] (7) at (-5.5, -1) {};
		\node [style=none] (8) at (5.5, -1) {};
		\node [style=none] (9) at (5.5, 1) {};
		\node [style=hadamard] (10) at (-3, 0.5) {\scriptsize$e_1$};
		\node [style=hadamard] (11) at (-3, -0.5) {\scriptsize$e_k$};
		\node [style=hadamard] (12) at (0, 0) {\scriptsize$\epsilon$};
		\node [style=hadamard] (13) at (3, 0.5) {\scriptsize$f_1$};
		\node [style=hadamard] (14) at (3, -0.5) {\scriptsize$f_j$};
		\node [style=none] (15) at (-1.5, 1.5) {};
		\node [style=none] (16) at (1.5, 1.5) {};
		\node [style=none] (17) at (-4.5, 0) {$\vdots$};
		\node [style=none] (18) at (4.5, 0) {$\vdots$};
	\end{pgfonlayer}
	\begin{pgfonlayer}{edgelayer}
		\draw (7.center) to (3);
		\draw (3) to (11);
		\draw (11) to (0);
		\draw (0) to (12);
		\draw (12) to (1);
		\draw (1) to (14);
		\draw (14) to (5);
		\draw (5) to (8.center);
		\draw (9.center) to (4);
		\draw (4) to (13);
		\draw (13) to (1);
		\draw (6.center) to (2);
		\draw (2) to (10);
		\draw (10) to (0);
		\draw (15.center) to (0);
		\draw (16.center) to (1);
	\end{pgfonlayer}
\end{tikzpicture}

%% file: figures/pivot-helper-final.tikz
\begin{tikzpicture}
	\begin{pgfonlayer}{nodelayer}
		\node [style=Z dot] (2) at (-4.75, 1) {};
		\node [style=Z dot] (3) at (-4.75, -1) {};
		\node [style=none] (6) at (-5.75, 1) {};
		\node [style=none] (7) at (-5.75, -1) {};
		\node [style=none] (8) at (5.75, -1) {};
		\node [style=none] (9) at (5.75, 1) {};
		\node [style=hadamard] (10) at (-3.25, 1.5) {\scriptsize $e_1$};
		\node [style=hadamard] (11) at (-3.25, -0.5) {\scriptsize $e_k$};
		\node [style=none] (15) at (1.25, 3) {};
		\node [style=none] (16) at (-1.25, 3) {};
		\node [style=none] (17) at (-4.75, 0.25) {$\vdots$};
		\node [style=none] (18) at (4.75, 0.25) {$\vdots$};
		\node [style=hadamard] (20) at (1.25, 2.75) {\scriptsize $\epsilon$};
		\node [style=Z dot] (23) at (1.25, 2) {};
		\node [style=Z dot] (24) at (-1.25, 2) {};
		\node [style=Z dot] (25) at (4.75, 1) {};
		\node [style=Z dot] (26) at (4.75, -1) {};
		\node [style=hadamard] (29) at (2.5, 1.75) {\scriptsize $f_1$};
		\node [style=hadamard] (32) at (2.5, 0) {\scriptsize $f_j$};
		\node [style=hadamard] (39) at (0, 1) {\scriptsize $w_{1,1}$};
		\node [style=hadamard] (40) at (0, -0.5) {\scriptsize $w_{1,j}$};
		\node [style=hadamard] (41) at (0, -2) {\scriptsize $w_{k,1}$};
		\node [style=hadamard] (42) at (0, -3.5) {\scriptsize $w_{k,j}$};
		\node [style=hadamard] (44) at (0, 2) {\scriptsize $-\epsilon$};
		\node [style=hadamard] (45) at (-1.25, 2.75) {\scriptsize $\epsilon$};
		\node [style=none] (46) at (0, 0.5) {$\vdots$};
		\node [style=none] (47) at (0, -1) {$\vdots$};
		\node [style=none] (48) at (0, -2.5) {$\vdots$};
		\node [style=none] (49) at (-1.25, 4.5) {};
		\node [style=none] (50) at (1.25, 4.5) {};
	\end{pgfonlayer}
	\begin{pgfonlayer}{edgelayer}
		\draw (7.center) to (3);
		\draw (3) to (11);
		\draw (6.center) to (2);
		\draw (2) to (10);
		\draw (15.center) to (23);
		\draw (23) to (24);
		\draw (9.center) to (25);
		\draw (26) to (8.center);
		\draw (23) to (29);
		\draw (29) to (25);
		\draw (26) to (32);
		\draw [in=-75, out=150] (32) to (23);
		\draw (2) to (39);
		\draw (2) to (40);
		\draw [in=180, out=-15] (3) to (41);
		\draw [in=180, out=-45] (3) to (42);
		\draw [in=180, out=0] (39) to (25);
		\draw (40) to (26);
		\draw [in=-135, out=0] (41) to (25);
		\draw [in=-135, out=0] (42) to (26);
		\draw (10) to (24);
		\draw [in=15, out=-105] (24) to (11);
		\draw (16.center) to (45);
		\draw (45) to (24);
		\draw [in=90, out=-90] (49.center) to (15.center);
		\draw [in=-90, out=90] (16.center) to (50.center);
	\end{pgfonlayer}
\end{tikzpicture}

%% file: figures/pivot-helper1.tikz
\begin{tikzpicture}
	\begin{pgfonlayer}{nodelayer}
		\node [style=X dot] (0) at (-0.5, 0) {};
		\node [style=Z dot] (1) at (1.5, 0) {};
		\node [style=Z dot] (2) at (-4.5, 1) {};
		\node [style=Z dot] (3) at (-4.5, -1) {};
		\node [style=Z dot] (4) at (4.5, 1) {};
		\node [style=Z dot] (5) at (4.5, -1) {};
		\node [style=none] (6) at (-5.25, 1) {};
		\node [style=none] (7) at (-5.25, -1) {};
		\node [style=none] (8) at (5.25, -1) {};
		\node [style=none] (9) at (5.25, 1) {};
		\node [style=hadamard] (10) at (-3.5, 0.75) {$e_1$};
		\node [style=hadamard] (11) at (-3.5, -0.75) {$e_k$};
		\node [style=hadamard] (12) at (-2, 0.5) {$-\epsilon$};
		\node [style=hadamard] (13) at (3, 0.5) {$f_1$};
		\node [style=hadamard] (14) at (3, -0.5) {$f_j$};
		\node [style=none] (15) at (-0.5, 1.5) {};
		\node [style=none] (16) at (1.5, 1.5) {};
		\node [style=none] (17) at (-4.5, 0) {$\vdots$};
		\node [style=none] (18) at (4.5, 0) {$\vdots$};
		\node [style=hadamard] (19) at (-2, -0.5) {$-\epsilon$};
		\node [style=hadamard] (20) at (-0.5, 0.75) {$-\epsilon$};
	\end{pgfonlayer}
	\begin{pgfonlayer}{edgelayer}
		\draw (7.center) to (3);
		\draw (3) to (11);
		\draw (1) to (14);
		\draw (14) to (5);
		\draw (5) to (8.center);
		\draw (9.center) to (4);
		\draw (4) to (13);
		\draw (13) to (1);
		\draw (6.center) to (2);
		\draw (2) to (10);
		\draw (15.center) to (0);
		\draw (16.center) to (1);
		\draw (10) to (12);
		\draw (12) to (0);
		\draw (11) to (19);
		\draw (19) to (0);
		\draw (0) to (1);
	\end{pgfonlayer}
\end{tikzpicture}

%% file: figures/pivot-helper2.tikz
\begin{tikzpicture}
	\begin{pgfonlayer}{nodelayer}
		\node [style=X dot] (0) at (-1, 0) {};
		\node [style=Z dot] (1) at (1, 0) {};
		\node [style=Z dot] (2) at (-5.5, 1) {};
		\node [style=Z dot] (3) at (-5.5, -1) {};
		\node [style=Z dot] (4) at (4, 1) {};
		\node [style=Z dot] (5) at (4, -1) {};
		\node [style=none] (6) at (-6.25, 1) {};
		\node [style=none] (7) at (-6.25, -1) {};
		\node [style=none] (8) at (4.75, -1) {};
		\node [style=none] (9) at (4.75, 1) {};
		\node [style=hadamard] (10) at (-3.75, 0.75) {$-\epsilon^{-1}e_1$};
		\node [style=hadamard] (11) at (-3.75, -0.75) {$-\epsilon^{-1}e_k$};
		\node [style=hadamard] (12) at (-2, 0.5) {};
		\node [style=hadamard] (13) at (2.5, 0.5) {$f_1$};
		\node [style=hadamard] (14) at (2.5, -0.5) {$f_j$};
		\node [style=none] (15) at (-1, 1.5) {};
		\node [style=none] (16) at (1, 1.5) {};
		\node [style=none] (17) at (-5.5, 0) {$\vdots$};
		\node [style=none] (18) at (4, 0) {$\vdots$};
		\node [style=hadamard] (19) at (-2, -0.5) {};
		\node [style=hadamard] (20) at (-1, 0.75) {$-\epsilon$};
	\end{pgfonlayer}
	\begin{pgfonlayer}{edgelayer}
		\draw (7.center) to (3);
		\draw (3) to (11);
		\draw (1) to (14);
		\draw (14) to (5);
		\draw (5) to (8.center);
		\draw (9.center) to (4);
		\draw (4) to (13);
		\draw (13) to (1);
		\draw (6.center) to (2);
		\draw (2) to (10);
		\draw (15.center) to (0);
		\draw (16.center) to (1);
		\draw (10) to (12);
		\draw (12) to (0);
		\draw (11) to (19);
		\draw (19) to (0);
		\draw (0) to (1);
	\end{pgfonlayer}
\end{tikzpicture}

%% file: figures/pivot-helper3.tikz
\begin{tikzpicture}
	\begin{pgfonlayer}{nodelayer}
		\node [style=Z dot] (2) at (-5.5, 1) {};
		\node [style=Z dot] (3) at (-5.5, -1) {};
		\node [style=Z dot] (4) at (4, 1) {};
		\node [style=Z dot] (5) at (4, -1) {};
		\node [style=none] (6) at (-6.25, 1) {};
		\node [style=none] (7) at (-6.25, -1) {};
		\node [style=none] (8) at (4.75, -1) {};
		\node [style=none] (9) at (4.75, 1) {};
		\node [style=hadamard] (10) at (-3.75, 1) {$-\epsilon^{-1}e_1$};
		\node [style=hadamard] (11) at (-3.75, -1) {$-\epsilon^{-1}e_k$};
		\node [style=hadamard] (12) at (-2, 1) {};
		\node [style=hadamard] (13) at (2.5, 1) {$f_1$};
		\node [style=hadamard] (14) at (2.5, -1) {$f_j$};
		\node [style=none] (15) at (-1, 3.25) {};
		\node [style=none] (16) at (1, 3.25) {};
		\node [style=none] (17) at (-5.5, 0) {$\vdots$};
		\node [style=none] (18) at (4, 0) {$\vdots$};
		\node [style=hadamard] (19) at (-2, -1) {};
		\node [style=hadamard] (20) at (-1, 2.5) {$-\epsilon$};
		\node [style=Z dot] (21) at (-1, 1) {};
		\node [style=Z dot] (22) at (-1, -1) {};
		\node [style=Z dot] (23) at (-1, 1.75) {};
		\node [style=X dot] (24) at (1, 1.75) {};
		\node [style=X dot] (25) at (1, 1) {};
		\node [style=X dot] (26) at (1, -1) {};
		\node [style=none] (27) at (-1, 0) {$\vdots$};
		\node [style=none] (28) at (1, 0) {$\vdots$};
	\end{pgfonlayer}
	\begin{pgfonlayer}{edgelayer}
		\draw (7.center) to (3);
		\draw (3) to (11);
		\draw (14) to (5);
		\draw (5) to (8.center);
		\draw (9.center) to (4);
		\draw (4) to (13);
		\draw (6.center) to (2);
		\draw (2) to (10);
		\draw (10) to (12);
		\draw (11) to (19);
		\draw (15.center) to (23);
		\draw (16.center) to (24);
		\draw (12) to (21);
		\draw (19) to (22);
		\draw (26) to (14);
		\draw (25) to (13);
		\draw (23) to (26);
		\draw (23) to (25);
		\draw (23) to (24);
		\draw (24) to (21);
		\draw (21) to (25);
		\draw (25) to (22);
		\draw (22) to (24);
		\draw (22) to (26);
		\draw (21) to (26);
	\end{pgfonlayer}
\end{tikzpicture}

%% file: figures/pivot-helper4.tikz
\begin{tikzpicture}
	\begin{pgfonlayer}{nodelayer}
		\node [style=Z dot] (2) at (-5.5, 1) {};
		\node [style=Z dot] (3) at (-5.5, -1.5) {};
		\node [style=none] (6) at (-6.25, 1) {};
		\node [style=none] (7) at (-6.25, -1.5) {};
		\node [style=none] (8) at (5.25, -1.5) {};
		\node [style=none] (9) at (5.25, 1) {};
		\node [style=hadamard] (10) at (-3.75, 1) {\scriptsize $-\epsilon^{-1}e_1$};
		\node [style=hadamard] (11) at (-3.75, -1.5) {\scriptsize $-\epsilon^{-1}e_k$};
		\node [style=hadamard] (12) at (-2, 1) {};
		\node [style=hadamard] (13) at (0.5, 1) {\scriptsize $-f_1$};
		\node [style=hadamard] (14) at (0.75, -1.5) {\scriptsize $-f_j$};
		\node [style=none] (15) at (-0.75, 4.5) {};
		\node [style=none] (16) at (4, 4.5) {};
		\node [style=none] (17) at (-5.5, 0) {$\vdots$};
		\node [style=none] (18) at (4.5, 0) {$\vdots$};
		\node [style=hadamard] (19) at (-2, -1.5) {};
		\node [style=hadamard] (20) at (-0.75, 3.75) {\scriptsize $-\epsilon$};
		\node [style=Z dot] (21) at (-1, 1) {};
		\node [style=Z dot] (22) at (-1, -1.5) {};
		\node [style=Z dot] (23) at (-0.75, 3) {};
		\node [style=X dot] (24) at (4, 3) {};
		\node [style=Z dot] (25) at (4.5, 1) {};
		\node [style=Z dot] (26) at (4.5, -1.5) {};
		\node [style=none] (27) at (-1, 0) {$\vdots$};
		\node [style=hadamard] (29) at (1, 2.5) {\scriptsize $-f_1$};
		\node [style=hadamard] (30) at (3.25, 0.5) {\scriptsize $-f_1$};
		\node [style=hadamard] (31) at (2, -1) {\scriptsize $-f_j$};
		\node [style=hadamard] (32) at (3.25, -0.5) {\scriptsize $-f_j$};
	\end{pgfonlayer}
	\begin{pgfonlayer}{edgelayer}
		\draw (7.center) to (3);
		\draw (3) to (11);
		\draw (6.center) to (2);
		\draw (2) to (10);
		\draw (10) to (12);
		\draw (11) to (19);
		\draw (15.center) to (23);
		\draw (16.center) to (24);
		\draw (12) to (21);
		\draw (19) to (22);
		\draw (23) to (24);
		\draw (24) to (21);
		\draw (21) to (25);
		\draw (22) to (24);
		\draw (22) to (26);
		\draw (9.center) to (25);
		\draw (26) to (8.center);
		\draw (23) to (29);
		\draw (29) to (25);
		\draw (25) to (30);
		\draw (30) to (22);
		\draw (26) to (32);
		\draw (32) to (23);
		\draw (21) to (31);
		\draw (31) to (26);
	\end{pgfonlayer}
\end{tikzpicture}

%% file: figures/pivot-helper5.tikz
\begin{tikzpicture}
	\begin{pgfonlayer}{nodelayer}
		\node [style=Z dot] (2) at (-5.5, 1) {};
		\node [style=Z dot] (3) at (-5.5, -1.5) {};
		\node [style=none] (6) at (-6.25, 1) {};
		\node [style=none] (7) at (-6.25, -1.5) {};
		\node [style=none] (8) at (5.25, -1.5) {};
		\node [style=none] (9) at (5.25, 1) {};
		\node [style=hadamard] (10) at (-3.75, 1) {\scriptsize $-\epsilon^{-1}e_1$};
		\node [style=hadamard] (11) at (-3.75, -1.5) {\scriptsize $-\epsilon^{-1}e_k$};
		\node [style=hadamard] (13) at (0.5, 1) {\scriptsize $-f_1$};
		\node [style=hadamard] (14) at (0.75, -1.5) {\scriptsize $-f_j$};
		\node [style=none] (15) at (-0.75, 4.5) {};
		\node [style=none] (16) at (4, 4.5) {};
		\node [style=none] (17) at (-5.5, 0) {$\vdots$};
		\node [style=none] (18) at (4.5, 0) {$\vdots$};
		\node [style=hadamard] (20) at (-0.75, 3.75) {\scriptsize $-\epsilon$};
		\node [style=X dot] (21) at (-2, 1) {};
		\node [style=X dot] (22) at (-2, -1.5) {};
		\node [style=Z dot] (23) at (-0.75, 3) {};
		\node [style=X dot] (24) at (4, 3) {};
		\node [style=Z dot] (25) at (4.5, 1) {};
		\node [style=Z dot] (26) at (4.5, -1.5) {};
		\node [style=none] (27) at (-2, 0) {$\vdots$};
		\node [style=hadamard] (29) at (1, 2.5) {\scriptsize $-f_1$};
		\node [style=hadamard] (30) at (3.25, 0.5) {\scriptsize $-f_1$};
		\node [style=hadamard] (31) at (2, -1) {\scriptsize $-f_j$};
		\node [style=hadamard] (32) at (3.25, -0.5) {\scriptsize $-f_j$};
		\node [style=hadamard] (33) at (-0.75, 1) {-};
		\node [style=hadamard] (34) at (-0.75, 1.5) {-};
		\node [style=hadamard] (35) at (-0.75, 0.5) {-};
		\node [style=hadamard] (36) at (-0.5, -0.5) {-};
		\node [style=hadamard] (37) at (-0.5, -1) {-};
		\node [style=hadamard] (38) at (-0.5, -1.5) {-};
	\end{pgfonlayer}
	\begin{pgfonlayer}{edgelayer}
		\draw (7.center) to (3);
		\draw (3) to (11);
		\draw (6.center) to (2);
		\draw (2) to (10);
		\draw (15.center) to (23);
		\draw (16.center) to (24);
		\draw (23) to (24);
		\draw (21) to (25);
		\draw (22) to (26);
		\draw (9.center) to (25);
		\draw (26) to (8.center);
		\draw (23) to (29);
		\draw (29) to (25);
		\draw (25) to (30);
		\draw (30) to (22);
		\draw (26) to (32);
		\draw (32) to (23);
		\draw (31) to (26);
		\draw (10) to (21);
		\draw (11) to (22);
		\draw (22) to (36);
		\draw [bend right=15] (36) to (24);
		\draw (21) to (35);
		\draw (35) to (31);
		\draw (21) to (34);
		\draw [in=210, out=15] (34) to (24);
	\end{pgfonlayer}
\end{tikzpicture}

%% file: figures/pivot-helper6.tikz
\begin{tikzpicture}
	\begin{pgfonlayer}{nodelayer}
		\node [style=none] (43) at (-3.5, 2.25) {};
		\node [style=none] (44) at (-3.5, 1.25) {};
		\node [style=none] (45) at (-3.5, 0.25) {};
		\node [style=none] (46) at (-3.5, -0.75) {};
		\node [style=none] (47) at (-3.5, -1.75) {};
		\node [style=none] (48) at (-3.5, -2.75) {};
		\node [style=Z dot] (2) at (-5.5, 1) {};
		\node [style=Z dot] (3) at (-5.5, -1.5) {};
		\node [style=none] (6) at (-6.25, 1) {};
		\node [style=none] (7) at (-6.25, -1.5) {};
		\node [style=none] (8) at (5.75, -1.5) {};
		\node [style=none] (9) at (5.75, 1) {};
		\node [style=hadamard] (10) at (-2.75, 2.25) {\scriptsize $\epsilon^{-1}e_1$};
		\node [style=hadamard] (11) at (-2.75, -0.75) {\scriptsize $\epsilon^{-1}e_k$};
		\node [style=hadamard] (13) at (3.5, 1.25) {\scriptsize $-f_1$};
		\node [style=hadamard] (14) at (3.5, -2.75) {\scriptsize $-f_j$};
		\node [style=none] (15) at (0, 4.5) {};
		\node [style=none] (16) at (2.5, 4.5) {};
		\node [style=none] (17) at (-5.5, 0) {$\vdots$};
		\node [style=none] (18) at (5, 0) {$\vdots$};
		\node [style=hadamard] (20) at (0, 3.75) {\scriptsize $-\epsilon$};
		\node [style=Z dot] (23) at (0, 3) {};
		\node [style=X dot] (24) at (2.5, 3) {};
		\node [style=Z dot] (25) at (5, 1) {};
		\node [style=Z dot] (26) at (5, -1.5) {};
		\node [style=hadamard] (29) at (3.5, 2.25) {\scriptsize $-f_1$};
		\node [style=hadamard] (30) at (3.5, 0.25) {\scriptsize $-f_1$};
		\node [style=hadamard] (31) at (3.5, -1.75) {\scriptsize $-f_j$};
		\node [style=hadamard] (32) at (3.5, -0.75) {\scriptsize $-f_j$};
		\node [style=hadamard] (33) at (-1, 2.25) {\scriptsize -};
		\node [style=hadamard] (34) at (-1, 1.25) {\scriptsize -};
		\node [style=hadamard] (35) at (-1, 0.25) {\scriptsize -};
		\node [style=hadamard] (36) at (-1, -0.75) {\scriptsize -};
		\node [style=hadamard] (37) at (-1, -1.75) {\scriptsize -};
		\node [style=hadamard] (38) at (-1, -2.75) {\scriptsize -};
		\node [style=hadamard] (39) at (-2.75, 1.25) {\scriptsize $\epsilon^{-1}e_1$};
		\node [style=hadamard] (40) at (-2.75, 0.25) {\scriptsize $\epsilon^{-1}e_1$};
		\node [style=hadamard] (41) at (-2.75, -1.75) {\scriptsize $\epsilon^{-1}e_k$};
		\node [style=hadamard] (42) at (-2.75, -2.75) {\scriptsize $\epsilon^{-1}e_k$};
		\node [style=none] (49) at (4, 2.25) {};
		\node [style=none] (50) at (4, 1.25) {};
		\node [style=none] (51) at (4, 0.25) {};
		\node [style=none] (52) at (4, -0.75) {};
		\node [style=none] (53) at (4, -1.75) {};
		\node [style=none] (54) at (4, -2.75) {};
		\node [style=none] (55) at (3, -0.75) {};
		\node [style=none] (56) at (3, 0.25) {};
	\end{pgfonlayer}
	\begin{pgfonlayer}{edgelayer}
		\draw (7.center) to (3);
		\draw (6.center) to (2);
		\draw (15.center) to (23);
		\draw (16.center) to (24);
		\draw (23) to (24);
		\draw (9.center) to (25);
		\draw (26) to (8.center);
		\draw (23) to (29);
		\draw (33) to (24);
		\draw (35) to (31);
		\draw (36) to (24);
		\draw (10) to (33);
		\draw (39) to (34);
		\draw (40) to (35);
		\draw (42) to (38);
		\draw (38) to (14);
		\draw (34) to (13);
		\draw (41) to (37);
		\draw (11) to (36);
		\draw (2) to (43.center);
		\draw (44.center) to (2);
		\draw (2) to (45.center);
		\draw (46.center) to (3);
		\draw (3) to (47.center);
		\draw (48.center) to (3);
		\draw (49.center) to (25);
		\draw (25) to (50.center);
		\draw (25) to (51.center);
		\draw (52.center) to (26);
		\draw (26) to (53.center);
		\draw (26) to (54.center);
		\draw (55.center) to (23);
		\draw (37) to (56.center);
	\end{pgfonlayer}
\end{tikzpicture}

%% file: figures/pivot-helper7.tikz
\begin{tikzpicture}
	\begin{pgfonlayer}{nodelayer}
		\node [style=Z dot] (2) at (-5.5, 1) {};
		\node [style=Z dot] (3) at (-5.5, -1.5) {};
		\node [style=none] (6) at (-6.5, 1) {};
		\node [style=none] (7) at (-6.5, -1.5) {};
		\node [style=none] (8) at (5.5, -1.5) {};
		\node [style=none] (9) at (5.5, 1) {};
		\node [style=hadamard] (10) at (-2.5, 2.25) {\scriptsize $\epsilon^{-1}e_1$};
		\node [style=hadamard] (11) at (-2.5, -0.75) {\scriptsize $\epsilon^{-1}e_k$};
		\node [style=none] (15) at (-0.5, 4.5) {};
		\node [style=none] (16) at (3, 4.5) {};
		\node [style=none] (17) at (-5.5, 0) {$\vdots$};
		\node [style=none] (18) at (4.5, 0) {$\vdots$};
		\node [style=hadamard] (20) at (-0.5, 3.75) {\scriptsize $-\epsilon$};
		\node [style=Z dot] (23) at (-0.5, 3) {};
		\node [style=X dot] (24) at (3, 3) {};
		\node [style=Z dot] (25) at (4.5, 1) {};
		\node [style=Z dot] (26) at (4.5, -1.5) {};
		\node [style=hadamard] (29) at (3, 1.75) {\scriptsize $-f_1$};
		\node [style=hadamard] (32) at (3.5, -0.5) {\scriptsize $-f_j$};
		\node [style=hadamard] (33) at (-1, 2.25) {\scriptsize -};
		\node [style=hadamard] (36) at (-1, -0.75) {\scriptsize -};
		\node [style=hadamard] (39) at (-2.25, 1.25) {\scriptsize $-\epsilon^{-1} e_1 f_1$};
		\node [style=hadamard] (40) at (-2.25, 0.25) {\scriptsize $-\epsilon^{-1} e_1 f_j$};
		\node [style=hadamard] (41) at (-2.25, -1.75) {\scriptsize $-\epsilon^{-1} e_k f_1$};
		\node [style=hadamard] (42) at (-2.25, -2.75) {\scriptsize $-\epsilon^{-1} e_k f_j$};
		\node [style=none] (43) at (-3.25, 2.25) {};
		\node [style=none] (44) at (-3.25, 1.25) {};
		\node [style=none] (45) at (-3.25, 0.25) {};
		\node [style=none] (46) at (-3.25, -0.75) {};
		\node [style=none] (47) at (-3.25, -1.75) {};
		\node [style=none] (48) at (-3.25, -2.75) {};
		\node [style=none] (49) at (-1.75, 2.25) {};
		\node [style=none] (50) at (-1, 1.25) {};
		\node [style=none] (51) at (-1, 0.25) {};
		\node [style=none] (52) at (-1.75, -0.75) {};
		\node [style=none] (53) at (-1, -1.75) {};
		\node [style=none] (54) at (-1, -2.75) {};
	\end{pgfonlayer}
	\begin{pgfonlayer}{edgelayer}
		\draw (7.center) to (3);
		\draw (6.center) to (2);
		\draw (15.center) to (23);
		\draw (16.center) to (24);
		\draw (23) to (24);
		\draw (9.center) to (25);
		\draw (26) to (8.center);
		\draw (23) to (29);
		\draw (29) to (25);
		\draw (26) to (32);
		\draw (32) to (23);
		\draw (33) to (24);
		\draw [in=-150, out=15] (36) to (24);
		\draw (2) to (43.center);
		\draw (44.center) to (2);
		\draw (2) to (45.center);
		\draw (46.center) to (3);
		\draw (3) to (47.center);
		\draw (48.center) to (3);
		\draw (49.center) to (33);
		\draw [in=-180, out=0] (50.center) to (25);
		\draw [in=180, out=0] (51.center) to (26);
		\draw (52.center) to (36);
		\draw (25) to (53.center);
		\draw (54.center) to (26);
	\end{pgfonlayer}
\end{tikzpicture}

%% file: figures/pivot-helper8.tikz
\begin{tikzpicture}
	\begin{pgfonlayer}{nodelayer}
		\node [style=Z dot] (2) at (-5.5, 1) {};
		\node [style=Z dot] (3) at (-5.5, -1.5) {};
		\node [style=none] (6) at (-6.5, 1) {};
		\node [style=none] (7) at (-6.5, -1.5) {};
		\node [style=none] (8) at (5.5, -1.5) {};
		\node [style=none] (9) at (5.5, 1) {};
		\node [style=hadamard] (10) at (-2.5, 2.25) {\scriptsize $\epsilon^{-1}e_1$};
		\node [style=hadamard] (11) at (-2.5, -0.75) {\scriptsize $\epsilon^{-1}e_k$};
		\node [style=none] (15) at (-0.5, 4.5) {};
		\node [style=none] (17) at (-5.5, 0) {$\vdots$};
		\node [style=none] (18) at (4.5, 0) {$\vdots$};
		\node [style=hadamard] (20) at (-0.5, 3.75) {\scriptsize $-\epsilon$};
		\node [style=Z dot] (23) at (-0.5, 3) {};
		\node [style=X dot] (24) at (1.75, 3) {};
		\node [style=Z dot] (25) at (4.5, 1) {};
		\node [style=Z dot] (26) at (4.5, -1.5) {};
		\node [style=hadamard] (29) at (3, 1.75) {\scriptsize $-f_1$};
		\node [style=hadamard] (32) at (3.5, -0.5) {\scriptsize $-f_j$};
		\node [style=hadamard] (33) at (-1, 2.25) {\scriptsize -};
		\node [style=hadamard] (36) at (-1, -0.75) {\scriptsize -};
		\node [style=hadamard] (39) at (-2.25, 1.25) {\scriptsize $-\epsilon^{-1} e_1 f_1$};
		\node [style=hadamard] (40) at (-2.25, 0.25) {\scriptsize $-\epsilon^{-1} e_1 f_j$};
		\node [style=hadamard] (41) at (-2.25, -1.75) {\scriptsize $-\epsilon^{-1} e_k f_1$};
		\node [style=hadamard] (42) at (-2.25, -2.75) {\scriptsize $-\epsilon^{-1} e_k f_j$};
		\node [style=none] (43) at (-3.25, 2.25) {};
		\node [style=none] (44) at (-3.25, 1.25) {};
		\node [style=none] (45) at (-3.25, 0.25) {};
		\node [style=none] (46) at (-3.25, -0.75) {};
		\node [style=none] (47) at (-3.25, -1.75) {};
		\node [style=none] (48) at (-3.25, -2.75) {};
		\node [style=none] (49) at (-1.75, 2.25) {};
		\node [style=none] (50) at (-1, 1.25) {};
		\node [style=none] (51) at (-1, 0.25) {};
		\node [style=none] (52) at (-1.75, -0.75) {};
		\node [style=none] (53) at (-1, -1.75) {};
		\node [style=none] (54) at (-1, -2.75) {};
		\node [style=hadamard] (55) at (3, 3) {$-\epsilon$};
		\node [style=hadamard] (57) at (4.25, 3) {$\epsilon$};
		\node [style=none] (59) at (4.25, 4.5) {};
	\end{pgfonlayer}
	\begin{pgfonlayer}{edgelayer}
		\draw (7.center) to (3);
		\draw (6.center) to (2);
		\draw (15.center) to (23);
		\draw (23) to (24);
		\draw (9.center) to (25);
		\draw (26) to (8.center);
		\draw (23) to (29);
		\draw (29) to (25);
		\draw (26) to (32);
		\draw (32) to (23);
		\draw [in=-150, out=0] (33) to (24);
		\draw [in=-120, out=15, looseness=0.75] (36) to (24);
		\draw (2) to (43.center);
		\draw (44.center) to (2);
		\draw (2) to (45.center);
		\draw (46.center) to (3);
		\draw (3) to (47.center);
		\draw (48.center) to (3);
		\draw (49.center) to (33);
		\draw [in=-180, out=0] (50.center) to (25);
		\draw [in=180, out=0] (51.center) to (26);
		\draw (52.center) to (36);
		\draw (25) to (53.center);
		\draw (54.center) to (26);
		\draw (59.center) to (57);
		\draw (24) to (57);
	\end{pgfonlayer}
\end{tikzpicture}

%% file: figures/pivot-helper9.tikz
\begin{tikzpicture}
	\begin{pgfonlayer}{nodelayer}
		\node [style=Z dot] (2) at (-5.5, 1) {};
		\node [style=Z dot] (3) at (-5.5, -1.5) {};
		\node [style=none] (6) at (-6.5, 1) {};
		\node [style=none] (7) at (-6.5, -1.5) {};
		\node [style=none] (8) at (5.5, -1.5) {};
		\node [style=none] (9) at (5.5, 1) {};
		\node [style=hadamard] (10) at (-2.75, 2.25) {\scriptsize $\epsilon^{-1}e_1$};
		\node [style=hadamard] (11) at (-2.5, -0.75) {\scriptsize $\epsilon^{-1}e_k$};
		\node [style=none] (15) at (-0.25, 4.5) {};
		\node [style=none] (17) at (-5.5, 0) {$\vdots$};
		\node [style=none] (18) at (4.5, 0) {$\vdots$};
		\node [style=hadamard] (20) at (-0.25, 3.75) {\scriptsize $-\epsilon$};
		\node [style=Z dot] (23) at (-0.25, 3) {};
		\node [style=Z dot] (24) at (2.75, 3) {};
		\node [style=Z dot] (25) at (4.5, 1) {};
		\node [style=Z dot] (26) at (4.5, -1.5) {};
		\node [style=hadamard] (29) at (3, 1.75) {\scriptsize $-f_1$};
		\node [style=hadamard] (32) at (3.5, -0.5) {\scriptsize $-f_j$};
		\node [style=hadamard] (33) at (-1.5, 2.25) {\scriptsize -};
		\node [style=hadamard] (36) at (-1, -0.75) {\scriptsize -};
		\node [style=hadamard] (39) at (-2.25, 1.25) {\scriptsize $-\epsilon^{-1} e_1 f_1$};
		\node [style=hadamard] (40) at (-2.25, 0.25) {\scriptsize $-\epsilon^{-1} e_1 f_j$};
		\node [style=hadamard] (41) at (-2.25, -1.75) {\scriptsize $-\epsilon^{-1} e_k f_1$};
		\node [style=hadamard] (42) at (-2.25, -2.75) {\scriptsize $-\epsilon^{-1} e_k f_j$};
		\node [style=none] (43) at (-3.25, 2.25) {};
		\node [style=none] (44) at (-3.25, 1.25) {};
		\node [style=none] (45) at (-3.25, 0.25) {};
		\node [style=none] (46) at (-3.25, -0.75) {};
		\node [style=none] (47) at (-3.25, -1.75) {};
		\node [style=none] (48) at (-3.25, -2.75) {};
		\node [style=none] (49) at (-2, 2.25) {};
		\node [style=none] (50) at (-1, 1.25) {};
		\node [style=none] (51) at (-1, 0.25) {};
		\node [style=none] (52) at (-1.75, -0.75) {};
		\node [style=none] (53) at (-1, -1.75) {};
		\node [style=none] (54) at (-1, -2.75) {};
		\node [style=hadamard] (55) at (1.5, 3) {$\epsilon$};
		\node [style=hadamard] (57) at (2.75, 3.75) {$\epsilon$};
		\node [style=none] (59) at (2.75, 4.5) {};
		\node [style=hadamard] (60) at (-0.5, 2.25) {$\epsilon$};
		\node [style=hadamard] (61) at (1, 0.5) {$\epsilon$};
	\end{pgfonlayer}
	\begin{pgfonlayer}{edgelayer}
		\draw (7.center) to (3);
		\draw (6.center) to (2);
		\draw (15.center) to (23);
		\draw (23) to (24);
		\draw (9.center) to (25);
		\draw (26) to (8.center);
		\draw (23) to (29);
		\draw (29) to (25);
		\draw (26) to (32);
		\draw (32) to (23);
		\draw (2) to (43.center);
		\draw (44.center) to (2);
		\draw (2) to (45.center);
		\draw (46.center) to (3);
		\draw (3) to (47.center);
		\draw (48.center) to (3);
		\draw (49.center) to (33);
		\draw (50.center) to (25);
		\draw [in=180, out=-15] (51.center) to (26);
		\draw (52.center) to (36);
		\draw (25) to (53.center);
		\draw (54.center) to (26);
		\draw (59.center) to (57);
		\draw (24) to (57);
		\draw (33) to (60);
		\draw [in=-150, out=0] (60) to (24);
		\draw (24) to (61);
		\draw [bend left=15] (61) to (36);
	\end{pgfonlayer}
\end{tikzpicture}

%% file: figures/pivot-helper10.tikz
\begin{tikzpicture}
	\begin{pgfonlayer}{nodelayer}
		\node [style=Z dot] (2) at (-5.5, 1) {};
		\node [style=Z dot] (3) at (-5.5, -1.5) {};
		\node [style=none] (6) at (-6.5, 1) {};
		\node [style=none] (7) at (-6.5, -1.5) {};
		\node [style=none] (8) at (5, -1.5) {};
		\node [style=none] (9) at (5, 1) {};
		\node [style=hadamard] (10) at (-1.5, 2.25) {\scriptsize $e_1$};
		\node [style=hadamard] (11) at (-1.5, -0.75) {\scriptsize $e_k$};
		\node [style=none] (15) at (-3, 4.5) {};
		\node [style=none] (17) at (-5.5, 0) {$\vdots$};
		\node [style=none] (18) at (4, 0) {$\vdots$};
		\node [style=hadamard] (20) at (-3, 3.75) {\scriptsize $\epsilon$};
		\node [style=Z dot] (23) at (-1, 3) {};
		\node [style=Z dot] (24) at (2, 3) {};
		\node [style=Z dot] (25) at (4, 1) {};
		\node [style=Z dot] (26) at (4, -1.5) {};
		\node [style=hadamard] (29) at (2.75, 1.75) {\scriptsize $-f_1$};
		\node [style=hadamard] (32) at (3, -0.5) {\scriptsize $-f_j$};
		\node [style=hadamard] (39) at (-2.25, 1.25) {\scriptsize $-\epsilon^{-1} e_1 f_1$};
		\node [style=hadamard] (40) at (-2.25, 0.25) {\scriptsize $-\epsilon^{-1} e_1 f_j$};
		\node [style=hadamard] (41) at (-2.25, -1.75) {\scriptsize $-\epsilon^{-1} e_k f_1$};
		\node [style=hadamard] (42) at (-2.25, -2.75) {\scriptsize $-\epsilon^{-1} e_k f_j$};
		\node [style=none] (44) at (-3.25, 1.25) {};
		\node [style=none] (45) at (-3.25, 0.25) {};
		\node [style=none] (47) at (-3.25, -1.75) {};
		\node [style=none] (48) at (-3.25, -2.75) {};
		\node [style=none] (50) at (-1, 1.25) {};
		\node [style=none] (51) at (-1, 0.25) {};
		\node [style=none] (53) at (-1, -1.75) {};
		\node [style=none] (54) at (-1, -2.75) {};
		\node [style=hadamard] (55) at (0.5, 3) {$\epsilon$};
		\node [style=none] (59) at (2, 4.5) {};
		\node [style=hadamard] (60) at (2, 3.75) {$\epsilon$};
		\node [style=hadamard] (61) at (-3, 3) {-};
		\node [style=hadamard] (62) at (-2, 3) {$-1$};
	\end{pgfonlayer}
	\begin{pgfonlayer}{edgelayer}
		\draw (7.center) to (3);
		\draw (6.center) to (2);
		\draw (23) to (24);
		\draw (9.center) to (25);
		\draw (26) to (8.center);
		\draw (23) to (29);
		\draw (29) to (25);
		\draw (26) to (32);
		\draw (32) to (23);
		\draw (44.center) to (2);
		\draw (2) to (45.center);
		\draw (3) to (47.center);
		\draw (48.center) to (3);
		\draw (50.center) to (25);
		\draw [in=180, out=0] (51.center) to (26);
		\draw (25) to (53.center);
		\draw (54.center) to (26);
		\draw [in=180, out=30] (2) to (10);
		\draw [in=210, out=0] (10) to (24);
		\draw [in=15, out=-180] (11) to (3);
		\draw [in=0, out=-105] (24) to (11);
		\draw (59.center) to (24);
		\draw (15.center) to (61);
		\draw (61) to (23);
	\end{pgfonlayer}
\end{tikzpicture}

%% file: figures/pivot-helper11.tikz
\begin{tikzpicture}
	\begin{pgfonlayer}{nodelayer}
		\node [style=Z dot] (2) at (-5.5, 1) {};
		\node [style=Z dot] (3) at (-5.5, -1.5) {};
		\node [style=none] (6) at (-6.5, 1) {};
		\node [style=none] (7) at (-6.5, -1.5) {};
		\node [style=none] (8) at (5, -1.5) {};
		\node [style=none] (9) at (5, 1) {};
		\node [style=hadamard] (10) at (-1.5, 2.25) {\scriptsize $e_1$};
		\node [style=hadamard] (11) at (-1.5, -0.75) {\scriptsize $e_k$};
		\node [style=none] (15) at (-1, 4.5) {};
		\node [style=none] (17) at (-5.5, 0) {$\vdots$};
		\node [style=none] (18) at (4, 0) {$\vdots$};
		\node [style=hadamard] (20) at (-1, 3.75) {\scriptsize $\epsilon$};
		\node [style=Z dot] (23) at (-1, 3) {};
		\node [style=Z dot] (24) at (2, 3) {};
		\node [style=Z dot] (25) at (4, 1) {};
		\node [style=Z dot] (26) at (4, -1.5) {};
		\node [style=hadamard] (29) at (2.75, 1.75) {\scriptsize $-f_1$};
		\node [style=hadamard] (32) at (3, -0.5) {\scriptsize $-f_j$};
		\node [style=hadamard] (39) at (-2.25, 1.25) {\scriptsize $-\epsilon^{-1} e_1 f_1$};
		\node [style=hadamard] (40) at (-2.25, 0.25) {\scriptsize $-\epsilon^{-1} e_1 f_j$};
		\node [style=hadamard] (41) at (-2.25, -1.75) {\scriptsize $-\epsilon^{-1} e_k f_1$};
		\node [style=hadamard] (42) at (-2.25, -2.75) {\scriptsize $-\epsilon^{-1} e_k f_j$};
		\node [style=none] (44) at (-3.25, 1.25) {};
		\node [style=none] (45) at (-3.25, 0.25) {};
		\node [style=none] (47) at (-3.25, -1.75) {};
		\node [style=none] (48) at (-3.25, -2.75) {};
		\node [style=none] (50) at (-1, 1.25) {};
		\node [style=none] (51) at (-1, 0.25) {};
		\node [style=none] (53) at (-1, -1.75) {};
		\node [style=none] (54) at (-1, -2.75) {};
		\node [style=hadamard] (55) at (1, 3) {$\epsilon$};
		\node [style=none] (59) at (2, 4.5) {};
		\node [style=hadamard] (60) at (2, 3.75) {$\epsilon$};
		\node [style=hadamard] (62) at (1.75, 0.5) {};
		\node [style=hadamard] (63) at (1.25, 2) {};
		\node [style=hadamard] (64) at (0.25, 3) {};
		\node [style=hadamard] (65) at (0, 1.75) {};
		\node [style=hadamard] (66) at (-0.25, 3) {};
		\node [style=hadamard] (67) at (0.75, 2) {};
	\end{pgfonlayer}
	\begin{pgfonlayer}{edgelayer}
		\draw (7.center) to (3);
		\draw (6.center) to (2);
		\draw (23) to (24);
		\draw (9.center) to (25);
		\draw (26) to (8.center);
		\draw (29) to (25);
		\draw (26) to (32);
		\draw (44.center) to (2);
		\draw (2) to (45.center);
		\draw (3) to (47.center);
		\draw (48.center) to (3);
		\draw (50.center) to (25);
		\draw [in=180, out=0] (51.center) to (26);
		\draw (25) to (53.center);
		\draw (54.center) to (26);
		\draw [in=180, out=30] (2) to (10);
		\draw [in=210, out=0] (10) to (24);
		\draw [in=15, out=-180] (11) to (3);
		\draw [in=0, out=-105] (24) to (11);
		\draw (59.center) to (24);
		\draw (62) to (32);
		\draw (63) to (29);
		\draw (15.center) to (23);
		\draw (23) to (67);
		\draw (67) to (63);
		\draw (23) to (65);
		\draw (65) to (62);
	\end{pgfonlayer}
\end{tikzpicture}

%% file: figures/pivot-helper12.tikz
\begin{tikzpicture}
	\begin{pgfonlayer}{nodelayer}
		\node [style=Z dot] (2) at (-5.5, 1) {};
		\node [style=Z dot] (3) at (-5.5, -1.5) {};
		\node [style=none] (6) at (-6.5, 1) {};
		\node [style=none] (7) at (-6.5, -1.5) {};
		\node [style=none] (8) at (5, -1.5) {};
		\node [style=none] (9) at (5, 1) {};
		\node [style=hadamard] (10) at (-1.5, 2.25) {\scriptsize $e_1$};
		\node [style=hadamard] (11) at (-1.5, -0.75) {\scriptsize $e_k$};
		\node [style=none] (15) at (-1, 4.5) {};
		\node [style=none] (17) at (-5.5, 0) {$\vdots$};
		\node [style=none] (18) at (4, 0) {$\vdots$};
		\node [style=hadamard] (20) at (-1, 3.75) {\scriptsize $\epsilon$};
		\node [style=Z dot] (23) at (-1, 3) {};
		\node [style=Z dot] (24) at (2, 3) {};
		\node [style=Z dot] (25) at (4, 1) {};
		\node [style=Z dot] (26) at (4, -1.5) {};
		\node [style=hadamard] (29) at (2.75, 1.75) {\scriptsize $f_1$};
		\node [style=hadamard] (32) at (3, -0.5) {\scriptsize $f_j$};
		\node [style=hadamard] (39) at (-2.25, 1.25) {\scriptsize $-\epsilon^{-1} e_1 f_1$};
		\node [style=hadamard] (40) at (-2.25, 0.25) {\scriptsize $-\epsilon^{-1} e_1 f_j$};
		\node [style=hadamard] (41) at (-2.25, -1.75) {\scriptsize $-\epsilon^{-1} e_k f_1$};
		\node [style=hadamard] (42) at (-2.25, -2.75) {\scriptsize $-\epsilon^{-1} e_k f_j$};
		\node [style=none] (44) at (-3.25, 1.25) {};
		\node [style=none] (45) at (-3.25, 0.25) {};
		\node [style=none] (47) at (-3.25, -1.75) {};
		\node [style=none] (48) at (-3.25, -2.75) {};
		\node [style=none] (50) at (-1, 1.25) {};
		\node [style=none] (51) at (-1, 0.25) {};
		\node [style=none] (53) at (-1, -1.75) {};
		\node [style=none] (54) at (-1, -2.75) {};
		\node [style=hadamard] (55) at (0.5, 3) {$-\epsilon$};
		\node [style=none] (59) at (2, 4.5) {};
		\node [style=hadamard] (60) at (2, 3.75) {$\epsilon$};
	\end{pgfonlayer}
	\begin{pgfonlayer}{edgelayer}
		\draw (7.center) to (3);
		\draw (6.center) to (2);
		\draw (15.center) to (23);
		\draw (23) to (24);
		\draw (9.center) to (25);
		\draw (26) to (8.center);
		\draw (23) to (29);
		\draw (29) to (25);
		\draw (26) to (32);
		\draw (32) to (23);
		\draw (44.center) to (2);
		\draw (2) to (45.center);
		\draw (3) to (47.center);
		\draw (48.center) to (3);
		\draw (50.center) to (25);
		\draw [in=180, out=0] (51.center) to (26);
		\draw (25) to (53.center);
		\draw (54.center) to (26);
		\draw [in=180, out=30] (2) to (10);
		\draw [in=210, out=0] (10) to (24);
		\draw [in=15, out=-180] (11) to (3);
		\draw [in=0, out=-105] (24) to (11);
		\draw (59.center) to (24);
	\end{pgfonlayer}
\end{tikzpicture}

%% file: figures/pivot-start.tikz
\begin{tikzpicture}
	\begin{pgfonlayer}{nodelayer}
		\node [style=Z dot] (0) at (0, 1.5) {};
		\node [style=Z dot] (1) at (0, -1.5) {};
		\node [style=Z dot] (2) at (-3, 2.5) {};
		\node [style=Z dot] (3) at (-3, 0.5) {};
		\node [style=Z dot] (4) at (-3, -2.5) {};
		\node [style=none] (6) at (-4, 3) {};
		\node [style=none] (7) at (-4, 1) {};
		\node [style=none] (9) at (-4, -2) {};
		\node [style=hadamard] (10) at (-2.25, 2.5) {\scriptsize$e_1$};
		\node [style=hadamard] (11) at (-2.25, 0.75) {\scriptsize$e_2$};
		\node [style=hadamard] (12) at (0, 0) {\scriptsize$\epsilon$};
		\node [style=hadamard] (13) at (-1, -0.25) {\scriptsize$f_1$};
		\node [style=hadamard] (14) at (-1, -1.25) {\scriptsize$f_2$};
		\node [style=none] (15) at (1, 1.5) {};
		\node [style=none] (16) at (1, -1.5) {};
		\node [style=none] (18) at (-3, -0.75) {$\vdots$};
		\node [style=hadamard] (19) at (-2.25, -1.25) {\scriptsize$e_k$};
		\node [style=hadamard] (20) at (-1, -2.25) {\scriptsize$f_k$};
		\node [style=none] (21) at (-4, 2) {};
		\node [style=none] (22) at (-4, 0) {};
		\node [style=none] (23) at (-4, -3) {};
		\node [style=none] (24) at (-3.75, 2.75) {$\vdots$};
		\node [style=none] (25) at (-3.75, 0.75) {$\vdots$};
		\node [style=none] (26) at (-3.75, -2.25) {$\vdots$};
	\end{pgfonlayer}
	\begin{pgfonlayer}{edgelayer}
		\draw [bend left=15] (7.center) to (3);
		\draw (3) to (11);
		\draw (11) to (0);
		\draw (0) to (12);
		\draw (12) to (1);
		\draw (1) to (14);
		\draw [bend left=15] (9.center) to (4);
		\draw (13) to (1);
		\draw [bend left=15] (6.center) to (2);
		\draw (2) to (10);
		\draw (10) to (0);
		\draw (15.center) to (0);
		\draw (16.center) to (1);
		\draw [in=-60, out=105] (13) to (2);
		\draw (14) to (3);
		\draw [bend right=15] (0) to (19);
		\draw (19) to (4);
		\draw (1) to (20);
		\draw (20) to (4);
		\draw [bend left=15] (4) to (23.center);
		\draw [bend left=15] (3) to (22.center);
		\draw [bend left=15] (2) to (21.center);
	\end{pgfonlayer}
\end{tikzpicture}

%% file: figures/pivot-end.tikz
\begin{tikzpicture}
	\begin{pgfonlayer}{nodelayer}
		\node [style=Z dot] (0) at (0, 1.25) {};
		\node [style=Z dot] (1) at (0, -1.25) {};
		\node [style=Z phase dot] (2) at (-3, 3.5) {$0,\alpha_1$};
		\node [style=Z phase dot] (3) at (-4, 0.5) {$0,\alpha_2$};
		\node [style=Z phase dot] (4) at (-3, -3.5) {$0,\alpha_k$};
		\node [style=none] (6) at (-4.25, 4) {};
		\node [style=none] (7) at (-5.25, 1) {};
		\node [style=none] (9) at (-4.25, -3) {};
		\node [style=hadamard] (10) at (-1.5, 2.25) {\scriptsize$e_1$};
		\node [style=hadamard] (11) at (-2.25, 0.75) {\scriptsize$e_2$};
		\node [style=hadamard] (12) at (0, 0) {\scriptsize$-\epsilon$};
		\node [style=hadamard] (13) at (-1, -0.25) {\scriptsize$f_1$};
		\node [style=hadamard] (14) at (-1, -1.25) {\scriptsize$f_2$};
		\node [style=none] (15) at (1, 1.25) {};
		\node [style=none] (16) at (1, -1.25) {};
		\node [style=none] (18) at (-3.5, -0.75) {$\vdots$};
		\node [style=hadamard] (19) at (-2.25, -1.5) {\scriptsize$e_k$};
		\node [style=hadamard] (20) at (-1, -2.25) {\scriptsize$f_k$};
		\node [style=none] (21) at (-4.25, 3) {};
		\node [style=none] (22) at (-5.25, 0) {};
		\node [style=none] (23) at (-4.25, -4) {};
		\node [style=none] (24) at (-4, 3.75) {$\vdots$};
		\node [style=none] (25) at (-5, 0.75) {$\vdots$};
		\node [style=none] (26) at (-4, -3.25) {$\vdots$};
		\node [style=hadamard] (27) at (-3.75, 2.25) {\scriptsize$w_{1,2}$};
		\node [style=hadamard] (28) at (-3, 1.5) {\scriptsize$w_{1,k}$};
		\node [style=hadamard] (29) at (-4, -2) {\scriptsize$w_{2,k}$};
		\node [style=hadamard] (30) at (0.75, 1.25) {\scriptsize$\epsilon$};
		\node [style=hadamard] (31) at (0.75, -1.25) {\scriptsize$\epsilon$};
		\node [style=none] (32) at (2.25, 1.25) {};
		\node [style=none] (33) at (2.25, -1.25) {};
	\end{pgfonlayer}
	\begin{pgfonlayer}{edgelayer}
		\draw [bend left=15] (7.center) to (3);
		\draw (3) to (11);
		\draw (11) to (0);
		\draw (0) to (12);
		\draw (12) to (1);
		\draw (1) to (14);
		\draw [bend left=15] (9.center) to (4);
		\draw (13) to (1);
		\draw [bend left=15] (6.center) to (2);
		\draw (2) to (10);
		\draw (10) to (0);
		\draw (15.center) to (0);
		\draw (16.center) to (1);
		\draw [in=-60, out=120] (13) to (2);
		\draw (14) to (3);
		\draw [bend right] (0) to (19);
		\draw (19) to (4);
		\draw (1) to (20);
		\draw (20) to (4);
		\draw [bend left=15] (4) to (23.center);
		\draw [bend left=15] (3) to (22.center);
		\draw [bend left=15] (2) to (21.center);
		\draw (2) to (27);
		\draw (27) to (3);
		\draw (2) to (28);
		\draw (28) to (4);
		\draw (3) to (29);
		\draw (29) to (4);
		\draw [in=180, out=0] (15.center) to (33.center);
		\draw [in=0, out=180] (32.center) to (16.center);
	\end{pgfonlayer}
\end{tikzpicture}

%% file: figures/h-box-epsilon.tikz
\begin{tikzpicture}
	\begin{pgfonlayer}{nodelayer}
		\node [style=hadamard] (0) at (0, 0) {$\epsilon$};
		\node [style=none] (1) at (-0.75, 0) {};
		\node [style=none] (2) at (0.75, 0) {};
	\end{pgfonlayer}
	\begin{pgfonlayer}{edgelayer}
		\draw (1.center) to (2.center);
	\end{pgfonlayer}
\end{tikzpicture}

%% file: figures/phase.tikz
\begin{tikzpicture}
	\begin{pgfonlayer}{nodelayer}
		\node [style=none] (1) at (-1, 0) {};
		\node [style=none] (2) at (1, 0) {};
		\node [style=Z phase dot] (3) at (0, 0) {$0,\alpha_v$};
	\end{pgfonlayer}
	\begin{pgfonlayer}{edgelayer}
		\draw (1.center) to (2.center);
	\end{pgfonlayer}
\end{tikzpicture}

%% file: libref.bib
@article{backensThereBackAgain2021,
  title = {There and Back Again: {{A}} Circuit Extraction Tale},
  shorttitle = {There and Back Again},
  author = {Backens, Miriam and {Miller-Bakewell}, Hector and {de Felice}, Giovanni and Lobski, Leo and {van de Wetering}, John},
  year = {2021},
  month = mar,
  journal = {Quantum},
  volume = {5},
  eprint = {2003.01664},
  pages = {421},
  issn = {2521-327X},
  doi = {10.22331/q-2021-03-25-421},
  urldate = {2021-11-10},
  archiveprefix = {arXiv}
}

@article{backensZXcalculusCompleteStabilizer2014,
  title = {The {{ZX-calculus}} Is Complete for Stabilizer Quantum Mechanics},
  author = {Backens, Miriam},
  year = {2014},
  month = sep,
  journal = {New Journal of Physics},
  volume = {16},
  number = {9},
  eprint = {1307.7025},
  pages = {093021},
  issn = {1367-2630},
  doi = {10.1088/1367-2630/16/9/093021},
  urldate = {2021-11-10},
  archiveprefix = {arXiv}
}

@article{boothOutcomeDeterminismMeasurementbased2023,
  title = {Outcome Determinism in Measurement-Based Quantum Computation with Qudits},
  author = {Booth, Robert I. and Kissinger, Aleks and Markham, Damian and Meignant, Cl{\'e}ment and Perdrix, Simon},
  year = {2023},
  month = mar,
  journal = {Journal of Physics A: Mathematical and Theoretical},
  volume = {56},
  number = {11},
  eprint = {2109.13810},
  primaryclass = {quant-ph},
  pages = {115303},
  issn = {1751-8113, 1751-8121},
  doi = {10.1088/1751-8121/acbace},
  urldate = {2024-09-21},
  archiveprefix = {arXiv}
}

@article{browneGeneralizedFlowDeterminism2007,
  title = {Generalized Flow and Determinism in Measurement-Based Quantum Computation},
  author = {Browne, Daniel E. and Kashefi, Elham and Mhalla, Mehdi and Perdrix, Simon},
  year = {2007},
  month = aug,
  journal = {New Journal of Physics},
  volume = {9},
  number = {8},
  pages = {250},
  issn = {1367-2630},
  doi = {10.1088/1367-2630/9/8/250},
  urldate = {2024-05-10},
  langid = {english}
}

@book{cormenIntroductionAlgorithmsFourth2022,
  title = {Introduction to {{Algorithms}}, Fourth Edition},
  author = {Cormen, Thomas H. and Leiserson, Charles E. and Rivest, Ronald L. and Stein, Clifford},
  year = {2022},
  month = apr,
  publisher = {MIT Press},
  googlebooks = {drZNEAAAQBAJ},
  isbn = {978-0-262-04630-5},
  langid = {english}
}

@article{danosDeterminismOnewayModel2006,
  title = {Determinism in the One-Way Model},
  author = {Danos, Vincent and Kashefi, Elham},
  year = {2006},
  month = nov,
  journal = {Physical Review A},
  volume = {74},
  number = {5},
  pages = {052310},
  publisher = {American Physical Society},
  doi = {10.1103/PhysRevA.74.052310},
  urldate = {2022-04-14}
}

@misc{debeaudrapCompleteAlgorithmFind2007,
  title = {A Complete Algorithm to Find Flows in the One-Way Measurement Model},
  author = {{de Beaudrap}, N.},
  year = {2007},
  month = feb,
  number = {arXiv:quant-ph/0603072},
  eprint = {quant-ph/0603072},
  publisher = {arXiv},
  doi = {10.48550/arXiv.quant-ph/0603072},
  urldate = {2024-02-06},
  archiveprefix = {arXiv}
}

@article{deBeaudrapFinding2008,
  title = {Finding Flows in the One-Way Measurement Model},
  author = {{de Beaudrap}, N.},
  year = {2008},
  month = feb,
  journal = {Physical Review A},
  volume = {77},
  number = {2},
  pages = {022328},
  publisher = {American Physical Society},
  doi = {10.1103/PhysRevA.77.022328},
  urldate = {2024-02-06}
}

@article{duncanGraphtheoreticSimplificationQuantum2020,
  title = {Graph-Theoretic {{Simplification}} of {{Quantum Circuits}} with the {{ZX-calculus}}},
  author = {Duncan, Ross and Kissinger, Aleks and Perdrix, Simon and {van de Wetering}, John},
  year = {2020},
  month = jun,
  journal = {Quantum},
  volume = {4},
  eprint = {1902.03178},
  pages = {279},
  issn = {2521-327X},
  doi = {10.22331/q-2020-06-04-279},
  urldate = {2021-11-10},
  archiveprefix = {arXiv}
}

@article{gashkovComplexityComputationFinite2013,
  title = {Complexity of Computation in Finite Fields},
  author = {Gashkov, Sergey B. and Sergeev, Igor S.},
  year = {2013},
  month = jun,
  journal = {Journal of Mathematical Sciences},
  volume = {191},
  number = {5},
  pages = {661--685},
  issn = {1573-8795},
  doi = {10.1007/s10958-013-1350-5},
  urldate = {2024-01-31},
  langid = {english}
}

@article{kissingerReducingTcountZXcalculus2020,
  title = {Reducing {{T-count}} with the {{ZX-calculus}}},
  author = {Kissinger, Aleks and {van de Wetering}, John},
  year = {2020},
  month = aug,
  journal = {Physical Review A},
  volume = {102},
  number = {2},
  eprint = {1903.10477},
  pages = {022406},
  issn = {2469-9926, 2469-9934},
  doi = {10.1103/PhysRevA.102.022406},
  urldate = {2022-01-11},
  archiveprefix = {arXiv}
}

@article{mcelvanneyCompleteFlowPreservingRewrite2023,
  title = {Complete {{Flow-Preserving Rewrite Rules}} for {{MBQC Patterns}} with {{Pauli Measurements}}},
  author = {McElvanney, Tommy and Backens, Miriam},
  year = {2023},
  month = nov,
  journal = {Electronic Proceedings in Theoretical Computer Science},
  volume = {394},
  pages = {66--82},
  issn = {2075-2180},
  doi = {10.4204/EPTCS.394.5},
  urldate = {2023-12-07},
  langid = {english}
}

@article{mcelvanneyFlowpreservingZXcalculusRewrite2023a,
  title = {Flow-Preserving {{ZX-calculus Rewrite Rules}} for {{Optimisation}} and {{Obfuscation}}},
  author = {McElvanney, Tommy and Backens, Miriam},
  year = {2023},
  month = aug,
  journal = {Electronic Proceedings in Theoretical Computer Science},
  volume = {384},
  eprint = {2304.08166},
  primaryclass = {quant-ph},
  pages = {203--219},
  issn = {2075-2180},
  doi = {10.4204/EPTCS.384.12},
  urldate = {2024-02-01},
  archiveprefix = {arXiv}
}

@inproceedings{mhallaWhichGraphStates2014a,
  title = {Which Graph States Are Useful for Quantum Information Processing?},
  booktitle = {Theory of Quantum Computation, Communication, and Cryptography},
  author = {Mhalla, Mehdi and Murao, Mio and Perdrix, Simon and Someya, Masato and Turner, Peter S.},
  editor = {Bacon, Dave and {Martin-Delgado}, Miguel and Roetteler, Martin},
  year = {2014},
  pages = {174--187},
  publisher = {Springer Berlin Heidelberg},
  address = {Berlin, Heidelberg},
  doi = {10.1007/978-3-642-54429-3_12},
  isbn = {978-3-642-54429-3}
}

@article{mitosekAlgebraicInterpretationPauli2026,
    doi = {10.1088/1751-8121/ae2999},
    year = {2026},
    month = {jan},
    publisher = {IOP Publishing},
    volume = {59},
    number = {3},
    pages = {035301},
    author = {Mitosek, Piotr and Backens, Miriam},
    title = {An algebraic formulation of Pauli flow, leading to faster flow-finding algorithms},
    journal = {Journal of Physics A: Mathematical and Theoretical},
}

@article{mitosekPauliFlowOpen2024,
  title = {Pauli {{Flow}} on {{Open Graphs}} with {{Unknown Measurement Labels}}},
  author = {Mitosek, Piotr},
  year = {2024},
  month = aug,
  journal = {Electronic Proceedings in Theoretical Computer Science},
  volume = {406},
  eprint = {2408.06059},
  primaryclass = {quant-ph},
  pages = {117--136},
  issn = {2075-2180},
  doi = {10.4204/EPTCS.406.6},
  urldate = {2024-09-02},
  archiveprefix = {arXiv}
}

@article{raussendorfOneWayQuantumComputer2001,
  title = {A {{One-Way Quantum Computer}}},
  author = {Raussendorf, Robert and Briegel, Hans J.},
  year = {2001},
  month = may,
  journal = {Physical Review Letters},
  volume = {86},
  number = {22},
  pages = {5188--5191},
  publisher = {American Physical Society},
  doi = {10.1103/PhysRevLett.86.5188},
  urldate = {2024-02-06}
}

@article{simmonsRelatingMeasurementPatterns2021,
  title = {Relating {{Measurement Patterns}} to {{Circuits}} via {{Pauli Flow}}},
  author = {Simmons, Will},
  year = {2021},
  month = sep,
  journal = {Electronic Proceedings in Theoretical Computer Science},
  volume = {343},
  eprint = {2109.05654},
  pages = {50--101},
  issn = {2075-2180},
  doi = {10.4204/EPTCS.343.4},
  urldate = {2021-11-10},
  archiveprefix = {arXiv}
}

@article{staudacherReducing2QuBitGate2023,
  title = {Reducing 2-{{QuBit Gate Count}} for {{ZX-Calculus}} Based {{Quantum Circuit Optimization}}},
  author = {Staudacher, Korbinian and Guggemos, Tobias and {Grundner-Culemann}, Sophia and Gehrke, Wolfgang},
  year = {2023},
  journal = {Electronic Proceedings in Theoretical Computer Science},
  volume = {394},
  pages = {29--45},
  doi = {10.4204/EPTCS.394.3}
}

@article{caoMulti-agent2023,
  title = {Multi-Agent Blind Quantum Computation without Universal Cluster States},
  author = {Cao, Shuxiang},
  year = {2023},
  month = oct,
  journal = {New Journal of Physics},
  volume = {25},
  number = {10},
  pages = {103028},
  publisher = {IOP Publishing},
  issn = {1367-2630},
  doi = {10.1088/1367-2630/acfab6},
  urldate = {2024-01-24},
  langid = {english}
}

@article{perezBackens2025,
   title={Inserting Planar-Measured Qubits into MBQC Patterns while Preserving Flow},
   volume={426},
   ISSN={2075-2180},
   DOI={10.4204/eptcs.426.4},
   journal={Electronic Proceedings in Theoretical Computer Science},
   publisher={Open Publishing Association},
   author={Backens, Miriam and Perez, Thomas},
   year={2025},
   month=aug, pages={100–126}
}

@phdthesis{boothMeasurementbasedQuantumComputation2022,
  title = {Measurement-Based Quantum Computation beyond Qubits},
  author = {Booth, Robert Ivan},
  year = 2022,
  month = feb,
  urldate = {2025-07-17},
  langid = {english},
  school = {Sorbonne Universit\'e}
}

@article{bahramgiriGraph2006,
  title = {Graph {{States Under}} the {{Action}} of {{Local Clifford Group}} in {{Non-Binary Case}}},
  author = {Bahramgiri, Mohsen and Beigi, Salman},
  year = 2006,
  month = oct,
  journal = {arXiv:quant-ph/0610267},
  eprint = {quant-ph/0610267},
  urldate = {2014-02-25},
  archiveprefix = {arXiv}
}

@misc{townsend-teagueSimplification2022,
  title = {Simplification {{Strategies}} for the {{Qutrit ZX-Calculus}}},
  author = {{Townsend-Teague}, Alex and Meichanetzidis, Konstantinos},
  year = 2022,
  month = jun,
  number = {arXiv:2103.06914},
  eprint = {2103.06914},
  publisher = {arXiv},
  doi = {10.48550/arXiv.2103.06914},
  urldate = {2024-10-11},
  archiveprefix = {arXiv}
}

@article{poorQupitStabiliserZXtravaganza2023,
	title = {The {Qupit} {Stabiliser} {ZX}-travaganza: {Simplified} {Axioms}, {Normal} {Forms} and {Graph}-{Theoretic} {Simplification}},
	volume = {384},
	issn = {2075-2180},
	shorttitle = {The {Qupit} {Stabiliser} {ZX}-travaganza},
	doi = {10.4204/EPTCS.384.13},
	language = {en},
	urldate = {2025-12-30},
	journal = {Electronic Proceedings in Theoretical Computer Science},
	author = {Poór, Boldizsár and Booth, Robert I. and Carette, Titouan and Van De Wetering, John and Yeh, Lia},
	month = aug,
	year = {2023},
	pages = {220--264}
}

@article{harveyPolyMultFiniteFields2022,
    author = {Harvey, David and van der Hoeven, Joris},
    title = {Polynomial Multiplication over Finite Fields in Time \( O(n \log n \) },
    year = {2022},
    issue_date = {April 2022},
    publisher = {Association for Computing Machinery},
    address = {New York, NY, USA},
    volume = {69},
    number = {2},
    issn = {0004-5411},
    doi = {10.1145/3505584},
    journal = {J. ACM},
    month = mar,
    articleno = {12},
    numpages = {40}
}

@article{harveyIntegerMultiplicationTime2021,
  title = {Integer Multiplication in Time \${{O}}(N\textbackslash mathrm\textbraceleft log\textbraceright\textbackslash, n)\$},
  author = {Harvey, David and van der Hoeven, Joris},
  year = 2021,
  month = mar,
  journal = {Annals of Mathematics},
  volume = {193},
  number = {2},
  pages = {563--617},
  publisher = {Department of Mathematics of Princeton University},
  issn = {0003-486X, 1939-8980},
  doi = {10.4007/annals.2021.193.2.4},
  urldate = {2026-02-10}
}

@misc{sutherlandLectureNotes2025,
	title = {Lecture {Notes}: Mathematics 18.783 -- {Elliptic} {Curves}, {Lecture} \#3},
	url = {https://math.mit.edu/classes/18.783/2025/LectureNotes3.pdf},
	urldate = {2026-02-11},
	author = {Sutherland, Andrew},
	month = sep,
	year = {2025},
    note = {Accessed: 28th February 2025},
}

@article{randallEfficientGenerationRandom1993,
  title = {Efficient Generation of Random Nonsingular Matrices},
  author = {Randall, Dana},
  year = {1993},
  journal = {Random Structures \& Algorithms},
  volume = {4},
  number = {1},
  pages = {111--118},
  issn = {1098-2418},
  doi = {10.1002/rsa.3240040108},
  urldate = {2026-02-19},
  copyright = {Copyright {\copyright} 1993 Wiley Periodicals, Inc., A Wiley Company},
  langid = {english}
}

@misc{holkerCausal2024,
      title={Causal flow preserving optimisation of quantum circuits in the ZX-calculus}, 
      author={Calum Holker},
      year={2024},
      eprint={2312.02793},
      archivePrefix={arXiv},
      primaryClass={quant-ph},
}

@article{ewenApplication2025,
	title = {Application of {ZX}-calculus to quantum architecture search},
	volume = {7},
	issn = {2524-4906, 2524-4914},
	doi = {10.1007/s42484-025-00264-6},
	language = {en},
	number = {1},
	urldate = {2025-06-13},
	journal = {Quantum Machine Intelligence},
	author = {Ewen, Tom and Turkalj, Ivica and Holzer, Patrick and Wolf, Mark-Oliver},
	month = jun,
	year = {2025},
	pages = {34},
}

@article{romanovaMeasurement-based2026,
	title = {Measurement-based quantum computing with qudit stabilizer states},
	volume = {11},
	issn = {2058-9565},
	doi = {10.1088/2058-9565/ae3b6f},
	language = {en},
	number = {1},
	urldate = {2026-02-17},
	journal = {Quantum Science and Technology},
	publisher = {IOP Publishing},
	author = {Romanova, Alena and Dür, Wolfgang},
	month = feb,
	year = {2026},
	pages = {015054},
}

@article{paesaniScheme2021,
	title = {Scheme for {Universal} {High}-{Dimensional} {Quantum} {Computation} with {Linear} {Optics}},
	volume = {126},
	doi = {10.1103/PhysRevLett.126.230504},
	number = {23},
	urldate = {2022-04-12},
	journal = {Physical Review Letters},
	publisher = {American Physical Society},
	author = {Paesani, Stefano and Bulmer, Jacob F. F. and Jones, Alex E. and Santagati, Raffaele and Laing, Anthony},
	month = jun,
	year = {2021},
	pages = {230504},
}

@misc{heyfronQuantum2019,
      title={A quantum compiler for qudits of prime dimension greater than 3}, 
      author={Luke E. Heyfron and Earl Campbell},
      year={2019},
      eprint={1902.05634},
      archivePrefix={arXiv},
      primaryClass={quant-ph},
}

@article{gheorghiuStandardFormQudit2014,
	title = {Standard form of qudit stabilizer groups},
	volume = {378},
	issn = {0375-9601},
	doi = {10.1016/j.physleta.2013.12.009},
	number = {5},
	urldate = {2026-02-26},
	journal = {Physics Letters A},
	author = {Gheorghiu, Vlad},
	month = jan,
	year = {2014},
	pages = {505--509},
}

@misc{yangQuantum2025,
	title = {Quantum circuit synthesis with qudit phase gadget method},
	doi = {10.48550/arXiv.2504.12710},
	urldate = {2026-02-17},
	publisher = {arXiv},
	author = {Yang, Shuai and Xu, Lihao and Tian, Guojing and Sun, Xiaoming},
	month = apr,
	year = {2025},
}

@article{hostensStabilizer2005,
	title = {Stabilizer states and {Clifford} operations for systems of arbitrary dimensions and modular arithmetic},
	volume = {71},
	doi = {10.1103/PhysRevA.71.042315},
	number = {4},
	urldate = {2012-10-17},
	journal = {Physical Review A},
	author = {Hostens, Erik and Dehaene, Jeroen and De Moor, Bart},
	month = apr,
	year = {2005},
	pages = {042315},
}

@article{scirihaCharacterizationSingularGraphs2007,
    author = {Sciriha, Irene},
    journal = {ELA. The Electronic Journal of Linear Algebra [electronic only]},
    language = {eng},
    pages = {451-462},
    publisher = {ILAS - The International Linear Algebra Society c/o Daniel Hershkowitz, Department of Mathematics, Technion - Israel Institute of Techonolgy},
    title = {A characterization of singular graphs.},
    url = {http://eudml.org/doc/129125},
    volume = {16},
    year = {2007},
}

@inproceedings{boothCompleteZXCalculiStabiliser2022,
  title = {Complete {{ZX-Calculi}} for the {{Stabiliser Fragment}} in {{Odd Prime Dimensions}}},
  booktitle = {47th {{International Symposium}} on {{Mathematical Foundations}} of {{Computer Science}} ({{MFCS}} 2022)},
  author = {Booth, Robert I. and Carette, Titouan},
  editor = {Szeider, Stefan and Ganian, Robert and Silva, Alexandra},
  year = {2022},
  series = {Leibniz {{International Proceedings}} in {{Informatics}} ({{LIPIcs}})},
  volume = {241},
  pages = {24:1--24:15},
  publisher = {Schloss Dagstuhl -- Leibniz-Zentrum f{\"u}r Informatik},
  address = {Dagstuhl, Germany},
  issn = {1868-8969},
  doi = {10.4230/LIPIcs.MFCS.2022.24},
  urldate = {2026-03-01},
  isbn = {978-3-95977-256-3},
  note = {Extended version at arXiv:\href{http://arxiv.org/abs/2204.12531}{2204.12531}}
}

@misc{mhallaShadowPauliFlow2025,
    title = {Shadow {Pauli} {Flow}: {Characterising} {Determinism} in {MBQCs} involving {Pauli} {Measurements}},
    shorttitle = {Shadow {Pauli} {Flow}},
    doi = {10.48550/arXiv.2207.09368},
    urldate = {2025-04-26},
    publisher = {arXiv},
    author = {Mhalla, Mehdi and Perdrix, Simon and Sanselme, Luc},
    month = jan,
    year = {2025},
    note = {arXiv:2207.09368 [quant-ph]},
}

@misc{backensGenerating2026,
	author = {Backens, Miriam},
	title = {Generating one-way computations with flow: flow-preserving rewriting that ignores the interpretation},
	year = 2026,
	note = {To appear}
}

@misc{backensCompleteness2026,
	author = {Backens, Miriam and Perdrix, Simon},
	title = {Completeness for flow-preserving rewrite rules},
	year = 2026,
	note = {To appear},
}

@article{boothCVflow2023,
   title={Flow conditions for continuous variable measurement-based quantum computing},
   volume={7},
   ISSN={2521-327X},
   DOI={10.22331/q-2023-10-19-1146},
   journal={Quantum},
   publisher={Verein zur Forderung des Open Access Publizierens in den Quantenwissenschaften},
   author={Booth, Robert I. and Markham, Damian},
   year={2023},
   month=Oct, pages={1146}
}
